\newcommand{\ret}{\nonumber \\}
\newcommand{\bigno}{\par\bigskip\noindent}
\newcommand{\Section}[1]%
{\section{#1}\setcounter{equation}{0}%
\setcounter{theorem}{0}}
\newtheorem{theorem}{Theorem}
\newtheorem{lemma}[theorem]{Lemma}
\newtheorem{coro}[theorem]{Corollary}
\newenvironment{proof}[1]%
{\par\noindent{\em #1:\ }}%
{~\rule{2mm}{2mm}\par\bigskip}
\newcommand{\Remark}{\bigno{\bf Remark:}\ }
\newcommand{\Rem}{\bigno{\bf Remark:}\ }
\newcommand{\abs}[1]{\left|#1\right|}
\newcommand{\norm}[1]{\left\Vert#1\right\Vert}
\newcommand{\rbk}[1]{\left(#1\right)}
\newcommand{\sbk}[1]{\left[#1\right]}
\newcommand{\cbk}[1]{\left\{#1\right\}}
\newcommand{\set}[2]{\left\{#1\,\Bigl|\,#2\right\}}
\newcommand{\sumtwo}[2]%
{\mathop{\sum_{#1}}_{#2}}
\newcommand{\sumthree}[3]%
{\mathop{\mathop{\sum_{#1}}_{#2}}_{#3}}
\newcommand{\sumfour}[4]%
{\mathop{\mathop{\mathop{\sum_{#1}}_{#2}}_{#3}}_{#4}} 
\newcommand{\suptwo}[2]%
{\mathop{\sup_{#1}}_{#2}}
\newcommand{\supthree}[3]%
{\mathop{\mathop{\sup_{#1}}_{#2}}_{#3}}
\newcommand{\supfour}[4]%
{\mathop{\mathop{\mathop{\sup_{#1}}_{#2}}_{#3}}_{#4}} 
\newcommand{\inftwo}[2]%
{\mathop{\inf_{#1}}_{#2}}
\newcommand{\infthree}[3]%
{\mathop{\mathop{\inf_{#1}}_{#2}}_{#3}}
\newcommand{\inffour}[4]%
{\mathop{\mathop{\mathop{\inf_{#1}}_{#2}}_{#3}}_{#4}} 
\newcommand\calB{{\cal B}}
\newcommand\calF{{\cal F}}
\newcommand\calH{{\cal H}}
\newcommand\calK{{\cal K}}
\newcommand\calU{{\cal U}}
\newcommand{\Zd}{{\bf Z}^d}
\newcommand{\La}{\Lambda}
\newcommand{\up}{\uparrow}
\newcommand{\dn}{\downarrow}
\newcommand{\bs}{\backslash}
\renewcommand{\kappa}{\rho}
\newcommand{\Lao}{\Lambda_o}
\newcommand{\Lau}{\Lambda_u}
\newcommand{\Laop}{\Lambda_o^+}
\newcommand{\Laob}{\bar{\Lambda}_o}
\newcommand{\dnu}{{d\choose \nu}}
\newcommand{\akappa}{\abs{\kappa}}
\newcommand{\cxs}{c^\dagger_{x,\sigma}}
\newcommand{\axs}{c_{x,\sigma}}
\newcommand{\cys}{c^\dagger_{y,\sigma}}
\newcommand{\ays}{c_{y,\sigma}}
\newcommand{\cyt}{c^\dagger_{y,\tau}}
\newcommand{\ayt}{c_{y,\tau}}
\newcommand{\cd}{c^\dagger}
\newcommand{\ad}{a^\dagger}
\newcommand{\bd}{b^\dagger}
\newcommand{\adxs}{a^\dagger_{x,\sigma}}
\newcommand{\bxs}{b_{x,\sigma}}
\newcommand{\nxs}{n_{x,\sigma}}
\newcommand{\Emin}{E_{\rm min}}
\newcommand{\ESW}{E_{\rm SW}(k)}
\newcommand{\Hhop}{H_{\rm hop}}
\newcommand{\Htil}{\widetilde{H}}
\newcommand{\Hhopt}{\widetilde{H}_{\rm hop}}
\newcommand{\Hhopo}{H_{\rm hop}^{(0)}}
\newcommand{\Hint}{H_{\rm int}}
\newcommand{\Hinto}{H_{\rm int}^{(1)}}
\newcommand{\Hintt}{H_{\rm int}^{(2)}}
\newcommand{\Hpert}{H'_{\rm hop}}
\newcommand{\Stot}{S_{\rm tot}}
\newcommand{\Sztot}{S_{\rm tot}^{(3)}}
\newcommand{\Smax}{S_{\rm max}}
\newcommand{\Ttil}{\widetilde{T}}
\newcommand{\ttil}{\tilde{t}}
\newcommand{\ntil}{\tilde{n}}
\newcommand{\St}[2]{\widetilde{S}^{(#1)}_{#2}}
\newcommand{\hhop}{h_{\rm hop}}
\newcommand{\hint}{h_{\rm int}}
\newcommand{\htint}{\tilde{h}_{\rm int}}
\newcommand{\vac}{\Phi_{\rm vac}}
\newcommand{\UP}{\Phi_\uparrow}
\newcommand{\Psik}[1]{\Psi_{#1}(k)}
\newcommand{\Phik}[1]{\Phi_{#1}(k)}
\newcommand{\Ok}{\Omega(k)}
\newcommand{\Bk}{\calB_k}
\newcommand{\Po}{P_0}
\newcommand{\Hsing}{\calH_{\rm single}}
\newcommand{\Hsingo}{\calH_{\rm single}^{(1)}}
\newcommand{\Hsingp}{\calH'_{\rm single}}
\newcommand{\eik}[1]{e^{ik\cdot{#1}}}
\newcommand{\emik}[1]{e^{-ik\cdot{#1}}}
\newcommand{\eikx}{e^{ik\cdot x}}
\newcommand{\eikp}{e^{ik\cdot p}}
\newcommand{\la}{\lambda}
\renewcommand{\epsilon}{\varepsilon}
\newcommand{\ep}{\varepsilon}
\newcommand{\Ut}{\widetilde{U}}
\newcommand{\Dt}{\widetilde{D}}
\newcommand{\al}{\alpha}
\newcommand{\alt}{\tilde{\alpha}}
\newcommand{\altn}[2]{\tilde{\alpha}^{(#1)}_{#2}}
\newcommand{\xin}[2]{\xi^{(#1)}_{#2}}
\renewcommand{\phi}{\varphi}
\newcommand{\phitil}{\tilde{\varphi}}
\newcommand{\psis}[2]{\psi^{(#1)}_{#2}}
\newcommand{\psit}[2]{\tilde{\psi}^{(#1)}_{#2}}
\newcommand{\phis}[2]{\varphi^{(#1)}_{#2}}
\newcommand{\phit}[2]{\tilde{\varphi}^{(#1)}_{#2}}
\newcommand{\etas}[2]{\eta^{(#1)}_{#2}}
\newcommand{\etat}[2]{\tilde{\eta}^{(#1)}_{#2}}
\newcommand{\etab}[2]{\bar{\eta}^{(#1)}_{#2}}
\newcommand{\zetas}[2]{\zeta^{(#1)}_{#2}}
\newcommand{\Gammat}[2]{\widetilde{\Gamma}^{(#1)}_{#2}}
\newcommand{\vk}[1]{v_{#1}(k)}
\newcommand{\vv}{{\bf v}(k)}
\newcommand{\wk}[1]{w_{#1}(k)}
\newcommand{\wv}{{\bf w}(k)}
\newcommand{\vov}{{\bf v}^{(o)}(k)}
\newcommand{\vok}[1]{v^{(o)}_{#1}(k)}
\newcommand{\vo}[1]{{\bf v}^{(o)}_{#1}(k)}
\newcommand{\vvv}[1]{{\bf v}^{(#1)}(k)}
\newcommand{\vks}[2]{v^{(#1)}_{#2}(k)}
\newcommand{\vt}[1]{\tilde{\bf v}^{(#1)}(k)}
\newcommand{\vkt}[2]{\tilde{v}^{(#1)}_{#2}(k)}
\newcommand{\wt}[1]{\tilde{\bf w}^{(#1)}(k)}
\newcommand{\wkt}[2]{\tilde{w}^{(#1)}_{#2}(k)}
\newcommand{\ak}[1]{a_{#1}(k)}
\newcommand{\av}{{\bf a}(k)}
\newcommand{\gv}{{\bf g}(k)}
\newcommand{\It}{{\sf I}}
\newcommand{\Mt}{{\sf M}(k)}
\newcommand{\Qt}{{\sf Q}(k)}
\newcommand{\Gt}{{\sf G}(k)}
\newcommand{\Mk}[1]{M_{#1}(k)}
\newcommand{\Qk}[1]{Q_{#1}(k)}
\newcommand{\Pt}{{\sf P}(k)}
\newcommand{\Pk}[1]{P_{#1}(k)}
\newcommand{\Wt}{{\sf W}(k)}
\newcommand{\Ht}{{\sf H}(k)}
\newcommand{\Gm}[1]{(G)_{#1}}
\newcommand{\Gi}[1]{(G^{-1})_{#1}}
\newcommand{\Ck}[1]{C_{#1}(k)}
\newcommand{\rbks}[1]{\left(#1\right)^*}
\newcommand{\captionA}{
The one-dimensional lattice studied in Section~\protect\ref{Sec1d}.
We identify the left most site with the right most site to
get a closed chain.
The black dots represent sites in $\Lao$ (metallic atoms), 
and the gray dots
represent sites in $\La'$ (oxygen atoms).
There are two types of hopping $t$, $s$, and on-site (one-body)
potential $V$.
In addition we have on-site Coulomb repulsion $U>0$ at each site.
There are $2L$ sites in the lattice, and we put $L$ electrons in the
system.
(Here $L=5$.)
For the flat-band models characterized by $s=\la t$, $V=(\la^2-2)t$
with $\la>0$, $t>0$, the ground states of the models are proved to be
ferromagnetic.
Here we prove the local stability of ferromagnetism for models
obtained by adding small perturbations to the flat-band models.
}
\newcommand{\captionB}{
The dispersion relations $\ep_1(k)$, $\ep_2(k)$ in the 
one-dimensional models.
a)~The flat-band model with $\la=2$, $\kappa=0$.
b)~The perturbed model with $\la=2$, $\kappa=0.2$, which has
a nearly flat lower band and an energy gap between the two bands.
}
\newcommand{\captionC}{
A schematic picture of the state $\Gamma_x$ (\protect\ref{Gammax}) which appears 
in the definition of $\Ok$ (\protect\ref{Omega1d}).
Since $\Gamma_x$ is constructed by using sharply localized basis states
for the lower band,
it costs small Coulomb repulsion energy and small kinetic energy.
In the state $\Ok$, the down-spin propagates with momentum $k$
and further reduces the total energy to 
$\simeq E_0+(4U/\la^4)\cbk{\sin(k/2)}^2$.
}
\newcommand{\captionD}{
When the ``exchange'' Hamiltonian $H_1$ acts on the state $\Gamma_y$,
four terms are generated.
Two of them are the same as $\Gamma_y$, while the electronic
spins are exchanged in the other two states.
The process illustrated here can be regarded as the ultimate origin of 
ferromagnetism in the present model.
}
\newcommand{\captionE}{
When the (effective hopping) operator 
\protect\newline
$\sum_{x\in\La,\,\sigma=\up,\dn}
\Ut_{x+1,x+1;x+1,x}\,
\ad_{x+1,\sigma}\ad_{x+1,-\sigma}b_{x+1,-\sigma}b_{x,\sigma}$
acts on $\Gamma_y$, two states with a bound pair of a hole and a doubly
occupied site are generated.
Note that the two resulting states are related 
through the translation by a 
distance $1$.
This process is the major source of instability against antiferromagnetism.
}
\newcommand{\captionF}{
Schematic pictures of the states $\Ok$, $\Xi_+(k)$, and $\Theta(k)$,
and the matrix elements between them.
As for the off-diagonal matrix elements, we only present the main
part of their absolute values.
Note that the matrix elements between $\Ok$ and $\Xi_+(k)$ are highly
asymmetric (apart from the artificial asymmetry factor $\alpha(k)$).
The small out-going matrix elements from $\Ok$ indicates that the state
$\Ok$ is a good trial state for the spin-wave excitation.
}
\newcommand{\captionG}{
The lattice $\La$ in two dimensions ($d=2$) with
a)~$\nu=1$, and b)~$\nu=2$.
The black dots are sites in $\Lao$ and the gray dots are
sites in $\La'$.
One may interpret black sits as metallic atoms and
gray sites as oxygen atoms.
}
\newcommand{\captionH}{
The dispersion relation for the three-band model with $d=2$, $\nu=1$.
We have set $t>0$, $\la=2$, and $\kappa=0$ to get a flat-band model.
There are two flat bands, and one cosine band.
}
\newcommand{\captionI}{
The dispersion relation for the three-band model with $d=2$, $\nu=1$.
The perturbation is given by
$t'_{x,x}=t$ if $x\in\calF_{(0,1/2)}$,
$t'_{x,x}=-t$ if $x\in\calF_{(1/2,0)}$,
$t'_{x,y}=t'_{y,x}=t$ if $x\in\calF_o$ and $y=x+(1,1)$,
and 
$t'_{x,y}=0$ otherwise.
We have set $t>0$, $\la=2$, and $\kappa=0.7$.
Note that the two lower bands become dispersive, and there appears
a gap between the second and the third bands.
}
\newcommand{\BoxedEPSF}[1]{\relax}
\long\def\@makecaption#1#2{{\small
\advance\leftskip1cm
\advance\rightskip1cm
\vskip\abovecaptionskip
\sbox\@tempboxa{#1: #2}%
\ifdim \wd\@tempboxa >\hsize
 #1: #2\par
\else
\global \@minipagefalse
\hb@xt@\hsize{\hfil\box\@tempboxa\hfil}%
\fi
\vskip\belowcaptionskip}}
\begin{document}
\begin{flushright}
{\footnotesize
J. Stat. Phys. {\bf 84}, 535--653 (1996).
}
\end{flushright}
\noindent
{\Large\bf Stability of Ferromagnetism
in
Hubbard Models\\
with Nearly-Flat Bands}
\par\bigskip

\renewcommand{\thefootnote}{\fnsymbol{footnote}}
\noindent
Hal Tasaki\footnote{
Department of Physics, Gakushuin University, Mejiro, Toshima-ku, 
Tokyo 171-8588, Japan
}
\renewcommand{\thefootnote}{\arabic{footnote}}\setcounter{footnote}{0}

\begin{quote}\small
Whether spin-independent Coulomb interaction in an electron system can be 
the origin of ferromagnetism has been an open problem for a long time.
Recently, a ``constructive'' approach to this problem has been developed, 
and
the existence of ferromagnetism in the ground states of certain  
Hubbard models was established rigorously.
A special feature of these Hubbard models is that their lowest bands (in 
the corresponding single-electron problems) are completely flat.
Here we study models obtained by adding
small but arbitrary translation-invariant perturbation to the hopping 
Hamiltonian of these flat-band models.
The resulting models have nearly-flat lowest bands.
We prove that the ferromagnetic state is stable against a single-spin
flip provided that Coulomb interaction $U$ is sufficiently large.
(It is easily found that the same state is unstable against a single-spin 
flip if $U$ is small enough.)
We also prove upper and lower bounds for the dispersion relation of 
the lowest energy eigenstate with a single flipped spin, which bounds 
establish that the model has ``healthy'' spin-wave excitation.
It is notable that the (local) stability of ferromagnetism
is proved in 
non-singular Hubbard models, in which we must overcome
competition between the kinetic energy and the Coulomb interaction.
We also note that this is one of the very few rigorous
and robust results  which 
deal with truly 
nonperturbative phenomena in many electron systems.
The local stability strongly suggests that the Hubbard models with nearly 
flat bands have ferromagnetic ground states.
We believe that the present models can be studied as paradigm models for 
(insulating) ferromagnetism in itinerant electron systems.
\end{quote}
\tableofcontents
\newpage\Section{Introduction}
\label{SecIntro}
\subsection{Background}
\label{secback}
The origin of strong ferromagnetic ordering observed in some
materials has been a mystery in physical science for a
long time \cite{Mattis81}. 
Since non-interacting electron systems universally exhibit
paramagnetism, the origin of ferromagnetism
should be sought in electron-electron interaction.
In most solids, however,
the dominant part of interaction between electrons
is the Coulomb interaction, which is perfectly spin-independent.
(See Chapter~32, page 674 of \cite{AshcroftMermin76} for example.)
Therefore we are faced with a very interesting and 
fundamental problem in theoretical physics 
to determine {\em whether spin-independent
interaction in an itinerant 
electron system can be the origin of ferromagnetic 
ordering.}
This problem is important not only because ferromagnetism is 
a very common (and useful) phenomenon, but because 
it focuses on
a fundamental role of nonlinear interactions in many-body quantum
mechanical systems.

The present work is a continuation of our work \cite{92e,93d},
where we dealt with the above fundamental problem from a standpoint of 
``constructive condensed matter physics.''
Our goal is to provide concrete models in which the existence of 
ferromagnetic ordering can be established rigorously.
Such models should shed light on mechanisms by which Coulomb interaction 
generates ferromagnetic ordering.

It was Heisenberg
\cite{Heisenberg28} who first realized that ferromagnetism is an
intrinsically quantum mechanical phenomenon.
In Heisenberg's approach to ferromagnetism, one starts
from the picture that each electron (relevant to magnetism)
is almost localized at an atomic orbit.
By treating the effect of Coulomb interaction and
overlap between nearby atomic orbits in a perturbative manner,
Heisenberg concluded that there appears
``exchange interaction'' between nearby electronic spins
which determines magnetic properties of the system.

The validity of the Heisenberg's picture has been challenged
both from theoretical and from physical points of views. 
(See, for example, \cite{Herring66a}.)
It has been realized that, in most of the situations,
the exchange interaction is antiferromagnetic
rather than ferromagnetic.
Moreover conditions which would justify the basic assumption that 
electrons can be treated as localized at atomic sites are not
well understood\footnote{
This issue is closely related to the problem of Mott-Hubbard insulators.
}.

In a different approach to the problem of ferromagnetism,
which was originated by Bloch \cite{Bloch29}, one starts from
the quantum mechanical free electron gas, in which
electrons are in plane-wave like states.
One then treats the effect of Coulomb interaction perturbatively,
and tries to find instability against certain
magnetic ordering.
When combined with the Hartree-Fock approximation (or a mean-field
theory), this approach leads to the picture that there is an instability
against ferromagnetism when the density of states at the fermi energy
and the Coulomb interaction are sufficiently large.

It has been realized, however, that
the Hartree-Fock approximation drastically overestimates the 
tendency towards ferromagnetism, thus predicting the existence
of ferromagnetism in many situations where 
it does not take place.
From a theoretical point of view, the approximation is unsatisfactory
since it artificially replaces the fundamental $SU(2)$ symmetry 
(i.e., rotation symmetry in the spin space) of the
electron systems with a discrete ${\bf Z}_2$ symmetry.
Although there have appeared many improvements of the
simple Hartree-Fock theory, this approach does not provide
a conclusive answer to the fundamental problem about
the origin of ferromagnetism that we raised in the beginning
of the present subsection.
See, for example, \cite{Herring66b} for a review.

\subsection{Ferromagnetism in the Hubbard Model}
\label{secHubint}
A modern version of the problem about the origin 
of ferromagnetism was formulated
by Kanamori \cite{Kanamori63}, Gutzwiller
\cite{Gutzwiller63}, and Hubbard \cite{Hubbard63} in 1960's.
(The similar formulation was given earlier, for example, 
in \cite{Slater53}.) 
They
studied simple tight-binding models of electrons with
on-site Coulomb interaction\footnote{
It is sometimes argued that the originally long ranged Coulomb 
interaction becomes short ranged by the screening effect from 
electrons in the bands (or orbits) which are {\em not} taken into 
account in the Hubbard model.
But it is still true that the assumption that there is only on-site 
interaction is highly artificial.
} whose strength
is denoted as $U$. 
The model is usually called the
Hubbard model. 
When there is no
electron-electron interaction (i.e., $U=0$), 
the model 
exhibits paramagnetism as an inevitable consequence
of the Pauli exclusion principle. 
Among other things,
Kanamori, Gutzwiller, and Hubbard asked 
{\em whether the paramagnetism found for $U=0$ can
be converted into ferromagnetism when there is a sufficiently
large Coulomb interaction $U$.}
This is a concrete formulation of the fundamental problem
that we raised in the opening of the previous subsection.

It is worth noting that the on-site Coulomb interaction
itself is completely independent of electronic spins, and does
not favor any magnetic ordering. 
Therefore
one does not find any terms in the Hubbard Hamiltonian which 
explicitly favor ferromagnetism (or any other 
ordering).
Our theoretical goal will be to show that magnetic ordering arises as a 
consequence of subtle interplay between kinetic motion of electrons and 
the short-ranged Coulomb interaction. 
It is interesting to compare the situation with that in spin systems, 
where one is often given a Hamiltonian which favors some kind of magnetic
ordering, and the major task is to investigate if such ordering really 
takes place.
We can say that the Hubbard model formulation goes deeper 
into fundamental mechanisms of 
magnetism than that of spin systems.
It offers a
challenging problem
to theoretical physicists to derive magnetic interaction from 
models which do not explicitly contain such interactions.
Perhaps the best justification of the Hubbard model as a standard model 
of itinerant electron systems comes from such a theoretical 
consideration, rather than its accuracy in modeling narrow band electron 
systems. 
See also the introduction of \cite{93d,MullerHartmann95,StrackVollhardt94b} 
for discussions about ferromagnetism in the Hubbard model.

We stress that ferromagnetism is {\em not} a universal property of the 
Hubbard model.
The Hubbard model is believed to exhibit various phenomena including 
paramagnetism, antiferromagnetism, ferrimagnetism, ferromagnetism, or 
superconductivity, depending on various conditions.
Such drastic ``non-universality'' of the model motivated us to 
take the present ``constructive'' approach rather than to prove  
theorems which apply to general Hubbard models.

The problem of ferromagnetism in the Hubbard model
was extensively studied by using various heuristic methods.
The Hartree-Fock approximation discussed above leads one to the
so called Stoner criterion.
It says that the Hubbard model exhibits ferromagnetism if
one has $UD_{\rm F}>1$, where $D_{\rm F}$ is the density of 
states of the corresponding single-electron problem measured
at the fermi level (of the corresponding non-interacting system).
Although the criterion cannot be trusted literally, it guides us
to look for ferromagnetism in models with not too small $U$ and/or
large density of states.

The first rigorous result about ferromagnetism in the Hubbard model
was provided by Nagaoka \cite{Nagaoka66}, and independently
by Thouless \cite{Thouless65} in 1965.
It was proved that certain Hubbard models have ground states with
saturated ferromagnetism when there is exactly one hole and the
Coulomb repulsion $U$ is infinite.
See \cite{Lieb71,89c} for shorter proofs.
Whether the Nagaoka-Thouless ferromagnetism survives in the
models with finite density of holes and/or finite Coulomb repulsion
is a very interesting but totally unsolved problem
\cite{DoucotWen89,Shastry90,Suto91a,Suto91b,Toth91,%
HanischMullerHartmann93,Putikka93,LiangPang94,Kusakabe93}.
See also the introduction of \cite{MullerHartmann95} for a compact review 
of this subject.

Very recently, M\"{u}ller-Hartmann \cite{MullerHartmann95} 
argued that the Hubbard model with $U=\infty$ 
on a one-dimensional zigzag chain exhibits 
ferromagnetism\footnote{
Although M\"{u}ller-Hartmann's argument is quite interesting, it does not 
form a mathematically rigorous proof (as far as we can read off from  
\cite{MullerHartmann95}).
The argument involves an uncontrolled 
continuum limit of a strongly interacting system.
To make it into a rigorous proof seems to be a nontrivial task.
}. 
Interestingly, the 
geometry of the chain is similar to that of one-dimensional  
models studied in the present paper.

\Rem
It should be noted that the Hubbard model is by no means the 
unique formulation for studying strong correlation 
effects in narrow band itinerant electron systems.
If one recalls how a tight-binding model is derived (or supposed to be 
derived) from a continuum model, there is a good reason to consider 
models with more complicated interactions than mere on-site Coulomb 
repulsion. 
One can even include interactions which explicitly favor ferromagnetism, 
and still formulate interesting problems.
See \cite{StrackVollhardt94a} for an approach to ferromagnetism in such 
extended Hubbard models.

\subsection{Flat-Band Ferromagnetism}
In 1989, Lieb proved an important general theorem for the Hubbard model
at half filling on a bipartite lattice \cite{Lieb89}.
As a corollary of the theorem, Lieb showed that a rather general class of 
Hubbard model exhibits ferrimagnetism\footnote{
Ferrimagnetism is a kind of antiferromagnetism on a bipartite lattice 
such that the numbers of sites in two sublattices are different.
}.
See also \cite{ShenQiu94}.

In 1991, Mielke \cite{Mielke91a,Mielke91b} 
came up with a new class of rigorous 
examples of ferromagnetism in the Hubbard model.
He showed that the Hubbard models on a general class of line 
graphs have ferromagnetic ground states.
A special feature of Mielke's model is that the corresponding 
single-electron Schr\"{o}dinger equation\footnote{
Here (and throughout the present paper) we are talking about the genuine 
one-particle problem, not an effective 
(and uaully ill-defined) one-particle problem in interacting 
system which are often discussed in heuristic works.
} has highly degenerate ground states. 
In other words, Mielke's models have flat (or dispersionless) bands.
The original results of Mielke's were for the electron number which 
corresponds to the half-filling of the lowest flat band, but later it was 
extended to different electron densities in two dimensional 
models \cite{Mielke92}.

A similar but different class of examples of ferromagnetism in the 
Hubbard models were proposed in \cite{92e,93d}.
These models were defined on a class of decorated lattices, and were also 
characterized by flat bands at the bottom of the single-electron spectrum.
In a class of models in two and higher dimensions, it was proved that the 
ferromagnetism is stable against fluctuation of electron numbers
\cite{92e,93d}.

The examples of ferromagnetism in 
\cite{Mielke91a,Mielke91b,Mielke92,92e,93d} are
common in that they treat special models with flat lowest 
bands\footnote{
Lieb's examples also have flat bands in the middle of the 
single-electron spectra.
}. 
The ferromagnetism established for these models are now
called flat-band 
ferromagnetism \cite{Kusakabe93}. 
There is a general theorem due to Mielke \cite{Mielke93} which 
states a necessary and sufficient condition for a Hubbard model with a
flat 
lowest band to exhibit ferromagnetism when the flat-band is half-filled.
Although flat-band ferromagnetism sheds light on very important aspects 
of the role of strong interaction in itinerant electron systems, it  
relies on the rather singular assumption that the models have 
completely flat bands.
As we discuss in Section~\ref{secMM1d}, we do not have true 
``competition'' between the kinetic energy and the Coulomb interaction.

If one adds small perturbation to the hopping Hamiltonian of a flat-band 
model, one generically gets a model with slightly dispersive lowest band.
It was conjectured \cite{92e,93d} 
that such models with nearly flat-bands exhibit 
ferromagnetism provided that the Coulomb interaction $U$ is large enough.
Kusakabe and Aoki \cite{KusakabeAoki94a,KusakabeAoki94b} 
presented detailed study of this 
problem by numerical experiments and careful variational 
calculations. 
Their results provide strong support that the flat-band ferromagnetism 
is stable against small perturbations to the band structure.

We stress that this is a very delicate conjecture for the following reasons.
\begin{itemize}
	\item  
	When the ground states are ferromagnetically ordered, there inevitably 
	exist spin-wave (magnon, or Nambu-Goldstone) modes whose 
	excitation energies are of order $L^{-2}$, where $L$ denotes the linear 
	size of the lattice.
	The total energy of the perturbation, on the other hand, is always 
	proportional to the system volume $L^{d}$.
	This means that the total perturbation always exceeds the energy gap 
	when the system size becomes large.
	Such a situation can never be dealt with naive perturbation theories.
	
	\item  
	When the lowest band is non-flat, the model with $U=0$ exhibits Pauli 
	paramagnetism.
	It is strongly believed that, for sufficiently small $U$, the ground 
	states of the models (in finite volumes) are spin-singlet.
	Therefore one must have sufficiently large $U$ to get ferromagnetism.
	This means that the problem is a truly {\em nonperturbative} one.
\end{itemize}
In other words, one must directly face the notorious difficult problem of 
``competition'' between the kinetic energy and the Coulomb 
interaction.
Technically speaking, such natures of the problem inhibit one from making 
use of the common strategy to construct exact ground states by minimizing 
local Hamiltonians\footnote{
It turned out that there are exceptions to this statement \cite{95c}.
See the remark at the end of Section~\ref{secAbout}.
However we still believe that the above comment is true for generic 
models.
}.
This strategy has been used to derive exact ground states of various 
(extended) Hubbard models 
\cite{BrandtGiesekus92,StrackVollhardt94a,93b,94a,Boer95}, 
as well as in our early works \cite{92e,93d} on 
the flat-band Hubbard models.
During the successful history of mathematical physics,
there have been developed rigorous perturbation theories for various 
many-body problems, including classical and quantum spin systems and 
quantum field theories.
As far as we know, however, there is no general theory which enables one 
to control generic perturbation in models which exhibit continuous 
symmetry breaking.

\subsection{About the Present Paper}
\label{secAbout}

In the present paper, we report the first important step
towards the solution of the above problem about stability of flat-band 
ferromagnetism. 
We treat models with nearly flat bands
obtained by adding almost arbitrary perturbations
to the hopping matrices of the flat-band models.
For sufficiently large $U$, we prove that the ferromagnetic state
is locally stable.
More precisely we show that the lowest energy among ferromagnetic states 
is strictly less than the lowest energy among states with a single 
flipped spin. 
The local stability, 
along with the global stability established 
for the flat-band models,
strongly suggests that the ferromagnetic
states are the true ground states of the present models for 
sufficiently large $U$.
(See the remark below.)
We also prove that, in a certain range of the parameter space,
the spin-wave dispersion relations of the 
present models behave exactly as those in the Heisenberg ferromagnet.
This confirms the conjecture of Kusakabe and Aoki 
\cite{KusakabeAoki94b}.
These results were first announced in \cite{94c}.

As far as we know, this is the first time that 
the (local) stability of ferromagnetism is proved in truly
non-singular Hubbard models, overcoming the competition between the
kinetic energy and the Coulomb interaction.
We also note that this is one of the very few rigorous
and robust works in which 
nonperturbative aspects of many electron problems are treated.
Recently there have been remarkable progress in rigorous 
treatment of interacting many fermion systems based on renormalization 
group techniques.
However these treatments deal only with weak coupling phenomena such as 
the Tomonaga-Luttinger liquid \cite{Benfatto94}, and the 
superconductivity \cite{Feldman95}.

The present paper is organized as follows.
In Section~\ref{Sec1d}, we restrict ourselves to the simplest 
one-dimensional models, and discuss our main 
results and ideas behind the proof.
We have tried hard to make this section accessible to a wide range of 
readers.
In Section~\ref{SecDef}, we introduce general class of models in arbitrary 
dimensions, and state our rigorous results precisely.
Sections~\ref{SecSingle} to \ref{SecBasis} are devoted to the proof of 
our theorems. 
We have carefully organized the lengthy proof so that to make it as 
readable as possible.
One can read off the organization of these sections by taking 
a look at the table of contents.
In general, earlier sections contain physically interesting ideas, and 
later sections contain technical materials.
A browse through Sections~\ref{SecSingle} to \ref{SecProof} should give 
the reader a clear idea about the detailed structure of our proof.

\Rem
(April, 1995)
After the completion of the present paper, we have finally succeeded in 
proving the global stability of ferromagnetism in Hubbard models obtained 
by adding {\em special} perturbation to the flat-band models \cite{95c}.
We stress that this new result does not diminish the importance of the 
present work.
Even for the models treated in the new paper \cite{95c}, the only way 
(that we know of) to prove meaningful lower bounds for spin-wave 
excitation energy is via the machinery developed here.
The robustness of the present results (in the sense that we allow 
arbitrary weak translation-invariant perturbation) is also important.
\Section{Stability of Ferromagnetism in One-Dimensional Models}
\label{Sec1d}    
 In the present section, we discuss our main results and the basic ideas of
their proof in the context of simplest one-dimensional Hubbard models.
The advantage of restricting ourselves to one-dimensional models is
that we can discuss the essences of our theory without being bothered by
many technical details. 
In particular the analysis of the band structure
(Section~\ref{secband1d}) and the construction of localized bases
(Section~\ref{secbases1d}) can be carried out in explicit and elementary
manners, thanks to special features of the simple models. 
These explicit calculation will be
a good introduction to more elaborate analyses in the general class of
models.
Fortunately, the ideas developed in one-dimensional models can be used
in the study of the general models in higher dimensions with only 
technical modifications.

We have tried to make the present section self-contained, and  
accessible to a wide range of readers. 
We urge the readers to take a look at this section, no matter whether
he/she is planning to study the later sections.

We note that what we present in this section is far from a complete 
mathematical proof.
We often neglect ``small'' contributions without any justifications,
and some of the formulas are not perfectly correct (in view of the rigorous
analysis presented in the later sections).
Nevertheless we believe that the material presented here will give a clear
idea about the philosophy and the structure of our proof.

\subsection{Models and Main Results}
\label{secMM1d}
We define the simplest two-band models in one-dimension, and describe
what we can prove about the stability of ferromagnetism and the spin-wave
dispersion relations.
We stress that the restriction to one-dimension is by no means
essential.
All the results here extend to corresponding models in higher
dimensions (i.e., two, three, or even higher).
The reader who is {\em not\/} planning to study the later sections
is invited to take a brief look at Section~\ref{SecDef}, especially
at Figures~\ref{lattice2d}, \ref{band2df}, and \ref{band2dnf}
of two-dimensional lattices and band structures.

Let $L$ be a fixed odd integer, and denote by
\begin{equation}
\Lao=\cbk{-\frac{L-1}{2},\ldots,-1,0,1,\ldots,\frac{L-1}{2}}\subset{\bf Z}
\label{Lao1d}
\end{equation}
the length $L$ chain (identified with a set of integers). 
We also define
\begin{equation}
\La'=\Lao+\frac{1}{2}=\cbk{-\frac{L}{2}+1,\ldots,-\frac{1}{2},\frac{1}{2},
\frac{3}{2},\ldots,\frac{L}{2}},
\end{equation}
which is the chain obtained by shifting $\Lao$ by $1/2$ . 
Our lattice $\La$ is obtained by ``decorating'' the chain $\Lao$ by the sites from
$\La'$ as $\La=\Lao\cup\La'$.
See Figure~\ref{lattice1d}.
One may regard our lattice structure as mimicking that of an oxide, where
sites in $\Lao$ correspond to metallic atoms and sites in $\La'$ correspond to
oxygen atoms.
We have no intention of building models which are realistic from the
view point of condensed matter physics.
But this analogy proves to be helpful in understanding various aspects
of our work, including the basic mechanism of ferromagnetism.

\begin{figure}
\centerline{\includegraphics[width=14cm,clip]{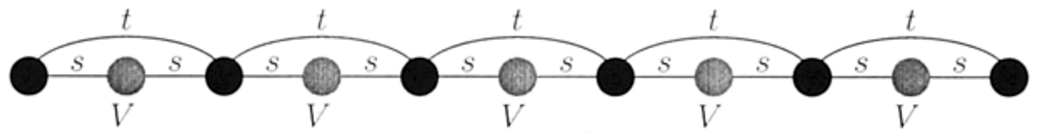}}
\caption{\captionA}
\label{lattice1d}
\end{figure}

We shall study the Hubbard model on $\La$ with the Hamiltonian
\begin{equation}
H=
t\sumtwo{x\in\Lao}{\sigma=\up,\dn}
\rbk{\cxs c_{x+1,\sigma}+\mbox{h.c.}} 
+
s\sumtwo{x\in\La}{\sigma=\up,\dn}
\rbk{\cxs c_{x+(1/2),\sigma}+\mbox{h.c.}} 
+
V\sumtwo{x\in\La'}{\sigma=\up,\dn}n_{x,\sigma}
+
U\sum_{x\in\La}n_{x,\up}n_{x,\dn},
\label{H1d}
\end{equation}
where we use periodic boundary conditions to identify $x$ 
with $x-L$ if necessary.
Here $\cxs$ and $\axs$
are the creation and the annihilation operators, respectively, 
of an electron
at site\footnote{
More precisely these operators correspond to an orbital state around the 
atom at $x$.
We have here assumed that each orbit 
is nondegenerate.
Usually models based on such an assumption are
referred to as single-band Hubbard models.
We find this terminology confusing since our model indeed has
multiple bands in its single-electron spectrum.
We think a better terminology is ``single-state'' Hubbard model.
Then the models we consider are classified as
``single-state multi-band Hubbard models''.
} $x\in\La$
with spin $\sigma=\up,\dn$.
They satisfy the standard fermionic anticommutation relations. 
(See (\ref{ac1}), (\ref{ac2}) for details.)
The corresponding number operator is $\nxs=\cxs\axs$.
Finally ``h.c.'' in (\ref{H1d}) stands for the 
hermitian conjugate.

 The real parameters $t$ and $s$ represent the amplitudes that an electron
hops between neighboring sites in $\Lao$
(separated by a distance $1$)
and between neighboring sites in $\La$
(separated by a distance $1/2$),
respectively.
The real parameter $V$
is the on-site potential energy for the sites in $\La'$.
See Figure~\ref{lattice1d}.
The first three terms in (\ref{H1d})
determine single-electron properties of the model.
The fourth term is the on-site Coulomb interaction characteristic in the
Hubbard model with the interaction energy $U>0$.

We consider many-electron states with the total electron number fixed to
$L$.
(See the end of Section~\ref{secHub} 
for an explicit construction of the Hilbert space.)
Since there are $2L$
sites in the lattice $\La$,
the present electron number corresponds to the quarter-filling of the whole
bands (or the half-filling of the lower band).
This electron number is natural if one imagines that each site in
$\Lao$ (which corresponds to a metallic atom) emits 
one electron to the band\footnote{
Of course one gets the same electron number if each site in $\La'$ emits
one electron.
But we want to insist on the present picture since it gives the desired
electron number for the general class of models studied later.
Moreover the picture to identify $\Lao$-sites as metallic atoms
is consistent with the nature of the ``ferromagnetic ground states''.
}.

The first result about ferromagnetism
deals with the so called flat-band Hubbard model.
To define the model, we introduce a parameter $\la>0$, and set\footnote{
The model studied here is obtained by setting $d=\nu=1$ in the general
class of models introduced in Section~\ref{secHub} and studied in the 
later sections.
In the Hamiltonian of the later sections, the energy is shifted by a constant
so that the lowest band in the flat-band models have vanishing energy.
}
\begin{equation}
s=\la t,\quad V=(\la^2-2)t.
\label{flatcond}
\end{equation}
Then the following strong result was proved in \cite{92e,93d}.
\begin{theorem}
[Flat-band ferromagnetism]
Let $t>0$ and $\la>0$
be arbitrary, and let $s$ and $V$ be determined by (\ref{flatcond}).
Then, for any $U>0$, the ground states of the Hamiltonian (\ref{H1d})
exhibit saturated ferromagnetism, and are nondegenerate apart from the
trivial spin degeneracy.
\label{flat1dTh}
\end{theorem}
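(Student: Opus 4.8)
The plan is the one familiar from flat-band ferromagnetism: first exhibit a ground-state energy that is exactly saturated by the fully polarized states, and then show by a connectivity argument that nothing else saturates it. Throughout one uses that $H$ commutes with the total spin operators, so the ground states split into $SU(2)$ multiplets and it suffices to control the highest-weight ones.

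I would begin with the single-electron problem: diagonalize the one-body part $\Hhop$ of $H$ (the first three terms of (\ref{H1d})) by Fourier transformation on the two-site unit cell $\{x,x+1/2\}$, $x\in\Lao$, obtaining a $2\times2$ Bloch matrix whose eigenvalues $\ep_1(k)\le\ep_2(k)$ can be written down explicitly. One then checks that $\ep_1(k)$ is independent of $k$ --- a flat band at an energy $\Emin$ --- exactly when the relations (\ref{flatcond}) hold, and that in that case $\ep_2(k)-\Emin>0$ for all $k$, so that the single-electron ground-state subspace $\Po\Hsing$ is $L$-dimensional and separated from the rest of the spectrum by a gap. Next I would build an explicit localized basis of $\Po\Hsing$: operators $\ad_x$, $x\in\Lao$, each a fixed linear combination of $\cd_y$ over a bounded neighbourhood of $x$ (schematically $\ad_x=\cd_x-\la^{-1}(\cd_{x-1/2}+\cd_{x+1/2})$, up to the normalization fixed by the band analysis), together with complementary operators $\bd_x$, $x\in\La'$, spanning the upper band. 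The relevant algebra to record: the Gram matrix of the $L$ states $\ad_x\vac$ is real, symmetric and positive-definite (so these states are independent and span $\Po\Hsing$), the supports of neighbouring $\ad_x$ overlap so that the associated graph on $\Lao$ is connected, and $\Hhop-\Emin$ is a nonnegative quadratic form in the $\bd_x$ alone.

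With this in hand, write $H-\Emin L=(\Hhop-\Emin L)+U\sum_{x\in\La}n_{x,\up}n_{x,\dn}$. On the $L$-electron space the kinetic term is $\ge0$ --- there are $L$ electrons, the lowest one-body energy is $\Emin$, and each spin species carries an $L$-fold flat band, so all $L$ electrons can occupy it --- with equality iff every electron lies in $\Po\Hsing$; the interaction term is $\ge0$ since $U>0$. Hence the ground-state energy is $\ge\Emin L$, and the fully polarized state $\UP=\bigl(\prod_{x\in\Lao}\ad_{x,\up}\bigr)\vac$ attains it: all $L$ electrons occupy the flat band, and full polarization gives $n_{x,\up}n_{x,\dn}\UP=0$ for every $x$. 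By $SU(2)$ invariance the states $(\Stot^-)^m\UP$, $m=0,\ldots,L$, are all ground states, so saturated ferromagnetism ($S=\Smax=L/2$) is realized; it remains to show these exhaust the ground states.

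The crux is therefore to prove that every $L$-electron state $\Psi$ with all electrons in $\Po\Hsing$ and with $\langle n_{x,\up}n_{x,\dn}\rangle_\Psi=0$ for all $x\in\La$ lies in the $(L+1)$-dimensional span of $(\Stot^-)^m\UP$, $m=0,\ldots,L$; by the $SU(2)$ decomposition it is enough to treat a highest-weight such $\Psi$ and show it is proportional to $\UP$. To do this I would expand $\Psi=\sum_{A,B\subseteq\Lao}f(A,B)\bigl(\prod_{x\in A}\ad_{x,\up}\bigr)\bigl(\prod_{y\in B}\ad_{y,\dn}\bigr)\vac$, re-express each $c_{x,\sigma}$ through the $\ad$ and $\bd$ operators, convert the conditions $n_{x,\up}n_{x,\dn}\Psi=0$ into linear relations among the $f(A,B)$, and then exploit the connectedness of the overlap graph on $\Lao$ to propagate these relations from site to site until $f$ is pinned down. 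This last step --- essentially an irreducibility/Perron--Frobenius type argument --- is the part I expect to be hard: the flatness of the band only confines the electrons to an $L$-dimensional one-body space, and it is the connectivity of the decorated chain together with the precise (non-orthogonal) structure of the localized basis that actually forces the spins to align. Once that rigidity is in hand, the ground-state space is exactly the spin-$\Smax$ multiplet, which is nondegenerate apart from the trivial $(2S+1)$-fold spin degeneracy, proving Theorem~\ref{flat1dTh}.
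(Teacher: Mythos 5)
Your outline tracks the paper's own proof (Theorem \ref{flatbandTh} in Section \ref{secflatproof}) step for step: flat band plus gap, non-orthogonal localized operators $a^\dagger$ for the lower band, $\Hhop\ge 0$ and $\Hint\ge 0$ after the energy shift, $\UP$ saturating the bound, then expansion of an arbitrary ground state in the $a^\dagger$-basis and use of $n_{x,\up}n_{x,\dn}\Phi=0$ together with "connectivity" to pin the state down. So the strategy is the right one and the same one.

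The problem is that the theorem \emph{is} the step you flag as "the part I expect to be hard" and then defer. Everything up to and including "$\UP$ is a ground state and the ground state energy is $E_0$" is routine; what needs proof is precisely that a ground state with no double occupancy in the localized basis is forced into the spin-$S_{\max}$ multiplet. The paper's mechanism is concrete, not a Perron--Frobenius irreducibility argument: one rewrites $c_{x,\up}c_{x,\dn}\Phi=0$ via the inversion formula (\ref{binv}), obtaining $\sum_{y,z}\psi^{(y)}_x\psi^{(z)}_x b_{y,\up}b_{z,\dn}\Phi=0$. For $x\in\Lao$ this collapses to $b_{x,\up}b_{x,\dn}\Phi=0$, giving no double occupancy and the representation $\Phi=\sum_\sigma g(\sigma)\prod_{y\in\Lao}a^\dagger_{y,\sigma(y)}\vac$. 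For $x\in\La'$ it gives the antisymmetrized constraint $(b_{y,\up}b_{z,\dn}-b_{z,\up}b_{y,\dn})\Phi=0$ for each pair $y,z$ of $\Lao$-neighbours of $x$, which translates into the exchange symmetry $g(\sigma)=g(\sigma_{y,z})$. Crucially, passing from the summed constraint to the individual pair constraints relies on the electron number being exactly $|\Lao|=L$, so that every $\Lao$-site is singly occupied; the paper flags this in a footnote and you do not mention it, yet it is what makes the argument work at this filling and not others. Exchange symmetry on a connected graph then immediately gives that $g$ is a symmetric function of the spin configuration, i.e.\ $\Phi$ lies in the span of $(S^-_{\rm tot})^M\UP$. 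Without carrying out this chain (in particular the use of the $\La'$-site constraints and the half-filling), the proposal does not yet prove the theorem; it reproduces the scaffolding but leaves the load-bearing step as a to-do.
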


More precisely,  a state is said to ``exhibit saturated ferromagnetism'' if
the total spin $\Stot$
of the state takes the maximum possible value $\Smax=L/2$.
See the end of Section~\ref{secHub} for a precise definition of $\Stot$.
See also Theorem~\ref{flatbandTh} for the general theorem, 
and Section~\ref{secflatproof} for 
a proof.
The flat-band ferromagnetism has been established for a general
class of models including those in higher dimensions \cite{Mielke93}.
In a class of models in two and higher dimensions, the existence
of ferromagnetism for lower electron densities, as well as the 
existence of a paramagnetism-ferromagnetism transition (as
the electron density is changed) are established rigorously
\cite{Mielke92,92e,93d}.

 A model determined by the conditions (\ref{flatcond}) with
$t>0$ and $\la>0$ has a very special
feature that the ground states of the corresponding
single-electron Schr\"{o}dinger equation
are $L$-fold degenerate.
In other words, the lower band (in its
single-electron spectrum) is dispersionless (or flat). 
We shall see this explicitly in Section~\ref{secband1d}.
See Figure~\ref{band1d}a.
As a consequence, the many-electron ground states in the non-interacting
model with $U=0$
are highly degenerate.
The total spin can take any of the allowed values $\Stot=1/2,3/2,\ldots,L/2$.
This is a kind of paramagnetism, but is certainly 
different from the Pauli paramagnetism which allows only unique (or
two-fold degenerate) ground state(s) with the minimum possible $\Stot$
(which is $0$ or $1/2$).

 The role of the Coulomb interaction $U$ 
in flat-band ferromagnetism
is to lift the above mentioned degeneracy, and to ``select'' only
the ferromagnetic states as ground states. 
This is why even infinitesimally small $U$ 
is sufficient for stabilizing ferromagnetism.
Although the flat-band ferromagnetism focuses on a nontrivial and
important effect caused by electron interactions, it avoids dealing 
with the
truly difficult problem about ``competition'' between the kinetic energy and
the electron interactions.

 Let us now turn to the models with nearly-flat bands obtained by perturbing
the above models.
In order to simplify the discussion, we consider the simplest possible
perturbation\footnote{
In the general treatment described in the
later sections, we allow completely general perturbations with
translation invariance and certain summability.
See Section~\ref{secHub}.}.
Instead of (\ref{flatcond}), let us set
\begin{equation}
s=\la t,\quad V=(\la^2-2+\kappa)t,
\label{nearflat}
\end{equation}
where the parameter $\kappa$
measures the strength of the perturbation.
As we see soon in Section~\ref{secband1d},
the lower band is no longer flat for $\kappa\ne0$.

Let $\Emin(\Stot)$
denote the lowest energy among the $L$-electron
states with a given total spin $\Stot$.
The Pauli exclusion principle implies that, for a model with 
$\kappa\ne0$ and $U=0$,
these energies satisfy the monotonicity inequality
\begin{equation}
\Emin(1/2)<\Emin(3/2)<\cdots<\Emin(\Smax-1)<\Emin(\Smax).
\label{mono}
\end{equation}
with $\Smax=L/2$.
This is nothing but the Pauli paramagnetism.

We want to examine if these strict inequalities can be reversed as a
consequence of on-site Coulomb interaction.
We stress that this is a truly nonperturbative problem in which one must
directly face the ``competition'' between the kinetic energy 
and the interaction.
In fact it is quite easy to see that we must have a sufficiently large $U$ 
to stabilize ferromagnetism.
\begin{theorem}
[Instability of ``ferromagnetic ground states'' for small $U$]
Let $t>0$ and $\la>0$
be arbitrary, and let $s$
and $V$ be determined by (\ref{nearflat}).
We assume $\kappa\ne0$.
Let 
$\bar{\ep}(\kappa)=(t/2)\abs{4-\{(\la^2+4)^2+2\kappa(\la^2-4)\}^{1/2}
+(\la^2+2\kappa\la^2)^{1/2}}
=\{4/(\la^2+4)\}t\akappa+O(\kappa^2)$
denote the band width of the lower band.
Then for $U$ satisfying $0\le U<\bar{\ep}(\kappa)$,
we have
\begin{equation}
\Emin(\Smax-1)<\Emin(\Smax).
\label{E<E1d}
\end{equation}
\label{inst1dTh}
\end{theorem}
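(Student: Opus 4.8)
The plan is to exhibit, for every $U$ with $0\le U<\bar{\ep}(\kappa)$, an explicit $L$-electron trial state with total spin $\Stot=\Smax-1$ whose energy lies strictly below $\Emin(\Smax)$; since $\Emin(\Smax-1)$ is by definition the infimum of $\langle H\rangle$ over states with $\Stot=\Smax-1$, this gives~(\ref{E<E1d}). I would use the explicit diagonalization of the single-electron part $H_0$ of $H$ (the first three terms of~(\ref{H1d})) carried out in Section~\ref{secband1d}: the $2L$ levels split into a lower band $\ep_1(k)$ and an upper band $\ep_2(k)$ separated by a gap, with lower-band width $\max_k\ep_1(k)-\min_k\ep_1(k)=\bar{\ep}(\kappa)$, which is positive for $\kappa\ne0$ (so the lower band is genuinely dispersive). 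Write $a^\dagger_{k,\sigma}$ for the lower-band Bloch creation operator of momentum $k$ and spin $\sigma$. First I would pin down $\Emin(\Smax)$: the sector $\Stot=\Smax$ is a single $SU(2)$ multiplet generated from the fully polarized state $\UP$ in which up-spin electrons fill the $L$ lowest single-electron levels, which by the gap are exactly the lower-band states; since $\UP$ carries no down-spin electron, $\Hint\UP=0$, so $H\UP=\bigl(\sum_k\ep_1(k)\bigr)\UP$ with the sum over the $L$ lower-band momenta, whence $\Emin(\Smax)=\sum_k\ep_1(k)$.

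Next I would take the trial state $\Psi=a^\dagger_{q,\dn}a_{p,\up}\UP$, where $p,q$ are lower-band momenta with $\ep_1(p)=\max_k\ep_1(k)$ and $\ep_1(q)=\min_k\ep_1(k)$ (dispersiveness forces $p\ne q$): that is, I remove the up-spin electron at the band top and add one down-spin electron at the band bottom. Since $H_0$ is diagonal in the band basis, $\langle H_0\rangle_\Psi=\sum_k\ep_1(k)-\ep_1(p)+\ep_1(q)=\Emin(\Smax)-\bar{\ep}(\kappa)$. As $\Psi$ is a tensor product of an $(L-1)$-electron up-determinant and a one-electron down-state, $\langle n_{x,\up}n_{x,\dn}\rangle_\Psi=\langle n_{x,\up}\rangle_\Psi\,\langle n_{x,\dn}\rangle_\Psi$; and since $\langle n_{x,\up}\rangle_\Psi\le\langle n_{x,\up}\rangle_\UP\le1$ while $\sum_x\langle n_{x,\dn}\rangle_\Psi=1$, one gets $\langle\Hint\rangle_\Psi\le U$. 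Hence, using $U<\bar{\ep}(\kappa)$,
\begin{equation}
\langle H\rangle_\Psi\le\Emin(\Smax)-\bar{\ep}(\kappa)+U<\Emin(\Smax).
\end{equation}

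It then remains to verify that $\Psi$ carries total spin exactly $\Smax-1$. It is an $\Sztot$-eigenstate of eigenvalue $\Smax-1$, so the only competing value is $\Stot=\Smax$; but the $\Sztot=\Smax-1$ subspace of that multiplet is spanned by $\Stot^-\UP=\sum_k a^\dagger_{k,\dn}a_{k,\up}\UP$, and matching the up- and down-parts of the Slater determinants shows that $\langle\Stot^-\UP,\Psi\rangle$ can receive a nonzero contribution from a momentum $k$ only when $k=p$ and $k=q$ simultaneously; since $p\ne q$ the overlap vanishes. Hence $\Stot(\Psi)=\Smax-1$, and therefore $\Emin(\Smax-1)\le\langle H\rangle_\Psi<\Emin(\Smax)$, which is~(\ref{E<E1d}).

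The computations are all routine; the step I expect to need real care is the spin-sector check of the previous paragraph, since it is exactly where $p\ne q$, i.e.\ $\kappa\ne0$, enters, and without it the trial state could collapse into the ferromagnetic multiplet and establish nothing. Two minor points: the bound $\langle\Hint\rangle_\Psi\le U$ is wasteful, and a sharper estimate exploiting that the up-spin density averages to $1/2$ would reach larger $U$ but is unneeded here; and on a finite odd-length chain $\max_k\ep_1(k)-\min_k\ep_1(k)$ differs from the closed form displayed for $\bar{\ep}(\kappa)$ by $O(L^{-2})$, so the hypothesis on $U$ should be read with respect to the finite-lattice lower-band width.
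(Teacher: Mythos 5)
Your proposal is essentially the paper's own proof: the trial state $d^\dagger_{k_{\rm min},\dn}d_{k_{\rm max},\up}\UP$ in Section~\ref{seceasy} is exactly your $\Psi=a^\dagger_{q,\dn}a_{p,\up}\UP$, and the energy bookkeeping ($\Hhop$ contributes $\Emin(\Smax)-\bar{\ep}$, $\Hint$ contributes at most $U$) is identical.

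The one place your argument has a flaw is the spin-sector check. You assert that the $\{\Stot=\Smax,\ \Sztot=\Smax-1\}$ subspace is spanned by $S^-_{\rm tot}\UP$ alone, so that orthogonality to that single vector settles the matter. That is not correct: \emph{every} state with $\Sztot=\Smax$ automatically has $\Stot=\Smax$, so the $\{\Stot=\Smax,\ \Sztot=\Smax-1\}$ subspace has dimension $\binom{2L}{L}$ and is spanned by $S^-_{\rm tot}\Phi$ over all fully up-polarized $\Phi$, not just by $S^-_{\rm tot}\UP$. Your conclusion is nonetheless true, and the cleanest repair is the one your own computation is secretly doing: show $S^+_{\rm tot}\Psi=a^\dagger_{q,\up}a_{p,\up}\UP=0$ (because the lower-band level $q\ne p$ is still occupied in $a_{p,\up}\UP$), which immediately forces $\Stot(\Psi)=\Sztot(\Psi)=\Smax-1$. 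Alternatively, the check can be skipped entirely, which is implicitly what the paper does: decompose $\Psi$ into its orthogonal $({\bf S}_{\rm tot})^2$-eigencomponents, use that $H$ preserves these sectors, and observe that $\langle H\rangle_\Psi<\Emin(\Smax)$ already forces $\Emin(\Smax-1)<\Emin(\Smax)$ since the $\Stot=\Smax$ component alone cannot have expected energy below $\Emin(\Smax)$.
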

This is the one-dimensional version of Theorem~\ref{instTh}.

 We call the states with $\Stot=\Smax$
which have the energy $\Emin(\Smax)$
the ``ferromagnetic ground states''\footnote{
This is a slight abuse of the word, since the states are not necessarily the
true ground states.}.
It is easily found that the ``ferromagnetic ground states'' are nondegenerate
apart from the trivial $(2\Smax+1)=(L+1)$-fold degeneracy.
(See Lemma~\ref{FerroGSLemma}.)

 Theorem~\ref{inst1dTh}
states that the ``ferromagnetic ground states'' are unstable against a
single-spin flip.
Although the inequality (\ref{E<E1d})
does not tell us what the ground state of the model is, 
it does establish that
the ``ferromagnetic ground states'' are {\em not}
the true ground states.

 Of course results like Theorem~\ref{inst1dTh} can be proved rather easily 
 by the standard
variational argument.
What is really interesting (and difficult) is to get a reversed 
inequality for models with larger values of $U$.
The following is the most important result of the present paper.
\begin{theorem}
[Local stability of ``ferromagnetic ground states'']
Let $t>0$, and let $s$ and $V$ be determined by (\ref{nearflat}).
We further assume that $\la\ge\la_2$, $\akappa\le\kappa_1$,
$\la\akappa\le p_1$,
and
\begin{equation}
U\ge K_1\la^2t\akappa,
\label{U>1d}
\end{equation}
where $\la_2$, $\kappa_1$, $p_1$, and $K_1$ are 
positive constants\footnote{
We use the same symbols for the constants  as in the 
later sections.
In general models, the constants depend on the basic model
parameters $d$, $\nu$, and $R$, but here they are simply constants.
}.
Then we have
\begin{equation}
\Emin(\Smax-1)>\Emin(\Smax).
\label{E>E1d}
\end{equation}
\label{st1dTh}
\end{theorem}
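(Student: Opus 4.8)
\medskip\noindent\emph{Proposal for a proof.}\ The plan is to reduce (\ref{E>E1d}) to a positivity statement for a quadratic form, and then to prove that positivity by peeling off a small, carefully controlled perturbation of the flat-band model. First I would note that the ``ferromagnetic ground state'' $\UP$ (the Slater determinant filling the lower band with up-spins) is an eigenstate of $H$ even for $\kappa\ne0$, with energy $\Emin(\Smax)=\sum_p\ep_1(p)$, so that $\Htil:=H-\Emin(\Smax)$ is a concrete non-negative operator on the $\Stot=\Smax$ sector. By the $SU(2)$ invariance of $H$, the $\Sztot=\Smax-1$ part of the single-spin-flip sector splits as ${\bf C}\,S^-\UP$ (a vector of the spin-$\Smax$ multiplet) plus the highest-weight vectors of the spin-$(\Smax-1)$ multiplets, and $H$ preserves this splitting; hence $\Emin(\Smax-1)-\Emin(\Smax)=\min\{\langle\Phi,\Htil\,\Phi\rangle/\norm{\Phi}^2:0\ne\Phi\in\Hsingp\}$, where $\Hsingp$ is the orthogonal complement of $S^-\UP$. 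Thus (\ref{E>E1d}) is equivalent to $\langle\Phi,\Htil\,\Phi\rangle>0$ for every $0\ne\Phi\in\Hsingp$. Since $H$ is translation invariant I would argue momentum-by-momentum; the lower bound I obtain will have to vanish as $k\to0$ (the inevitable Nambu--Goldstone softening, of order $L^{-2}$ at the smallest allowed momentum), so the real content is the \emph{sign}, not a uniform gap.

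Next I would pass to band-adapted operators: a family $\{\ad_{x,\sigma}\}$ of sharply localized states spanning the nearly-flat lower band and a complementary family $\{\bd_{x,\sigma}\}$ for the upper band, keeping careful track that these are \emph{not} orthonormal and that their overlaps and norm corrections are $O(1/\la)$. In these variables $\Htil$ is, up to an additive constant, the sum of the flat-band hopping operator (which can be written so as to be non-negative and to annihilate every state built solely from the $a$'s, $\UP$ in particular), the $O(\akappa t)$ perturbation $\Hpert$, and the on-site Coulomb term $\Hint\ge0$. I would then split the single-spin-flip space at momentum $k$ as $\calH_{\rm low}\oplus\calH_{\rm high}$, with $\calH_{\rm low}$ carrying the up-electrons in the lower band with one hole and the down-electron also in the lower band with no doubly occupied site, and $\calH_{\rm high}$ containing the states with one up-electron promoted across the gap \emph{or} one doubly occupied site, so that $\Htil|_{\calH_{\rm high}}$ is bounded below by a positive constant times $\min\{\la^2t,U\}$ up to kinetic fluctuations; the hypotheses $\la\ge\la_2$, $\la\akappa\le p_1$, $U\ge K_1\la^2t\akappa$ are precisely what keep those fluctuations small enough to invert $\Htil|_{\calH_{\rm high}}$ with a useful bound. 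On $\calH_{\rm low}$ the flat-band hopping drops, so there $\Htil$ at $\kappa=0$ is the ``exchange'' operator coming from $\Hint$ (the four-term process that is the ultimate origin of the ferromagnetism in Theorem~\ref{flat1dTh}); the crucial input is the operator inequality $\Htil|_{\kappa=0,\ \Hsingp}\ge\delta_0(k)>0$, where $\delta_0(k)$ is comparable to $(U/\la^4)\{\sin(k/2)\}^2$ and the trial vector $\Ok$ nearly saturates it.

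Turning on $\Hpert$ affects $\calH_{\rm low}$ in two ways: a diagonal ``band-bending'' term whose size is set by the width $\sim\akappa t/\la^2$ of the perturbed lower band (which lives mostly on the $\Lao$-sites and therefore feels the $\kappa$-shift of the $\La'$-potential only at order $1/\la^2$), and an off-diagonal coupling of $\calH_{\rm low}$ to the bound-pair states $\Xi_+(k),\Theta(k)\in\calH_{\rm high}$ (a hole tied to a doubly occupied site) of amplitude $O(\akappa t)$. Eliminating $\calH_{\rm high}$ by a Feshbach/Schur reduction replaces its effect by a correction bounded below by $-C\akappa^2t^2/U$, and altogether, on the momentum-$k$ part of $\Hsingp$ (with $k=0$ set aside, where $\Ok$ has been removed and the residual states carry a positive gap that the perturbation cannot close),
\[
\langle\Phi,\Htil\,\Phi\rangle\ \ge\ \Bigl(c\,\frac{U}{\la^4}-C'\,\frac{\akappa t}{\la^2}-C''\,\frac{\akappa^2t^2}{U}\Bigr)\{\sin(k/2)\}^2\,\norm{\Phi}^2
\]
up to genuinely higher-order remainders. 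The hypothesis $U\ge K_1\la^2t\akappa$ makes the parenthesis positive: $U/\la^4\ge K_1\akappa t/\la^2$ beats the band-bending, and $\akappa^2t^2/U\le\akappa t/(K_1\la^2)$ is of the same order and absorbed as well; this gives (\ref{E>E1d}).

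The hard part is the honest version of the last paragraph. First, one must track the powers of $\la$ through the non-orthonormal $a$/$b$ basis: careless estimates generate spurious factors of $\la$, and one must verify that the genuinely $\la$-enhanced matrix element of $\Hpert$ --- the bound-pair \emph{hopping}, of size $\sim\la\akappa t$ --- only broadens the bound-pair band (harmless once $U\gg\la\akappa t$, which follows from $U\ge K_1\la^2t\akappa$), while the coupling of the magnon channel to the bound pairs stays the un-enhanced $O(\akappa t)$ used above. Second, one must upgrade the per-$k$ eigenvalue estimates to a genuine operator inequality on all of $\Hsingp$, including the delicate small-$k$ window where the flat-band term, the band-bending, and the Feshbach correction are all $O(k^2)$ and one needs the coefficients, not merely the orders. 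Third, one must handle the asymmetry of the $\Ok$--$\Xi_+(k)$ matrix elements (the out-going ones from $\Ok$ being anomalously small, which is exactly why $\Ok$ is such a good trial state) by introducing the tilt factor $\al(k)$ and diagonalizing the resulting $3\times3$-type effective problem spanned by $\Ok$, $\Xi_+(k)$, $\Theta(k)$ with the correct signs. Fixing $\la_2,\kappa_1,p_1,K_1$ so that every ``higher-order'' remainder is genuinely dominated by $cU\la^{-4}\{\sin(k/2)\}^2$ is the bulk of the labor.
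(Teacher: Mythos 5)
Your proposal takes a genuinely different route from the paper's, though it targets the same physics and the same dangerous states. The paper does \emph{not} perform a Feshbach/Schur elimination of the high-energy sector. Instead it uses a much cruder tool: a Gershgorin-type diagonal-dominance lower bound (Lemma~\ref{DPhiLemma}), $\ESW\ge\min_{\Phi\in\Bk}D[\Phi]$ with $D[\Phi]={\rm Re}\,h[\Phi,\Phi]-\sum_{\Psi\ne\Phi}|h[\Phi,\Psi]|$, applied to $\Htil=\Hhopt+\Hint$ expressed in the \emph{non-orthonormal} basis $\Bk$ built from $a^\dagger/b$ operators. The whole effect of ``integrating out'' $\Xi_\pm$, $\Theta$, and the upper-band states is reproduced by the single adjustable normalization $\alpha(k)\sim|\sin(k/2)|$ of $\Ok$: because the matrix in this basis is non-symmetric, tilting $\Ok$ by $\alpha(k)$ shrinks the outgoing row $|h[\Ok,\Psi]|$ at the expense of the incoming column $|h[\Psi,\Ok]|$, and one simply checks that every diagonal entry then dominates its own row. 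This buys you a purely combinatorial estimate --- sums of absolute values of matrix elements, no resolvent inversion, no operator-norm control of $(\Htil_{\rm high}-E)^{-1}$ --- which is why the paper can push it through with such weak hypotheses. The price is that the bound degrades for large $U$ (Theorem~\ref{Ek>Th} requires $U\le K_2\la t$), and the local-stability statement for $U>K_2\la t$ is recovered by a separate monotonicity argument: $\Emin(\Smax)$ is $U$-independent while $\Emin(\Smax-1)$ is nondecreasing in $U$. The $k=0$ sector is handled by noting $\Omega(0)\propto S^-_{\rm tot}\UP$ has $\Stot=\Smax$, deleting it from the basis, and reusing the remaining $D[\cdot]$ bounds.

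Your Feshbach version is plausible and, if carried through, might avoid both the upper bound on $U$ and the monotonicity appendix, since the projected correction $-C\akappa^2t^2/U$ improves as $U$ grows. But be aware of one trap you half-flag yourself: in a symmetrized (orthonormalized) picture the $\Ok\leftrightarrow\Xi_\pm$ coupling is not simply the ``bare'' $O(\akappa t)$ perturbation; the interaction $\Hint$ contributes a piece of size $\sim U/\la^3$ (the geometric mean of the two asymmetric entries $U/\la^2$ and $U/\la^4$), so the Feshbach correction from the bound pairs alone is $\sim(U/\la^3)^2/U=U/\la^6$, which is subleading to $U/\la^4$ but only by $1/\la^2$ --- this is exactly where the constants $\la_2$, $\kappa_1$ enter, and where a naive ``coupling $=O(\akappa t)$'' bookkeeping would silently fail. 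The paper's asymmetric Gershgorin bookkeeping avoids ever forming this geometric mean, which is one reason it was chosen. You have correctly located all of the hard estimates; the difference is in the scaffolding around them.
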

This is the one-dimensional version of Theorem~\ref{stabilityTh}.

 The bound (\ref{E>E1d}) states that the ``ferromagnetic ground states'' are
stable under a single-spin flip.
Clearly the most important condition for the above local stability theorem
is (\ref{U>1d}) which says we must have sufficiently large Coulomb
interaction (compared with the band width $\propto\akappa t$).
This is natural since the opposite inequality (\ref{E<E1d}) holds if $U$ is
small.
We can say that the above local stability theorem establishes
a truly nonperturbative result in which the ``competition''
between the kinetic energy and the electron interaction is
successfully dealt with. 

We recall the readers that both the energies $\Emin(\Smax-1)$ and 
$\Emin(\Smax)$ grow proportionally to the lattice size $L^{d}$, while 
their difference should be proportional to $L^{-2}$.
In such a situation, there seems to be little hope in proving the 
desired inequality (\ref{E>E1d}) for large $L$ by combining suitable 
lower bound for the left-hand side and upper bound for the right-had 
side.
However there are some nice features that save our task from being 
impossible.
In the subspace with $\Stot=\Smax$, the on-site Coulomb repulsion is 
completely irrelevant because of the Pauli principle.
Therefore the energy $\Emin(\Smax)$ in the right-hand side of  
(\ref{E>E1d}) is nothing but the ground state energy of the 
corresponding non-interacting spinless fermion, which energy is known 
exactly (at least formally).
In the subspace with $\Stot=\Smax-1$, the on-site repulsion does play 
a highly nontrivial role, but one can still imagine that its effect 
is (at most) of order 1 rather than of order $L^{d}$.
This is because (in a suitable representation) there is only one 
electron with down spin, and this single electron interact with the 
rest of electrons with up spin.
This intuitive observation is indeed the basic starting point of our 
proof.

 We are also able to establish rather strong results about the excitation
energy above the ``ferromagnetic ground states''.
Let
 $\calK=\{k=2\pi n/(L-1)\,\bigl|\,\mbox{$n\in{\bf Z}$ s.t. $\abs{n}\le(L-1)/2$}\}$
be the set of wave numbers allowed in the present model.
For $k\in\calK$,
we denote by $\calH_k$ the Hilbert space of the states which have a
definite crystal momentum $k$,
and which contain $(L-1)$ up-spin electrons and one down-spin electron.
(See (\ref{Hilbk}) for a precise definition.)
We let $\ESW$
be the lowest energy among the states in $\calH_k$.
Note that $\ESW$
can be interpreted as the energy of an elementary spin-wave excitation.
The following theorem essentially determines the behavior of $\ESW$.
\begin{theorem}
[Bounds on the spin-wave energy]
Let $t>0$, and let $s$ and $V$ be determined by (\ref{nearflat}).
Assume that $\la\ge\la_3$, $\akappa\le\kappa_0$, and
$K_2\la t\ge U\ge A_3 \la^2 t\akappa$,
where $\la_3$, $\kappa_0$, $K_2$ and $A_3$ are positive constants.
Then we have 
\begin{equation}
F_2\,\frac{4U}{\la^4}\rbk{\sin\frac{k}{2}}^2
\le\ESW-\Emin(\Smax)\le 
F_1\,\frac{4U}{\la^4}\rbk{\sin\frac{k}{2}}^2,
\label{ESW1d}
\end{equation}
with 
\begin{equation}
F_1 = 1 +\frac{A_4}{\la} +A_5\la\akappa 
+ \frac{A_6\la^2t\akappa^2}{U},
\end{equation}
and 
\begin{equation}
F_2 = 1 - A_1\akappa - \frac{A_2}{\la} 
- \frac{A_3\la^2t\akappa}{U},
\end{equation}
where $A_i(i=1,\ldots,6)$
are positive constants.
\label{SW1dTh}
\end{theorem}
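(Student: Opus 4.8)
The plan is to work throughout inside the fixed-momentum sector $\calH_k$, and to bracket $\ESW-\Emin(\Smax)$ from above by a variational estimate and from below by a decomposition of $\calH_k$ into a ``good'' spin-wave subspace and its ``high-energy'' complement, in both directions measuring energies relative to the exactly known value $\Emin(\Smax)$. The latter is explicit because the Coulomb term $\Hint$ is inert in the $\Stot=\Smax$ sector, so $\Emin(\Smax)$ is just the free-fermion energy $\sum_q\ep_1(q)$ of the completely filled lower band. The one indispensable tool is the basis of compactly supported localized lower-band orbitals $\psi_o$ of the flat-band model constructed in Section~\ref{secbases1d}: in terms of it the single-down-spin sector of $H$ becomes explicit, each $\psi_o$ overlapping only its two neighbors $\psi_{o\pm1}$, and then only on a single site where both amplitudes are of order $\la^{-1}$, while the perturbation $\kappa\Hpert$ enters merely as an extra operator of relative size $\akappa$ (times appropriate powers of $\la$).

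For the upper bound I take the trial state $\Ok=\sum_{x}e^{ikx}\Gamma_x$, where $\Gamma_x$ is obtained from the fully polarized ground state $\UP$ by removing an up-spin from the localized orbital $\psi_x$ and inserting a down-spin into it. One then computes $\norm{\Ok}^2$ and $\langle\Ok|H|\Ok\rangle$ in this basis. Because $\sum_x\Gamma_x=S^-_{\rm tot}\UP$ is a rotated fully polarized ground state, the ``$k=0$ part'' of the matrix $\langle\Gamma_x|H|\Gamma_{x'}\rangle$ resums to $\Emin(\Smax)$ exactly; the remainder is dominated by the ``exchange'' process, in which $\Hint$ moves the flipped spin between adjacent orbitals with amplitude of order $U/\la^4$ (the single site shared by two adjacent orbitals carries weight of order $\la^{-2}$ in each), and its Fourier transform produces $(4U/\la^4)\rbk{\sin(k/2)}^2$. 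Collecting the remaining contributions — the Coulomb self-energy of the localized down-spin, the admixture of the upper band, and the perturbation acting on the orbitals and on the band — and dividing by $\norm{\Ok}^2$, one obtains $\ESW-\Emin(\Smax)\le F_1(4U/\la^4)\rbk{\sin(k/2)}^2$ with $F_1=1+O(1/\la)+O(\la\akappa)+O(\la^2 t\akappa^2/U)$.

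For the lower bound — the hard direction — I decompose $\calH_k=\calH_k^{(1)}\oplus\calH_k^{(2)}$, where $\calH_k^{(1)}$ is built from the localized orbitals with the down-spin also in a lower-band orbital (the state $\Ok$ together with the states in which the down-spin and the up-hole are displaced), and $\calH_k^{(2)}$ is the complementary subspace, whose important members are the bound hole--doublon state $\Xi_+(k)$ and the state $\Theta(k)$ carrying the down-spin in the upper band. The algebraic core is to recast $H-\Emin(\Smax)$, restricted to $\calH_k$, as the sum of a manifestly non-negative operator (its positivity resting on $U>0$ together with the flat-band algebra), an effective magnon-hopping operator, and error terms small in $1/\la$, $\akappa$, and $\la^2 t\akappa/U$. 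On $\calH_k^{(1)}$ this reduces $H-\Emin(\Smax)$ to a Heisenberg-ferromagnet-like spin-wave Hamiltonian whose lowest eigenvalue in the momentum-$k$ sector I bound below by $\bigl(1-O(\akappa)-O(1/\la)-O(\la^2 t\akappa/U)\bigr)(4U/\la^4)\rbk{\sin(k/2)}^2$; this requires showing that the effective magnon hopping amplitude stays close to $U/\la^4$ and, crucially, that the channel generating $\Xi_+(k)$ — the principal instability toward antiferromagnetism, driven by the $\ad\ad bb$ operator — couples back to $\calH_k^{(1)}$ only at the harmless order $(\mbox{amplitude})^2/U$, because $\Xi_+(k)$ itself lies at energy of order $U$ above $\Emin(\Smax)$. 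On $\calH_k^{(2)}$ one shows $H-\Emin(\Smax)\ge c\,\min(U,\la^2 t)\bigl(1-O(1/\la)\bigr)$, which under $A_3\la^2 t\akappa\le U\le K_2\la t$, $\akappa\le\kappa_0$, and $\la\ge\la_3$ is vastly larger than the $O(U/\la^4)$ scale of the spin-wave gap. Finally, for a general $\Psi=\Psi^{(1)}+\Psi^{(2)}\in\calH_k$ one estimates the cross term $\langle\Psi^{(1)}|H|\Psi^{(2)}\rangle$ by Cauchy--Schwarz and completes the square, distributing it between the two diagonal blocks; since only the $\calH_k^{(1)}$ block carries a small lower bound, this gives $\ESW-\Emin(\Smax)\ge F_2(4U/\la^4)\rbk{\sin(k/2)}^2$ with $F_2=1-O(\akappa)-O(1/\la)-O(\la^2 t\akappa/U)$.

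I expect the genuine difficulty to lie entirely on the lower-bound side, and within it to split into two interlocking parts. First, the localized basis is not orthonormal, so $\calH_k^{(1)}$ and $\calH_k^{(2)}$ are not orthogonal and the block decomposition must be carried out with respect to an approximately diagonal metric, with every resulting mismatch accounted for as an explicit power of $\la^{-1}$; keeping the positive-operator structure intact through this change of metric, and through the presence of $\kappa\Hpert$, is the delicate point. Second, pinning the effective magnon dispersion to $(4U/\la^4)\rbk{\sin(k/2)}^2$ with the stated constant demands identifying precisely which hopping and exchange processes contribute at leading order — consistently with the smallness of the ``out-going'' matrix elements from $\Ok$ — and proving that no cancellation occurs among the sub-leading ones. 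Once these structural facts are secured, matching the constants $\la_3,\kappa_0,K_2,A_3$ in the hypothesis against $A_1,\ldots,A_6$ in $F_1$ and $F_2$ is pure bookkeeping, and the same lower bound, minimized over $k\in\calK\setminus\{0\}$, immediately re-delivers the local stability statement $\Emin(\Smax-1)>\Emin(\Smax)$ in the present parameter range.
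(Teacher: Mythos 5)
Your upper-bound strategy --- take $\Ok$ as a trial state and compute the Rayleigh quotient $(\Ok,H\Ok)/(\Ok,\Ok)$ in the localized basis --- coincides with the paper's proof of Theorem~\ref{Ek<Th}. For the lower bound, however, you take a genuinely different route: an orthogonal (or metric-corrected) block split of $\calH_k$ followed by Cauchy--Schwarz and completing the square, i.e.\ a Schur-complement argument in the Hermitian inner product. The paper instead works with the \emph{non-Hermitian} matrix $h[\Psi,\Phi]$ defined by the expansion $\Htil\Phi=\sum_\Psi h[\Psi,\Phi]\Psi$ in the non-orthonormal basis $\Bk$, and applies the elementary Gershgorin-type Lemma~\ref{DPhiLemma}. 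This is not a cosmetic difference. The mechanism the paper exploits is the \emph{asymmetry} of these non-Hermitian elements: the outgoing elements $h[\Phik{u,r},\Ok]$ are anomalously small (order $t\akappa+U\akappa/\la^2+U/\la^3$, because the operators $\adxs$ create the \emph{sharply} localized $\phi^{(x)}$), while the incoming elements $h[\Ok,\Phik{u,r}]$ are larger (order $U/\la^2$); the free normalization $\alpha(k)$ built into $\Ok$ is tuned to exploit exactly this asymmetry, and the payoff is that the lower bound requires \emph{no} Gramm-matrix estimates at all. Your Hermitian version collapses the in- and out-going elements into the single symmetric number $(\Phik{u,r},H\Ok)$; to recover the asymmetry's effect you would have to control the full Gramm matrix of $\Bk$, not just $\|\Ok\|^2$ as for the upper bound, and you would further need a direct gap bound for $H$ restricted to the complement of $\Ok$ --- which does not follow from bounding the Gershgorin radii $\Dt[\Psi]$ of the other basis states, since $\Ok^\perp\cap\calH_k$ is not the span of $\Bk\bs\cbk{\Ok}$. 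That extra burden is precisely what the paper's non-Hermitian formulation is designed to avoid, and your proposal leaves it unelaborated beyond flagging it as ``the delicate point.''

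Separately, your description of the low-energy block is self-contradictory: you put ``the states in which the down-spin and the up-hole are displaced'' into $\calH_k^{(1)}$, but those are exactly the bound hole--doublon states $\Xi_\pm(k)$ (the paper's $\Phik{o,r}$ with $r\ne o$), which sit at energy $\simeq E_0+U$ and which you then correctly assign to $\calH_k^{(2)}$. For fixed $k$ the low-energy block is in fact one-dimensional, spanned by $\Ok$ alone; there is no ``Heisenberg-ferromagnet spin-wave Hamiltonian'' to diagonalize on it, only the scalar $h[\Ok,\Ok]\simeq E_0+4\,\Ut_{0,1;0,1}\rbk{\sin\frac{k}{2}}^2$, and every other state --- $\Xi_\pm(k)$, $\Theta(k)$, and the states with an electron in the upper band --- belongs to the high-energy block.
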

This is the one-dimensional version of 
Theorems~\ref{Ek<Th} and \ref{Ek>Th}.

 It is remarkable that we have $F_1\simeq F_2\simeq1$
if $\la\gg1$, $\la\akappa\ll1$
and $U\gg\la^2t\akappa$.
In this case the bounds (\ref{ESW1d}) imply
\begin{equation}
\ESW-\Emin(\Smax)
\simeq
\frac{4U}{\la^4}\rbk{\sin\frac{k}{2}}^2
=2J_{\rm eff}\rbk{\sin\frac{k}{2}}^2,
\label{ESW1d2}
\end{equation}
which is nothing but the spin-wave dispersion relation for the ferromagnetic
Heisenberg spin system.
(See Section~\ref{secper}.)
This result is very important since it guarantees that our Hubbard model
develops low-lying excited states with the precise structure expected in a
``healthy'' ferromagnetic system.

Theorem~\ref{SW1dTh} is also meaningful when applied to the 
flat-band model with $\kappa=0$.
The theorem guarantees 
that the exchange interaction $J_{\rm eff}\simeq2U/\la^4$
(which appears in (\ref{ESW1d2})) remains finite even for the flat-band
models, thus confirming the 
conjecture of Kusakabe and Aoki \cite{KusakabeAoki94b}.
We can conclude that the ferromagnetism in the flat-band models is not
at all pathological\footnote{
We recall that the Nagaoka's example of ferromagnetism is known to have a 
pathological spin-wave dispersion relation \cite{Nagaoka66,Kusakabe93}.
As for the other rigorous examples, no results about spin-wave 
excitations are known.
} in spite of the rather artificial condition
imposed on the models.

The reader may notice that Theorem~\ref{SW1dTh},
unlike Theorem~\ref{st1dTh}, requires an upper
bound for the Coulomb interaction $U$.
There indeed is a {\em physical} reason for this limitation.
Our proof of Theorem~\ref{SW1dTh} is based on an explicit construction of
the state which approximates the elementary spin-wave excitation.

Our approximate excited state, however, takes into account the
effect of interaction $U$ in a rather crude way.
This inhibits us from getting precise estimate in the models with 
larger values of $U$.
That our analysis is not efficient for large $U$ can be easily seen from our
formula for the effective exchange interaction $J_{\rm eff}=2U\la^{-4}$,
which is proportional to $U$.
For larger values of $U$, we expect $J_{\rm eff}$ to be
``renormalized'' to a less increasing function of $U$.
In particular, Kusakabe and Aoki \cite{KusakabeAoki94b}
pointed out that $J_{\rm eff}$ remains finite even in the limit
$U\up\infty$.
A proof of this fascinating conjecture might be possible if one extends
the present work by devising a more efficient approximate excited 
state which takes into account the large-$U$ ``renormalization''
in a proper manner.
\subsection{Discussions and Open Problems}
\label{secopen}
The inequality (\ref{E>E1d}) stated in our main theorem~\ref{st1dTh}
only establishes the local
stability of the ``ferromagnetic ground states'', not the desired global
stability.
However the strong result for the flat-band models summarized in
Theorem~\ref{flat1dTh} 
suggests that the local stability (\ref{E>E1d}) implies that the
``ferromagnetic ground states'' are the true ground states.
In the course of constructing our proof of the local stability
theorem, we have developed a heuristic picture about the 
mechanism underlying ferromagnetism in our model.
The picture, which is briefly described in Section~\ref{secper},
also indicates that ferromagnetic states are the true ground 
states.
As we have noted in the remark at the end of Section~\ref{secAbout}, 
this conjecture has been verified for a special class of perturbations.

It is interesting to look at our rigorous results in the light of
traditional approaches to ferromagnetism discussed in
Sections~\ref{secback} and \ref{secHubint}.
In order to guarantee the existence of
ferromagnetism in our model, we  assumed
that $\rho$ is small enough so that the band is nearly-flat,
and the Coulomb interaction $U$ is large enough.
Since a nearly-flat band has large density of states,
our requirement shares something in common with 
the Stoner criterion.
Of course there is no hope that the criterion $UD_{\rm F}>1$ gives 
reliable conditions for the range of parameters where ferromagnetism
takes place.
The improved
criterion for ferromagnetism due to Kanamori \cite{Kanamori63}
and the accompanying formula for effective $U$
do not seem to coincide with our results.

If one looks into the
proof of the theorems, however, it becomes
clear that there is a picture quite similar to that developed
by Heisenberg.
We use basis in which each electron is treated as almost localized
at each lattice site in $\Lao$.
The basic mechanism for stabilizing ferromagnetism comes from the
``exchange'' part of the interaction Hamiltonian, which is in principle
the same as what Heisenberg treated.
See also Section~\ref{secper}.

It is amusing that the ferromagnetism in our model may
be understood in terms of the above two heuristic pictures.
Usually the band electron picture and the Heisenberg's localized
electron picture of ferromagnetism
are regarded as incompatible with each other.

 All the rigorous results summarized in the previous subsection
strongly suggest that our Hubbard model
exhibits non-pathological ferromagnetism in the vicinity 
of the flat-band
models characterized by (\ref{flatcond}).
However we are far from understanding precise (necessary and sufficient)
 condition for ferromagnetism.
We believe that the one-dimensional Hubbard model with the Hamiltonian
(\ref{H1d}) at quarter-filling can be studied as a paradigm model for
itinerant electron ferromagnetism (in insulators).
To determine the region (in the three dimensional parameter space spanned
by $s$, $V$, and $U$, as well as the sign of $t$) 
where ferromagnetism takes
place is a challenging and very illuminating problem 
that can be studied by various
methods, including numerical ones.

One might regard the models with only nearest neighbor
hoppings (obtained by setting $t=0$) as ``standard''.
However the Lieb-Mattis theorem \cite{LiebMattis62} ensures\footnote{
Rigorously speaking, this is true only for the models with
open boundary conditions.
} that such models do not exhibit ferromagnetism for any values
of $V$ and $U$.
This shows that the appearance of ferromagnetism is a rather delicate
phenomenon which cannot be determined by simple criteria like
the Stoner's.

There is a perturbative argument \cite{95un} (similar to that
in Section~\ref{secper}) which suggests that the Hubbard model
with Hamiltonian (\ref{H1d}) exhibits ferromagnetism in a finite
but not very large region including the flat-band models.
Perhaps this observation is consistent with the empirical fact
that most of the known insulators appear to be
antiferromagnets\footnote{
Recently there have appeared a few organic compounds which are
insulating ferromagnets.
}.

The electron number we have chosen corresponds to the
half-filling of the lower (nearly-flat) band.
This is also the case for the general class of models studied
in the later sections.
From the standard band theoretic point of view, an electron system
with such a filling becomes metallic.
When the Coulomb interaction $U$ is sufficiently large in our
models, however, the strong correlation makes the lower band  
(effectively) fully filled.
Since the lower band is separated by an energy gap from the upper band,
the system is expected to become an insulator.
In this sense, our models provide examples of Mott-Hubbard
insulators.
This is also true for the general models in higher dimensions.

We expect to get ferromagnetic metals by lowering the electron
density in the present models.
In the flat-band case \cite{93d}, we found that the model
must be at least two dimensional in order for ferromagnetism to 
be stable against the change of electron density.
We  argued that the one-dimensional flat-band model
exhibits ferromagnetism only when the lower band is exactly 
half-filled, and exhibits paramagnetism for any lower
electron densities\footnote{
We did not give a proof of the latter statement in \cite{93d}.
But we believe there is no essential difficulty in proving
it rigorously.
}.
We believe that this dimensional dependence is a special feature of 
the flat-band models in which electrons ``cannot move'' (in some sense).

We believe that our Hubbard models with nearly flat band in 
any dimensions 
with lower electron density are one of the best candidates 
of itinerant electron systems which exhibits metallic ferromagnetism.
Unfortunately we have no rigorous results in this direction.

Finally we recall that, in dimensions one or two,
ferromagnetism in any short-ranged model with a rotation
symmetry is inevitably destroyed by infinitesimally
small thermal fluctuation \cite{Ghosh71,92d}.
In order to have ferromagnetism stable at finite temperatures,
we must treat models in (at least) three dimensions.
We expect ferromagnetism in the three dimensional versions of
our models survive at finite temperatures,
but have no rigorous results\footnote{
We recall that the existence of a ferromagnetic order in the
ferromagnetic quantum Heisenberg model at low enough
temperatures is not yet proved \cite{DysonLiebSimon78}.
It is very likely that the corresponding problem in the Hubbard
model is much harder.
}.

When one recalls the fact that we are so familiar in our daily
lives with metallic ferromagnetism stable at room temperatures,
to prove the existence of metallic ferromagnetism (say, in our models
with lower electron densities) at low enough temperatures
may appear as a modest goal.
From theoretical and mathematical points of views, however, the 
problem looks formidably difficult.
It seems that not only mathematical techniques
but fundamental understanding of ``physics'' of itinerant electron
ferromagnetism is sill lacking.

\subsection{Band Structure in the Single-Electron Problem}
\label{secband1d}
 Before going into the full many-body problem, it is useful to investigate
the corresponding single-electron problem.
The single-electron Schr\"{o}dinger equation corresponding to the Hubbard
model (\ref{H1d}) with the parameterization (\ref{nearflat}) is written
as\footnote{
See Section~\ref{secband} if it is not clear how the single-electron 
Schr\"{o}dinger equation is derived.
}
\begin{equation}
\ep\,\phi_x=
\cases{
t(\phi_{x-1}+\phi_{x+1})+\la t(\phi_{x-(1/2)}+\phi_{x+(1/2)})&
if $x\in\Lao$;\cr
(\la^2-2+\kappa)t\,\phi_x+\la t(\phi_{x-(1/2)}+\phi_{x+(1/2)})&
if $x\in\La'$,\cr
}
\label{Sch1d}
\end{equation}
where $\ep$ is the energy eigenvalue.
By using the translation invariance of the equation (\ref{Sch1d}), 
we can
write an eigenstate $(\phi_x)_{x\in\La}$
in the form of the Bloch state as $\phi_x=e^{ikx}\,v_x(k)$
with $k\in\calK$,
and $v_x(k)$ such that $v_{x+1}(k)=v_x(k)$ for any $x\in\La$.
The Schr\"{o}dinger equation in $k$-space which determines $\ep$ and $v_x(k)$
is
\begin{equation}
\ep\rbk{\matrix{v_0(k)\cr v_{1/2}(k)\cr}}=
\rbk{\matrix{2t\cos k & 2\la t\cos\frac{k}{2}\cr
2\la t\cos\frac{k}{2} & (\la^2-2+\kappa)t \cr}}
\rbk{\matrix{v_0(k)\cr v_{1/2}(k)\cr}}.
\label{Schk1d}
\end{equation}

The eigenvalue problem (\ref{Schk1d}) can be solved easily, and for each
$k\in\calK$, we find two energy eigenvalues
\begin{eqnarray}
&&
\ep_{1,2}(k)=
\ret
&&
=\frac{t}{2}\rbk{
\la^2-2+\kappa+2\cos k
\pm\sbk{\cbk{\la^2-2(1+\cos k)+\kappa}^2+4\rbk{2\la\cos\frac{k}{2}}^2}^{1/2}
},
\label{e1,2}
\end{eqnarray}
where $1$, $2$ are the band index with $1$ ({\em resp.} $2$) 
corresponding to the $-$ ({\em resp.} $+$) sign.
The energy $\ep_j(k)$,
as a function of $k$, is usually called the dispersion relation of the
$j$-th band.
When $\kappa=0$, (\ref{e1,2})
become $\ep_1(k)=-2t$
and $\ep_2(k)=\la^2t+2t\cos k$.
Note that the lower band is completely flat (dispersionless), and there is an
energy gap $\la^2t$ between the two bands as in Figure~\ref{band1d}a.
When the perturbation to the flat-band model is sufficiently small (i.e.
$\akappa t\ll\la^2t$), the lower band is nearly flat, and there remains a gap
close to $\la^2t$
as in Figure~\ref{band1d}b.
See Lemma~\ref{gapLemma}.

\begin{figure}
\centerline{\includegraphics[width=10cm,clip]{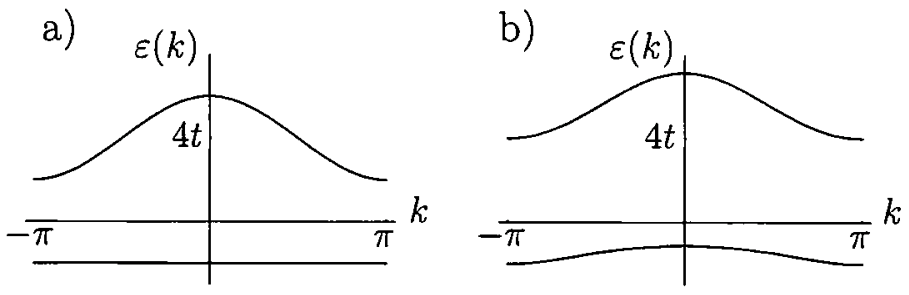}}
\caption{\captionB}
\label{band1d}
\end{figure}

We choose an eigenvector $\vvv{0}=(\vks{0}{0},\vks{0}{1/2})$
corresponding to the eigenvalue $\ep_1(k)$ as
\begin{equation}
\vvv{0}
=
\rbk{\matrix{\vks{0}{0}\cr\vks{0}{1/2}\cr}}=
\rbk{\matrix{\frac{1}{2}\cbk{F(k)+\sqrt{F(k)^2+4(2\la^{-1}\cos\frac{k}{2})^2}}\cr
-2\la^{-1}\cos\frac{k}{2}\cr}},
\label{v0k}
\end{equation}
where $F(k)=1-2(1+\cos k)/\la^2+\kappa/\la^2$.
Note that we did not normalize the vector $\vvv{0}$.
The eigenvector $\vvv{1/2}$ which corresponds to the eigenvalue
$\ep_2(k)$
can be written in terms of $\vvv{0}$ as
\begin{equation}
\vvv{1/2}
=
\rbk{\matrix{\vks{1/2}{0}\cr\vks{1/2}{1/2}\cr}}
=
\rbk{\matrix{-\vks{0}{1/2}\cr\vks{0}{0}\cr}}.
\label{v1/2k}
\end{equation}

\subsection{Localized Bases for Single-Electron States}
\label{secbases1d}
 The band structure discussed above plays a fundamental role in the
corresponding many-body problem as well.
But the $k$-space picture, which was very useful in analyzing the band
structure, turns out to be not quite effective in treating strong short-range
interactions.
This dilemma (which originates from the wave-particle dualism in quantum
mechanics) suggests the need for a new description of electronic states
which takes into account the band structure and, at the same time, treats
electrons as ``particles'' rather than ``waves''.

Let $\mu(x)=0$ for $x\in\Lao$ and $\mu(x)=1/2$ for $x\in\La'$.
For $x,y\in\La$, we define
\begin{equation}
\phis{y}{x}=(2\pi)^{-1}\int dk\,e^{ik(x-y)}\vks{\mu(y)}{\mu(x)},
\label{phiDef1d}
\end{equation}
where $\int dk(\cdots)$
is a shorthand for the sum $(2\pi/L)\sum_{k\in\calK}(\cdots)$.
Suppose that $y\in\Lao$ is fixed.
Then one can regard\footnote{
In the symbol like $\phis{y}{x}$, the upper index $y$ is the ``name'' given to
the state while the lower index $x$ is the argument in the standard 
wave function representation.
When we refer to the state itself, we write $\phi^{(y)}$.
Such a notation is used throughout the present paper.
} $\phi^{(y)}=(\phis{y}{x})_{x\in\La}$
as (the wave function of) a single-electron state which is a superposition
of the Bloch states $e^{ikx}\,\vks{0}{\mu(x)}$
with various $k$.
This means that, for any $y\in\Lao$,
the state $\phi^{(y)}$
belongs to the Hilbert space of the lower band.
By examining the definition (\ref{phiDef1d}), it follows that the collection
$\cbk{\phi^{(y)}}_{y\in\Lao}$
forms a  (nonorthonormal) basis of the Hilbert space
corresponding to the lower band.
Similarly the collection $\cbk{\phi^{(y)}}_{y\in\La'}$
forms a basis of the Hilbert space of the upper band.

Moreover the states $\phi^{(y)}=(\phis{y}{x})_{x\in\La}$
has a rather nice localization properties.
When $\kappa=0$, an explicit calculation shows that, for $y\in\Lao$,
$\phis{y}{y}=1$, $\phis{y}{x}=-1/\la$
if $\abs{x-y}=1/2$, and $\phis{y}{x}=0$ 
otherwise.
(See Section~\ref{seclocbases} where $\phis{y}{x}$ with $\kappa=0$
is denoted as $\psis{y}{x}$.)
These are the strictly localized basis states constructed and used in
\cite{92e,93d}.

For $\kappa\ne0$, the basis states $\phi^{(y)}$ are no longer strictly
localized.
Expanding the term $\sqrt{F(k)^2+4(2\la^{-1}\cos\frac{k}{2})^2}$
in (\ref{v0k})
into a power series in $\la^{-2}$ and $(\kappa/\la^2)$,
we can still prove that the state $\phi^{(y)}=(\phis{y}{x})_{x\in\La}$
is almost localized at the site $y$.
More precisely, we have 
\begin{equation}
\phis{y}{x}\simeq\cases{
1&if $x=y$;\cr
\pm1/\la&if $\abs{x-y}=1/2$;\cr
O(\akappa/\la^{2})&if $\abs{x-y}=1$;\cr
\mbox{smaller and decays exponentially}&
for $\abs{x-y}>1$,\cr
}
\label{phi1d}
\end{equation}
when $\la\gg 1$ 
and $\abs{\kappa}/\la^2 \ll 1$.
(We take $+$ sign if $y\in\La'$ and $-$ sign if $y\in\Lao$.)
This sharp localization property of the states ${\phi^{(y)}}$
plays a fundamental role throughout our proof.

Since the states $\phi^{(y)}$ with different reference sites $y$  are not
necessarily orthogonal with each other, it is useful to introduce the dual of
the basis $\cbk{\phi^{(y)}}$.
We shall construct the dual basis states 
$\phitil^{(y)}=(\phit{y}{x})_{x\in\La}$
so that 
$\sum_{x\in\La}\rbks{\phit{y}{x}}\phis{y'}{x}=\delta_{y,y'}$
holds. 
(It also holds that 
$\sum_{y\in\La}\rbks{\phit{y}{x}}\phis{y}{x'}=\delta_{x,x'}$.)
Then $\cbk{\phitil^{(y)}}_{y\in\Lao}$ 
and $\cbk{\phitil^{(y)}}_{y\in\La'}$
automatically form bases of the Hilbert spaces for the upper and
lower bands, respectively.
See Sections~\ref{seckspace} and \ref{secrealspace}
for concrete procedure for constructing $\phitil^{(y)}$
from the vectors $\vvv{0}$ and $\vvv{1/2}$.

 For $\la\gg 1$
and $\abs{\kappa}/\la^2\ll 1$,
the dual basis state $\phitil^{(y)}$
is localized at the site $y$ as 
\begin{equation}
\phit{y}{x}\simeq\cases{
1&if $x=y$;\cr
\pm1/\la&if $\abs{x-y}=1/2$;\cr
-1/\la^{2}&if $\abs{x-y}=1$;\cr
\mbox{smaller and decays exponentially}&
for $\abs{x-y}>1$,\cr
}
\label{phit1d}
\end{equation}
where the $\pm$ sign is chosen as in (\ref{phi1d}).
It should be noted that the states $\phitil^{(y)}$
are only moderately localized as compared with the sharp localization of 
$\phi^{(y)}$.
Even for $\kappa=0$,
$\phit{y}{x}$
has nonvanishing exponentially decaying tail.

\Remark
It is interesting to compare the states $\phi^{(y)}$ and $\phitil^{(y)}$
with Wannier functions \cite{Kohn73}.
Wannier functions are the standard machinery in condensed matter
physics which provide particle-like picture of electronic states by also
taking into account band structures.

 The Wannier functions $\omega^{(y)}=(\omega^{(y)}_x)_{x\in\La}$
are constructed as in (\ref{phiDef1d}),
but with the vectors $\vvv{u}$ (with $u=0,1/2$)
replaced by their normalized versions $\vvv{u}/\abs{\vvv{u}}$.
As a consequence $\cbk{\omega^{(y)}}_{y\in\Lao}$
and $\cbk{\omega^{(y)}}_{y\in\La'}$
form {\em orthonormal\/} bases of the Hilbert spaces 
for the upper and
lower bands, respectively.
As for the localization property, we have
\begin{equation}
\omega^{(y)}_{x}\simeq\cases{
1&if $x=y$;\cr
\pm1/\la&if $\abs{x-y}=1/2$;\cr
-1/(2\la^2)&if $\abs{x-y}=1$;\cr
\mbox{smaller and decays exponentially}&
for $\abs{x-y}>1$,\cr
}
\label{wdecay}
\end{equation}
which is, roughly speaking, intermediate between 
those of $\phi^{(y)}$ and $\phitil^{(y)}$.

Although the orthonormality of the Wannier basis is a clear advantage of this
machinery, the poor localization property (\ref{wdecay}) is not
optimal for our analysis of the Hubbard model.
The sharp localization (\ref{phi1d})
is so important for us that we can give up the orthonormality of bases.

It is interesting that, in the context of band calculation, Anderson
\cite{Anderson68} suggested to use non-orthonormal basis states
which are more sharply localized than the Wannier states.
One can regard our $\phi^{(y)}$ as a concrete (and typical) example of
Anderson's ultralocalized functions, used in mathematical proofs of
ferromagnetism rather than in band calculations.

In \cite{94c}, where the main results of the present paper were
first announced, we claimed that the proof of local stability of
ferromagnetism is impossible if we use the Wannier states instead
of the sharply localized states $\phi^{(y)}$.
We now feel, however, that similar proof based on the Wannier functions
may be constructed if we are careful enough in estimating various 
matrix elements.

\subsection{Representation of the Hamiltonian
in terms of the Localized Basis}
 In order to analyze many-body problems by using 
the particle-like picture
developed above, we introduce the fermion operators
\begin{equation}
\ad_{x,\sigma}=\sum_{y\in\La}\rbks{\phis{x}{y}}\cd_{y,\sigma},\quad
b_{x,\sigma}=\sum_{y\in\La}\phit{x}{y}c_{y,\sigma},
\label{abDef}
\end{equation}
for $x\in\La$ and $\sigma=\up,\dn$.
It turns out that these operators obey the
standard anticommutation relations such
as $\cbk{\ad_{x,\sigma},b_{y,\tau}}=\delta_{x,y}\delta_{\sigma,\tau}$.
This means that the ``right'' annihilation operator to be used with 
$\ad_{x,\sigma}$ is $b_{x,\sigma}$,
not $a_{x,\sigma}=(\ad_{x,\sigma})^\dagger$.

As we show in Section~\ref{secrepHam}, we can rewrite the Hamiltonian
(\ref{H1d}) in terms of these new operators as
\begin{equation}
H=
\sumtwo{x,y\in\Lao}{\sigma=\up,\dn}
\tau_{x,y}\,\ad_{x,\sigma}b_{y,\sigma}
+
\sumtwo{x,y\in\La'}{\sigma=\up,\dn}
\tau_{x,y}\,\ad_{x,\sigma}b_{y,\sigma}
+
\sum_{y,v,w,z\in\La}
\Ut_{y,v;w,z}\,
\ad_{y,\up}\ad_{v,\dn}b_{w,\dn}b_{z,\up}.
\label{Hint1d}
\end{equation}
Note that there is no hopping between $\Lao$ and $\La'$
in the hopping parts of $H$.
This is because the operators $\ad_{x,\sigma}$ and $b_{y,\sigma}$
``know'' about the band structure.
As for the properties of the effective hopping $\tau_{x,y}$,
we only need to note that $\tau_{x,x+1}=O(\akappa t)$
for $x\in\Lao$,
and $\tau_{x,x}\simeq\la^2 t$ for $x\in\La'$.

Note that the interaction term in (\ref{Hint1d}) is no longer on-site.
This fact is essential for the appearance of ferromagnetism in these 
models. 
The effective (four fermi) coupling $\Ut_{y,v;w,z}$,
in (\ref{Hint1d}) is given by
\begin{equation}
\Ut_{y,v;w,z}=U\sum_{x\in\La}
\phit{y}{x}\phit{v}{x}(\phis{w}{x}\phis{z}{x})^*.
\label{Ut1d}
\end{equation}
This expression means that the coupling 
function $\Ut_{y,v;w,z}$ is determined
by the overlap between the four states 
$\phitil^{(y)}$,  $\phitil^{(v)}$,
$\phi^{(w)}$, and $\phi^{(z)}$,
where the former two states are created and the latter two states are
annihilated.
Since each state $\phi^{(y)}$ or $\phitil^{(y)}$
is localized at the reference site $y$, 
we find that $\Ut_{y,v;w,z}$ is also short ranged.
We can say that our representation successfully took into account the
particle-like nature of electrons.
We also note that the coupling function satisfies 
the translation invariance
$\Ut_{y,v;w,z}=\Ut_{y+p,v+p;w+p,z+p}$ for any $p\in{\bf Z}$.

Let us assume $\la\gg 1$ and $\abs{\kappa}/\la^2\ll 1$.
Then we can substitute the properties (\ref{phi1d}) and (\ref{phit1d})
of the basis states into (\ref{Ut1d}) and
evaluate $\Ut_{y,v;w,z}$ explicitly as
\begin{equation}
\Ut_{0,0;0,0}\simeq
U\,\phit{0}{0}\phit{0}{0}(\phis{0}{0}\phis{0}{0})^*
\simeq U,
\label{Ueff1}
\end{equation}
\begin{equation}
\Ut_{0,1;0,1}\simeq
U\,\phit{0}{1/2}\phit{1}{1/2}(\phis{0}{1/2}\phis{1}{1/2})^*
\simeq\frac{U}{\la^4},
\label{Ueff2}
\end{equation}
\begin{equation}
\Ut_{0,1;0,0}\simeq
U\,\phit{0}{0}\phit{1}{0}(\phis{0}{0}\phis{0}{0})^*
\simeq-\frac{U}{\la^2},
\label{Ueff3}
\end{equation}
\begin{eqnarray}
\Ut_{0,0;0,1}&\simeq&
U\cbk{\phit{0}{0}\phit{0}{0}(\phis{0}{0}\phis{1}{0})^*
+\phit{0}{1/2}\phit{0}{1/2}(\phis{0}{1/2}\phis{1}{1/2})^*}
\ret
&\simeq&O\rbk{\frac{U\akappa}{\la^2}}+\frac{U}{\la^4},
\label{Ueff4}
\end{eqnarray}
and 
\begin{equation}
\Ut_{0,1;1/2,1/2}\simeq\Ut_{1/2,1/2;0,1}\simeq\frac{U}{\la^2}.
\label{Ueff5}
\end{equation}
These are the components of $\Ut$ which play important roles
when we investigate low-lying excited states of our Hubbard model.
Note that $\Ut_{0,0;0,1}$
and $\Ut_{0,1;0,0}$
are drastically different.
This asymmetry, which originates from the difference in the localization
properties (\ref{phi1d}), (\ref{phit1d}) of the states
$\phi^{(y)}$ and $\phitil{(y)}$,
is important for our proof.

\subsection{Perturbative Analysis and Effective Spin
Hamiltonian}
\label{secper}
 At this stage we shall 
develop a heuristic theory which reveals why our Hubbard
model exhibits a stable ferromagnetism.
This subsection is different from all the 
others in that it is devoted to
arguments which are not yet made rigorous.
This, however, allows us to go beyond our technical limitation, 
and discuss the
stability of ferromagnetism beyond a single-spin flip.

Here we focus on the region of parameters characterized as 
$\abs{\kappa}t\ll U\ll \la^2 t$.
Recall that $\abs{\kappa}t$, $U$, and $\la^2 t$
roughly represent the band width of the lower band, local Coulomb
interaction, and the band gap, respectively.
By examining the representation (\ref{Hint1d}) of the
Hamiltonian, we extract the most dominant part as the ``unperturbed''
Hamiltonian
\begin{equation}
H_0=\sumtwo{x,y\in\La'}{\sigma=\up,\dn}
\tau_{x,y}\,\ad_{x,\sigma}b_{y,\sigma}
+
\sum_{x\in\La}\Ut_{x,x;x,x}\,\ntil_{x,\up}\ntil_{x,\dn}.
\label{Hu}
\end{equation}
Here we introduced the pseudo number operator 
$\ntil_{x,\sigma}=\ad_{x,\sigma}b_{x,\sigma}$.
Although $\ntil_{x,\sigma}$ is not hermitian, it works exactly the same 
as the standard
number operator as long as one uses $\ad_{x,\sigma}$ and $b_{x,\sigma}$ 
as creation and annihilation operators, respectively.
By recalling $\tau_{x,x}\simeq\la^2 t$ 
for $x\in\La'$ 
and $\Ut_{x,x;x,x}\simeq U$,
we find that the conditions for minimizing $H_0$ 
are i)~there are only electrons from the lower band, i.e., those created by 
$\ad_{x,\sigma}$ with $x\in\Lao$,
and ii)~there are no doubly occupied  sites in the language  of
$\tilde{n}_{x,\sigma}$. 
Since the number of electrons $L$ is the same as the number of the sites in
$\Lao$, such states can be written as 
linear combinations of the states
\begin{equation}
\Phi_\sigma=\rbk{\prod_{x\in\Lao}\ad_{x,\sigma(x)}}\vac.
\label{Phisigma}
\end{equation}
Here the multi-index $\sigma=(\sigma(x))_{x\in\Lao}$ 
with $\sigma(x)=\up,\dn$ 
represents spin configurations.
Clearly we have $H_0\Phi_{\sigma}=0$ for any $\sigma$.
The unperturbed Hamiltonian 
$H_0$ has $2^{L}$-fold degenerate ground states.

Let us examine how the degeneracy is lifted when we consider the
remainder of the Hamiltonian,
which is 
\begin{equation}
H_{\rm pert}=\sumtwo{x,y\in\Lao}{\sigma=\up,\dn}
\tau_{x,y}\,\ad_{x,\sigma}b_{y,\sigma}
+\sumtwo{y,v,w,z\in\La}{({\rm except\ }y=v=w=z)}
\Ut_{y,v;w,z}\,\ad_{y,\up}\ad_{v,\dn}b_{w,\dn}b_{z,\up}.
\label{Hpert1d}
\end{equation}
We wish to develop a standard first order perturbation theory, 
but with using
the non-orthonormal basis consisting of the states 
$\rbk{\prod_{x\in A}\ad_{x,\up}}\rbk{\prod_{x\in B}\ad_{x,\dn}}\vac$
where $A$, $B$ are arbitrary subsets of $\La$.
Let $\Po$
be the projection operator\footnote{
The procedure for defining $\Po$ is as follows. 
Given a many-electron state $\Phi$,
one (uniquely) expands it in terms of the basis states.
Then one throws away all the basis states which are not of the form 
$\Phi_\sigma$
(\ref{Phisigma}).
The resulting state is $\Po\Phi$.
Note that $\Po$ is not an orthogonal projection.}
(defined with respect to the present basis)
onto the $2^ {L}$-dimensional ground state space spanned by
$\Phi_{\sigma}$.
The basic object in the first order degenerate perturbation is then the
effective Hamiltonian $H_{\rm eff}={\Po}H_{\rm pert}{\Po}$.
Note that $H_{\rm eff}$ is not a self-adjoint operator.
This is inevitable since we are developing a perturbation theory based on 
a non-orthonormal basis. 
Since the standard perturbation theory can be applied to 
non-hermitian matrices as well, the situation is by no means pathological.
There is a similar perturbation theory that uses orthonormal basis 
constructed from the Wannier states \cite{95un}.

Obviously a term contributing to $H_{\rm eff}$
should not affect the locations of the electrons.
As a consequence, contributions come from the so-called ``exchange'' 
terms
(and the diagonal elements of $\tau_{x,y}$) as 
\begin{equation}
H_{\rm eff}=
\cbk{E_0+
\sumtwo{x,y\in\Lao}{(x\ne y)}
\Ut_{x,y;x,y}\rbk{\ad_{x,\up}\ad_{y,\dn}b_{y,\dn}b_{x,\up}
+\ad_{y,\up}\ad_{x,\dn}b_{y,\dn}b_{x,\up}}
}\Po,
\label{Heff1}
\end{equation}
where $E_0=\sum_{x\in\Lao}\tau_{x,x}$
turns out to be the energy of the ``ferromagnetic ground states''.
(Figure~\ref{process1} illustrates how the ``exchange'' terms act
on a state.)

It turns out that the ``exchange'' term is the ultimate origin
of ferromagnetism in our Hubbard model.
In the present model, 
the ``exchange'' takes place between the spins of
two electrons in neighboring $\Lao$-sites (metallic atoms).
By recalling that there is a $\La'$-site (oxygen atom) in between
them, one might prefer to call the present process
``superexchange'' \cite{Anderson63}.
We think this terminology also possible, 
but wish to stress that 
we never get
ferromagnetism if the direct hopping between $\Lao$
sites (represented by $t$ in the Hamiltonian (\ref{H1d}))
are absent as we discussed in Section~\ref{secopen}.
We think there are much more delicate
mechanism going on here than what one would naively expects
from a ``superexchange'' process.

Let us define the pseudo spin operators by 
$\St{j}{x}=\sum_{\sigma,\tau=\up,\dn}
\ad_{x,\sigma}\,p^{(j)}_{\sigma,\tau}\,b_{x,\tau}/2$
for $j=1,2,$ and $3$ where $p^{(j)}_{\sigma,\tau}$
are the Pauli matrices (\ref{pauli}).
Again these operators are not hermitian, but work exactly 
the same as the
standard spin operators.
Then the effective Hamiltonian is rewritten as 
\begin{eqnarray}
H_{\rm eff}&=&\sbk{
E_0-
\sumtwo{x,y\in\Lao}{(x\ne y)}
\Ut_{x,y;x,y}\cbk{\rbk{\sum_{j=1}^3\St{j}{x}\St{j}{y}}-\frac{3}{4}}
}\Po
\ret
&\simeq&
\sbk{
E_0-
\frac{2U}{\la^4}\sum_{x\in\Lao}
\cbk{\rbk{\sum_{j=1}^3\St{j}{x}\St{j}{x+1}}-\frac{3}{4}}
}\Po,
\label{Heff3}
\end{eqnarray}
where we used the estimate (\ref{Ueff2}) for $\Ut$ to get
the final line.
The right-hand side of (\ref{Heff3}) is nothing but the Hamiltonian of
the nearest-neighbor Heisenberg chain with the ferromagnetic
interaction $J_{\rm eff}\simeq{2U/\la^4}$.
We have successfully derived a ferromagnetic spin system 
starting from
the Hubbard model for itinerant electrons.

If we believe in this first order perturbation theory, 
then we can conclude from
the ``spin Hamiltonian'' (\ref{Heff3}) that the ground states
of the present
Hubbard model are the ferromagnetic states given by
\begin{equation}
\UP=\rbk{\prod_{x\in\Lao}\ad_{x,\up}}\vac,
\label{Phiup1d}
\end{equation}
and its $SU(2)$ rotations.
Moreover low-energy excitations of the Hubbard model should 
coincide with those of the ferromagnetic Heisenberg model (\ref{Heff3}).
The elementary spin-wave excitation should then have the dispersion
relation
\begin{equation}
\ESW-E_0=2J_{\rm eff}\rbk{\sin\frac{k}{2}}^2
\simeq\frac{4U}{\la^4}\rbk{\sin\frac{k}{2}}^2.
\label{ESW1d3}
\end{equation}
Note that this heuristic estimate exactly coincides with our rigorous result
(\ref{ESW1d2})!

It should be stressed, however, that the above naive perturbation 
theory remains
to be justified in many aspects.
We have been neglecting so many contributions without giving any
estimates.
The most important contribution that has been neglected comes from the
second order perturbation from the hopping terms or 
the effective hopping terms
(as is illustrated in Figure~\ref{process2}).
Since such a perturbation lowers the energy of electron pairs in a
spin-singlet, it weakens the tendency towards ferromagnetism.
Fortunately, a rough estimate shows that this effect is small 
provided that $\akappa t\ll U\ll\la^2t$.

We do not argue here that the validity of the present perturbation 
theory can be established.
By comparing it with our rigorous results about local stability of
ferromagnetism and the spin-wave excitation, however, it seems rather
likely that this treatment gives sensible conclusions about low energy
properties of our Hubbard model.
In \cite{95un}, we further discuss
about the derivation of low energy
effective spin Hamiltonians in the Hubbard models.

\subsection{Sketch of the Proof}
\label{secidea}
We will now illustrate how
the theorems discussed in Section~\ref{secMM1d}
are proved.
The heart of the proof is to construct rather accurate 
trial states for 
the spin-wave excitations, and carefully examine the action of
the Hamiltonian on them.

To begin with, we note that one of the ``ferromagnetic ground states''
(defined in Section~\ref{secMM1d} as the lowest energy states within the
sector with $\Stot=\Smax$)
can be written as in (\ref{Phiup1d}).
Note that (\ref{Phiup1d}) is nothing but the state obtained by ``completely filling'' the
(single-electron) states in the lower band with up-spin
electrons.
The energy of $\UP$ is given by $E_0=\sum_{x\in\Lao} \tau_{x,x}$.

As a candidate for the spin-wave excitation, we shall consider the state
in which a single down-spin propagates  in $\Phi_{\up}$ with a momentum 
$k$ as 
\begin{equation}
\Ok=\al(k)^{-1}\sum_{x\in\Lao}e^{ikx}\,\Gamma_x,
\label{Omega1d}
\end{equation}
where we introduced
\begin{equation}
\Gamma_x=\ad_{x,\dn}b_{x,\up}\UP.
\label{Gammax}
\end{equation}
The normalization $\al(k)$ will be determined later.

Since the annihilation operator $b_{x,\up}$
properly cancels out with the creation operator $\ad_{x,\up}$, the state
$\Gamma_x$ is such that (one of) $\ad_{x,\up}$ in (\ref{Phiup1d})
is replaced with $\ad_{x,\dn}$.
Recall that, as can be seen from the definition (\ref{abDef}),
the operator
$\ad_{x,\sigma}$ creates an electron in the sharply localized state
$\phi^{(x)}$.
As the localization property (\ref{phi1d}) of $\phi^{(x)}$ shows, 
two neighboring states $\phi^{(x)}$
and $\phi^{(x+1)}$ have very small overlap (of order $1/\la^2$).
This means that the down-spin electron inserted in (\ref{Omega1d}) costs
very small energy due to the Coulomb repulsion 
$U\sum_{x\in\La}n_{x,\up}n_{x,\dn}$.
At the same time it costs small kinetic energy since it only contains
(single-electron) states from the lower band.
These observations suggests that $\Ok$ (\ref{Omega1d})
are good trial states for low-lying excitations 
in which both the kinetic energy and the Coulomb
interaction are properly taken into account.
See Figure~\ref{opict}.

\begin{figure}
\centerline{\includegraphics[width=10cm,clip]{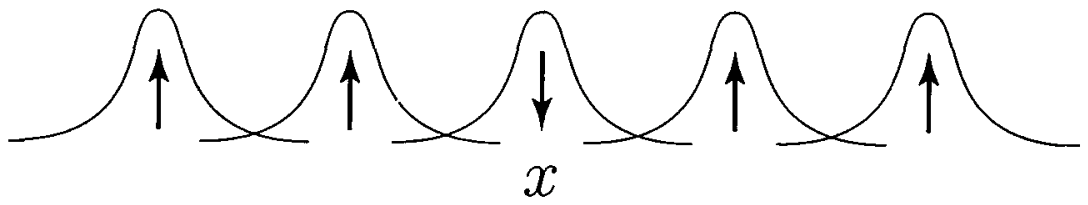}}
\caption{\captionC}
\label{opict}
\end{figure}

To prove the upper bound for the spin-wave dispersion 
relation in (\ref{ESW1d}),
we employ the standard variational inequality (see (\ref{Evar})), 
and calculate
the expectation value of $H$ in the state $\Ok$.
We then find that the main contribution comes from the ``exchange'' Hamiltonian 
(\ref{Heff3}), which leads us to the desired upper bound.
See Section~\ref{SecSW<} for details.

To further investigate the accuracy of the trial state,  
and to get the lower
band in (\ref{ESW1d}), we apply the Hamiltonian onto  $\Ok$.
Although there can appear enormous number of terms, the major
contributions\footnote{
In the later sections, we of course control all the 
possible contributions.
} come from three basic short range processes 
that we now describe.
The three processes are represented by the following three 
operators (\ref{H1})-(\ref{H3}) which are extracted from the
Hamiltonian in the form (\ref{Hint1d}).
The first process is the nearest neighbor ``exchange'' discussed in
Section~\ref{secper}, which is represented by 
\begin{equation}
H_1=\sumtwo{x\in\Lao}{\sigma=\up,\dn}\Ut_{x,x+1;x,x+1}
(\ad_{x,\sigma}\ad_{x+1,-\sigma}b_{x+1,-\sigma}b_{x,\sigma}
+\ad_{x+1,\sigma}\ad_{x,-\sigma}b_{x+1,-\sigma}b_{x,\sigma}).
\label{H1}
\end{equation}
The second process is the nearest neighbor hopping\footnote{
The first operator annihilates an electron at $x+r$ with 
spin $-\sigma$, and then creates the same thing.
Therefore its action is the same as the second operator
$\ad_{x+r,\sigma}b_{x,\sigma}$ provided that there is a
spin $-\sigma$ electron at $x+r$.
}
represented by
\begin{equation}
H_2=\sumthree{x\in\Lao}{r=\pm1}{\sigma=\up,\dn}
\cbk{
\Ut_{x+r,x+r;x+r,x}\,
\ad_{x+r,\sigma}\ad_{x+r,-\sigma}
b_{x+r,-\sigma}b_{x,\sigma}
+
\tau_{x+r,x}\,\ad_{x+r,\sigma}b_{x,\sigma}
}.
\label{H2}
\end{equation}
The third process is represented by
\begin{equation}
H_3=\sumtwo{x\in\Lao}{\sigma=\up,\dn}
\Ut_{x+(1/2),x+(1/2);x,x+1}\,
\ad_{x+(1/2),\sigma}\ad_{x+(1/2),-\sigma}
b_{x,-\sigma}b_{x+1,\sigma},
\label{H3}
\end{equation}
which lets two electrons in neighboring $\Lao$
sites $x$, $x+1$ hop simultaneously to the site in between them.
Note that $\ad_{x+(1/2),\sigma}$ creates an electron in the upper band.

Let us investigate the action of 
these partial Hamiltonians (\ref{H1})-(\ref{H3}) onto the
state $\Ok$.
It is useful to first consider the action on the state
$\Gamma_y$ defined in (\ref{Gammax}) which contains a down-spin
electron at site $y$.
By operating the ``exchange'' Hamiltonian (\ref{H1}) onto 
$\Gamma_y$, we find
\begin{eqnarray}
H_1\Gamma_y
&=&
\Ut_{y-1,y;y-1,y}(\Gamma_y-\Gamma_{y-1})+
\Ut_{y,y+1;y,y+1}(\Gamma_y-\Gamma_{y+1})
\ret
&=&
\Ut_{0,1;0,1}(2\Gamma_y-\Gamma_{y-1}-\Gamma_{y+1}),
\label{H1Phix}
\end{eqnarray}
where the minus signs come from fermion ordering for the ``exchanged''
configurations.
Figure~\ref{process1} illustrates how these four terms arise.
We also used the translation invariance of $\Ut$, which is indeed essential
for the present proof.
Recalling (\ref{Omega1d}), we get the expected result
\begin{equation}
H_1\,\Ok=\cbk{E_0+4\,\Ut_{0,1;0,1}\rbk{\sin\frac{k}{2}}^2}\Ok.
\label{H1O}
\end{equation}

\begin{figure}
\centerline{\includegraphics[width=11cm,clip]{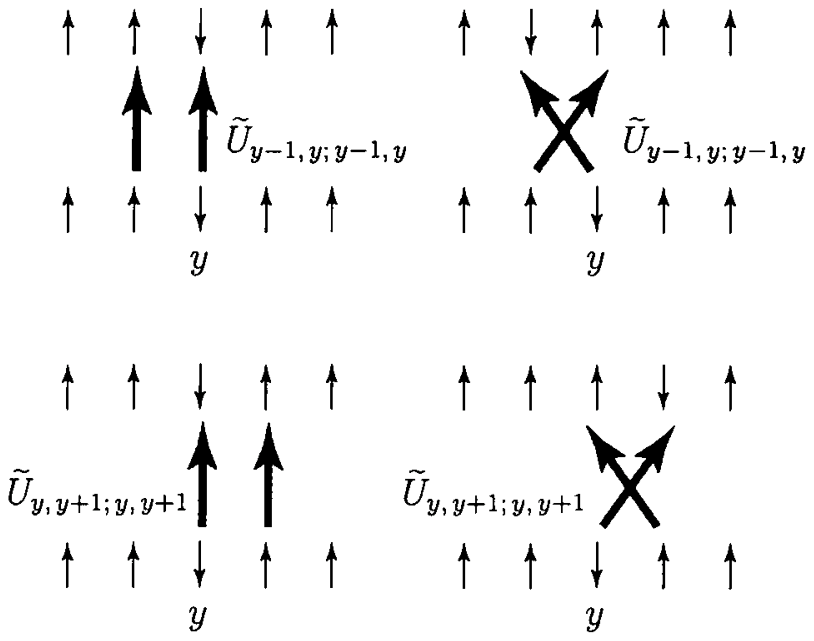}}
\caption{\captionD}
\label{process1}
\end{figure}

To see the role of $H_2$ (\ref{H2}), let us operate it onto
$\Gamma_y$ to get
\begin{eqnarray}
H_2\,\Gamma_y
&=&
\sum_{r=\pm1}\cbk{
(\Ut_{y+r,y+r;y+r,y}+\tau_{y+r,y})\Gammat{r}{y+r}
-
(\Ut_{y,y;y,y-r}+\tau_{y,y-r})\Gammat{r}{y}
}
\ret
&=&
(\Ut_{r,r;r,0}+\tau_{r,0})\rbk{\Gammat{r}{y+r}-\Gammat{r}{y}},
\label{H2Py}
\end{eqnarray}
where $\Gammat{r}{y}=\ad_{y,\dn}b_{y-r,\up}\UP$ is the state 
with an empty site
$y-r$ and a doubly occupied site $y$.
We also used the translation invariance of $\Ut$ and $\tau$.
In Figure~\ref{process2}, we illustrate the action of a part of
$H_2$ onto $\Gamma_y$.
Again from (\ref{Omega1d}), we find that
\begin{eqnarray}
H_2\,\Ok
&=&
\sum_{r=\pm1}\frac{e^{-ikr}-1}{\al(k)}\,
(\Ut_{r,r;r,0}+\tau_{r,0})\,\Xi_r(k)
\ret
&=&
\sum_{r=\pm1}\frac{e^{-ikr}-1}{\al(k)}\,
(\Ut_{1,1;1,0}+\tau_{1,0})\,\Xi_r(k),
\label{H2O}
\end{eqnarray}
where we used the reflection symmetry\footnote{
Such a symmetry exists in the present model.
In the general class of models studied later, we do not
assume reflection or rotation symmetries.
} to get the final line.
Here
\begin{equation}
\Xi_r(k)=\sum_{x\in\Lao}e^{ikx}\,\Gammat{r}{x}
=\sum_{x\in\Lao}e^{ikx}\,\ad_{x,\dn}b_{x-r,\up}\UP,
\label{Xirk}
\end{equation}
is the state in which a bound pair of an empty site $x-r$ and a
doubly occupied site $x$ is propagating with momentum $k$.
We shall abbreviate $\Xi_{\pm1}(k)$ as $\Xi_{\pm}(k)$.

\begin{figure}
\centerline{\includegraphics[width=11cm,clip]{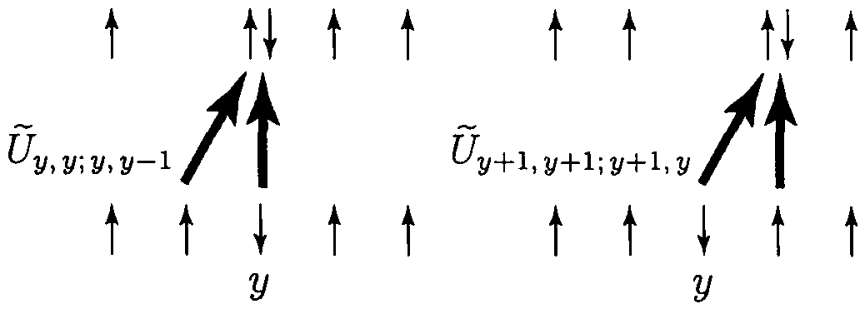}}
\caption{\captionE}
\label{process2}
\end{figure}

Similarly we obtain 
\begin{equation}
H_3\,\Ok=\frac{e^{ik}-1}{\al(k)}\,\Ut_{1/2,1/2;0,1}\,\Theta(k),
\label{H3O}
\end{equation}
where 
\begin{equation}
\Theta(k)=\sum_{x\in\Lao}e^{ikx}\,
\ad_{x+(1/2),\dn}\ad_{x+(1/2),\up}
b_{x,\up}b_{x+1,\up}\UP,
\label{Thetak}
\end{equation}
is the state in which two adjacent empty sites in $\Lao$ and a doubly 
occupied site
in between them are forming a bound state and propagating with 
momentum $k$.
See Figure~\ref{3states} for schematic pictures 
of the states\footnote{
In the general notation used in the latter sections, the states
$\Xi_r(k)$ and $\Theta(k)$ are denoted as
$\Phik{0,r}$ and $\Phik{1/2,1/2,0,1}$, respectively.
See Section~\ref{secmatrix}.
}
$\Ok$, $\Xi_+(k)$, and $\Theta(k)$.
We also note here that these states all belong to 
the Hilbert space $\calH_k$, which we introduced
just above Theorem~\ref{SW1dTh}.

The relations (\ref{H2O}) and (\ref{H3O}) clearly show 
that our trial state
$\Ok$ cannot be the exact eigenstate of the Hamiltonian.
To investigate low-lying spectrum of $H$, 
we have to consider (at least)
the subspace spanned by the states 
$\Ok$, $\Xi_\pm(k)$, and $\Theta(k)$.
As before we calculate the action of $H$ 
onto the latter states to find
\begin{eqnarray}
H\,\Xi_\pm(k)&\simeq&(E_0+\Ut_{0,0;0,0})\,\Xi_\pm(k)
+\al(k)\,(\Ut_{0,1;1,1}+\tau_{0,1})(e^{\mp ik}-1)\,\Ok
\ret
&&+(\mbox{other states}),
\label{HXi}
\end{eqnarray}
and
\begin{eqnarray}
H\Theta(k)&\simeq&
(E_0-2\,\tau_{0,0}+2\,\tau_{1/2,1/2}+\Ut_{1/2,1/2;1/2,1/2})\Theta(k)
\ret
&&+\al(k)(e^{-ik}-1)\Ut_{0,1;1/2,1/2}\,\Ok+(\mbox{other states}).
\label{HTheta}
\end{eqnarray}
Although it might not be clear at this stage, 
it turns out that the ``other states''
in (\ref{HXi}) and (\ref{HTheta}) do not play essential roles.
We leave such estimates (as well as the precise definition of the 
``other states'') to the latter sections, and simply neglect them here.

The equations (\ref{H1O}), (\ref{H2O}), (\ref{H3O}), (\ref{HXi}), 
and (\ref{HTheta})
provide, for each $k\in\calK$, the representation of the 
Hamiltonian in the four
dimensional subspace of $\calH_k$ spanned by the states
$\Ok$, $\Xi_\pm(k)$, and $\Theta(k)$.
We now read off the matrix elements\footnote{
As usual, matrix elements $h[\Psi,\Phi]$ are defined by the unique
expansion $H\Phi=\sum_\Psi h[\Psi,\Phi]\Psi$.
See also (\ref{matrixelement}).
}
from these equations, and then 
use the estimates (\ref{Ueff1})-(\ref{Ueff5}) 
of $\Ut$ to evaluate them as
\begin{equation}
h[\Ok,\Ok]
\simeq
E_0+4\,\Ut_{0,1;0,1}\rbk{\sin\frac{k}{2}}^2
\ret
\simeq
E_0+4\frac{U}{\la^4}\rbk{\sin\frac{k}{2}}^2,
\label{hOO1d}
\end{equation}
\begin{equation}
h[\Xi_\pm(k),\Xi_\pm(k)]\simeq E_0+\Ut_{0,0;0,0}\simeq E_0+U,
\label{hXX}
\end{equation}
\begin{equation}
h[\Theta(k),\Theta(k)]
\simeq
E_0-2\,\tau_{0,0}+2\,\tau_{1/2,1/2}+\Ut_{1/2,1/2;1/2,1/2}
\ret
\simeq
E_0+2\la^2t+U,
\label{hTT}
\end{equation}
\begin{eqnarray}
h[\Xi_\pm(k),\Ok]&\simeq&
\al(k)^{-1}(\Ut_{1,1;1,0}+\tau_{1,0})(e^{\pm ik}-1)
\ret
&\simeq&
\al(k)^{-1}\rbk{\frac{cU\kappa}{\la^2}+\frac{U}{\la^4}+c'\kappa t}
(e^{\pm ik}-1),
\label{hXO}
\end{eqnarray}
\begin{eqnarray}
h[\Ok,\Xi_\pm(k)]&\simeq&
\al(k)(\Ut_{0,1;1,1}+\tau_{0,1})(e^{\mp ik}-1)
\ret
&\simeq&\al(k)\rbk{-\frac{U}{\la^2}+c'\kappa t}(e^{\mp ik}-1),
\label{hOX}
\end{eqnarray}
\begin{equation}
h[\Theta(k),\Ok]\simeq\al(k)^{-1}\,\Ut_{1/2,1/2;0,1}(e^{ik}-1)
\simeq\al(k)^{-1}\frac{U}{\la^2}(e^{ik}-1),
\label{hTO}
\end{equation}
and
\begin{equation}
h[\Theta(k),\Ok]\simeq\al(k)\,\Ut_{0,1;1/2,1/2}(e^{-ik}-1)
\simeq\al(k)\frac{U}{\la^2}(e^{-ik}-1),
\label{hOT}
\end{equation}
where the approximate values are obtained for
$\la\gg1$ and $\akappa/\la^2\ll1$,
and $c$, $c'$ are constants.
Reflecting the use of the non-orthonormal basis, these matrix elements
are not symmetric.
In particular drastic difference between the elements 
$h[\Xi_\pm(k),\Ok]$ (\ref{hXO}) and $h[\Ok,\Xi_\pm(k)]$ (\ref{hOX})
plays fundamental role in our proof.
Figure~\ref{3states} shows  
these matrix elements.

\begin{figure}
\centerline{\includegraphics[width=11cm,clip]{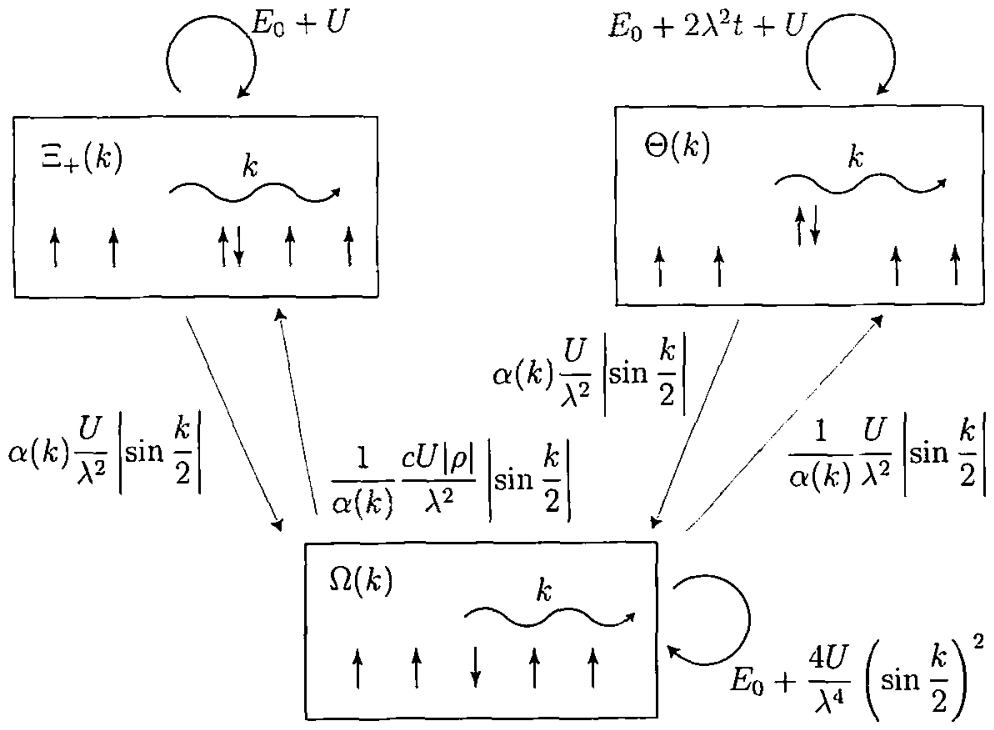}}
\caption{\captionF}
\label{3states}
\end{figure}

In order to bound the excitation energy from below, 
we use the following
wellknown fact about the minimum eigenvalue of a matrix.
Let $(h_{i,j})_{i,j=1,\ldots,N}$ be an $N\times N$ matrix with real
eigenvalues.
Then the lowest eigenvalue $h_0$ of the matrix satisfies
$h_0\ge\min_{i=1,\ldots,N}D_i$ with 
$D_i={\rm Re}[h_{i,i}]-\sum_{j\ne i}|h_{i,j}|$.
This is almost trivial, but see Lemma~\ref{DPhiLemma} for a proof.
We stress that this simple-minded inequality is expected to yield 
physically meaningful results only when one uses a basis which ``almost 
diagonalizes'' the low energy part of the Hamiltonian.

We now apply this inequality to the $4\times4$ matrix representation
of $H$ in each sector with a fixed $k\in\calK$.
The quantities corresponding to $D_i$ are evaluated for each state as
\begin{eqnarray}
&&
D[\Ok]
=
h[\Ok,\Ok]-\sum_\pm\abs{h[\Ok,\Xi_\pm(k)]}
-\abs{h[\Ok,\Theta(k)]}
\ret
&&
\ge
E_0+\frac{4U}{\la^4}\rbk{\sin\frac{k}{2}}^2
-4\al(k)\rbk{\frac{U}{\la^2}+c'\akappa t}\abs{\sin\frac{k}{2}}
-2\al(k)\frac{U}{\la^2}\abs{\sin\frac{k}{2}},
\label{DOmega}
\end{eqnarray}
\begin{eqnarray}
D[\Xi_\pm(k)]&=&h[\Xi_\pm(k),\Xi_\pm(k)]-\abs{h[\Xi_\pm(k),\Ok]}
\ret
&\ge&
E_0+U-2\al(k)^{-1}\rbk{\frac{cU\akappa}{\la^2}+\frac{U}{\la^4}+c'\akappa t}
\abs{\sin\frac{k}{2}},
\label{DXi}
\end{eqnarray}
and
\begin{eqnarray}
D[\Theta(k)]&=&h[\Theta(k),\Theta(k)]-\abs{h[\Theta(k),\Ok]}
\ret
&\ge&
E_0+2\la^2t+U-2\al(k)^{-1}\frac{U}{\la^2}\abs{\sin\frac{k}{2}},
\label{DTheta}
\end{eqnarray}
where we used the estimates (\ref{hOO1d})-(\ref{hOT}) to get
the lower bounds.

At this stage, we choose\footnote{
The choice of $\al(k)$ here is different from that in 
the full proof in
Section~\ref{SecProof}.
(See (\ref{alpha}).)
This is because the actual estimate of $D[\Xi_\pm(k)]$
in the latter sections take into account various 
small terms
which are simply neglected here.
}
the constant $\al(k)$ as
\begin{equation}
\al(k)=4\rbk{\frac{c\akappa}{\la^2}+\frac{1}{\la^4}+\frac{c'\akappa t}{U}}
\abs{\sin\frac{k}{2}}.
\label{alpha1d}
\end{equation}
This choice makes the bound (\ref{DXi}) into
\begin{equation}
D[\Xi_\pm(k)]\ge E_0+\frac{U}{2},
\label{DXi>}
\end{equation}
and the bound (\ref{DTheta}) into
\begin{eqnarray}
D[\Theta(k)]&\ge&E_0+2\la^2t+U-\frac{U}{2\la^2}
\rbk{\frac{c\akappa}{\la^2}+\frac{1}{\la^4}+\frac{c'\akappa t}{U}}^{-1}
\ret
&\ge&E_0+2\la^2t+U-\frac{\la^2U}{2}
\ret
&\ge&E_0+U,
\label{DTheta>}
\end{eqnarray}
where we have further assumed\footnote{
The upper bound required for $U$ depends sensitively on 
the choice of $\al(k)$.
The requirement $U\le K_2\la t$ made in Theorem~\ref{SW1dTh} 
(and which appears in the full proof)
is somewhat different from the present one.
}  $U\le4t$.
Finally we substitute (\ref{alpha1d}) into the bound 
(\ref{DOmega}) to get
\begin{eqnarray}
D[\Ok]&\ge&E_0+\frac{4U}{\la^4}\rbk{\sin\frac{k}{2}}^2
\ret
&&
-4\rbk{\frac{c\akappa}{\la^2}+\frac{1}{\la^4}
+\frac{c'\akappa t}{U}}
\rbk{\frac{6U}{\la^2}+4c'\akappa t}\rbk{\sin\frac{k}{2}}^2
\ret
&=&
E_0+\frac{4U}{\la^4}\cbk{1-
\rbk{c\akappa+\frac{1}{\la^2}+\frac{c'\akappa\la^2t}{U}}
\rbk{6+\frac{4c'\akappa\la^2t}{U}}
}
\rbk{\sin\frac{k}{2}}^2
\ret
&\ge&
E_0+\frac{4U}{\la^4}\cbk{
1-A_1\akappa-\frac{A_2}{\la}-\frac{A_3\la^2t\akappa}{U}
}
\rbk{\sin\frac{k}{2}}^2,
\label{DOmega>}
\end{eqnarray}
with constants\footnote{
Here the term $A_2/\la$ in the right-hand side of (\ref{DOmega>})
can be replaced with $A_2/\la^2$
if we simply equate the above expression.
Since the actual matrix elements have many ``small'' terms
that are neglected here, what we can prove (in the latter sections
with perfect
rigor) is the bound in terms of the quantity in the
right-hand side of (\ref{DOmega>}).
} $A_1$, $A_2$, and $A_3$.
Since the lower bounds (\ref{DXi}) and (\ref{DTheta}) for 
$D[\Xi_\pm(k)]$ and $D[\Theta(k)]$ are strictly larger than that
for $D[\Ok]$, we find that the right-hand side of (\ref{DOmega>}) gives
the desired lower bound for the lowest eigenvalue of the Hamiltonian
in the space $\calH_k$.
Therefore the lower bound for the spin wave excitation in (\ref{ESW1d})
(which is the main statement of Theorem~\ref{SW1dTh})
has been derived.

In the remainder, we sketch how we get Theorem~\ref{st1dTh}
about the local stability of ferromagnetism
from the above lower bounds.
The lower bounds in (\ref{ESW1d}) gives strict bounds
$\ESW>E_0$ for all $k\in\calK$ except for $k=0$.
This means that the desired local stability inequality 
(\ref{E>E1d}) has been 
proved except in the translation invariant sector with $k=0$.
To deal with the $k=0$ sector is easy once we realize that
$\Omega(0)$ is nothing but an $SU(2)$ rotation of the 
``ferromagnetic ground state'' $\UP$.
By simply repeating the above arguments for the {\em three} dimensional
subspace spanned by $\Xi_\pm(0)$ and $\Theta(0)$, we easily find that
the desired bound  (\ref{E>E1d}) also holds in the sector with $k=0$.
It only remains to extend the parameter region in which
the statement is valid.
This is easily done by a general consideration about the monotonicity of
energies as a function of $U$.
See Section~\ref{secproofstab}.


\Section{Definitions and Main Results}
\label{SecDef}
In the present section, we define the general class of models treated in 
the present paper, and precisely state our main theorems.
\subsection{Lattice}
We describe the lattice on which our Hubbard model is defined. 
The lattice
is characterized by the dimension of the lattice $d=1,2,3,\cdots$, the
dimension of ``cells'' $\nu=1,2,\cdots,d$, and the linear size $L$ which is
taken to be an odd integer. 
Throughout the present paper we
assume that the three parameters $d$, $\nu$, and $L$ are fixed to allowed
values.
All the bounds proved in the present paper are independent of the system 
size $L$.
 
Let $\Lao$ be the $d$-dimensional $L\times\ldots\times L$ hypercubic
lattice with periodic boundary conditions; 
\begin{equation}
\Lao = 
\set
{x = (x_1,\ldots, x_d)}
{\mbox{$x_i\in{\bf Z}$, $\abs{x_i}\le(L-1)/2$ for $i=1,\ldots,d$}}.
\label{Lambdao}
\end{equation}
We  ``decorate'' the lattice $\Lao$ by adding sites taken at the
center   of   each $\nu$-dimensional cell. 
Let $\calU'$ be the set of vectors defined as
\begin{equation}
\calU'
=
\set
{u=(u_1,\ldots,u_d)}
{\mbox{$u_i=0$ or $1/2$, and $2\sum_{i=1}^du_i=\nu$}}.
\label{Uprime}
\end{equation}
Note that each $u\in\calU'$ has the length $\abs{u}=\sqrt{\nu}/2$.
For each $u\in\calU'$, we let
\begin{equation}
\Lau = \set{x+u}{x\in\Lao}.
\label{Lambdau}
\end{equation}
By introducing the unit cell $\calU$ of the lattice by 
\begin{equation}
\calU = \cbk{o}\cup\calU',
\label{U}
\end{equation}
where $o=(0,\cdots,0)$, our decorated hypercubic lattice is defined as
\begin{equation}
\La = \bigcup_{u\in\calU}\Lau.
\label{Lambda}
\end{equation}
We often decompose $\La$ as $\La=\Lao\cup\La'$
where
\begin{equation}
\La' = \bigcup_{u\in\calU'}\Lau.
\label{Lambdaprime}
\end{equation}
As is discussed in Section~\ref{secMM1d}, we can imagine that sites in 
$\Lao$ represent metallic atoms and sites in $\La'$ represent oxygen atoms.
The numbers of sites (vectors) in the unit cell\footnote{
Throughout the present paper, $\abs{S}$ denotes the number of elements in 
a set $S$.
}
\begin{equation}
b = \abs{\calU} = \dnu+1
\label{bandnumber}
\end{equation}
is important, since it gives the number of the bands in the
corresponding single-electron problem.

For $d=1$, the only possible choice of $\nu$ is $\nu=1$, 
and we get the chain with two kinds of
atoms discussed in Section~\ref{Sec1d}.
(See Figure~\ref{lattice1d}.)
For $d=2$, we can either set $\nu=1$ to get the lattice  
in Figure~\ref{lattice2d}a
with the band number $b=3$, or
set $\nu=2$ to get the lattice in Figure~\ref{lattice2d}b with $b=2$.
For $d=3$, there are three choices for $\nu$.
The lattices with $\nu=2$ and $\nu=3$ have the structures similar to 
the fcc and the bcc lattices, respectively.

\begin{figure}
\centerline{\includegraphics[width=7cm,clip]{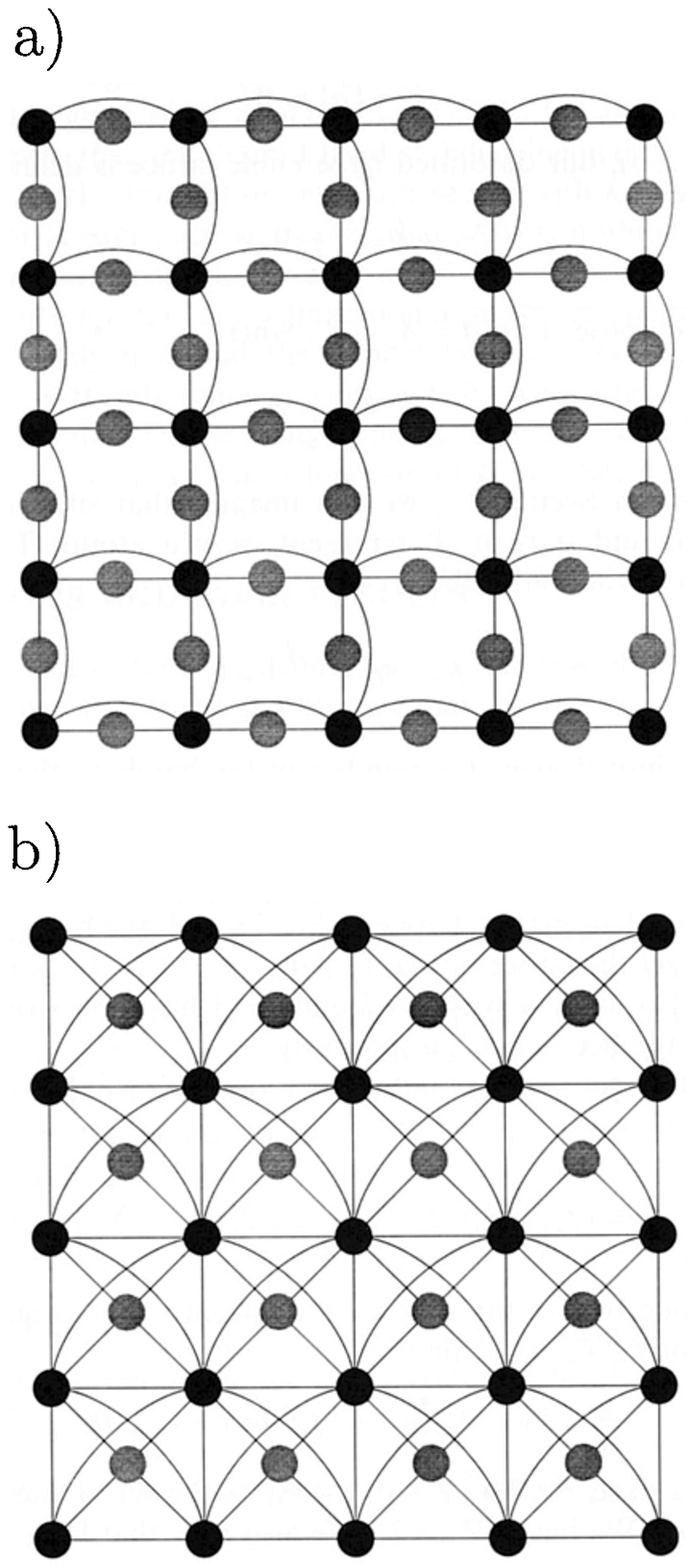}}
\caption{\captionG}
\label{lattice2d}
\end{figure}

We introduce some sets of lattice vectors which will become useful.
We define
\begin{equation}
\calF_o
=\set
{f=(f_1,\ldots,f_d)}
{\mbox{$f_i=0$ or $\pm1/2$, and $2\sum_{i=1}^d|f_i|=\nu$}},
\label{Fo}
\end{equation}
which is the collection of the sites in $\La'$ adjacent  to the
origin $o$.
We have 
$\abs{\calF_o}=2^\nu\dnu$.
For 
$f\in\calF_o$,
we define 
\begin{equation}
\calF_f = 
\set{g\in\calF_o}{\mbox{$\abs{g_i}=\abs{f_i}$ for $i=1,\ldots,d$}}.
\label{Ff}
\end{equation}
Note that for a fixed $f\in\calF_o$,
$\set{f+g}{g\in\calF_f}$
is the set of sites in $\Lao$ which are adjacent  to $f$.
We have 
$\abs{\calF_f}=2^\nu$.
We also note that for $g\in\calF_f$, we have 
\begin{equation}
\calF_g=\calF_f.
\label{Fg=Ff}
\end{equation}

\subsection{Hubbard Model}
\label{secHub}
We define the Hubbard model on the decorated hypercubic lattice $\La$.
As usual, we denote by $\cxs$ and $\axs$ the creation and the annihilation 
operators, respectively, of an electron at site $x\in\La$
with spin $\sigma=\up$, $\dn$.
These operators satisfy the standard anticommutation relations
\begin{equation}
\cbk{\axs,\ayt} = \cbk{\cxs,\cyt} = 0,
\label{ac1}
\end{equation}
and
\begin{equation}
\cbk{\axs,\cyt} = \delta_{x,y}\,\delta_{\sigma,\tau},
\label{ac2}
\end{equation}
for any $x,y\in\La$ and $\sigma,\tau=\up,\dn$,
where $\cbk{A,B}=AB+BA$.
The number operator for an electron at site $x$
with spin $\sigma$ is defined as
\begin{equation}
\nxs = \cxs\axs.
\label{nxs}
\end{equation}

We consider the standard Hubbard Hamiltonian
\begin{equation}
H = \Hhop + \Hint.
\label{Ham1}
\end{equation}
The interaction Hamiltonian is
\begin{equation}
\Hint = U \sum_{x\in\La} n_{x,\up} n_{x,\dn},
\label{Hint}
\end{equation}
where $U>0$ is the on-site Coulomb repulsion energy.
The hopping Hamiltonian is further decomposed as
\begin{equation}
\Hhop = \Hhopo + \kappa \Hpert,
\label{Hhop}
\end{equation}
where $\Hhopo$ is the hopping Hamiltonian of the flat-band model
defined as
\begin{equation}
\Hhopo
=
t
\sum_{\sigma=\up,\dn}
\sum_{x\in\La'}
\rbk{\la\, \cxs + \sumtwo{y\in\Lao}{|x-y|=\sqrt{\nu}/2}\cys}
\rbk{\la\, \axs + \sumtwo{y\in\Lao}{|x-y|=\sqrt{\nu}/2}\ays},
\label{Hhop01}
\end{equation}
where $t>0$ and $\la>0$ are parameters.
It is, of course, possible to represent the Hamiltonian
in the ``standard'' form as
\begin{equation}
\Hhopo
=
\sum_{\sigma=\up,\dn}
\sum_{x,y\in\La}
t_{x,y}^{(0)} \,\cxs\ays,
\label{Hhop02}
\end{equation}
where the hopping matrix elements are given by
\begin{equation}
t_{x,y}^{(0)} = t_{y,x}^{(0)} = \cases{
\la^2 t &if $x=y\in\La'$;\cr
\la t & 
if $x\in\Lao$, $y\in\La'$ with $\abs{x-y}=\sqrt{\nu}/2$;\cr
2^{(\nu-\mu)}{d-\mu \choose \nu-\mu}t &
if $x,y\in\Lao$ with $\abs{x-y}=\sqrt{\mu}$
where $\mu =0,1,\ldots,\nu$;\cr
0 & otherwise.\cr
}
\label{txy0}
\end{equation}
The representation (\ref{Hhop01}) shows that the
hopping Hamiltonian $\Hhopo$ is characterized by mean-field-like
hoppings within each $\nu$-dimensional cell which consists of
$x\in\La'$ and the sites $y\in\Lao$ adjacent to it.
This rather artificial choice of the hopping produces the single-electron
spectrum with a completely flat band.
See Section~\ref{secband}.

The perturbation Hamiltonian, on the other hand, is rather arbitrary.
The magnitude of the perturbation is controlled by the real parameter
$\kappa$.
The Hamiltonian $\Hpert$ has the standard form
\begin{equation}
\Hpert 
=
\sum_{\sigma=\up,\dn}
\sum_{x,y\in\La}
t'_{x,y}\,\cxs\ays.
\label{Hpert}
\end{equation}
The hopping matrix elements $t'_{x,y}=t'_{y,x}\in{\bf R}$ 
are arbitrary except for the following conditions.
We require the translation invariance
\begin{equation}
t'_{x,y}=t'_{x+z,y+z}
\label{trans}
\end{equation}
for any $z\in\Zd$ and any $x,y\in\La$,
and the summability
\begin{equation}
\sum_{y\in\La}\abs{t'_{x,y}}\le t,
\label{tp<t}
\end{equation}
\begin{equation}
\sum_{y\in\La}\abs{x-y}\abs{t'_{x,y}}\le t\, R,
\label{txp<tR}
\end{equation}
for any $x\in\La$.
Here $t$ is the same as before, and $R$ is a constant which measures
the range of the hopping $\cbk{t'_{x,y}}$.
When $R$ chosen to optimize (\ref{txp<tR}) is less than $\sqrt{\nu}/2$,
we redefine it\footnote{
This is done for a purely technical reason to make some formulas simple.
See (\ref{sumxI<}).
} as $R=\sqrt{\nu}/2$.

The Hilbert space of the model is spanned by the basis states
\begin{equation}
\Phi_{A,B} = 
\rbk{\prod_{y\in A}c^\dagger_{y,\up}}
\rbk{\prod_{z\in B}c^\dagger_{z,\dn}}
\vac,
\label{generalbasis}
\end{equation}
where $A$, $B$ are subsets of $\La$, and
$\vac$ is the unique vacuum state characterized by
$\axs\vac=0$ for any $x\in\La$ and $\sigma=\up,\dn$.
In the preset work, we only consider the Hilbert space $\calH$
with the electron number fixed to $L^d=\abs{\Lao}$,
which is spanned by the basis states (\ref{generalbasis})
with $\abs{A}+\abs{B}=L^d$.

\subsection{Local Stability Theorem}
\label{secmainres}
The total spin operator of the Hubbard model is defined as 
usual by
\begin{equation}
S^{(\alpha)}_{\rm tot} = 
\frac{1}{2}\sum_{x\in\La}\sum_{\sigma,\tau=\up,\dn}
\cxs(p^{(\alpha)})_{\sigma,\tau}\,c_{x,\tau}
\label{Stot}
\end{equation}
for $\alpha=1,2,3$, where $p^{(\alpha)}$ are the Pauli matrices
\begin{equation}
p^{(1)}=\rbk{\matrix{0&1\cr 1&0\cr}},\quad
p^{(2)}=\rbk{\matrix{0&-i\cr i&0\cr}},\quad
p^{(3)}=\rbk{\matrix{1&0\cr 0&-1\cr}}.
\label{pauli}
\end{equation}

An explicit calculation shows that the spin operators
$({\bf S}_{\rm tot})^2=\sum_{\alpha=1,2,3}(S^{(\alpha)}_{\rm tot})^2$,
$\Sztot$, and the Hamiltonian $H$ commute with each other.
This means that we can find simultaneous eigenstates of these operators.
The eigenvalue of $({\bf S}_{\rm tot})^2$ is denoted as 
$\Stot(\Stot+1)$ where $\Stot$ can take values 
$1/2, 3/2, \ldots, \Smax$ with $\Smax=L^d/2$.

We are now able to state the theorem due to Tasaki \cite{92e}
and Mielke-Tasaki \cite{93d}.
\begin{theorem}
[Flat-band ferromagnetism]
Consider the Hubbard model with the Hamiltonian (\ref{Ham1}).
Assume $t>0$, $\la>0$, and $\kappa=0$ to get a model with
a flat band.
Then, for any $U>0$, the ground states of the Hamiltonian $H$
has $\Stot=\Smax$, and are non-degenerate apart from the 
trivial $(2\Smax+1)$-fold spin degeneracy.
\label{flatbandTh}
\end{theorem}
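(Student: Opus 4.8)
The plan is to follow the standard two-step strategy for flat-band Hubbard models: first show $H\ge0$ with an explicit family of ferromagnetic states saturating the bound, then prove that these are the \emph{only} states that do. The special mean-field form (\ref{Hhop01}) of the hopping is exactly what makes positivity manifest: setting $B_{x,\sigma}=\la\,c_{x,\sigma}+\sum_{y\in\Lao,\,\abs{x-y}=\sqrt{\nu}/2}c_{y,\sigma}$ for $x\in\La'$ and $\sigma=\up,\dn$, one has $\Hhopo=t\sum_{\sigma}\sum_{x\in\La'}B^\dagger_{x,\sigma}B_{x,\sigma}\ge0$, while $\Hint=U\sum_{x\in\La}n_{x,\up}n_{x,\dn}\ge0$ since $U>0$. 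Hence $H\ge0$, and $\Phi$ is a ground state of energy $0$ iff $B_{x,\sigma}\Phi=0$ for all $x\in\La'$, $\sigma$, and $n_{x,\up}n_{x,\dn}\Phi=0$ for all $x\in\La$. Writing $\theta^{(x)}$ ($x\in\La'$) for the single-particle state with $B_{x,\sigma}=\sum_{z\in\La}\theta^{(x)}_z\,c_{z,\sigma}$, the first condition says that $\Phi$ is built, for each spin, only from single-particle states in the ``flat band'' $W^\perp$, where $W=\mathrm{span}\{\theta^{(x)}:x\in\La'\}$; equivalently $W^\perp=\set{\phi}{\la\phi_x+\sum_{y\in\Lao,\,\abs{x-y}=\sqrt{\nu}/2}\phi_y=0\mbox{ for all }x\in\La'}$. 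Since $\la\ne0$, an element of $W^\perp$ is fixed freely by its restriction to $\Lao$, so $\dim W^\perp=\abs{\Lao}=L^d$, with basis the sharply localized orbitals $\psi^{(y)}$ ($y\in\Lao$): $\psi^{(y)}_y=1$, $\psi^{(y)}_u=-1/\la$ for $u\in\La'$ with $\abs{u-y}=\sqrt{\nu}/2$, and $\psi^{(y)}_x=0$ otherwise.

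\emph{Ferromagnetic ground states.} Put $\ad_{y,\sigma}=\sum_{z\in\La}\psi^{(y)}_z\,\cd_{z,\sigma}$ and $\UP=\rbk{\prod_{y\in\Lao}\ad_{y,\up}}\vac$. Since $\UP$ occupies only flat-band orbitals, $B_{x,\sigma}\UP=0$; since it has no down-spin electron, $n_{x,\up}n_{x,\dn}\UP=0$ trivially; hence $H\UP=0$ and $\UP$ is a ground state. As $H$ commutes with the total spin ${\bf S}_{\rm tot}$, every $SU(2)$ rotation of $\UP$ is a ground state, and these span the spin-$\Smax$ multiplet with $\Smax=L^d/2$.

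\emph{Uniqueness.} The ground-state subspace is $SU(2)$-invariant, hence a sum of irreducible multiplets, so it suffices to show $\UP$ is up to a scalar the only highest-weight ground state. Let $\Phi$ be a zero-energy state with $S^+_{\rm tot}\Phi=0$, $N_\dn\ge1$ down electrons and $N_\up=L^d-N_\dn$ up electrons. By the first paragraph $\Phi=\sum_{A,B\subseteq\Lao}c(A,B)\,\rbk{\prod_{y\in A}\ad_{y,\up}}\rbk{\prod_{z\in B}\ad_{z,\dn}}\vac$ with $\abs{A}=N_\up$, $\abs{B}=N_\dn$. Because $\psi^{(y)}$ restricted to $\Lao$ is the delta function at $y$, the real-space configuration with up-electrons exactly on $A\subseteq\Lao$ and down-electrons exactly on $B\subseteq\Lao$ occurs in a \emph{single} term of this expansion, the one indexed by $(A,B)$, with coefficient $\pm1$; so if $A\cap B\ne\emptyset$ that configuration is doubly occupied at an $\Lao$-site and $n_{x,\up}n_{x,\dn}\Phi=0$ forces $c(A,B)=0$. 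Since $\abs{A}+\abs{B}=L^d=\abs{\Lao}$, only partition configurations $B=\Lao\setminus A$ survive: $\Phi=\sum_{A:\,\abs{A}=N_\up}c(A)\,\Psi_A$ with $\Psi_A=\rbk{\prod_{y\in A}\ad_{y,\up}}\rbk{\prod_{z\in\Lao\setminus A}\ad_{z,\dn}}\vac$. Next, vanishing of the double occupancy at $u_0\in\La'$: projecting $n_{u_0,\up}n_{u_0,\dn}\Phi$ onto the configuration with $u_0$ doubly occupied and the remaining electrons on $\Lao$-sites $P$ (up) and $Q$ (down), only $A=P\cup\{a\}$ and $A=P\cup\{b\}$ contribute, where $\{a,b\}=\Lao\setminus(P\cup Q)$ must be two common $\Lao$-neighbors of $u_0$, each contributing $\la^{-2}$ times a fermionic sign; so $n_{u_0,\up}n_{u_0,\dn}\Phi=0$ yields $c(P\cup\{a\})=\pm c(P\cup\{b\})$. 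A short check shows that for every coordinate direction there is a $\La'$-site having two $\Lao$-neighbors $a,b$ that differ by that direction, so these ``swaps'' include all nearest-neighbor hops on the connected lattice $\Lao$ and therefore link all $N_\up$-subsets of $\Lao$; hence the $c(A)$ are determined up to one overall constant. Since $\rbk{S^-_{\rm tot}}^{N_\dn}\UP$ is a nonzero zero-energy state of exactly this form, it spans the solution space and $\Phi\propto\rbk{S^-_{\rm tot}}^{N_\dn}\UP$. But for $N_\dn\ge1$ this state is not annihilated by $S^+_{\rm tot}$ --- contradicting $S^+_{\rm tot}\Phi=0$. Hence $N_\dn=0$; then $\Phi=\UP$ up to a scalar (there is a unique way to fill the $L^d$-dimensional flat band with $L^d$ up-spin electrons), and the ground-state subspace is exactly the spin-$\Smax$ multiplet, of dimension $2\Smax+1=L^d+1$.

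\emph{Main obstacle.} The delicate step is the uniqueness argument, specifically turning ``$n_{u_0,\up}n_{u_0,\dn}\Phi=0$'' into clean pairwise identifications $c(A)=\pm c(A')$ rather than longer linear relations: this needs a careful account of which partition states reach a given doubly occupied configuration and with what fermionic signs, and then one must verify that the resulting ``swap graph'' on $N_\up$-subsets of $\Lao$ is connected for \emph{every} $d$ and $\nu$. In the general decorated-lattice setting this connectivity is precisely where the combinatorics of the cell structure --- the sets $\calF_o$, $\calF_f$ of (\ref{Fo})--(\ref{Ff}) --- enters, and it is essentially the only place where the specific geometry of $\La$ is used.
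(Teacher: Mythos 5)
Your proof is correct and follows essentially the same strategy as the paper: establish $H\ge 0$ via the mean-field form of $\Hhopo$, observe that a zero-energy state must live in the flat band and have no double occupancy, use vanishing double occupancy at $\Lao$-sites to reduce to partition states, then use vanishing double occupancy at $\La'$-sites to derive the exchange identifications $c(P\cup\{a\})=\pm c(P\cup\{b\})$ and conclude via connectivity. The paper packages the exchange step via the operator identity $(b_{y,\up}b_{z,\dn}-b_{z,\up}b_{y,\dn})\Phi=0$ and then writes the ground states directly as $\sum_M\alpha_M(S^-_{\rm tot})^M\UP$, whereas you argue sector-by-sector through highest-weight vectors and make the connectivity of the swap graph on $N_\up$-subsets explicit; these are equivalent, and your version usefully spells out a step the paper leaves implicit.
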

This theorem is desirable in the sense that it completely determines
the ground states of the model.
But the result is not robust since it applies only to the models with
a completely flat band.
Since the references \cite{92e,93d} only discusses the models with 
$\nu=1$, we will prove the theorem in Section~\ref{secflatproof}.

We now describe the new {\em robust} results for 
the models with a nearly flat band.
For $\Stot = 1/2, 3/2, \ldots, \Smax$, we denote by
$\Emin(\Stot)$ the lowest eigenvalue of the Hamiltonian
(\ref{Ham1}) in the sector which consists of the states 
$\Phi$ such that $({\bf S}_{\rm tot})^2\Phi=\Stot(\Stot+1)\Phi$.
Then we have the following simple lemma for the 
sector with $\Stot=\Smax$.
\begin{lemma}
[``Ferromagnetic ground states'']
Assume that $t>0$, $\la\ge\la_1$,
and 
\newline
$|\kappa|\la^{-2}\le r_1$,
where $\la_1$ and $r_1$ are finite constants which depend only on
the dimensions $d$ and $\nu$.
(See Lemma~\ref{gapLemma} for explicit formulas of 
$\la_1$ and $r_1$.)
Then for arbitrary $U$, the states $\Phi$
such that 
$({\bf S}_{\rm tot})^2\Phi=\Smax(\Smax+1)\Phi$
and
$H\Phi=\Emin(\Smax)\Phi$
are non-degenerate apart from the 
trivial $(2\Smax+1)$-fold spin degeneracy.
\label{FerroGSLemma}
\end{lemma}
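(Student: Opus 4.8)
The plan is to reduce the statement to an elementary fact about a system of non-interacting fermions. First I would note that the subspace of states $\Phi$ with $\Sztot\Phi=\Smax\Phi$ (where $\Smax=L^d/2$) is precisely the span of the $L^d$-electron states $\rbk{\prod_{y\in A}\cd_{y,\up}}\vac$ with $A\subset\La$, $\abs{A}=L^d$; that is, the ``fully polarized'' sector in which every electron carries up spin. On this subspace the Pauli principle forces $n_{x,\dn}=0$ for every $x\in\La$, so $\Hint$ acts as the zero operator and $H$ reduces to the one-body operator $\Hhop$, i.e.\ to the Hamiltonian of $L^d$ spinless fermions moving in the single-electron Hilbert space $\Hsing$ of dimension $b\,L^d$. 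Since $H$ commutes with $(\Stot)^2$ and $\Sztot$, and since every vector of the $\Stot=\Smax$ sector is a linear combination of vectors $(S^-_{\rm tot})^m\Psi$ with $S^-_{\rm tot}=S^{(1)}_{\rm tot}-iS^{(2)}_{\rm tot}$ and $\Psi$ a highest-weight (hence fully polarized) vector, the spectrum of $H$ on the whole $\Stot=\Smax$ sector coincides, up to the $(2\Smax+1)$-fold spin multiplicity, with that of $\Hhop$ restricted to the fully polarized subspace; in particular $\Emin(\Smax)$ is the ground-state energy of this free-fermion problem, for any $U$.

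Next I would invoke the standard fact that the ground state of $N$ non-interacting fermions in a finite single-particle space is the Slater determinant built from eigenvectors of the $N$ lowest single-particle eigenvalues $\ep_1\le\ep_2\le\cdots$, and that it is non-degenerate precisely when $\ep_N<\ep_{N+1}$. Here $N=L^d=\abs{\Lao}$, which is exactly the number of single-electron states contained in the lowest band. Under the hypotheses $\la\ge\la_1$ and $\abs{\kappa}\la^{-2}\le r_1$ --- which are the hypotheses of the band-structure estimate of Lemma~\ref{gapLemma} --- the lowest band is separated by a nonzero gap from the remaining $b-1$ bands, so $\ep_{L^d}<\ep_{L^d+1}$. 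Hence $\Hhop$ restricted to the fully polarized subspace has a unique ground state, namely the state obtained by completely filling the lowest band with up-spin electrons (the state written $\UP$ in the one-dimensional discussion of Section~\ref{Sec1d}).

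It then remains to translate this uniqueness into the asserted degeneracy structure. Since $\UP$ is annihilated by $S^+_{\rm tot}=S^{(1)}_{\rm tot}+iS^{(2)}_{\rm tot}$ and obeys $(\Stot)^2\UP=\Smax(\Smax+1)\UP$, the vectors $(S^-_{\rm tot})^m\UP$ for $m=0,1,\ldots,2\Smax$ are linearly independent eigenvectors of $H$ with common eigenvalue $\Emin(\Smax)$, and they span the trivial spin multiplet. Conversely, any $\Phi$ with $(\Stot)^2\Phi=\Smax(\Smax+1)\Phi$ and $H\Phi=\Emin(\Smax)\Phi$ is obtained by applying powers of $S^-_{\rm tot}$ to highest-weight eigenvectors of $H$ of energy $\Emin(\Smax)$; as the space of such highest-weight vectors is one-dimensional (spanned by $\UP$), $\Phi$ lies in the multiplet generated by $\UP$. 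This establishes that the lowest-energy states in the $\Stot=\Smax$ sector are non-degenerate apart from the trivial $(2\Smax+1)$-fold spin degeneracy. The only non-bookkeeping ingredient is the spectral gap above the (exactly $L^d$-dimensional) lowest band, and that is supplied by Lemma~\ref{gapLemma}; everything else follows from the vanishing of the interaction in the fully polarized sector, so I do not anticipate a serious obstacle --- the real work has already been done in the single-electron analysis.
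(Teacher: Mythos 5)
Your proposal is correct and takes essentially the same approach as the paper: restrict to the $\Sztot=\Smax$ (fully polarized) sector where $\Hint$ vanishes, observe that the remaining one-body problem has a unique ground state because the electron number $L^d$ exactly fills the lowest band and Lemma~\ref{gapLemma} supplies a spectral gap to the higher bands, and then lift to the full $(2\Smax+1)$-fold multiplet via the spin-lowering operator. You spell out the free-fermion non-degeneracy criterion $\ep_N<\ep_{N+1}$ and the $SU(2)$ bookkeeping more explicitly than the paper does, but the underlying argument is the same.
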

This lemma is almost trivial, but will be proved in
Section~\ref{seceasy}.
For convenience, we call the state $\Phi$ characterized by the above lemma
the ``ferromagnetic ground states''.
These states are the energy minimizers in the sector
with $\Stot=\Smax$, and are not necessarily the true ground states.
We shall remind the readers about this abuse of terminology by
always putting the ``ferromagnetic ground states'' into quotation marks.

The first theorem establishes the {\em instability} of the
``ferromagnetic ground states'' against a single-spin flip.
Let $\bar{\varepsilon}$ denote the band width of the lowest band.
(See Section~\ref{secband}.)
For $\kappa\ne0$ and a generic choice of $\cbk{t'_{x,y}}$, the band width
$\bar{\varepsilon}$ is strictly positive.
\begin{theorem}
[Instability of the ``ferromagnetic ground states'']
Assume the conditions for Lemma~\ref{FerroGSLemma}.
We also assume that $\bar{\varepsilon}>0$ (which is generically true 
if $\kappa\ne0$), and
\begin{equation}
0\le U<\bar{\varepsilon}.
\end{equation} 
Then the ``ferromagnetic ground states''
are unstable under a single-spin flip in the sense that
\begin{equation}
\Emin(\Smax-1)<\Emin(\Smax).
\label{E<E}
\end{equation}
\label{instTh}
\end{theorem}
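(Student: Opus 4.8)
The plan is the standard variational one: exhibit a single normalized trial state $\Psi$ in the sector $\Sztot=\Smax-1$ with $\langle\Psi,H\Psi\rangle<\Emin(\Smax)$. Since $H$ commutes with $({\bf S}_{\rm tot})^2$, such a state must have a nonzero component of total spin $\Smax-1$, and that component then forces $\Emin(\Smax-1)<\Emin(\Smax)$, which is (\ref{E<E}).

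First I would pin down $\Emin(\Smax)$ exactly. In the sector $\Stot=\Smax$ every electron carries the same spin, so $\Hint$ annihilates all such states and the energy reduces to that of $L^d$ non-interacting fermions under $\Hhop$; hence $\Emin(\Smax)$ equals the sum of the $L^d$ lowest single-electron eigenvalues of $\Hhop$. By the conditions inherited from Lemma~\ref{FerroGSLemma}, the lowest band is separated by a gap from the remaining bands (Lemma~\ref{gapLemma}) and consists of exactly $L^d$ Bloch levels $\ep_1(k)$, $k\in\calK$, so $\Emin(\Smax)=\sum_{k\in\calK}\ep_1(k)$, with minimizer the fully polarized state $\UP$ that fills the lowest band with up-spin electrons.

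Next I would build $\Psi$ by a single particle--hole rearrangement inside the lowest band. Let $\varphi_k$ ($k\in\calK$) be the normalized lowest-band Bloch eigenfunctions of $\Hhop$ and pick $k_{\max},k_{\min}$ with $\ep_1(k_{\max})-\ep_1(k_{\min})=\bar{\varepsilon}$ (for these models the band extrema are attained at allowed wave vectors). Take $\Psi$ to be the normalized Slater determinant whose up-spin electrons occupy $\{\varphi_k:k\in\calK,\,k\ne k_{\max}\}$ and with a single down-spin electron in $\varphi_{k_{\min}}$; then the electron number is $L^d$ and $\Sztot\Psi=(\Smax-1)\Psi$. Being an eigenstate of $\Hhop$, $\Psi$ satisfies $\langle\Psi,\Hhop\Psi\rangle=\sum_{k\in\calK}\ep_1(k)-\ep_1(k_{\max})+\ep_1(k_{\min})=\Emin(\Smax)-\bar{\varepsilon}$. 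For the interaction, $\Psi$ is a product of an up-spin determinant and a down-spin determinant, so $\langle\Psi,n_{x,\up}n_{x,\dn}\Psi\rangle=\rho_\up(x)\rho_\dn(x)$ with $\rho_\sigma(x)=\langle\Psi,n_{x,\sigma}\Psi\rangle$; since $n_{x,\up}$ is a projection $\rho_\up(x)\le1$, and $\sum_{x\in\La}\rho_\dn(x)=1$, whence $\langle\Psi,\Hint\Psi\rangle=U\sum_{x\in\La}\rho_\up(x)\rho_\dn(x)\le U$. Therefore $\langle\Psi,H\Psi\rangle\le\Emin(\Smax)-\bar{\varepsilon}+U<\Emin(\Smax)$ by the hypothesis $U<\bar{\varepsilon}$.

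Finally, the spin bookkeeping: since only one spin is flipped, $\Psi$ decomposes orthogonally as $\Psi=\Phi'+\Phi''$ with $({\bf S}_{\rm tot})^2\Phi'=\Smax(\Smax+1)\Phi'$ and $({\bf S}_{\rm tot})^2\Phi''=(\Smax-1)\Smax\,\Phi''$ (no smaller total spin occurs when $N_\dn=1$). Because $H$ commutes with $({\bf S}_{\rm tot})^2$, $\langle\Psi,H\Psi\rangle=\langle\Phi',H\Phi'\rangle+\langle\Phi'',H\Phi''\rangle\ge\Emin(\Smax)\norm{\Phi'}^2+\Emin(\Smax-1)\norm{\Phi''}^2$. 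If one had $\Emin(\Smax-1)\ge\Emin(\Smax)$, the right-hand side would be $\ge\Emin(\Smax)(\norm{\Phi'}^2+\norm{\Phi''}^2)=\Emin(\Smax)\norm{\Psi}^2=\Emin(\Smax)$, contradicting $\langle\Psi,H\Psi\rangle<\Emin(\Smax)$; hence (\ref{E<E}). There is no real obstacle here: this is the ``easy'' direction, a one-line variational estimate, in sharp contrast to the genuinely hard stability result (Theorem~\ref{stabilityTh}). The only points that need a little care are the exact identity $\Emin(\Smax)=\sum_{k\in\calK}\ep_1(k)$, which rests on the band-gap input of Lemma~\ref{gapLemma}, and the elementary bound $\langle\Psi,\Hint\Psi\rangle\le U$.
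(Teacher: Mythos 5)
Your proposal is correct and is essentially the same as the paper's proof: you use the same trial state $\Phi_{\rm var}=d^\dagger_{k_{\rm min},\dn}d_{k_{\rm max},\up}\UP$ (in the paper's notation), compute the kinetic-energy gain $-\bar\varepsilon$, and bound the interaction energy by $U$. The paper compresses the interaction bound and the spin bookkeeping into the remark that the expectation value is ``easily shown'' to satisfy the stated inequality; you supply exactly those details.
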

The theorem will be proved in Section~\ref{seceasy}.

Theorem~\ref{instTh} shows that one can lower the energy by flipping a 
single spin in the ``ferromagnetic ground states''.
It only shows that the ``ferromagnetic ground states'' are not
the true ground states.
To identify the 
true ground states (for $U\ne0$) in this situation 
is a highly nontrivial and interesting question.

To show the instability of a certain state (as in the above theorem)
is not a hard task since one can rely on the standard 
variational argument.
A really important (and difficult) part of the present work is to 
show the following theorem which states the {\em stability}.
\begin{theorem}
[Local stability of the ``ferromagnetic ground states'']
Consider the Hubbard model with the Hamiltonian (\ref{Ham1}).
Assume that the parameters satisfy
\begin{equation}
\la \ge\la_2,\quad
\akappa\le\kappa_1,\quad
\la\akappa\le p_1
\label{st1}
\end{equation}
and 
\begin{equation}
U\ge K_1\la^2t\abs{\kappa},
\label{st2}
\end{equation}
where $\la_2$, $\kappa_1$, $p_1$, and $K_1$ are positive constants which depend only on
the basic parameters $d$, $\nu$, and $R$.
Then the ``ferromagnetic ground states'' are stable under a single-spin flip
in the sense that
\begin{equation}
\Emin(\Smax-1)>\Emin(\Smax).
\label{stabineq}
\end{equation}
\label{stabilityTh}
\end{theorem}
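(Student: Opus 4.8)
The plan is to prove the general $d$-dimensional statement by the same route sketched for $d=1$ in Section~\ref{secidea}: reduce the inequality (\ref{stabineq}) to a family of momentum-resolved lower bounds in the single-spin-flip sector, and establish those bounds by representing $H$ in a non-orthonormal basis of sharply localized states on which the low-energy part of $H$ is ``almost diagonal''. Since $H$ commutes with $({\bf S}_{\rm tot})^2$ and $\Sztot$, and since (for the parameter range of Lemma~\ref{FerroGSLemma}) the energy minimizer in the sector $\Stot=\Smax$ is the state $\UP$ obtained by filling the lowest band with up spins, with energy $\Emin(\Smax)=E_0=\sum_{x\in\Lao}\tau_{x,x}$, it suffices to show that every eigenstate of $H$ in the $\Stot=\Smax-1$ sector lies strictly above $E_0$. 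Restricted to highest weight, that sector is spanned by states with $L^d-1$ up-spin and one down-spin electron; by lattice translation invariance this space decomposes into subspaces $\calH_k$ of definite crystal momentum $k$. For $k\ne0$ every state in $\calH_k$ has $\Stot=\Smax-1$, so one must show its lowest eigenvalue $E(k)$ satisfies $E(k)>E_0$ (with a ``gap'' of order $(4U/\la^4)(\sin\frac{k}{2})^2$); the unique $\Stot=\Smax$ state in this space is $\Omega(0)=S^{-}_{\rm tot}\UP\in\calH_0$, at energy exactly $E_0$, and there one must bound $H$ below by $E_0+(\text{positive})$ on the complement of $\Omega(0)$ in $\calH_0$.

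For each $k$ the plan is to build a non-orthonormal basis of $\calH_k$ adapted to the particle picture: the spin-wave trial state $\Ok$ of (\ref{Omega1d})--(\ref{Gammax}), the ``hole plus double occupancy'' states $\Xi_r(k)$ of (\ref{Xirk}), the ``two holes plus double occupancy'' states $\Theta(k)$ of (\ref{Thetak}) (in the general model these are indexed by the finitely many short cell configurations, in the notation of Section~\ref{secmatrix}), and the remaining ``spread-out'' states needed to span $\calH_k$. Using the representation (\ref{Hint1d}) of $H$ in the $\ad,b$ operators together with the sharp localization estimates (\ref{phi1d}), (\ref{phit1d}) for $\phi^{(y)},\phitil^{(y)}$ and the resulting estimates (\ref{Ueff1})--(\ref{Ueff5}) for $\Ut$ (and $\tau_{x,x+1}=O(\akappa t)$, $\tau_{x,x}\simeq\la^2 t$ on $\La'$), I would compute and bound the matrix elements $h[\cdot,\cdot]$ in this basis. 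The structural facts to extract are: (i) $h[\Ok,\Ok]\simeq E_0+4\,\Ut_{0,1;0,1}(\sin\frac{k}{2})^2 = E_0+(4U/\la^4)(\sin\frac{k}{2})^2+(\text{smaller order})$, which is strictly above $E_0$ for $k\ne0$; (ii) every other basis state, containing a doubly occupied site in the sense of the pseudo number operators $\ntil_{x,\sigma}$, has diagonal element $\ge E_0+U-O(\la^2 t\akappa)$; (iii) the off-diagonal elements are small and, crucially, those coupling $\Ok$ to the double-occupancy states are strongly asymmetric — a free rescaling $\al(k)$ can be inserted into the basis, and before that rescaling the intrinsic coupling out of $\Ok$ toward $\Xi_r(k)$ is of order $U/\la^4$ while the coupling back is of order $U/\la^2$ — this asymmetry being exactly the dividend of using the non-orthonormal $\phi^{(y)}$ basis rather than Wannier functions.

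Next I would apply the elementary spectral bound of Lemma~\ref{DPhiLemma}: the lowest eigenvalue of a real-spectrum matrix $(h_{ij})$ is at least $\min_i D_i$ with $D_i={\rm Re}\,h_{ii}-\sum_{j\ne i}\abs{h_{ij}}$. Choosing the factors $\al(k)$ as in (\ref{alpha1d}) — modified to absorb the many genuinely present small terms — makes $D[\Xi_r(k)]\ge E_0+U/2$ and $D[\Theta(k)]\ge E_0+U$, while the asymmetry yields $D[\Ok]\ge E_0+(4U/\la^4)\{1-A_1\akappa-A_2/\la-A_3\la^2 t\akappa/U\}(\sin\frac{k}{2})^2$, whose brace is positive under (\ref{st1})--(\ref{st2}) for suitable $\la_2,\kappa_1,p_1,K_1$. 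One then also checks $D_i\ge E_0+U/2$ for the remaining spread-out basis states; here the summability hypotheses (\ref{tp<t})--(\ref{txp<tR}) on the perturbation and the exponential decay of $\phi^{(y)},\phitil^{(y)}$ are what bound the geometrically decaying tails of the off-diagonal sums. Taking $\min_i D_i$ over all basis states of $\calH_k$ then gives $E(k)>E_0$ for $k\ne0$, and the same bookkeeping on the complement of $\Omega(0)$ settles $k=0$; this establishes (\ref{stabineq}) for $U$ in a bounded window, say $K_1\la^2 t\akappa\le U\le K_2\la t$. Finally, since $\Hint$ vanishes identically in the $\Stot=\Smax$ sector but strictly raises energies in the $\Stot=\Smax-1$ sector, $\Emin(\Smax-1)-\Emin(\Smax)$ is nondecreasing in $U$, so (\ref{stabineq}) extends from the window to all $U\ge K_1\la^2 t\akappa$ (cf.\ Section~\ref{secproofstab}).

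I expect the main obstacle to be the treatment of the spread-out states in the step above: $\calH_k$ is genuinely high-dimensional, the $D_i$ bound is useful only if the chosen basis almost-diagonalizes the low-energy part of $H$, and one must verify that a single consistent choice of the normalization factors keeps every $D_i$ with $i\ne\Ok$ above $E_0+U/2$ while $D[\Ok]$ stays above $E_0$. This is delicate because the corrections to $h[\Ok,\Ok]$ and the relevant off-diagonal elements are individually of order $U/\la^4$ or smaller — the same order as the gap $4U\la^{-4}(\sin\frac{k}{2})^2$ being protected — so the constants $A_i$ and the geometric error sums must be tracked explicitly rather than hidden in $O(\cdot)$; this is precisely the point at which the ``competition'' between kinetic energy and Coulomb interaction has to be handled by hand, and it is where the bulk of the work in Sections~\ref{SecSingle} onward will go.
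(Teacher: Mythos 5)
Your proposal is correct and follows essentially the same route as the paper: it reduces the claim to the momentum-resolved Gershgorin bound of Lemma~\ref{DPhiLemma} on the localized basis $\Bk$, exploits the asymmetry of $h[\Ok,\Phik{o,r}]$ versus $h[\Phik{o,r},\Ok]$ via the normalization $\alpha(k)$, treats the $k=0$ sector by noting $\Omega(0)=S^{-}_{\rm tot}\UP$ has $\Stot=\Smax$, and extends from the window $U\le K_2\la t$ to all large $U$ by the monotonicity of $\Emin(\Smax-1)$ in $U$. You have also correctly identified the genuine burden as the control of the spread-out basis states and the explicit tracking of the constants $A_i$, which is precisely where Sections~\ref{Secmatrep} through \ref{SecBasis} do their work.
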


We stress that the problem of stability against a single-spin flip is
already a highly nontrivial many-body problem.
The restriction to the sector with $\Stot=\Smax-1$ does {\em not}
reduce the problem to that of a single-particle (such as a magnon)
since there are plenty of spaces for the electrons to move around.
Moreover there is no way of expressing the eigenstates as Slater determinant
states since there are both up-spin and down-spin electrons interacting
via local Coulomb repulsion.
See also the discussion after Theorem~\ref{st1dTh}.

\subsection{Bounds for the Spin-Wave Excitation Energy}
\label{secsw}
Finally we describe our results about the elementary spin-wave
excitation.
The lower bound for the spin-wave energy 
in Theorem~\ref{Ek>Th} is closely related to 
the above local stability theorem.

For $x\in\Zd$, we let $T_x$ denote the translation operator
acting on the Hilbert space $\calH$ as
\begin{equation}
T_x\sbk{
\rbk{\prod_{y\in A}c^\dagger_{y,\up}}
\rbk{\prod_{z\in B}c^\dagger_{z,\dn}}
\vac}
=
\rbk{\prod_{y\in A}c^\dagger_{y+x,\up}}
\rbk{\prod_{z\in B}c^\dagger_{z+x,\dn}}
\vac,
\label{Tx}
\end{equation}
where $A$ and $B$ are arbitrary (ordered)
subsets of $\La$ as in (\ref{generalbasis}).
Let us define the space of wave number vectors by
\begin{equation}
\calK=
\set{k=(k_1,\ldots,k_d)}
{\mbox{$k_i=2\pi n_i/L$ with $n_i\in{\bf Z}$ such that $\abs{n_i}\le(L-1)/2$}}.
\label{Kdef}
\end{equation}
For each $k\in\calK$, we denote by $\calH_k$ the Hilbert space of the
states with the crystal momentum $k$, and with $L^d-1$  up-spin electrons
and one down-spin electron.
More precisely, we set
\begin{equation}
\calH_k
=
\set{\Phi\in\calH}
{\mbox{$T_x[\Phi]=\emik{x}\Phi$ for any $x\in\Zd$, and
$\Sztot\Phi=(\Smax-1)\Phi$}}.
\label{Hilbk}
\end{equation}

We can now define the energy $\ESW$ of the elementary spin-wave
excitation with the wave number $k\in\calK$ as the lowest energy
among the states in $\calH_k$.
Then we have the following two theorems.
\begin{theorem}
[Upper bound for the spin-wave energy]
Assume that $\la\ge\la_0$, and $\akappa\le\kappa_0$,
where $\la_0$ and $\kappa_0$ are positive constants which depend only
on $d$, $\nu$, and $R$.
Then we have
\begin{equation}
\ESW-\Emin(\Smax)
\le
F_1 \frac{U}{\la^4}G(k),
\label{ESW<}
\end{equation}
where
\begin{equation}
G(k) = 
2\sum_{f\in\calF_o}\sum_{g\in\calF_f}
\rbk{\sin\frac{k\cdot(f+g)}{2}}^2.
\label{Gk}
\end{equation}
The prefactor $F_1$ can be written as 
\begin{equation}
F_1 = 1 +\frac{A_4}{\la} +A_5\la\akappa 
+ \frac{A_6\la^2t\akappa^2}{U}
\label{F1}
\end{equation}
with the constants $A_i$ ($i=4,5,6$) which depend only on
$d$, $\nu$, and $R$.
\label{Ek<Th}
\end{theorem}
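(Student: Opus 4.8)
\noindent\emph{Proof proposal.}\
The plan is to prove~(\ref{ESW<}) by the Rayleigh--Ritz variational principle, using as trial state the direct generalization to arbitrary $d$ and $\nu$ of the one-dimensional state $\Ok$ of Section~\ref{Sec1d}. Let $\UP=\rbk{\prod_{x\in\Lao}\ad_{x,\up}}\vac$ be the state obtained by filling every lower-band localized mode with an up-spin electron; then $H\UP=E_0\UP$ with $E_0=\sum_{x\in\Lao}\tau_{x,x}=\Emin(\Smax)$ (cf.\ Lemma~\ref{FerroGSLemma} and Section~\ref{secidea}). For $k\in\calK$ set $\Gamma_x=\ad_{x,\dn}b_{x,\up}\UP$ and
\begin{equation}
\Ok=\sum_{x\in\Lao}\eikx\,\Gamma_x .
\end{equation}
One first checks that $\Ok$ is admissible for $\ESW$: it contains $L^d-1$ up-spin and one down-spin electron, so $\Sztot\Ok=(\Smax-1)\Ok$, and since $T_z\ad_{x,\sigma}T_z^{-1}=\ad_{x+z,\sigma}$ (hence $T_z\UP=\UP$) one gets $T_z\Ok=\emik{z}\Ok$, i.e.\ $\Ok\in\calH_k$. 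Therefore $\ESW\le\langle\Ok,H\Ok\rangle/\langle\Ok,\Ok\rangle$, and it remains to estimate the numerator and the denominator.

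The estimate rests on first establishing the quantitative localization of the single-electron basis states $\phi^{(y)}$ and $\phitil^{(y)}$ in the general $(d,\nu)$ geometry -- the analogues of~(\ref{phi1d})--(\ref{phit1d}): $\phis{y}{y}\simeq1$, $\phis{y}{x}\simeq\pm1/\la$ at $\abs{x-y}=\sqrt{\nu}/2$, with amplitudes $O(\akappa/\la^{2})$ one shell further out and exponentially decaying beyond, and similarly for $\phitil^{(y)}$ with $\phit{y}{x}\simeq-1/\la^{2}$ at the next shell. These follow by expanding the eigenvectors of the $b\times b$ Bloch matrix (the higher-dimensional analogue of~(\ref{v0k})--(\ref{v1/2k})) in powers of $\la^{-2}$ and $\kappa\la^{-2}$, using the spectral gap guaranteed by Lemma~\ref{gapLemma} and the summability~(\ref{tp<t})--(\ref{txp<tR}) of $\cbk{t'_{x,y}}$; all resulting constants depend only on $d$, $\nu$, and $R$, never on $L$. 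They yield in turn the estimates of the effective couplings $\Ut_{y,v;w,z}$ and hoppings $\tau_{x,y}$ -- the analogues of~(\ref{Ueff1})--(\ref{Ueff5}) -- in particular $\Ut_{x,y;x,y}\simeq U/\la^{4}$ for the relevant short $\Lao$-pairs.

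Granting these, the computation follows the heuristic of Section~\ref{secper}. The denominator is $\langle\Ok,\Ok\rangle=L^{d}\rbk{1+O(1/\la)}$, since the $\Gamma_x$ are orthonormal up to the exponentially small overlaps of the localized basis. In the numerator, writing $H=\Hhopo+\kappa\Hpert+\Hint$ and acting on each $\Gamma_y$, the only part that leaves the electron positions fixed and merely exchanges spins is the ``exchange'' term $\sumtwo{x,y\in\Lao}{(x\ne y)}\Ut_{x,y;x,y}(\cdots)$ of $\Hint$; acting on $\Ok$, after summing the $\eikx$ phases over the lattice and using the translation invariance of $\Ut$, it reproduces exactly the Heisenberg-type dispersion $\frac{U}{\la^{4}}G(k)$ with $G(k)$ as in~(\ref{Gk}) (which reduces in one dimension to $4\rbk{\sin\frac{k}{2}}^{2}$, cf.\ (\ref{ESW1d3})). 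Every remaining contribution is then bounded above and absorbed into the prefactor $F_1$ of~(\ref{F1}): the non-orthonormality corrections to $\langle\Ok,\Ok\rangle$ and to the diagonal matrix element give the $A_4/\la$ term; the kinetic energy of $\kappa\Hpert$ together with the $O(\akappa/\la^{2})$ pieces of $\phi^{(y)}$, $\phitil^{(y)}$ and of $\Ut$ give the $A_5\la\akappa$ term; and the longer-range exchange terms $\Ut_{x,y;x,y}$ with $\abs{x-y}$ large, together with the remaining second-order-type pieces, give the $A_6\la^{2}t\akappa^{2}/U$ term. Since some of these live in the denominator of the Rayleigh quotient, their signs must be tracked carefully so that they raise rather than lower the bound, and this is what dictates the stated form of $F_1$.

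The main obstacle is not the leading term -- which is fixed by the effective Heisenberg picture and already appears in the one-dimensional warm-up -- but turning the error estimates into a rigorous inequality. When $H$ acts on $\Ok$ an enormous number of configurations is generated (doubly-occupied sites, upper-band electrons, and so on), and one must bound each of the corresponding matrix elements, each a lattice sum of products of four slowly- but exponentially-decaying basis amplitudes, and then sum the resulting geometric series over the decorated lattice so that the bounds converge with constants depending only on $d$, $\nu$, $R$ and uniformly in $L$. The organization of these error terms must also respect the hypotheses $\la\ge\la_0$, $\akappa\le\kappa_0$, so that the ``$1+\cdots$'' structure of~(\ref{F1}) is preserved. (As remarked after~(\ref{DOmega>}), it is precisely this careful accounting that degrades a correction which is naively $O(1/\la^{2})$ to the stated $A_4/\la$.) I would expect no conceptual difficulty beyond the one-dimensional sketch; the entire task is to make that sketch rigorous in general dimension, which is the role of the later sections cited in the theorem.
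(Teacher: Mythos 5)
Your proposal is correct and follows essentially the same route as the paper's Section~\ref{SecSW<}: apply the Rayleigh--Ritz variational bound with the one-magnon trial state $\Ok\propto\sum_{x\in\Lao}\eikx\ad_{x,\dn}b_{x,\up}\UP$, extract the leading dispersion $\frac{U}{\la^4}G(k)$ from the exchange piece of the diagonal matrix element $h[\Ok,\Ok]$, and absorb the remaining corrections from off-diagonal matrix elements and the non-orthonormality (Gramm matrix) into the prefactor $F_1$. One small bookkeeping discrepancy: in the paper the $A_6\la^2t\akappa^2/U$ contribution does not come from long-range exchange couplings but rather from the cross term in the Rayleigh quotient, namely the product of the effective-hopping $\tau_{o,r}$ piece of $\sum_r|h[\Phik{o,r},\Ok]|\le B_1R\akappa t\,|k|/\alpha(k)+\cdots$ with the Gramm-matrix overlap $\sum_{x,r}|x||\Gm{o,x}\Gi{x,r}|\le B_{13}\akappa/\la^2$.
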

Like Theorem~\ref{instTh} about the instability of the 
``ferromagnetic ground
states'', the above theorem is proved by the standard 
variational argument.
See Section~\ref{SecSW<}.

The major achievement in the present paper is the lower bound which
corresponds to the above (\ref{ESW<}).
\begin{theorem}
[Lower bound for the spin-wave energy]
Assume that $\la\ge\la_3$, $\akappa\le\kappa_0$, and
$K_2\la t\ge U\ge A_3 \la^2 t\akappa$,
where $\la_3$, $\kappa_0$, $K_2$ and $A_3$ are positive constants which depend only
on $d$, $\nu$, and $R$.
Then we have
\begin{equation}
\ESW-\Emin(\Smax)
\ge
F_2 \frac{U}{\la^4}G(k),
\label{ESW>}
\end{equation}
with $G(k)$ defined in (\ref{Gk}).
The prefactor $F_2$ can be written as 
\begin{equation}
F_2 = 1 - A_1\akappa - \frac{A_2}{\la} 
- \frac{A_3\la^2t\akappa}{U}
\label{F2}
\end{equation}
with the constants $A_1$, $A_2$, and $A_3$ which depend only on
$d$, $\nu$, and $R$.
\label{Ek>Th}
\end{theorem}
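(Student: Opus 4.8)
The plan is to bound $\ESW$ from below by representing $H$ restricted to $\calH_k$ as a matrix in a well-chosen \emph{non-orthonormal} basis and invoking the Gershgorin-type estimate of Lemma~\ref{DPhiLemma}: since $H$ is self-adjoint, its matrix $(h[\Psi,\Phi])$ in any basis has real spectrum, so the lowest eigenvalue is at least $\min_\Phi\bigl({\rm Re}\,h[\Phi,\Phi]-\sum_{\Psi\ne\Phi}|h[\Psi,\Phi]|\bigr)$. The basis I would use for $\calH_k$ is the crystal-momentum-$k$ projection of the real-space states obtained from $\UP$ (which has energy $E_0=\Emin(\Smax)$) by moving the single down-spin electron, deleting up-spin electrons from $\Lao$-sites, and inserting doubly occupied sites and/or electrons in the upper-band sites $\Lau$, $u\in\calU'$ --- all expressed through the sharply localized operators $\ad_{x,\sigma},b_{x,\sigma}$ of (\ref{abDef}). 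Among these, the only vector with the down-electron in the lower band and no other defect is exactly the spin-wave trial state $\Ok$ of (\ref{Omega1d}); the whole proof amounts to showing that the minimum of the $D$-quantities above is attained, up to small corrections, at $\Phi=\Ok$, with value $\Emin(\Smax)+F_2(U/\la^4)G(k)$.

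First I would carry the real-space representation (\ref{Hint1d})--(\ref{Ut1d}) of $H$ over to $d$ dimensions (Section~\ref{secrepHam}), and record the localization estimates (\ref{phi1d}), (\ref{phit1d}) together with the resulting bounds on $\tau_{x,y}$ and $\Ut_{y,v;w,z}$. Using the translation invariance of $\tau$ and $\Ut$ and the exponential localization of the basis functions, the action of $H$ on a basis vector yields only ``nearby'' basis vectors, with coefficients built from $U$, $\la^{-1}$ and $\akappa t$ --- the $d$-dimensional analogues of (\ref{H1Phix})--(\ref{HTheta}). The three ingredients are then: (a) ${\rm Re}\,h[\Ok,\Ok]=\Emin(\Smax)+(U/\la^4)G(k)$ up to smaller terms, with $G(k)$ precisely the cell sum (\ref{Gk}) --- this is where the nearest-neighbour ``exchange'' $H_1$ and the geometry of $\calF_o,\calF_f$ enter; (b) ${\rm Re}\,h[\Phi,\Phi]\ge\Emin(\Smax)+c\min(U,\la^2t)$ for every other basis vector $\Phi$, the gap coming from a doubly occupied site ($\sim U$) or an upper-band electron ($\sim\la^2t$); and (c) control of the off-diagonal sums. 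Step (c) forces a judicious choice of the relative normalizations of the basis vectors, generalizing the asymmetry factor $\alpha(k)$ of (\ref{alpha1d}): a weight of order $|\sin(k\cdot(f+g)/2)|$ (and, for more complex configurations, weights that degrade with the ``distance'' of the configuration from $\Ok$) is attached so that the strongly asymmetric pairs $h[\Ok,\Phi]$ and $h[\Phi,\Ok]$ become simultaneously small. With these choices, $\sum_{\Psi\ne\Ok}|h[\Psi,\Ok]|$ is at most $\bigl(A_1\akappa+A_2/\la+A_3\la^2t\akappa/U\bigr)(U/\la^4)G(k)$ once $\la\ge\la_3$, $\akappa\le\kappa_0$, $U\ge A_3\la^2t\akappa$, while the upper bound $U\le K_2\la t$ is what keeps the off-diagonal leakage out of the doubly-occupied and upper-band configurations below their $O(U)$ (resp.\ $O(\la^2t)$) gap --- this is the role of the upper bound on $U$, in the spirit of the step $U\le4t$ in (\ref{DTheta>}). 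Feeding (a)--(c) into Lemma~\ref{DPhiLemma} gives $\ESW-\Emin(\Smax)\ge F_2(U/\la^4)G(k)$ with $F_2$ as in (\ref{F2}); at $k=0$ the bound is vacuous since $G(0)=0$, but one notes there that $\Omega(0)$ is an $SU(2)$-rotate of $\UP$ and already realizes the energy $\Emin(\Smax)$.

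The main obstacle is making (b)--(c) work for the \emph{entire} infinite-dimensional $\calH_k$ rather than the four-state truncation of Section~\ref{secidea}. This needs $L$-independent bounds on the exponential tails of the localized states and on the number of basis configurations reachable from a given one, so that the geometric-type series of matrix-element contributions converge; and --- the genuinely delicate point --- a single coherent choice of \emph{all} the relative normalizations for which \emph{every} off-diagonal element is small against the relevant diagonal gap at once. A global weight such as (\ref{alpha1d}) does not suffice once one must absorb the cumulative effect of arbitrarily long chains of hopping and exchange moves: the weights must be tuned to the complexity of each configuration. This bookkeeping, together with the systematic propagation of the localization estimates and the separate treatment of the small ``other states'', is what occupies Sections~\ref{SecSingle}--\ref{SecProof}; the physical mechanism is already transparent in the calculation of Section~\ref{secidea}.
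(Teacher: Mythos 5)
Your high-level plan is the one the paper actually follows: reduce via the Gershgorin-type Lemma~\ref{DPhiLemma} to lower bounds on $D[\Phi]$ over the basis $\Bk$ of Section~\ref{secBasis}, with $\Ok$ as the single state near energy $E_0$ and all others separated by a gap $\sim U$ (doubly occupied lower-band site) or $\sim\la^2t$ (upper-band electron), controlling row sums by the localization of $\phi^{(x)}$, $\phitil^{(x)}$. Your identification of the two roles of the $U$-bounds ($U\gtrsim\la^2t\akappa$ to keep $F_2>0$; $U\lesssim\la t$ to keep the $\Phik{u,r,t,s}$ row below the upper-band gap) is also correct.

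Where you go wrong is the claim that ``a global weight such as $\alpha(k)$ does not suffice'' and that one must tune separate normalizations to the complexity of each configuration. The paper uses exactly \emph{one} free scale, the $\alpha(k)$ attached to $\Ok$ in (\ref{Omega}); every other basis vector $\Psik{u,A}$ keeps its natural normalization. The reason this is enough is that Lemma~\ref{DPhiLemma} is a one-shot row-sum bound, not an estimate built up by summing over chains of intermediate states, so there is no ``cumulative effect of arbitrarily long chains'' to absorb. The actual difficulty with many-defect configurations is addressed by a different mechanism that you omit: the modified hopping Hamiltonian $\Hhopt$ (definitions (\ref{Ttil}), (\ref{Hhoptil}), with $H\ge\Htil=\Hhopt+\Hint$) makes the upper-band part of the diagonal exactly $\tfrac34\la^2t$ per upper-band electron, so the diagonal in (\ref{hhopdiag}) grows \emph{linearly} in $\abs{A\cap\La'}$, while the hopping off-diagonal row sum in (\ref{hhopoff3}) grows linearly in $\abs{A\cap\La'}$ only with the much smaller coefficient $B_1t\akappa$; the interaction row sum (\ref{hintoff3}) is bounded by $B_8 U/\la$ \emph{uniformly} in $A$, using the pseudo-matrix-element bookkeeping (\ref{htint1})--(\ref{Dtilint}) that dodges the state-identification problem. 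Thus each $D[\Phi]$ closes in one step with $\akappa/\la^2\ll1$; no per-configuration weights are needed. You also need to verify explicitly that the only $\Phi$ with $h[\Phi,\Ok]\ne0$ are the short list $\Phik{u,r}$ and $\Phik{u,r,t,s}$ (Lemma~\ref{matrepLemma}), so that the $1/\alpha(k)$ penalty is incurred only there, where the diagonal gap $U/2$ or $\la^2t/2$ is available to pay for it.

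So your proposal is the right framework but mislocates the hard point and proposes a machinery (graded weights) the proof does not use and does not need; the missing ideas are (i) the operator domination $H\ge\Htil$ and the resulting linear-in-defect-number diagonal gap, and (ii) the observation that Lemma~\ref{DPhiLemma} requires only per-row control, so a single $\alpha(k)$ on $\Ok$ suffices.
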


Note that (\ref{F1}) and (\ref{F2}) imply that $F_1\simeq F_2\simeq1$
when $\la$ is large and $\akappa$ is (very) small.
In this case the dispersion relation $\ESW$ of the elementary spin-wave
excitation is given by
\begin{equation}
\ESW-\Emin(\Smax)\simeq\frac{U}{\la^4}G(k).
\end{equation}
This dispersion relation is exactly what one expects in the 
ferromagnetic Heisenberg quantum spin system 
defined on the hypercubic lattice $\Lao$
with the exchange interaction $J_{\rm eff}=2U\la^{-4}$.

As we have already stressed in Section~\ref{secMM1d},  
Theorem~\ref{Ek>Th} requires an upper bound for the Coulomb interaction 
$U$. 
By noting that $\ESW$ is increasing in $U$, however,
it is easy to prove nontrivial lower bounds for $\ESW$
for larger values of $U$.
\begin{coro}
Assume that $\la\ge\la_3$, $\akappa\le\kappa_0$,
$A_3\la\akappa/K_2\le1$, and $U\ge K_2\la t$.
Then we have
\begin{equation}
\ESW-\Emin(\Smax)
\ge
F_3 G(k),
\label{ESW>2}
\end{equation}
with $G(k)$ defined in (\ref{Gk}).
The prefactor $F_3$ can be written as 
\begin{equation}
F_3 = \rbk{1 - A_1\akappa - \frac{A_2}{\la} 
- \frac{A_3\la\akappa}{K_2}}
\frac{K_2t}{\la^3}.
\label{F3}
\end{equation}
\label{Ek>Coro}
\end{coro}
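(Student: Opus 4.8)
The plan is to obtain Corollary~\ref{Ek>Coro} from Theorem~\ref{Ek>Th} by a soft monotonicity argument in the coupling constant $U$, so that no genuinely new estimate is required. The key observation is that $\ESW-\Emin(\Smax)$ is a nondecreasing function of $U$; hence it suffices to prove the bound at the single value $U=K_2\la t$, which is exactly the largest value of $U$ admitted by the hypotheses of Theorem~\ref{Ek>Th}.

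To see the monotonicity, write $H=\Hhop+U\,N$ with $N=\sum_{x\in\La}n_{x,\up}n_{x,\dn}\ge 0$. As already noted after Theorem~\ref{st1dTh}, on the sector $\Stot=\Smax$ every state is fully spin-polarized and hence has no doubly occupied site, so $N$ vanishes identically there; consequently $\Emin(\Smax)$ does not depend on $U$. On the other hand, the subspace $\calH_k$ is independent of $U$, and for each fixed $\Phi\in\calH_k$ the Rayleigh quotient $\langle\Phi,H\Phi\rangle/\langle\Phi,\Phi\rangle$ is nondecreasing in $U$ because $\langle\Phi,N\Phi\rangle\ge 0$; taking the infimum over $\Phi\in\calH_k$ shows that $\ESW$ is nondecreasing in $U$. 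Therefore $\ESW-\Emin(\Smax)$ is nondecreasing in $U$, and for every $U\ge K_2\la t$ its value is bounded below by its value at $U=K_2\la t$.

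It then remains to apply Theorem~\ref{Ek>Th} at $U=K_2\la t$. The hypotheses $\la\ge\la_3$ and $\akappa\le\kappa_0$ are assumed in the Corollary, and the remaining requirement $K_2\la t\ge U\ge A_3\la^2t\akappa$ holds there: the upper inequality is an equality, while the lower one, $K_2\la t\ge A_3\la^2t\akappa$, is precisely the assumption $A_3\la\akappa/K_2\le 1$. Thus (\ref{ESW>})--(\ref{F2}) give $\ESW-\Emin(\Smax)\ge F_2\,(K_2\la t)\,\la^{-4}\,G(k)$ at $U=K_2\la t$; since at this value $F_2=1-A_1\akappa-A_2/\la-A_3\la\akappa/K_2$ and $(K_2\la t)\,\la^{-4}=K_2t\,\la^{-3}$, the right-hand side is exactly $F_3\,G(k)$ with $F_3$ as in (\ref{F3}). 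Combining with the monotonicity yields $\ESW-\Emin(\Smax)\ge F_3\,G(k)$ for all $U\ge K_2\la t$, which is the claim.

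There is essentially no serious obstacle: the argument reduces to the elementary monotonicity of $\ESW$ in $U$, the $U$-independence of $\Emin(\Smax)$, and an arithmetic substitution. The only points deserving a word of care are that Theorem~\ref{Ek>Th} is valid on the closed range $A_3\la^2t\akappa\le U\le K_2\la t$, so its endpoint value $U=K_2\la t$ may legitimately be used, and that for small $\la$ the prefactor $F_3$ may be nonpositive, in which case the resulting bound is weak but still correct, since no sign hypothesis on $F_3$ entered the derivation.
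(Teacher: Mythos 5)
Your proof is correct and takes essentially the same approach as the paper's: the paper likewise notes that $\ESW$ is nondecreasing in $U$ (because $H$ and $\Hint$ commute with $T_x$ and $\Sztot$ and $\ESW$ is a sector infimum) and then simply substitutes the endpoint value $U=K_2\la t$ into Theorem~\ref{Ek>Th}. Your more explicit Rayleigh-quotient justification of the monotonicity and your remark on the $U$-independence of $\Emin(\Smax)$ are just slightly more detailed renditions of the same argument.
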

\begin{proof}{Proof}
The first three conditions assumed here guarantee that we can use 
Theorem~\ref{Ek>Th} when $U=K_2\la t$.
We claim that, for each $k\in\calK$, $\ESW$ is an increasing 
function of $U$.
This is because both $H$ and $\Hint$ commute with $T_{x}$ and 
$\Sztot$, and 
$\ESW$ is defined to be the lowest energy in the sector with 
the fixed momentum $k$ and the fixed eigenvalue of $\Sztot$.
Then it is trivial that $\ESW$ for $U\ge K_2\la t$ satisfies the
desired bound (\ref{ESW>2}), where the right-hand side of 
(\ref{ESW>2}) is obtained by substituting $U=K_2\la t$
into (\ref{ESW>}).
\end{proof}

\Section{Single-Electron Problem}
\label{SecSingle}
We shall investigate the properties of the single-electron
system corresponding to our Hubbard model.
A careful study of single-electron properties is
indispensable when we work with interacting many-electron systems.
\subsection{Band Structure of the Model}
\label{secband}
If there is only a single electron with, say, up-spin in the whole
system, a general state can be written as
\begin{equation}
\Phi(\phi)
=
\sum_{x\in\La}\phi_x \,c^\dagger_{x,\up}\, \vac
\label{Phiphi}
\end{equation}
with $\phi_x\in{\bf C}$.
As in the standard quantum mechanics, we regard 
the collection $\phi=(\phi_x)_{x\in\La}$ as a vector in a
$\abs{\La}$-dimensional complex linear space 
$\Hsing\cong{\bf C}^{\abs{\La}}$,
which we call the single-electron Hilbert space.

Since it obviously holds that $\Hint\,\Phi(\phi)=0$,
the Schr\"{o}dinger equation $H\Phi(\phi)=\epsilon\Phi(\phi)$
reduces to
\begin{equation}
\sum_{y\in\La}t_{x,y}\,\phi_y = \epsilon\,\phi_x,
\label{Sch1}
\end{equation}
where $t_{x,y}=t^{(0)}_{x,y}+\kappa t'_{x,y}$, and we denote the
(single-electron) energy eigenvalue as $\epsilon$.

By rewriting the expression (\ref{Hhop01}) for $\Hhop$ as
\begin{equation}
\Hhop
=
t\sum_{\sigma=\up,\dn}\sum_{u\in\calU'}\sum_{x\in\Lau}
\rbk{\la\cxs+\sum_{f\in\calF_u}c^\dagger_{x+f,\sigma}}
\rbk{\la\axs+\sum_{f\in\calF_u}c_{x+f,\sigma}},
\label{Hhop03}
\end{equation}
we can write down (\ref{Sch1}) in a concrete form as
\begin{equation}
\epsilon\,\phi_x
=
t\sum_{f\in\calF_o}
\rbk{\la\,\phi_{x+f}+\sum_{g\in\calF_f}\phi_{x+f+g}}
+\kappa\sum_{y\in\La}t'_{x,y}\,\phi_y,
\label{Sch2}
\end{equation}
and
\begin{equation}
\epsilon\,\phi_{x+u}
=
\la^2t\,\phi_{x+u}+\la t\sum_{f\in\calF_u}\phi_{x+u+f}
+\kappa\sum_{y\in\La}t'_{x+u,y}\,\phi_y,
\label{Sch3}
\end{equation}
where $x\in\Lao$ and $u\in\calU'$.
We recall that $\calU$ is the unit cell of the lattice, 
and\footnote{
For any sets $A$ and $B$, $A\backslash B$ denotes the set
$\set{x\in A}{x\not\in B}$.
} $\calU'=\calU\backslash\cbk{o}$.

Since the hopping matrix elements $t'_{x,y}$
are invariant under the translation by any integer
vector $z\in\Zd$,
we can use the Bloch theorem to write an eigenstate
of (\ref{Sch1}) as
\begin{equation}
\phi_x = \eikx\,\vk{x},
\label{Blochth}
\end{equation}
with $k\in\calK$ (see (\ref{Kdef})), and $\vk{x}$ satisfying
\begin{equation}
\vk{x}=\vk{x+y},
\label{vtransinv}
\end{equation}
for any $y\in\Zd$.
With the translation invariance (\ref{vtransinv}) in mind,
we can identify, for each fixed $k\in\calK$,
the function $\vk{x}$ (of $x$) with a $b$-dimensional vector\footnote{
In the present paper the bold face symbols are reserved to indicate
elements of the $b$-dimensional vector space
introduced here.
}
\begin{equation}
\vv=\rbk{\vk{u}}_{u\in\calU}
\in{\bf C}^b,
\label{vvec}
\end{equation}
where $b=\abs{\calU}=\dnu+1$ will turn out to be the number of 
the bands in the Schr\"{o}dinger equation (\ref{Sch1}).

By substituting the representation (\ref{Blochth})
into the Schr\"{o}dinger equation (\ref{Sch2}), (\ref{Sch3}),
we find the  equation (the Schr\"{o}dinger equation
in $k$-space) 
\begin{equation}
\epsilon\,\vv=(\la^2t \,\Mt + \kappa t \,\Qt)\vv,
\label{SchVec1}
\end{equation}
which determines, for each $k\in\calK$,
the eigenvalue $\epsilon$ of the original 
Schr\"{o}dinger equation (\ref{Sch1}).
Here $\Mt=(\Mk{u,u'})_{u,u'\in\calU}$ and 
$\Qt=(\Qk{u,u'})_{u,u'\in\calU}$ are $b\times b$ matrices\footnote{
Sanserif symbols denote matrices in the $b$-dimensional vector space.
}.
They are defined by
\begin{equation}
\Mk{u,u'}=\Mk{u',u}=
\cases{
{A(k)}/{\la^2} & if $u=u'=o$;\cr
{C_u(k)}/{\la} & if $u\in\calU'$ and $u'=o$;\cr
0 & if $u,u'\in\calU'$ and $u\ne u'$;\cr
1 & if $u=u'\in\calU$,\cr
}
\label{Mkdef}
\end{equation}
and
\begin{equation}
\Qk{u,u'}=\frac{1}{t}\sum_{y\in\La_{u'}}t'_{u,y}
\,e^{ik\cdot(y-u)}.
\label{Qdef}
\end{equation}
Here we have introduced
\begin{equation}
C_f(k)=\sum_{g\in\calF_f}e^{ik\cdot g}
\label{Ckdef}
\end{equation}
for $f\in\calF_o$, and
\begin{equation}
A(k)=\sum_{f\in\calF_o}\sum_{g\in\calF_f}
e^{ik\cdot(f+g)}
=\sum_{u\in\calU'}\rbk{C_u(k)}^2.
\label{Akdef}
\end{equation}

Since the matrix $(\la^2t\,\Mt+\kappa t\,\Qt)$ is hermitian,
it generically has $b$ eigenvalues and eigenstates for each $k\in\calK$.
We denote these eigenvalues as $\ep_j(k)$, where the band index
$j=1,2,\ldots,b$ is assigned so that 
$\ep_j(k)\le\ep_{j+1}(k)$.
When viewed as a function of $k$, the eigenvalues $\ep_j(k)$ are usually
called the dispersion relations of the $j$-th band.

When $\kappa=0$, the eigenvalue problem (\ref{SchVec1}) can be solved
easily, and we obtain the dispersion relations
\begin{equation}
\ep_j(k)=
\cases{
0 & for $j=1$;\cr
\la^2t & for $j=2,\ldots,b-1$;\cr
\la^2t+tA(k) & for $j=b$.\cr
}
\label{flatdispersion}
\end{equation}
Note that the model has a rather singular band structure where most
of the bands have constant energies (i.e., are flat), and all the bands
with $j=2,\ldots,b-1$ are completely degenerate.
Another important feature of (\ref{flatdispersion}) is 
that the lowest band
($j=1$) is separated from the higher bands by an energy gap $\la^2t$.
See Figure~\ref{band1d}a for the dispersion relation in $d=1$.
We have also drawn the dispersion relation of the flat-band model
with $d=2$ and $\nu=1$ in Figure~\ref{band2df}.

\begin{figure}
\centerline{\includegraphics[width=7cm,clip]{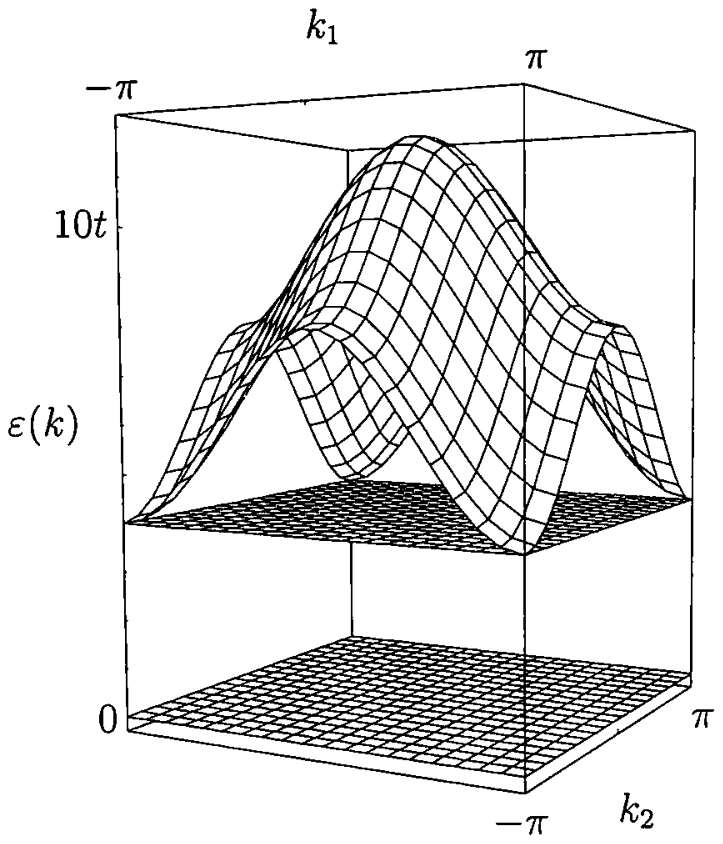}}
\caption{\captionH}
\label{band2df}
\end{figure}

\begin{figure}
\centerline{\includegraphics[width=7cm,clip]{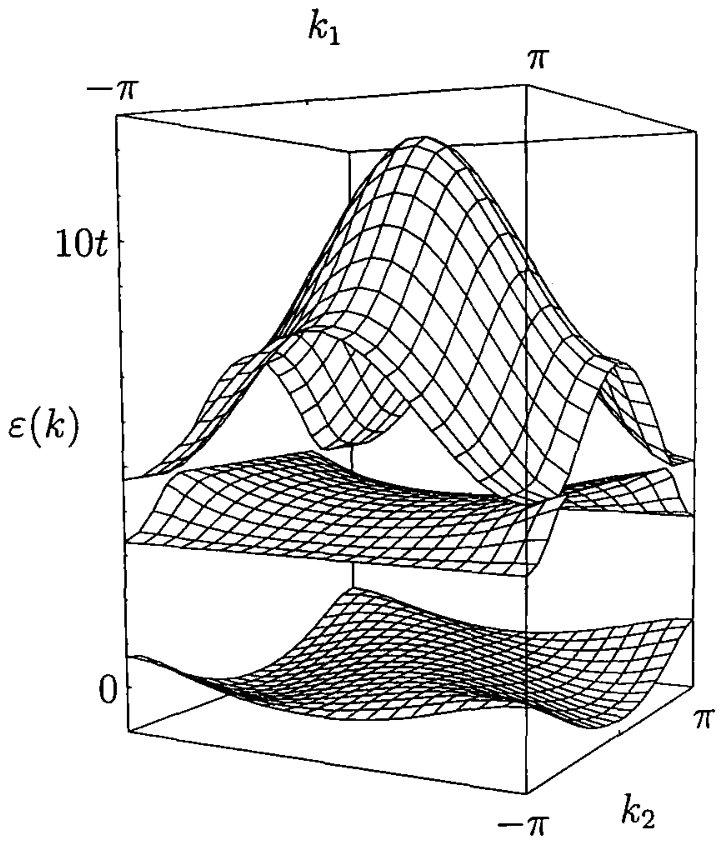}}
\caption{\captionI}
\label{band2dnf}
\end{figure}

For $\kappa\ne0$, with a generic choice of $\Hpert$,
the dispersion relations $\ep_j(k)$ become $k$-dependent,
and the bands are no longer flat.
See Figure~\ref{band1d}b for the dispersion relation 
with slightly perturbed model
in $d=1$, and Figure~\ref{band2dnf} for that in $d=2$.
The degeneracy between the bands
with $j=2,\ldots,b-1$ is also likely to be lifted 
(unless the perturbation
has certain symmetry).
Actual band structures depend delicately on the choice of the perturbation, and are
not easy to calculate.
It generically holds, however, that the lowest band is 
still separated from
the rest of the bands by an energy gap, provided that 
$\akappa$ is not too large.
We present the following crude estimate, which 
is sufficient for our purpose.
\begin{lemma}
Assume that $\la\ge\la_1=2^\nu\sqrt{b-1}$, and
$\akappa\la^{-2}\le r_1=9\times10^{-3}/b$.
Then we have
\begin{equation}
\ep_1(k)\le\frac{\la^2t}{4},
\label{e1k<}
\end{equation}
and
\begin{equation}
\ep_j(k)\ge\frac{3\la^2t}{4}
\label{ejk>}
\end{equation}
for $j=2,3,\ldots,b$.
\label{gapLemma}
\end{lemma}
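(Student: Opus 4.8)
The plan is to treat the perturbation $\kappa t \,\Qt$ as a bounded self-adjoint perturbation of the unperturbed matrix $\la^2 t\,\Mt$, and to control its operator norm uniformly in $k$. First I would record the spectrum of $\la^2 t\,\Mt$ explicitly. From the block structure of $\Mt$ in (\ref{Mkdef}) — the sites $u\in\calU'$ couple only to $o$, and each carries a diagonal entry $1$ — the eigenvalue equation for $\la^2 t\,\Mt$ decouples: the $(b-2)$-dimensional subspace of vectors supported on $\calU'$ and orthogonal to $(C_u(k))_{u\in\calU'}$ gives eigenvalue $\la^2 t$, while the remaining $2\times2$ block spanned by $\bf e_o$ and $(C_u(k))_{u\in\calU'}/\sqrt{A(k)}$ (note $\sum_u (C_u(k))^2 = A(k)$ by (\ref{Akdef})) yields, after a short computation, the eigenvalues $0$ and $\la^2 t + tA(k)$ — exactly (\ref{flatdispersion}). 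In particular the spectrum of $\la^2 t\,\Mt$ is contained in $\{0\}\cup[\la^2 t,\la^2 t + tA(k)]$, so the lowest eigenvalue is $0$ and the gap to the rest of the spectrum is $\la^2 t$.

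Next I would bound $\norm{\kappa t\,\Qt}$. Since $\Qt$ is a $b\times b$ matrix, it suffices to bound, say, its Hilbert–Schmidt norm or simply to use $\norm{\Qt}\le \max_u \sum_{u'}\abs{\Qk{u,u'}}$ together with $\norm{\Qt}\le\max_{u'}\sum_u\abs{\Qk{u,u'}}$. From the definition (\ref{Qdef}), $\abs{\Qk{u,u'}}\le t^{-1}\sum_{y\in\Lau{u'}}\abs{t'_{u,y}}$, and summing over $u'\in\calU$ gives $\sum_{u'}\abs{\Qk{u,u'}}\le t^{-1}\sum_{y\in\La}\abs{t'_{u,y}}\le 1$ by the summability hypothesis (\ref{tp<t}). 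Hence $\norm{\Qt}\le C_b$ for some constant $C_b$ depending only on $b$ (one can take $C_b = b$ crudely, or $C_b=1$ with the row/column bound above), and therefore $\norm{\kappa t\,\Qt}\le C_b\,\akappa\,t = C_b\,(\akappa\la^{-2})\,\la^2 t \le C_b\, r_1\,\la^2 t$.

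Then I would invoke Weyl's inequality (eigenvalue perturbation bound): each eigenvalue of $\la^2 t\,\Mt + \kappa t\,\Qt$ lies within $\norm{\kappa t\,\Qt}$ of the corresponding eigenvalue of $\la^2 t\,\Mt$. Thus $\ep_1(k)\le 0 + C_b r_1 \la^2 t$ and $\ep_j(k)\ge \la^2 t - C_b r_1\la^2 t$ for $j\ge 2$. To land the constants as stated, one needs $C_b r_1 \le \tfrac14$ and $1 - C_b r_1 \ge \tfrac34$, i.e. $C_b r_1\le\tfrac14$; with $r_1 = 9\times10^{-3}/b$ and the crude $C_b\le b$ this gives $C_b r_1\le 9\times10^{-3} < \tfrac14$, comfortably. (The hypothesis $\la\ge\la_1 = 2^\nu\sqrt{b-1}$ is not needed for the gap estimate itself; it is presumably carried along because $\la_1$ is the threshold under which $A(k)\ge 0$ and the ordering of bands in (\ref{flatdispersion}) is the stated one, or it is used elsewhere — I would simply keep it as a hypothesis and note it is not essential here.)

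The only mildly delicate point is making sure the bound on $\norm{\Qt}$ is genuinely uniform in $k$: the entries $\Qk{u,u'}$ involve phases $e^{ik\cdot(y-u)}$ of modulus one, so the triangle-inequality bound $\abs{\Qk{u,u'}}\le t^{-1}\sum_{y}\abs{t'_{u,y}}$ is $k$-independent, and (\ref{tp<t}) closes it. So there is no real obstacle; the lemma is a one-line Weyl-inequality argument once the unperturbed spectrum (\ref{flatdispersion}) and the norm bound on $\Qt$ are in hand. I would write it in exactly that order: (i) diagonalize $\Mt$, (ii) bound $\norm{\Qt}$ using (\ref{tp<t}), (iii) apply Weyl and plug in $r_1$.
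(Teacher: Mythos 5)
Your proof is correct, but it follows a genuinely different route from the paper's. The paper does not invoke Weyl's eigenvalue-perturbation inequality. Instead, it shows directly that the self-adjoint operator $\la^2t\,\Mt+\kappa t\,\Qt$ has no spectrum in the middle of the gap: starting from $(\la^2t\,\Mt-\la^2t/2)^2\ge(\la^2t/2)^2$ (which holds because the unperturbed eigenvalues are $0$ or $\ge\la^2t$), it expands the analogous quantity for the perturbed operator and bounds the cross terms by $2\akappa t\norm{\Qt}(\la^2t\norm{\Mt}+\la^2t/2)$ and the extra $\kappa^2t^2\norm{\Qt}^2$, concluding $(\la^2t\Mt+\kappa t\Qt-\la^2t/2)^2\ge(\la^2t/4)^2$ under the stated parameter bounds; it then identifies which side of the forbidden interval each $\ep_j(k)$ sits on by continuity in $\kappa$. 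Your approach buys a shorter argument and no separate continuity step; it also (as you correctly observe) does not actually use the hypothesis $\la\ge\la_1$, whereas the paper's proof needs it precisely to control $\norm{\Mt}\le 1+\abs{A(k)}\la^{-2}\le 2$, a quantity Weyl's inequality never asks for. The paper's approach has the modest virtue of being fully self-contained --- nothing beyond squaring a shifted operator and the triangle inequality --- which matches the tone of ``almost trivial, but we give a proof for completeness.'' One small caveat in your write-up: the sharper claim $\norm{\Qt}\le 1$ via row and column sums requires you also to bound the column sums (you only exhibit the row-sum bound from (\ref{tp<t})), or else to note that $\Qt$ is Hermitian so the row-sum bound alone dominates the spectral radius; as you point out, though, the crude $\norm{\Qt}\le b$ (which is what the paper uses) closes the argument with the stated $r_1$, so nothing is lost.
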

\begin{proof}{Proof}
The statement is almost trivial, but we give a proof for completeness.
Since the eigenvalues of $\la^2t\,\Mt$ are either $=0$ or $\ge\la^2t$,
we have
\begin{equation}
\rbk{\la^2t\,\Mt-\frac{\la^2t}{2}}^2\ge
\rbk{\frac{\la^2t}{2}}^2.
\end{equation}
Consider the similar quantity for the perturbed matrix,
and note that
\begin{eqnarray}
&&\sbk{
\cbk{\la^2t\,\Mt+\kappa t\,\Qt}-\frac{\la^2t}{2}
}^2
\ret
&=&
\rbk{\la^2t\,\Mt-\frac{\la^2t}{2}}^2+\kappa^2t^2(\Qt)^2
\ret
&+&
\rbk{\la^2t\,\Mt-\frac{\la^2t}{2}}\kappa t\,\Qt
+\kappa  t\,\Qt\rbk{\la^2t\,\Mt-\frac{\la^2t}{2}}
\ret
&\ge&
\rbk{\frac{\la^2t}{2}}^2-\kappa^2t^2\norm{\Qt}^2
-2\akappa t\norm{\Qt}\rbk{\la^2t\norm{\Mt}+\frac{\la^2t}{2}}.
\label{tMtQ}
\end{eqnarray}
By substituting the assumed bound for $\akappa$, and the bounds
$\norm{\Qt}\le b$, 
$\norm{\Mt}\le1+\abs{A(k)}\la^{-2}\le1+(b-1)4^\nu\la^{-2}\le2$,
we observe that the right-hand side of (\ref{tMtQ}) is not less than
$(\la^2t/4)^2$.
This means that the Schr\"{o}dinger equation (\ref{SchVec1}) cannot have
eigenvalues in the range $\la^2t/4\le\ep\le(3/4)\la^2t$.
Since the eigenvalue $\ep_j(k)$ with $j$ and $k$ fixed is continuous in 
$\kappa$, the statement of the lemma follows.
\end{proof}

From now on, we assume that the condition for lemma~\ref{gapLemma}
is satisfied.
The existence of a gap allows us to treat the lowest band in a special manner.
Let us decompose the single-electron Hilbert space as
\begin{equation}
\Hsing=\Hsingo\oplus\Hsingp
\label{Hsingdecomp}
\end{equation}
where $\Hsingo$ is the Hilbert space corresponding to the
lowest band (with the band index $j=1$).
It is spanned by the eigenstates of (\ref{Sch1})
with the eigenvalue $\ep_1(k)$ for $k\in\calK$.
$\Hsingp$ is the orthogonal complement of $\Hsingo$.
The dimensions of the spaces $\Hsingo$ and $\Hsingp$ are
$L^d=\abs{\Lao}$ and $(b-1)L^d=\abs{\La'}$, respectively.

Let $P^{(1)}_{\rm single}$ be the orthogonal projection onto the space
$\Hsingo$, 
and denote by $T$ the hopping operator
on $\Hsing$, whose matrix representation
is given by $(t_{x,y})_{x,y\in\La}$.
We define the modified hopping operator $\Ttil$ by
\begin{equation}
\Ttil=TP^{(1)}_{\rm single}+\frac{3}{4}\la^2t(1-P^{(1)}_{\rm single}),
\label{Ttil}
\end{equation}
and denote by $(\ttil_{x,y})_{x,y\in\La}$ the matrix representation 
of $\Ttil$.
Note that the bound (\ref{ejk>}) implies the operator inequality 
$T\ge\Ttil$.
Define the modified hopping Hamiltonian by
\begin{equation}
\Hhopt=\sum_{\sigma=\up,\dn}\sum_{x,y\in\La}
\ttil_{x,y}\,\cxs\ays,
\label{Hhoptil}
\end{equation}
which also satisfies
\begin{equation}
\Hhop\ge\Hhopt.
\label{H>Htil}
\end{equation}
Although the introduction of $\Hhopt$ is not essential for our proof,
it considerably simplifies the required estimates.

\subsection{Localized Bases}
\label{seclocbases}
We introduce bases for the single-electron spaces
$\Hsingo$ and $\Hsingp$, in which each basis state is localized at 
a lattice site.
The use of such localized bases enable us to treat electrons as
``particles'' but with taking into account the band structure of 
the model.
The actual construction of the bases will be presented in 
Section~\ref{SecBasis}.

We start from the easy case with $\kappa=0$, i.e., the flat-band models.
For $x\in\Lao$, we define the state 
$\psi^{(x)}=(\psis{x}{y})_{y\in\La}\in\Hsing$ by
\begin{equation}
\psis{x}{y}=
\cases{
1&if $x=y$;\cr
-{1}/{\la}&if $y\in\La'$ and $\abs{x-y}=\sqrt{\nu}/2$;\cr
0&otherwise.\cr
}
\label{psi1}
\end{equation}
An explicit calculation shows that 
$\sum_{y\in\La}t^{(0)}_{x,y}\psis{x}{y}=0$ for $x\in\Lao$.
This can be done by using (\ref{txy0}), but it is easier
to use (\ref{Hhop01}).
This means that $\psi^{(x)}\in\Hsingo$ since the lowest band has a constant
energy $\ep=0$ when $\kappa=0$ as in (\ref{flatdispersion}).
Since the states $\psi^{(x)}$ with $x\in\Lao$ are 
linearly independent,
and $\abs{\Lao}$ is equal to the dimension of $\Hsingo$, we find that 
the collection of the states $\cbk{\psi^{(x)}}_{x\in\Lao}$ form a
 basis of $\Hsingo$.

For $x\in\La'$, we similarly define $\psi^{(x)}\in\Hsing$ as
\begin{equation}
\psis{x}{y}=
\cases{
1&if $x=y$;\cr
{1}/{\la}&if $y\in\Lao$ and $\abs{x-y}=\sqrt{\nu}/2$;\cr
0&otherwise.\cr
}
\label{psi2}
\end{equation}
It is evident that $\psi^{(x)}$ with $x\in\La'$ and 
$\psi^{(x')}$ with $x'\in\Lao$ are orthogonal with each other.
This means that $\psi^{(x)}$ with any $x\in\La'$ is orthogonal to
the space $\Hsingo$.
By counting the dimension, it then follows that 
$\cbk{\psi^{(x)}}_{x\in\La'}$ form a
 basis of $\Hsingp$.

Both the bases $\cbk{\psi^{(x)}}_{x\in\Lao}$ and 
$\cbk{\psi^{(x)}}_{x\in\La'}$ are not orthonormal,
but the states in the bases are sharply localized at lattice sites.
The introduction and the use of the localized basis for $\Hsingo$
was essential in the study of the flat-band Hubbard models in
\cite{92e,93d}.

To deal with non-flat band models,
we shall construct similar bases for the models  with $\kappa\ne0$.
Since this is a problem of perturbation theory in a one-body
quantum mechanics, there is no essential difficulty when 
the strength of the perturbation $\akappa t$ is sufficiently smaller
than the energy gap $\la^2t$.
In Section~\ref{SecBasis}, we prove the following.
\begin{lemma}
Suppose that $\la\ge\la_0$ and $\akappa\la^{-2}\le r_0$,
where $\la_0$ and $r_0$ are positive constants which depend only on
the dimensions $d$, $\nu$.
Then we can take for each $x\in\La$ a state 
$\phi^{(x)}=(\phis{x}{y})_{y\in\La}\in\Hsing$ such that
$\phis{x}{y}=\phis{x+z}{y+z}$ holds for any $z\in\Zd$.
The collections of the states
$\cbk{\phi^{(x)}}_{x\in\Lao}$ and $\cbk{\phi^{(x)}}_{x\in\La'}$
form  bases of $\Hsingo$ and $\Hsingp$, respectively.
These basis states are summable as
\begin{equation}
\sum_{y\in\La}\abs{\phis{x}{y}-\psis{x}{y}}
\le B_1\frac{\akappa}{\la^2},
\label{reg1}
\end{equation}
\begin{equation}
\sum_{y\in\La}\abs{x-y}\abs{\phis{x}{y}-\psis{x}{y}}
\le B_1R\frac{\akappa}{\la^2},
\label{reg2}
\end{equation}
\begin{equation}
\sum_{x\in\La}\abs{\phis{x}{y}-\psis{x}{y}}
\le B_1\frac{\akappa}{\la^2},
\label{reg3}
\end{equation}
and
\begin{equation}
\sum_{x\in\La}\abs{x-y}\abs{\phis{x}{y}-\psis{x}{y}}
\le B_1R\frac{\akappa}{\la^2},
\label{reg4}
\end{equation}
where $B_1$ is a positive constant which depend only on $d$ and $\nu$.
\label{basisLemma}
\end{lemma}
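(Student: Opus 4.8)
The plan is to realize the perturbed localized states as images of the flat-band states under the spectral projection of the perturbed hopping operator, and to read off every bound in (\ref{reg1})--(\ref{reg4}) from a resolvent expansion in $\kappa$. For an operator $A$ on $\Hsing$ with matrix $(A_{x,y})_{x,y\in\La}$ write $\norm{A}_1=\sup_{x\in\La}\sum_{y\in\La}\abs{A_{x,y}}$ and $\norm{A}_{1,1}=\sup_{x\in\La}\sum_{y\in\La}\abs{x-y}\,\abs{A_{x,y}}$; here $\norm{\cdot}_1$ is submultiplicative under composition, and for Hermitian $A$ it equals both the maximal row sum and the maximal column sum. Let $T$ be the hopping operator with matrix $(t_{x,y})=(t^{(0)}_{x,y}+\kappa t'_{x,y})$, let $T^{(0)}$ be the flat-band hopping operator with matrix $(t^{(0)}_{x,y})$, and let $V$ have matrix $(t'_{x,y})$, so $T=T^{(0)}+\kappa V$; all three commute with the lattice translations $T_z$ by (\ref{trans}). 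Shrinking the constants $\la_0,r_0$ so that Lemma~\ref{gapLemma} applies (and so that in addition $\ep_1(k)\ge-C\akappa t$, which follows from $\lambda_{\min}(\la^2t\,\Mt)=0$ and $\norm{\Qt}\le b$), one can fix a contour $\Gamma$ lying in the spectral gap, enclosing exactly the lowest band of each of $T$ and $T^{(0)}$, with $\mathrm{dist}(\Gamma,\mathrm{spec}\,T)$, $\mathrm{dist}(\Gamma,\mathrm{spec}\,T^{(0)})$ and the length of $\Gamma$ all of order $\la^2t$. Set
\[
P=\frac{1}{2\pi i}\oint_\Gamma(z-T)^{-1}\,dz,\qquad P^{(0)}=\frac{1}{2\pi i}\oint_\Gamma(z-T^{(0)})^{-1}\,dz,
\]
so that $P$ is the orthogonal projection onto $\Hsingo$ and $P^{(0)}$ is the orthogonal projection onto the lowest (flat) band of $T^{(0)}$, which by (\ref{psi1}) and a dimension count is the span of $\cbk{\psi^{(x)}}_{x\in\Lao}$; both $P$ and $P^{(0)}$ commute with every $T_z$.

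The first task is to bound the two resolvents in $\norm{\cdot}_1$ and $\norm{\cdot}_{1,1}$ uniformly in $z\in\Gamma$, in $\la\ge\la_0$, and in $L$. Since $T^{(0)}$ is finite range with off-diagonal amplitudes at most $\la t$, the Combes--Thomas argument (conjugation by a Lipschitz exponential weight and a short Neumann series, which I would carry out explicitly) shows that on $\Gamma$ the matrix elements of $(z-T^{(0)})^{-1}$ obey $\abs{((z-T^{(0)})^{-1})_{x,y}}\le C(\la^2t)^{-1}e^{-c\la\abs{x-y}}$, whence $\norm{(z-T^{(0)})^{-1}}_1\le C(\la^2t)^{-1}$ and $\norm{(z-T^{(0)})^{-1}}_{1,1}\le C(\la^2t)^{-1}$; also $\norm{(z-T^{(0)})^{-1}}\le C(\la^2t)^{-1}$ directly from the gap. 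For $T$ no decay estimate is needed: $\norm{\kappa V}_1\le\akappa t$ by (\ref{tp<t}) gives $\norm{\kappa V(z-T^{(0)})^{-1}}_1\le C\akappa/\la^2\le Cr_0<1$, so the Neumann series $(z-T)^{-1}=(z-T^{(0)})^{-1}\sum_{n\ge0}\bigl(\kappa V(z-T^{(0)})^{-1}\bigr)^n$ converges in $\norm{\cdot}_1$ and in $\norm{\cdot}$, giving $\norm{(z-T)^{-1}}_1,\norm{(z-T)^{-1}}\le C(\la^2t)^{-1}$; tracking the single factor $\abs{x-y}$ through the expansion and using $\norm{\kappa V}_{1,1}\le\akappa tR$ by (\ref{txp<tR}) yields $\norm{(z-T)^{-1}}_{1,1}\le CR(\la^2t)^{-1}$. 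Feeding these into the resolvent identity $(z-T)^{-1}-(z-T^{(0)})^{-1}=(z-T^{(0)})^{-1}(\kappa V)(z-T)^{-1}$, integrating over $\Gamma$ (of length of order $\la^2t$), and using submultiplicativity together with $\norm{ABC}_{1,1}\le\norm{A}_{1,1}\norm{B}_1\norm{C}_1+\norm{A}_1\norm{B}_{1,1}\norm{C}_1+\norm{A}_1\norm{B}_1\norm{C}_{1,1}$, one obtains
\[
\norm{P-P^{(0)}}_1\le C\,\frac{\akappa}{\la^2},\qquad \norm{P-P^{(0)}}_{1,1}\le C\,\frac{\akappa R}{\la^2},\qquad \norm{P-P^{(0)}}\le C\,\frac{\akappa}{\la^2},
\]
with $C$ depending only on $d,\nu$; the last bound is $<1$ once $r_0$ is small enough. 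Since $P-P^{(0)}$ is Hermitian, $\norm{P-P^{(0)}}_1$ and $\norm{P-P^{(0)}}_{1,1}$ control the column sums $\sup_y\sum_x\abs{(P-P^{(0)})_{x,y}}$ and $\sup_y\sum_x\abs{x-y}\abs{(P-P^{(0)})_{x,y}}$ as well.

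With these estimates in hand I set $\phi^{(x)}:=P\psi^{(x)}$ for $x\in\Lao$ and $\phi^{(x)}:=(1-P)\psi^{(x)}$ for $x\in\La'$. The covariance $\phis{x}{y}=\phis{x+z}{y+z}$ is immediate from $\psi^{(x+z)}=T_z\psi^{(x)}$ (read off (\ref{psi1}), (\ref{psi2})) and $[T_z,P]=0$. Since $P^{(0)}\psi^{(x)}=\psi^{(x)}$ for $x\in\Lao$ and $(1-P^{(0)})\psi^{(x)}=\psi^{(x)}$ for $x\in\La'$, in either case $\phi^{(x)}-\psi^{(x)}=\pm(P-P^{(0)})\psi^{(x)}$, so (\ref{reg1}) and (\ref{reg3}) follow from the column/row bounds on $P-P^{(0)}$ together with $\sum_y\abs{\psis{x}{y}}\le1+2^\nu/\la$ and, for (\ref{reg3}), the fact that for fixed $y$ at most $1+2^\nu$ sites $x$ carry $\psis{x}{y}\ne0$; since each $\psi^{(x)}$ is supported within distance $\sqrt\nu/2\le R$ of $x$, writing $\abs{x-y}\le\abs{x-w}+\abs{w-y}$ at the intermediate site $w$ converts the same bounds into (\ref{reg2}) and (\ref{reg4}), with $B_1$ depending only on $d,\nu$ (recall $R\ge\sqrt\nu/2$ by the convention after (\ref{txp<tR})). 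Finally, two orthogonal projections $Q_0,Q_1$ of equal finite rank with $\norm{Q_1-Q_0}<1$ satisfy that $Q_1$ restricted to $\mathrm{Ran}\,Q_0$ is a linear isomorphism onto $\mathrm{Ran}\,Q_1$; applying this with $Q_0=P^{(0)},\,Q_1=P$ shows that $\cbk{\phi^{(x)}}_{x\in\Lao}=\cbk{P\psi^{(x)}}_{x\in\Lao}$ is the image of the basis $\cbk{\psi^{(x)}}_{x\in\Lao}$ of the flat-band space, hence a basis of $\Hsingo$, and with $Q_0=1-P^{(0)},\,Q_1=1-P$ it shows $\cbk{\phi^{(x)}}_{x\in\La'}$ is a basis of $\Hsingp$.

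The main obstacle is the resolvent estimate of the second paragraph with the correct power of $\la$ and full uniformity in $L$: because the lowest band of $T^{(0)}$ sits essentially at an energy already present in the diagonal of $T^{(0)}$, a naive "diagonal plus hopping" resolvent expansion diverges there, and one is forced to the Combes--Thomas mechanism, whose decay rate of order $\la$ (coming from gap $\sim\la^2t$ divided by hopping strength $\sim\la t$) is precisely what turns the $O(\akappa t)$ size of the perturbation into the required $O(\akappa/\la^2)$ and $O(R\akappa/\la^2)$ bounds on $P-P^{(0)}$. Everything after that is bookkeeping with the triangle inequality and the $O(1)$ supports of the $\psi^{(x)}$.
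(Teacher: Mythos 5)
Your proof is correct, but it follows a genuinely different route from the paper's. You construct $\phi^{(x)}$ as $P\psi^{(x)}$ (or $(1-P)\psi^{(x)}$), with $P$ the Riesz spectral projection $\frac{1}{2\pi i}\oint_\Gamma(z-T)^{-1}dz$ onto the lowest band, and you transfer the $\kappa$-smallness of $T-T^{(0)}$ into $\norm{P-P^{(0)}}_1$ and $\norm{P-P^{(0)}}_{1,1}$ bounds via a resolvent expansion; the summability estimates and the basis property then reduce to soft facts ($P-P^{(0)}$ Hermitian with $\norm{P-P^{(0)}}<1$, finite support of the $\psi^{(x)}$, translation covariance of $P$). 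The paper deliberately avoids this route: it sets up an explicit Rayleigh--Schr\"odinger expansion of the $k$-space eigenvector $\vvv{0}$ as a double power series in $\kappa/\la^2$ and $1/\la$, tracks the combinatorics of the coefficients $\alpha_1(n,m)$ by induction, and reads off (\ref{reg1})--(\ref{reg4}) from the observation that each term is a finite-range function of $C_u(k)$ and $Q_{u,u'}(k)$, hence has an $|x-y|$-weighted Fourier transform controlled by $(n+m)R$. The trade-off is real: your projection approach is shorter and more conceptual, but it yields \emph{some} state $\phi^{(x)}$ satisfying the lemma rather than the specific one whose $k$-space representative $\vvv{0}$ has the normalization $(\wv,\vo{n})=0$ that the paper reuses (through $\vvv{e}$, the Gram matrix $\Gt$, and $\tilde{\bf v}^{(u)}$) when it proves Lemmas~\ref{tauLemma} and \ref{dualbasisLemma}; with your $\phi^{(x)}$ those later proofs would have to be redone. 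On the other hand, the paper's machinery buys precisely that explicit structure. Two small technical slips that don't break your argument: the Combes--Thomas decay rate for $T^{(0)}$ is of order $\log\la$ (gap $\la^2 t$ over off-diagonal row sum $\sim\la t$, and the exponential grows over the range $\sqrt\nu$), not $\la$, which is still ample since you only need the rate bounded below by a positive constant for $\la\ge\la_0$; and the ``naive diagonal plus hopping'' expansion of $(z-T^{(0)})^{-1}$ actually does converge on a contour at distance $\sim\la^2 t$ from the spectrum, since $|z-D_{xx}|\gtrsim\la^2 t$ there both for $x\in\Lao$ and $x\in\La'$, so the appeal to Combes--Thomas is not forced (though it is a clean way to get the exponential decay). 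Also the count $1+2^\nu$ of sites $x$ with $\psis{x}{y}\ne 0$ should be $1+\abs{\calF_o}=1+2^\nu\dnu$ when $y\in\Lao$, a harmless omission of a $d,\nu$-dependent constant.
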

The bounds (\ref{reg1})-(\ref{reg4}) imply that each state $\phi^{(x)}$
is sharply localized at the reference site $x$.
The bounds also show that the states $\phi^{(x)}$ are
chosen so that they become identical to $\psi^{(x)}$ when $\kappa=0$.

Let us investigate how the modified hopping operator 
$\Ttil=(\ttil_{x,y})_{x,y\in\La}$ introduced in (\ref{Ttil})
acts on these basis states.
From the definition (\ref{Ttil}), it is obvious that
\begin{equation}
\Ttil\phi^{(x)}=\frac{3}{4}\la^2t\,\phi^{(x)}
\label{Ttilphi1}
\end{equation}
if $x\in\La'$.
For $x\in\Lao$, the basis state is transformed as
\begin{equation}
\Ttil\phi^{(x)}=\sum_{y\in\Lao}\tau_{y,x}\,\phi^{(y)},
\label{Ttilphi2}
\end{equation}
where the effective hopping matrix elements $\tau_{y,x}$ are given by
\begin{equation}
\tau_{y,x} = (2\pi)^{-d}\int dk\,e^{ik\cdot(y-x)}\ep_1(k),
\label{tauDef}
\end{equation}
where $\ep_1(k)$ is the dispersion relation of the lowest band discussed in
Section~\ref{secband}, and $\int dk(\cdots)$ is a shorthand for the sum
$(2\pi/L)^d\sum_{k\in\calK}(\cdots)$.
Note that only $y$ in $\Lao$ appear in the right-hand side of
(\ref{Ttilphi2}), reflecting the band structure.

The precise form of $\tau_{y,x}$ depends on specific perturbation.
But the following general bound is sufficient for our purpose.
\begin{lemma}
When $\la\ge\la_0$ and $\akappa\la^{-2}$, we have
\begin{equation}
\sum_{x\in\Lao}\abs{\tau_{y,x}}
=\sum_{y\in\Lao}\abs{\tau_{y,x}}
\le B_1\akappa t,
\label{regtau1}
\end{equation}
and
\begin{equation}
\sum_{x\in\Lao}\abs{x-y}\abs{\tau_{y,x}}
=\sum_{y\in\Lao}\abs{x-y}\abs{\tau_{y,x}}
\le B_1R\akappa t.
\label{regtau2}
\end{equation}
\label{tauLemma}
\end{lemma}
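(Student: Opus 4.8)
\emph{Overview and Step~1.} The plan is to read the matrix elements $\tau_{y,x}$ off the action of $\Ttil$ on the localized basis and reduce everything to first--moment bounds that are already in hand: Lemma~\ref{basisLemma} for $\phi^{(x)}-\psi^{(x)}$, the explicit hopping matrix (\ref{txy0}), and the summability (\ref{tp<t})--(\ref{txp<tR}) of $t'_{x,y}$; no spectral or analyticity input beyond the gap already used to define $\Ttil$ is needed. Write $T^{(0)}$ and $T'$ for the operators on $\Hsing$ with matrices $(t^{(0)}_{x,y})$ and $(t'_{x,y})$, so that $T=T^{(0)}+\kappa T'$. For $x\in\Lao$ we have $\phi^{(x)}\in\Hsingo$ by Lemma~\ref{basisLemma}, hence $\Ttil\phi^{(x)}=TP^{(1)}_{\rm single}\phi^{(x)}=T\phi^{(x)}$; combining this with $T^{(0)}\psi^{(x)}=0$ (the calculation just after (\ref{psi1})) gives the exact identity
\[
\Ttil\phi^{(x)}=T^{(0)}\bigl(\phi^{(x)}-\psi^{(x)}\bigr)+\kappa\,T'\phi^{(x)},\qquad x\in\Lao .
\]
This is the crux of the argument: the only large scale $\la^2 t$ present in $T^{(0)}$ sits in its near--diagonal entries, and there it multiplies the remainder $\phi^{(x)}-\psi^{(x)}$, whose weighted $\ell^1$ mass is $O(\akappa\la^{-2})$ by (\ref{reg1})--(\ref{reg4}), so the powers of $\la$ cancel and the right--hand side is genuinely $O(\akappa t)$.

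\emph{Step~2: $\ell^1$ estimates on the wave function.} From (\ref{txy0}) one reads off $\sum_{z\in\La}\abs{t^{(0)}_{z,w}}\le C\la^2 t$ for every $w\in\La$, that $t^{(0)}$ has range $\le\sqrt{\nu}$, and the trivial bounds $\sum_y\abs{\psis{x}{y}}\le 1+2^\nu/\la$, $\sum_y\abs{x-y}\,\abs{\psis{x}{y}}\le (\sqrt{\nu}/2)\,2^\nu/\la$; with (\ref{reg1})--(\ref{reg4}) this makes $\sum_y\abs{\phis{x}{y}}$ and $\sum_y\abs{x-y}\,\abs{\phis{x}{y}}$ both $O(1)$ for $\la\ge\la_0$. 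Interchanging summations in the identity of Step~1, applying (\ref{tp<t})--(\ref{txp<tR}) to the $T'$ term, and splitting the weighted sums through $\abs{x-z}\le\abs{x-w}+\abs{w-z}$, one obtains
\[
\sum_{z\in\La}\abs{(\Ttil\phi^{(x)})_z}\le C\akappa t,\qquad \sum_{z\in\La}\abs{x-z}\,\abs{(\Ttil\phi^{(x)})_z}\le CR\,\akappa t ,
\]
with $C$ depending only on $d$ and $\nu$ (here $\sqrt{\nu}\le 2R$ absorbs the range of $t^{(0)}$).

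\emph{Step~3: passing to coefficients, and conclusion.} By (\ref{Ttilphi2}), $\Ttil\phi^{(x)}=\sum_{y\in\Lao}\tau_{y,x}\,\phi^{(y)}\in\Hsingo$; pairing with the dual basis and using the defining biorthogonality $\sum_{z}\rbks{\phit{y}{z}}\phis{y'}{z}=\delta_{y,y'}$ gives $\tau_{y,x}=\sum_{z}\rbks{\phit{y}{z}}(\Ttil\phi^{(x)})_z$. The dual states obey the companion of Lemma~\ref{basisLemma}, uniformly in $z$: $\sum_{y\in\Lao}\abs{\phit{y}{z}}\le C$ and $\sum_{y\in\Lao}\abs{z-y}\,\abs{\phit{y}{z}}\le CR$ (obtained alongside the construction in Section~\ref{SecBasis}; one could instead avoid citing this by closing a self--consistent estimate for $\sup_x\sum_y\abs{\tau_{y,x}}$ with the explicit flat--band dual $\tilde{\psi}^{(y)}$ and $\akappa\la^{-2}$ small). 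Hence $\sum_{y}\abs{\tau_{y,x}}\le\bigl(\sum_z\abs{(\Ttil\phi^{(x)})_z}\bigr)\sup_z\sum_y\abs{\phit{y}{z}}\le C\akappa t$, and, using $\abs{x-y}\le\abs{x-z}+\abs{z-y}$ to split into the two pieces controlled in Step~2, $\sum_y\abs{x-y}\,\abs{\tau_{y,x}}\le CR\,\akappa t$. This is (\ref{regtau1})--(\ref{regtau2}) with $\sum_{x\in\Lao}$ replaced by $\sum_{y\in\Lao}$; the asserted equality of the two sums is immediate since $\tau_{y,x}$ depends only on $y-x$ (translation invariance) and is symmetric in its indices, both visible from (\ref{tauDef}) once one notes $\ep_1(-k)=\ep_1(k)$, which holds because $t^{(0)}$ and $t'$ are real. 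Collecting all constants into a single $B_1$ (enlarged from the one in Lemma~\ref{basisLemma} if necessary) finishes the proof.

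\emph{Main obstacle.} The delicate point is the bookkeeping in Steps~1--2: one must make sure that the scale $\la^2 t$ in $T^{(0)}$ is never paired with the $O(1)$ mass of $\psi^{(x)}$ but only with the $O(\akappa\la^{-2})$ remainder $\phi^{(x)}-\psi^{(x)}$ --- which is precisely what the displayed identity arranges --- and that the first moment survives the two triangle--inequality splittings, for which one relies on the finite first moments of $t^{(0)}$ and $t'$ ($\le\sqrt{\nu}$ and $\le tR$, respectively). Everything else is a routine interchange of summations.
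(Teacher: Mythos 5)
Your proof is correct, and it takes a genuinely different route from the paper's. The paper proves Lemma~\ref{tauLemma} in Section~\ref{secreg} entirely in momentum space: it inserts the Rayleigh--Schr\"odinger power series (\ref{ekexp}) for $\ep_1(k)$ directly into the Fourier formula (\ref{tauDef}) and then controls the coefficients via (\ref{alpha2<}), in parallel with the treatment of $\phi^{(x)}-\psi^{(x)}$ through (\ref{phi-psi})--(\ref{sumxI<}). You instead work in real space from the operator identity $\Ttil\phi^{(x)}=T^{(0)}\bigl(\phi^{(x)}-\psi^{(x)}\bigr)+\kappa T'\phi^{(x)}$ (valid for $x\in\Lao$ since $\phi^{(x)}\in\Hsingo$ and $T^{(0)}\psi^{(x)}=0$), which makes the cancellation of the large scale $\la^2t$ transparent without touching the perturbation series for $\ep_1(k)$; you then extract $\tau_{y,x}$ by biorthogonal pairing with $\phitil^{(y)}$ and invoke Lemma~\ref{dualbasisLemma}. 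What the paper's approach buys is uniformity -- the same expansion machinery that delivers Lemmas~\ref{basisLemma} and \ref{dualbasisLemma} delivers $\tau$ for free. What yours buys is transparency: the identity in your Step~1 shows at a glance \emph{why} $\tau$ is of size $\akappa t$, namely that $T^{(0)}$ never meets the $O(1)$ mass of $\psi^{(x)}$ but only its $O(\akappa\la^{-2})$ remainder. Note, however, that your Step~3 \emph{requires} Lemma~\ref{dualbasisLemma}, so your route does not reduce dependence on Section~\ref{SecBasis}; it only rearranges it. (Two small cosmetic points: the counting factor in Step~2 should read $\abs{\calF_o}=2^\nu\dnu$ rather than $2^\nu$, and -- as you already acknowledge -- the constant $B_1$ obtained this way need not equal the one in Lemma~\ref{basisLemma} and may have to be enlarged, which is consistent with the paper's own footnote on unifying constants.)
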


Since the bases 
$\cbk{\phi^{(x)}}_{x\in\Lao}$ and $\cbk{\phi^{(x)}}_{x\in\La'}$
are not orthonormal, it is convenient to introduce the bases
which are dual to them.
The dual bases are constructed uniquely by a standard procedure
(in Section~\ref{SecBasis}), and we can prove the following.
\begin{lemma}
Suppose that $\la\ge\la_0$ and $\akappa\la^{-2}\le r_0$.
Then we can take for each $x\in\La$ a state 
$\phitil^{(x)}=(\phit{x}{y})_{y\in\La}\in\Hsing$ such that
$\phit{x}{y}=\phit{x+z}{y+z}$ holds for any $z\in\Zd$.
The collections of the states
$\cbk{\phitil^{(x)}}_{x\in\Lao}$ and $\cbk{\phitil^{(x)}}_{x\in\La'}$
form  bases of $\Hsingo$ and $\Hsingp$, respectively.
They are dual of the 
bases $\cbk{\phi^{(x)}}_{x\in\Lao}$ and $\cbk{\phi^{(x)}}_{x\in\La'}$
in the sense that we have
\begin{equation}
\sum_{y\in\La}\rbk{\phit{x}{y}}^*\phis{x'}{y}=\delta_{x,x'}
\label{duality1}
\end{equation}
for any $x,x'\in\La$, and
\begin{equation}
\sum_{x\in\La}\rbk{\phit{x}{y}}^*\phis{x}{y'}=\delta_{y,y'}
\label{duality2}
\end{equation}
for any $y,y'\in\La$.
These basis states are summable as
\begin{equation}
\sum_{y\in\La}\abs{\phit{x}{y}-\psis{x}{y}}
\le B_1\frac{\akappa}{\la^2}+\frac{B_2}{\la^2},
\label{reg5}
\end{equation}
\begin{equation}
\sum_{y\in\La}\abs{x-y}\abs{\phit{x}{y}-\psis{x}{y}}
\le B_1R\frac{\akappa}{\la^2}+\frac{B_2}{\la^2},
\label{reg6}
\end{equation}
\begin{equation}
\sum_{x\in\La}\abs{\phit{x}{y}-\psis{x}{y}}
\le B_1\frac{\akappa}{\la^2}+\frac{B_2}{\la^2},
\label{reg7}
\end{equation}
and
\begin{equation}
\sum_{x\in\La}\abs{x-y}\abs{\phit{x}{y}-\psis{x}{y}}
\le B_1R\frac{\akappa}{\la^2}+\frac{B_2}{\la^2},
\label{reg8}
\end{equation}
where $B_2$ is a positive constant which depend only on $d$ and $\nu$.
\label{dualbasisLemma}
\end{lemma}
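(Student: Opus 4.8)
The plan is to realise each dual state as a short-ranged linear combination of the states $\phi^{(x)}$ furnished by Lemma~\ref{basisLemma}, the coefficients being the entries of the inverse Gram matrix of that basis. Within $\Hsingo$ set $G^{(1)}_{x,x'}=\sum_{y\in\La}\rbks{\phis{x}{y}}\phis{x'}{y}$ for $x,x'\in\Lao$, and within $\Hsingp$ define $G'_{x,x'}$ by the same formula for $x,x'\in\La'$. Since $\cbk{\phi^{(x)}}_{x\in\Lao}$ and $\cbk{\phi^{(x)}}_{x\in\La'}$ are bases of $\Hsingo$ and $\Hsingp$, both $G^{(1)}$ and $G'$ are Hermitian and positive-definite, and the relation $\phis{x}{y}=\phis{x+z}{y+z}$ ($z\in\Zd$) makes them translation invariant (a genuine scalar convolution on $\Lao$ in the first case, a $(b-1)$-component matrix-valued convolution on $\La'$ in the second). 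I would then \emph{define}
\[
\phit{x}{y}=\sum_{x'\in\Lao}\rbk{(G^{(1)})^{-1}}_{x',x}\,\phis{x'}{y}\qquad(x\in\Lao),
\]
and analogously with $G'$ for $x\in\La'$. Translation invariance of $\phitil^{(x)}$ is then immediate; the biorthogonality (\ref{duality1}) is a one-line check from the definition of $G^{(1)},G'$ together with $\Hsingo\perp\Hsingp$ (which kills the cross terms $x\in\Lao$, $x'\in\La'$); and (\ref{duality2}) follows because $\cbk{\phitil^{(x)}}_{x\in\La}$ is thereby forced to be the dual basis of $\cbk{\phi^{(x)}}_{x\in\La}$ in $\Hsing=\Hsingo\oplus\Hsingp$, the relation (\ref{duality2}) being that dual basis' resolution of the identity written out in components.

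For the quantitative bounds I would work in the Banach algebra of translation-invariant matrices with the norms $\norm{A}_0=\sup_x\sum_y\abs{A_{x,y}}$ and $\norm{A}_1=\sup_x\sum_y\abs{x-y}\,\abs{A_{x,y}}$; submultiplicativity ($\norm{AB}_0\le\norm{A}_0\norm{B}_0$ and $\norm{AB}_1\le\norm{A}_1\norm{B}_0+\norm{A}_0\norm{B}_1$) follows from $\abs{x-y}\le\abs{x-z}+\abs{z-y}$, and translation invariance makes the $\sum_x$ row sums appearing in (\ref{reg7})--(\ref{reg8}) equal to the $\sum_y$ sums up to the harmless $\max$ over the finitely many sublattices. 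Writing $\phi^{(x)}=\psi^{(x)}+(\phi^{(x)}-\psi^{(x)})$ and expanding, $G^{(1)}$ splits into the flat-band Gram matrix $G^{(0)}$ (built from the $\psi$'s) plus one mixed and one quadratic correction. A direct computation from (\ref{psi1})--(\ref{psi2}) shows $G^{(0)}$ is the \emph{finite-range} matrix $I+\la^{-2}\Delta$, with $G^{(0)}_{x,x}-1=\abs{\calF_o}\la^{-2}$, $G^{(0)}_{x,x'}=O(\la^{-2})$ for $0<\abs{x-x'}\le\sqrt{\nu}$ and $G^{(0)}_{x,x'}=0$ otherwise, so $\norm{G^{(0)}-I}_0+\norm{G^{(0)}-I}_1\le c\la^{-2}$ with $c=c(d,\nu)$; the mixed and quadratic corrections are bounded by $c'\akappa\la^{-2}$ in $\norm{\cdot}_0$ and $c'R\akappa\la^{-2}$ in $\norm{\cdot}_1$, obtained by combining $\sum_y(1+\abs{x-y})\abs{\psis{x}{y}}\le 1+(1+\frac{\sqrt{\nu}}{2})\abs{\calF_o}/\la$ with (\ref{reg1})--(\ref{reg4}). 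Hence $\norm{G^{(1)}-I}$ is as small as desired once $\la\ge\la_0$ and $\akappa\la^{-2}\le r_0$ with $\la_0,r_0$ depending only on $d,\nu$, so the Neumann series for $(G^{(0)})^{-1}$ and $(G^{(1)})^{-1}$ converge in the algebra; the same applies verbatim to $G'$.

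To extract the precise form $B_1R\akappa\la^{-2}+B_2\la^{-2}$ I would invoke the resolvent identity $(G^{(1)})^{-1}-I=\sbk{(G^{(0)})^{-1}-I}-(G^{(1)})^{-1}(G^{(1)}-G^{(0)})(G^{(0)})^{-1}$: the first bracket has norm $O(\la^{-2})$ with a $d,\nu$-constant and carries no $R$ (since $G^{(0)}$ is $R$-independent) --- this is the $B_2\la^{-2}$ part --- while the second term has norm $O(R\akappa\la^{-2})$ because $(G^{(1)})^{-1}$ and $(G^{(0)})^{-1}$ are $O(1)$-bounded (and $O(\la^{-2})$-close to $I$) in both norms, and $\norm{G^{(1)}-G^{(0)}}\le c'R\akappa\la^{-2}$. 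Finally, from $\phit{x}{y}-\psis{x}{y}=(\phis{x}{y}-\psis{x}{y})+\sum_{x'}\rbk{(G^{(1)})^{-1}-I}_{x',x}\phis{x'}{y}$ and its $G'$-analogue, the first term is controlled by Lemma~\ref{basisLemma}, and the second by splitting $1+\abs{x-y}\le 1+\abs{x-x'}+\abs{x'-y}$ and bounding $\sum_y(1+\abs{x'-y})\abs{\phis{x'}{y}}$ by a $d,\nu$-constant (again $\psi$ plus (\ref{reg1})--(\ref{reg2})); this reproduces $B_1\akappa\la^{-2}+B_2\la^{-2}$ for the unweighted sums (\ref{reg5}),(\ref{reg7}) and $B_1R\akappa\la^{-2}+B_2\la^{-2}$ for the weighted ones (\ref{reg6}),(\ref{reg8}), the $\sum_x$-versions using the row/column symmetry of translation-invariant matrices. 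I expect the only real difficulty to be organisational: fixing the norms so the algebra is closed under exactly the inversion used, and then threading the two independent small parameters $\la^{-2}$ (from the non-orthonormality of the $\psi$'s, producing the $B_2$ term) and $\akappa\la^{-2}$ (from the perturbation, producing the $R$-dependent $B_1$ term) cleanly through the convolution $\phitil^{(x)}=(G^{(1)})^{-1}\phi^{(x)}$ while keeping every constant dependent only on $d,\nu$ (and on $R$ precisely where it is displayed). The conceptual content --- dual basis equals inverse Gram matrix applied to the basis --- is entirely standard.
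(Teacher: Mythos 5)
Your proof is correct and constructs the same object, but by a genuinely different route. The paper works entirely in $k$-space: the $b\times b$ Gram matrix $\Gt$ is block-diagonal with a rank-one structure in the $(b-1)\times(b-1)$ block (see (\ref{Grep})--(\ref{Hinv})), so its inverse is written in closed form; the component formulas (\ref{vdou})--(\ref{vdee}) for $\vt{u}$ are then read off, re-expanded as a double power series (\ref{vdexp}) in $\kappa/\la^2$ and $1/\la$, the coefficients $\al_3$ are controlled inductively (\ref{a3bound}), and the sums (\ref{reg5})--(\ref{reg8}) are estimated term by term from the Fourier coefficients. You instead treat Lemma~\ref{basisLemma} as a black box, form the real-space Gram matrices $G^{(1)}$ on $\Lao$ and $G'$ on $\La'$, and invert them by Neumann series in the Banach algebra of translation-invariant kernels with the two weighted norms $\norm{\cdot}_0$, $\norm{\cdot}_1$; the resolvent identity then cleanly separates the $R$-free flat-band contribution (the $B_2/\la^2$ term, from $(G^{(0)})^{-1}-I$) from the perturbative one (the $B_1R\akappa/\la^2$ term, from $G^{(1)}-G^{(0)}$). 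Your route trades the combinatorial bookkeeping of Section~\ref{SecBasis} for two submultiplicative norms and a geometric series, and is noticeably more streamlined for this lemma in isolation; the paper's route, in exchange, produces the explicit $k$-space formulas (\ref{wtil}), (\ref{zetaDef}) that it reuses directly, and a single inductive machine which simultaneously yields the expansions of $\vov$, $\ep_1(k)$, and $\vt{u}$, whereas your argument must still appeal to that machinery through Lemmas~\ref{basisLemma} and \ref{tauLemma}. Your use of $\Hsingo\perp\Hsingp$ to kill the cross terms in (\ref{duality1}), and the uniqueness of the dual basis to conclude agreement with the paper's $\phitil^{(x)}$, are both exactly right, and the cross terms that arise when the weighted norm of $(G^{(1)})^{-1}-I$ is multiplied against $\sum_y\abs{x'-y}\abs{\phis{x'}{y}}$ stay of the required order because $\akappa/\la^2\le r_0$ absorbs any would-be $R^2$ contribution.
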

Note that the right-hand side of (\ref{reg5})-(\ref{reg8}) do not vanish when
$\kappa=0$.
This is because the dual basis state $\phit{x}{y}$ has nonvanishing
exponentially decaying tail even in the flat-band model.
This remarkable asymmetry between the states $\phi^{(x)}$ and
their dual $\phitil^{(x)}$ plays a fundamental role in our work.

\Section{Localized Basis for the Hubbard Model}
\label{SecLocbasis}
In the present section, we discuss 
the framework for describing many-electron systems by using the 
localized basis introduced in Section~\ref{seclocbases}.
Elementary statements about the ``ferromagnetic ground states''
and the theorem for flat-band ferromagnetism
are also proved.
\subsection{Fermion Operators for the Localized Bases}
\label{secFermiloc}
We rewrite the Hubbard Hamiltonian (\ref{Ham1}) by using
the new fermion operators.
The new representation turns out to be suitable
for our purpose to take into account both the particle-like
nature of electrons and the band structure of the model.

We first define the creation operator corresponding to the
basis state $\phi^{(x)}$ as
\begin{equation}
\adxs = \sum_{y\in\La}\rbk{\phis{x}{y}}^*\cys
\label{aDef}
\end{equation}
for $x\in\La$ and $\sigma=\up,\dn$.
Similarly we define the annihilation operator
corresponding to the dual basis state $\phitil^{(x)}$ as
\begin{equation}
\bxs = \sum_{y\in\La}\phit{x}{y} \ays
\label{bDef}
\end{equation}
for $x\in\La$ and $\sigma=\up,\dn$.

By using the basic anticommutation relations (\ref{ac1}), (\ref{ac2}),
the definitions (\ref{aDef}), (\ref{bDef}), and
the duality relation (\ref{duality1}),
we find that these operators satisfy the anticommutation relations
\begin{equation}
\cbk{\adxs,\ad_{y,\tau}}=\cbk{\bxs,b_{y,\tau}}=0,
\label{ac3}
\end{equation}
and
\begin{equation}
\cbk{\adxs,b_{y,\tau}}=\delta_{x,y}\delta_{\sigma,\tau}
\label{ac4}
\end{equation}
for any $x,y\in\La$ and $\sigma,\tau=\up,\dn$.
Note that (\ref{ac3}) and (\ref{ac4}) have exactly the same forms as
the canonical anticommutation relations.

By using the other duality relation (\ref{duality2}), we can invert
(\ref{aDef}) and (\ref{bDef}) to get
\begin{equation}
\cxs=\sum_{y\in\La}\phit{y}{x}\ad_{y,\sigma},
\label{ainv}
\end{equation}
and
\begin{equation}
\axs=\sum_{y\in\La}\rbk{\phis{y}{x}}^*b_{y,\sigma}.
\label{binv}
\end{equation}

\subsection{Representation of the Hamiltonian}
\label{secrepHam}
We shall rewrite the Hamiltonian using the operators
$\adxs$ and $\bxs$.
As for the hopping part, we treat the modified Hamiltonian
$\Hhopt$ defined in (\ref{Hhoptil}), rather than the original
$\Hhop$.
By substituting (\ref{ainv}) and (\ref{binv}) into (\ref{Hhoptil}),
we find that
\begin{eqnarray}
\Hhopt &=& 
\sum_{\sigma=\up,\dn}\sum_{x,y,v,w\in\La}
\phit{v}{x}\ttil_{x,y}\rbk{\phis{w}{y}}^*\ad_{v,\sigma}b_{w,\sigma}
\ret
&=&
\sum_{\sigma=\up,\dn}
\cbk{
\sum_{x,y\in\Lao}\tau_{x,y}\adxs b_{y,\sigma} 
+ \frac{3\la^2t}{4} \sum_{x\in\La'}\adxs b_{x,\sigma}
},
\label{HtilhopRep}
\end{eqnarray}
where we have used (\ref{Ttilphi1}) and (\ref{Ttilphi2})
which determine the action of $\Ttil$, and the duality
relation (\ref{duality1}).
The representation (\ref{HtilhopRep}) makes the band structure
manifest.

Similarly we can rewrite the interaction Hamiltonian (\ref{Hint})
as
\begin{eqnarray}
\Hint &=&
U\sum_{x,y,v,w,z\in\La}
\rbk{\phit{y}{x}\ad_{y,\up}} 
\rbk{\rbk{\phis{z}{x}}^*b_{z,\up}}
\rbk{\phit{v}{x}\ad_{v,\dn}} 
\rbk{\rbk{\phis{w}{x}}^*b_{w,\dn}}
\ret
&=&
\sum_{y,v,w,z\in\La}
\Ut_{y,v;w,z}\,\ad_{y,\up}\ad_{v,\dn}b_{w,\dn}b_{z,\up},
\label{HintRep}
\end{eqnarray}
where the effective interaction is given by
\begin{equation}
\Ut_{y,v;w,z} = U\sum_{x\in\La}
\phit{y}{x}\phit{v}{x}\rbk{\phis{w}{x}\phis{z}{x}}^*.
\label{Util}
\end{equation}
Note that the interaction Hamiltonian $\Hint$ in the new
representation (\ref{HintRep}) is no longer local.

\Rem
It is also possible to write down the  representation
similar to (\ref{HtilhopRep})
for the original hopping Hamiltonian
\begin{equation}
\Hhop=\sum_{\sigma=\up,\dn}
\cbk{
\sum_{x,y\in\Lao}\tau_{x,y}\adxs b_{y,\sigma}
+\sum_{x,y\in\La'}\tau_{x,y}\adxs b_{y,\sigma}
},
\label{HhopRep}
\end{equation}
with properly defined $\tau_{x,y}$ for $x,y\in\La'$.
\par\bigskip

\subsection{Elementary Facts about the ``Ferromagnetic Ground States''}
\label{seceasy}
We can now prove the basic statement about the
``ferromagnetic ground states''.
\bigno
\begin{proof}{Proof of Lemma~\ref{FerroGSLemma}}
Since we are interested in states with $\Stot=\Smax$, we can 
concentrate on the sector with $\Sztot=\Smax$.
States (with $\Stot=\Smax$) in other sectors can be obtained by 
suitably applying the total spin lowering operator.
Clearly $\Hint$ annihilates a state with $\Sztot=\Smax$ as it 
contains only up-spin electrons.

Because the conditions for Lemma~\ref{gapLemma} are satisfied, there is 
a finite energy gap between the lowest band and the remaining bands.
In order to make the eigenvalue of $\Hhop$ small, we need to use as
many states from the lowest band.
Since the electron number $L^d$ is identical to the dimension of the 
Hilbert space $\Hsingo$ for the lowest band, this can be done in a unique way, 
and we find
\begin{equation}
\UP = \rbk{\prod_{x\in\Lao}\ad_{x,\up}}\vac
\label{Phiup}
\end{equation}
is the desired ``ferromagnetic ground state.''
By operating $\Hhop$ in the representation (\ref{HhopRep}),
we find that $H\UP=E_0\UP$ with
\begin{equation}
E_0=\sum_{x\in\Lao}\tau_{x,x}=L^d\tau_{o,o}.
\label{E0}
\end{equation}
\end{proof}

We also prove the theorem about the instability
of the ``ferromagnetic ground states.''
The proof is based on the standard variational argument.
\bigno
\begin{proof}{Proof of Theorem~\ref{instTh}}
Let $d^\dagger_{k,\sigma}$ be the creation operator for the
Bloch state (\ref{Blochth}) in the lowest band with the wave number vector 
$k\in\calK$, and let $\ep_1(k)$ be the corresponding energy eigenvalue.
Let $k_{\rm min}$ and $k_{\rm max}$ be such that 
\begin{equation}
\ep_1(k_{\rm min})\le\ep_1(k)\le\ep_1(k_{\rm max})
\end{equation}
holds for any $k\in\calK$.
The band width is given by
$\bar{\ep}=\ep_1(k_{\rm max})-\ep_1(k_{\rm min})$.
Take a variational state
\begin{equation}
\Phi_{\rm var} = d^\dagger_{k_{\rm min},\dn}d_{k_{\rm max},\up}\UP.
\end{equation}
The energy expectation value of the state $\Phi_{\rm var}$ is
easily shown to satisfy
\begin{equation}
\frac{(\Phi_{\rm var},H\Phi_{\rm var})}{(\Phi_{\rm var},\Phi_{\rm var})}
\le E_0-\bar{\ep}+U=\Emin(\Smax)-\bar{\ep}+U.
\end{equation}
The claimed instability follows when $\bar{\ep}>U$.
\end{proof}

\subsection{Flat-Band Ferromagnetism}
\label{secflatproof}
In \cite{92e,93d} Theorem~\ref{flatbandTh}, which establishes 
flat-band ferromagnetism, was proved for the models with $\nu=1$.
Although the extension to the general case is not hard, we present it 
here for completeness.

\begin{proof}{Proof of Theorem~\ref{flatbandTh}}
The flat-band model is characterized by $\tau_{x,y}=0$ for any 
$x,y\in\Lao$.
Then it is easily verified (from, say, (\ref{HhopRep})) that 
$\Hhop\ge0$.
We also know $\Hint\ge0$, and hence $H\ge0$.
From (\ref{E0}), on the other hand, one finds that the 
``ferromagnetic ground state'' $\UP$ (\ref{Phiup}) has vanishing 
energy, and hence is a ground state of $H$.
The remaining task is to determine all the other ground states.

Let $\Phi$ be an arbitrary ground state with $L^{d}$ electrons.
We obviously have 
\begin{equation}
	\Hhop\Phi=0,
	\label{HhopP=0}
\end{equation}
and
\begin{equation}
	\Hint\Phi=0,
	\label{HintP=0}
\end{equation}
which mean that $\Phi$ is at the same time a ground state of $\Hhop$ 
and of $\Hint$.
As we discussed in Section~\ref{secMM1d}, this is a special feature 
of flat-band models.

Since $\Hint$ (\ref{Hint}) is a sum of nonnegative terms, 
(\ref{HintP=0}) implies $n_{x,\up}n_{x,\dn}\Phi=0$ for each $x\in\La$.
Since 
$n_{x,\up}n_{x,\dn}=(c_{x,\up}c_{x,\dn})^{\dagger}(c_{x,\up}c_{x,\dn})$,
this further implies 
$c_{x,\up}c_{x,\dn}\Phi=0$ for each $x\in\La$.
By using the inversion formula (\ref{binv}), and noting that 
$(\phi^{(y)}_{x})^{*}=\psi^{(y)}_{x}$ for the flat-band models (see 
Section~\ref{seclocbases}, especially (\ref{psi1})), this reduces to 
the following useful condition.
\begin{equation}
	\sum_{y,z\in\La}\psi^{(y)}_{x}\psi^{(z)}_{x}
	b_{y,\up}b_{z,\dn}
	\Phi
	=0
	\label{bbP=0}
\end{equation}

The relation (\ref{HhopP=0}) implies that the state $\Phi$ consists 
only of the single-electron states from the lowest (flat) band.
Therefore we expand it as
\begin{equation}
	\Phi=\sum_{A,B\subset\Lao}
	f(A,B)
	\rbk{\prod_{x\in A}a^{\dagger}_{x,\up}}
	\rbk{\prod_{x\in B}a^{\dagger}_{x,\dn}}
	\vac,
	\label{Phiexpand}
\end{equation}
where the sum is taken over all subsets $A,B\subset\Lao$ such 
that $|A|+|B|=L^{d}$, and $f(A,B)$ are coefficients.

For $x\in\Lao$ and $\Phi$ of the form (\ref{Phiexpand}), the condition 
(\ref{bbP=0}) becomes
\begin{equation}
	b_{x,\up}b_{x,\dn}
	\Phi
	=0,
	\label{bbP=02}
\end{equation}
because of the definition (\ref{psi1}) of the $\psi$ states.
By using the anticommutation relation (\ref{ac4}), (\ref{bbP=02}) 
implies that $f(A,B)=0$ whenever $A\cap B\ne\emptyset$.
Thus the expansion (\ref{Phiexpand}) can be reorganized as
\begin{equation}
	\Phi=\sum_{\sigma}g(\sigma)
	\rbk{\prod_{x\in\Lao}a^{\dagger}_{x,\sigma(x)}}
	\vac,
	\label{Phiexpand2}
\end{equation}
where the sum is now taken over all the possible ``spin configurations'' 
$\sigma=(\sigma(x))_{x\in\Lao}$ with $\sigma(x)=\up,\dn$.

For $x\in\La'$ and $\Phi$ of the form (\ref{Phiexpand2}), the 
condition (\ref{bbP=0}) becomes
\begin{equation}
	\sumtwo{y,z\in\Lao(x)}{y>z}
	(b_{y,\up}b_{z,\dn}-b_{z,\up}b_{y,\dn})\Phi=0,
	\label{bbP=04}
\end{equation}
where $\Lao(x)=\set{y\in\Lao}{|y-x|=\sqrt{\nu}/2}$, and we ordered 
this set in an arbitrary manner.
Since any site $x\in\Lao$ is ``occupied'' in the representation 
(\ref{Phiexpand2}), the condition (\ref{bbP=04}) is satisfied only 
when we have\footnote{
This is only true when the electron number is $L^{d}=|\Lao|$.
We treated only the special models with $\nu=1$ in \cite{92e,93d}, 
where this step can be extended to other electron numbers.
}
\begin{equation}
	(b_{y,\up}b_{z,\dn}-b_{z,\up}b_{y,\dn})\Phi=0,
	\label{bbP=05}
\end{equation}
for any $y,z\in\Lao(x)$ with $y\ne z$ for some $x\in\La'$.

By substituting the expansion (\ref{Phiexpand2}) into the condition 
(\ref{bbP=05}), we find that the coefficients satisfy
 \begin{equation}
 	g(\sigma)=g(\sigma_{y,z}),
 	\label{g=g}
 \end{equation}
 where $\sigma_{y,z}$ is the spin configuration obtained by switching 
 $\sigma(y)$ and $\sigma(z)$ in the original $\sigma$.
The relation (\ref{g=g}) along with the expansion  (\ref{Phiexpand2})
implies that $\Phi$ can be written as 
\begin{equation}
	\Phi=\sum_{M=0}^{L^{d}}\alpha_{M}(S^{-}_{\rm tot})^{M}\UP,
	\label{Phinal}
\end{equation}
with suitable coefficients $\alpha_{M}$.
Here $S^{-}_{\rm tot}=S^{(1)}_{\rm tot}-iS^{(2)}_{\rm tot}$ is the 
spin-lowering operator.
This proves the desired theorem.
\end{proof}

\subsection{Basis for the Many-Electron System}
\label{secBasis}
We shall introduce a basis for describing many-electron problems.

Let $s\in\La$, and let $A\subset\La$ be a subset with $\abs{A}=L^d-1$.
We define
\begin{equation}
\Psik{s,A}=\sum_{x\in\Lao}\eikx\, T_x\sbk{
\ad_{s,\dn}\rbk{\prod_{t\in A}\ad_{t,\up}}\vac},
\label{PsiA}
\end{equation}
where $T_x$ is the translation operator (\ref{Tx}).
The state $\Psik{s,A}$
is an element of the Hilbert space $\calH_k$ (\ref{Hilbk})
of the states with momentum $k$ and a single down-spin
electron.
Clearly $\Psik{s,A}$ with different $(s,A)$ can define the
same state.
For $s\in\Lau$ (with a $u\in\calU$), one can take a unique
$y\in\Lao$ such that $s=u-y$.
Then we have
\begin{equation}
\Psik{s,A}=e^{i\theta}\Psik{u,A+y},
\label{PsiID}
\end{equation}
where $A+y=\set{x+y}{x\in A}$, and $\theta\in{\bf R}$.

Let $\Laob=\Lao\backslash\cbk{o}$.
We define
\begin{equation}
\Ok=\frac{1}{\alpha(k)}\Psik{o,\Laob},
\label{Omega}
\end{equation}
where $\alpha(k)>0$ is a real function of $k$ which will be determined
later in the proof.
We note that $\Ok$ is our approximate spin-wave excitation, which 
plays the central role in our proof.

Finally, we define our basis $\Bk$ for the space $\calH_k$ as
\begin{equation}
\Bk=
\cbk{\Ok}\cup
\set{\Psik{u,A}}
{\mbox{$u\in\calU$, $A\subset\La$ with 
$\abs{A}=L^d-1$, and $(u,A)\ne(o,\Laob)$}}.
\label{Bk}
\end{equation}
\Section{Proof of the Main Theorems}
\label{SecProof}
In the present section, we shall describe the proof of our
main theorems on the stability of ferromagnetism and
the lower bound for the spin-wave dispersion relation.
We make use of various estimates which will be proved
in the latter sections.
\subsection{Basic Lemma}
\label{secblemma}
Let us define 
\begin{equation}
\Htil=\Hhopt+\Hint,
\label{Htil}
\end{equation}
where $\Hhopt$ is the modified hopping Hamiltonian
(\ref{Hhoptil}), and 
$\Hint$ is the standard interaction Hamiltonian (\ref{Hint}).
For basis states $\Phi,\Psi\in\Bk$, we define the matrix elements
$h[\Psi,\Phi]\in{\bf C}$ of the Hamiltonian $\Htil$ above by the 
unique expansion
\begin{equation}
\Htil\Phi=\sum_{\Psi\in\Bk}h[\Psi,\Phi]\Psi.
\label{matrixelement}
\end{equation}
Note that only states from $\Bk$ with a fixed $k$ appear in the
right-hand side of (\ref{matrixelement}) since $\Htil$ is translation
invariant and the momentum $k$ is conserved.

For $\Phi\in\Bk$, we define
\begin{equation}
D[\Phi]={\rm Re}\sbk{h[\Phi,\Phi]}-
\sum_{\Psi\in\Bk\bs\cbk{\Phi}}\abs{h[\Phi,\Psi]}.
\label{DPhi}
\end{equation}
Then we have the following lemma.
The basic statement is elementary and
wellknown (in standard linear algebra), but it serves as a 
basis of our proof.
\begin{lemma}
Let $\ESW$ be the energy of the spin-wave excitation defined in
Section~\ref{secsw}.
Then for each $k\in\calK$, we have
\begin{equation}
\ESW\ge\min_{\Phi\in\Bk}D[\Phi].
\label{ESW>D}
\end{equation}
\label{DPhiLemma}
\end{lemma}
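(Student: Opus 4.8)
The plan is to obtain this bound from a Gershgorin-type estimate applied to a single finite matrix, after two routine reductions. First I would replace $H$ by the modified Hamiltonian $\Htil=\Hhopt+\Hint$ of (\ref{Htil}). Both $H$ and $\Htil$ are translation invariant and commute with $\Sztot$, so the finite-dimensional subspace $\calH_k$ of (\ref{Hilbk}) is invariant under each of them, and the operator inequality $\Hhop\ge\Hhopt$ of (\ref{H>Htil}) gives $H\ge\Htil$. Hence, writing $\mu_k$ for the lowest eigenvalue of the restriction $\Htil|_{\calH_k}$, we have $\ESW\ge\mu_k$, and it suffices to prove $\mu_k\ge\min_{\Phi\in\Bk}D[\Phi]$.

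The second reduction records the two properties of $\Htil|_{\calH_k}$ that make the estimate run. Since $\Hhopt$ and $\Hint$ are self-adjoint, $\Htil|_{\calH_k}$ is self-adjoint, so $\mu_k$ is real. Since $\Bk$ is a basis of $\calH_k$, the operator $\Htil|_{\calH_k}$ is faithfully represented in this basis by the matrix $(h[\Psi,\Phi])_{\Psi,\Phi\in\Bk}$ defined through (\ref{matrixelement}); this matrix is similar to the Hermitian matrix of $\Htil|_{\calH_k}$ in an orthonormal basis, so $\mu_k$ is a (real) eigenvalue of it, even though the matrix itself is in general not normal.

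The core is then elementary linear algebra. Let $\Xi=\sum_{\Phi\in\Bk}c_\Phi\,\Phi$ be a nonzero eigenvector of $\Htil$ with eigenvalue $\mu_k$; expanding $\Htil\Xi$ by (\ref{matrixelement}) and using that $\Bk$ is a basis gives $\sum_{\Phi\in\Bk}h[\Psi,\Phi]c_\Phi=\mu_k\,c_\Psi$ for every $\Psi\in\Bk$. Choosing $\Psi_*\in\Bk$ with $\abs{c_{\Psi_*}}=\max_{\Psi\in\Bk}\abs{c_\Psi}>0$, isolating the diagonal term, dividing by $\abs{c_{\Psi_*}}$, and bounding $\abs{c_\Phi}\le\abs{c_{\Psi_*}}$ for the remaining terms, one gets
\begin{equation}
\abs{\mu_k-h[\Psi_*,\Psi_*]}\le\sum_{\Phi\in\Bk\bs\cbk{\Psi_*}}\abs{h[\Psi_*,\Phi]}.
\end{equation}
Since $\mu_k$ is real, $\abs{\mu_k-h[\Psi_*,\Psi_*]}\ge{\rm Re}\sbk{h[\Psi_*,\Psi_*]}-\mu_k$, so rearranging yields $\mu_k\ge D[\Psi_*]\ge\min_{\Phi\in\Bk}D[\Phi]$ with $D$ as in (\ref{DPhi}), and $\ESW\ge\mu_k$ then gives the claim.

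There is no real obstacle in this lemma itself: the reduction to $\Htil$, the passage to the matrix representation in the non-orthonormal basis $\Bk$, and the Gershgorin bound are all standard. The genuine difficulty, addressed in the later sections, is to show that the lower bound $\min_{\Phi\in\Bk}D[\Phi]$ produced here is actually as large as the right-hand side of (\ref{ESW>}); this is where the sharp localization of the states $\phi^{(x)}$ and the careful choice of the normalization $\alpha(k)$ in (\ref{Omega}) enter, forcing the off-diagonal matrix elements $h[\Phi,\Psi]$ to be small so that the diagonal term ${\rm Re}[h[\Phi,\Phi]]$ dominates.
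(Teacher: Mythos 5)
Your proof is correct and follows essentially the same route as the paper's: reduce $H$ to $\Htil$ using $H\ge\Htil$ and the shared symmetries, then apply a Gershgorin-type bound to the (non-normal) matrix $\rbk{h[\Psi,\Phi]}_{\Psi,\Phi\in\Bk}$ at a maximal-modulus coefficient of the eigenvector, using the reality of the eigenvalue to pass from $\abs{\mu_k-h[\Psi_*,\Psi_*]}$ to ${\rm Re}\sbk{h[\Psi_*,\Psi_*]}-\mu_k$. The paper phrases the final step as a direct estimate of $E$ rather than through the Gershgorin-disc inequality, but the content is identical.
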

\begin{proof}{Proof}
Let $\widetilde{E}(k)$ be the lowest eigenvalue of $\Htil$ in the 
Hilbert space $\calH_k$ (\ref{Hilbk}).
We first claim that $\ESW\ge\widetilde{E}(k)$.
This is a straightforward consequence of the operator inequality
$H\ge\Htil$ (which follows from (\ref{Htil}) and (\ref{H>Htil}))
and the fact that both $H$ and $\Htil$ commute with $T_x$ ($x\in\Zd$)
and $\Sztot$.

Thus the desired bound (\ref{ESW>D}) follows from the inequality
\begin{equation}
\widetilde{E}(k)\ge\min_{\Phi\in\Bk}D[\Phi],
\label{E>D}
\end{equation}
which is indeed a straightforward consequence of 
a wellknown relation in elementary linear algebra.
To show (\ref{E>D}), let $E$ be an eigenvalue of $\Htil$,
and $\Phi_0\in\calH_k$ be the corresponding eigenstate.
We expand $\Phi_0$ as 
$\Phi_0=\sum_{\Psi\in\Bk}C(\Psi)\,\Psi$ where
$C(\Psi)$ are coefficients.
From (\ref{matrixelement}) and the eigenvalue equation
$\Htil\Phi_0=E\Phi_0$, we find that $C(\Psi)$ satisfy
\begin{equation}
E\,C(\Phi)=\sum_{\Psi\in\Bk}h[\Phi,\Psi]\,C(\Psi)
\end{equation}
for any $\Phi\in\Bk$.
Let $\Phi'\in\Bk$ be the state such that
$\abs{C(\Psi)/C(\Phi')}\le1$ holds for any $\Psi\in\Bk$.
Then we have
\begin{eqnarray}
E&=&\sum_{\Psi\in\Bk}h[\Phi',\Psi]\frac{C(\Psi)}{C(\Phi')}\ret
&\ge&{\rm Re}\sbk{h[\Phi',\Phi']}
-\sum_{\Psi\in\Bk\bs\cbk{\Phi'}}\abs{h[\Phi',\Psi]}\ret
&=&D[\Phi'].
\end{eqnarray}
Since $\widetilde{E}(k)$ is the smallest eigenvalue,
the desired inequality (\ref{E>D}) follows.
\end{proof}

Being a very crude bound, we cannot expect (\ref{ESW>D})
to yield meaningful results unless we use a basis which 
``almost diagonalizes'' the low energy part of the Hamiltonian.
As we shall see below, it turns out that the basis we constructed
in Section~\ref{secBasis} indeed have such properties.

\subsection{Estimates of the Matrix Elements}
\label{secmatrix}
We shall summarize the result of Sections~\ref{Secmatrep}
and \ref{SecMatrixBound}
where we estimate various matrix elements.

Before stating the results, it is convenient to introduce new
labeling of the special states $\Psi\in\Bk$
which have nonvanishing matrix elements $h[\Psi,\Ok]$.
For any $u\in\calU$ and $r\in\Lao$, we define
\begin{equation}
\Phik{u,r}=\sum_{x\in\Lao}\eikx\ad_{x+u,\dn}b_{x+r,\up}\UP,
\label{Phiur}
\end{equation}
where $\UP=\rbk{\prod_{y\in\Lao}\ad_{y,\up}}\vac$ is the
``ferromagnetic ground state''.
By noting that 
\newline
$T_x\sbk{\prod_{y\in\Lao}\ad_{y,\up}}=\prod_{y\in\Lao}\ad_{y,\up}$,
we can relate the state (\ref{Phiur}) with the general state 
$\Psik{s,A}$ (\ref{PsiA}) as
\begin{eqnarray}
	\Phik{u,r}
	&=&
	\sum_{x\in\Lao}\eikx\,
	T_x\sbk{\ad_{u,\dn}b_{r,\up}\rbk{\prod_{y\in\Lao}\ad_{y,\up}}\vac}
	\ret
  &=&
	{\rm sgn}[r]
	\sum_{x\in\Lao}\eikx\,
	T_x\sbk{\ad_{u,\dn}\rbk{\prod_{y\in\Lao\bs\cbk{r}}\ad_{y,\up}}\vac}
	\ret
	&=&
	{\rm sgn}[r]\,\Psik{u,\,\Lao\bs\cbk{r}},
	\label{Phi=Psi}
\end{eqnarray}
where ${\rm sgn}[r]=\pm1$.
By using (\ref{Phi=Psi}), we can rewrite (\ref{Omega}) as
\begin{equation}
\Ok=\frac{1}{\alpha(k)}\Phik{o,o}.
\label{Ok=Phi}
\end{equation}

Let $\Laop\subset\Lao$ be a special subset with the property 
that for any $s,t\in\Lao$ such that $s\ne t$, we have either 
$s-t\in\Laop$ or $t-s\in\Laop$ (and not both).
An example is
\begin{eqnarray}
\Laop&=&\set{x\in\Lao}{0<x_1\le\frac{L-1}{2}}
\cup\set{x\in\Lao}{x_1=0,\ 0<x_2\le\frac{L-1}{2}}\cup\ret
&&
\cup\set{x\in\Lao}{x_1=x_2=0,\ 0<x_3\le\frac{L-1}{2}}\cup\cdots\ret
&&
\cdots\cup
\set{x\in\Lao}{x_1=x_2=\cdots=x_{d-1}=0,\ 0<x_d\le\frac{L-1}{2}}.
\label{La+o}
\end{eqnarray}
For $u\in\calU$, $r\in\La'$, and $s,t\in\Lao$ such that $s-t\in\Laop$,
we define 
\begin{equation}
\Phik{u,r,t,s}=\sum_{x\in\Lao}\eikx\,
\ad_{x+u,\dn}\ad_{x+r,\up}b_{x+t,\up}b_{x+s,\up}\UP.
\label{Phiurts}
\end{equation}

It can be shown that the only states $\Psi\in\Bk$ such that
$h[\Psi,\Ok]\ne0$ can be written in the form 
$\Phik{u,r}$ or $\Phik{u,r,t,s}$ with suitable $u$, $r$, $t$, and $s$.
See Section~\ref{Secmatrep}.

By using the representations (\ref{HtilhopRep}) and (\ref{HintRep})
for the Hamiltonians, we can express the matrix elements 
$h[\Psi,\Phi]$ explicitly in terms of the effective hopping $\tau_{x,y}$
and the effective interaction $\Ut_{y,v;w,z}$.
We leave the derivation to Section~\ref{Secmatrep}, and 
summarize the results as the following lemma.
\begin{lemma}
\label{matrepLemma}
For any $u$, $r$, $t$, and $s$ as in (\ref{Phiur}) or (\ref{Phiurts}), 
we have
\begin{equation}
h[\Ok,\Ok]=E_0+2\sum_{s\in\Lao}\rbk{\sin\frac{k\cdot s}{2}}^2
\Ut_{s,o;s,o},
\label{h1}
\end{equation}
\begin{equation}
h[\Ok,\Phik{u,r}]=\delta_{u,o}\,\alpha(k)(e^{-ik\cdot r}-1)\,\tau_{r,o}
+\alpha(k)\sum_{s\in\Lao}(e^{-ik\cdot r}-e^{-ik\cdot s})\Ut_{s,r;u,s},
\label{h2}
\end{equation}
\begin{equation}
h[\Ok,\Phik{u,r,t,s}]=\alpha(k)(e^{-ik\cdot s}-e^{-ik\cdot t})\Ut_{s,t;u,r},
\label{h3}
\end{equation}
\begin{equation}
h[\Phik{u,r},\Ok]=\delta_{u,o}\frac{1}{\alpha(k)}(e^{ik\cdot r}-1)\tau_{o,r}
+\frac{1}{\alpha(k)}\sum_{s\in\Lao}(e^{ik\cdot r}-e^{ik\cdot s})\Ut_{u,s;s,r},
\label{h4}
\end{equation}
and
\begin{equation}
h[\Phik{u,r,t,s},\Ok]=\frac{1}{\alpha(k)}
(e^{ik\cdot s}-e^{ik\cdot t})\Ut_{u,r;s,t}.
\label{h5}
\end{equation}
It should be noted that these matrix elements are not symmetric,
reflecting that the basis $\Bk$ is not orthonormal.
\end{lemma}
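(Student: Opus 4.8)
The plan is to compute $\Htil\Phi=\Hhopt\Phi+\Hint\Phi$ for each of the basis states $\Phi\in\cbk{\Ok,\Phik{u,r},\Phik{u,r,t,s}}$ by inserting the operator representations (\ref{HtilhopRep}) and (\ref{HintRep}), moving all annihilation operators to the right with the canonical anticommutation relations (\ref{ac3}), (\ref{ac4}), and then reading off the coefficients of $\Ok$, $\Phik{u,r}$, and $\Phik{u,r,t,s}$ in the resulting expansion. All of the algebra rests on three elementary facts about the ``ferromagnetic ground state'' $\UP=\rbk{\prod_{y\in\Lao}\ad_{y,\up}}\vac$: (i) $b_{y,\dn}\UP=0$ for every $y\in\La$; (ii) $\ad_{y,\up}\UP=0$ for $y\in\Lao$; and (iii) $b_{y,\up}\UP=\pm\rbk{\prod_{y'\in\Lao\bs\cbk{y}}\ad_{y',\up}}\vac$ for $y\in\Lao$, while $b_{y,\up}\UP=0$ for $y\in\La'$ — the last statement being the duality relation (\ref{duality1}) rewritten via (\ref{ac4}). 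Throughout, the defining sum $\sum_{x\in\Lao}e^{ik\cdot x}$ of the basis states, combined with the translation invariance $\tau_{x,y}=\tau_{x+z,y+z}$ of the effective hopping and $\Ut_{y,v;w,z}=\Ut_{y+p,v+p;w+p,z+p}$ of the effective interaction, will be used to reduce each double lattice sum to a single sum over a displacement $s\in\Lao$, producing the phase factors $e^{-ik\cdot r}-1$, $e^{-ik\cdot r}-e^{-ik\cdot s}$, and $e^{-ik\cdot s}-e^{-ik\cdot t}$ that appear in (\ref{h1})--(\ref{h5}).

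For the diagonal element I would write $\Ok=\alpha(k)^{-1}\Phik{o,o}$ with $\Phik{o,o}=\sum_{x\in\Lao}e^{ik\cdot x}\Gamma_x$ and $\Gamma_x=\ad_{x,\dn}b_{x,\up}\UP$. Acting with $\Hhopt$ on $\Gamma_x$ produces the diagonal term $E_0\Gamma_x$ (the factor $\sum_{y\in\Lao}\tau_{y,y}=E_0$ coming from the electrons that are not moved) together with states carrying one doubly occupied site and one hole, which are of the form $\Phik{u,r}$; the $\frac34\la^2t\sum_{x\in\La'}$ piece of $\Hhopt$ annihilates $\Gamma_x$. Acting with $\Hint$ on $\Gamma_x$ produces, besides more ``double--hole'' states, exactly the ``exchange'' terms $\sum_{s\ne o}\Ut_{s,o;s,o}\rbk{\Gamma_x-\Gamma_{x+s}}$, just as in the one-dimensional identity (\ref{H1Phix}). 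Multiplying by $e^{ik\cdot x}$, summing over $x$, and invoking $\Ut_{s,o;s,o}=\Ut_{-s,o;-s,o}$ (a consequence of the $y\leftrightarrow v$ and $w\leftrightarrow z$ symmetries of (\ref{Util}) and of translation invariance) turns the off-diagonal part into $2\sum_{s\in\Lao}\rbk{\sin\frac{k\cdot s}{2}}^2\Ut_{s,o;s,o}\,\Phik{o,o}$, which is (\ref{h1}).

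For the remaining four elements I would apply $\Htil$ to $\Phik{u,r}$, to $\Phik{u,r,t,s}$, and to $\Ok$, and extract in each case the coefficient of the relevant basis state. The hopping part of $\Htil$ can return a $\Gamma$-type state from a double--hole state only when the down electron sits at the cell origin, and it never returns one from a state with an upper-band electron (since $\Hhopt$ does not mix $\Lao$ with $\La'$); this is the origin of the factor $\delta_{u,o}$ in front of $\tau_{r,o}$ in (\ref{h2}), (\ref{h4}) and of the absence of any $\tau$-term in (\ref{h3}), (\ref{h5}). The rest of the contributions come from $\Hint$ and, after the translation-invariance collapse, assemble into the sums $\alpha(k)\sum_{s\in\Lao}(e^{-ik\cdot r}-e^{-ik\cdot s})\Ut_{s,r;u,s}$, $\alpha(k)(e^{-ik\cdot s}-e^{-ik\cdot t})\Ut_{s,t;u,r}$ and their transposes. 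The essential non-symmetry between (\ref{h2})--(\ref{h3}) and (\ref{h4})--(\ref{h5}) — where $\Ut_{s,r;u,s}$, $\Ut_{s,t;u,r}$ get replaced by $\Ut_{u,s;s,r}$, $\Ut_{u,r;s,t}$ — is forced by the definition (\ref{Util}): since $\adxs$ is built from $\phi^{(x)}$ and $\bxs$ from the dual function $\phitil^{(x)}$, the ``created'' and ``annihilated'' labels in $\Ut_{y,v;w,z}$ are not interchangeable.

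The hard part will be the bookkeeping: I expect the main obstacle to be keeping track of the fermionic signs together with the exhaustive enumeration of which terms of (\ref{HtilhopRep}) and (\ref{HintRep}) send a given basis state onto each target basis state. One has to track the signs ${\rm sgn}[\cdot]$ generated when $b_{\cdot,\up}$ cancels a factor of $\UP$ and when creation operators are permuted into the fixed order defining $\Phik{u,r}$ and $\Phik{u,r,t,s}$, and verify that they combine so that, for instance, the up-electron and the down-electron hopping contributions to $h[\Phik{u,r},\Ok]$ add up to the factor $e^{ik\cdot r}-1$ rather than $e^{ik\cdot r}+1$. One also has to check — the part I would relegate to Section~\ref{Secmatrep} — that $h[\Psi,\Ok]\neq0$ forces $\Psi$ to be of the form $\Phik{u,r}$ or $\Phik{u,r,t,s}$, so that no other elements of $\Bk$ need be considered. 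Once the case analysis and the signs are settled, each of (\ref{h1})--(\ref{h5}) follows from the translation-invariance collapse and the symmetry properties of $\tau$ and $\Ut$ noted above.
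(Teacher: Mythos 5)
Your proposal is correct and follows essentially the same route as the paper's Section~\ref{Secmatrep}: apply $\Hhopt$ and $\Hint$ in the representations (\ref{HtilhopRep}) and (\ref{HintRep}) to the basis states, normal-order with (\ref{ac3})--(\ref{ac4}), collapse the double lattice sums using translation invariance of $\tau$ and $\Ut$, and read off coefficients (the paper additionally finds it convenient to split $\Hint=\Hinto+\Hintt$ by whether the $\ad_{y,\up}$ label lies in $\Lao$ or $\La'$, which is exactly the bookkeeping device you flag as needed to separate the $\Phik{u,r}$ targets from the $\Phik{u,r,t,s}$ targets). The points you identify as the ``hard part''---the fermionic sign tracking and the check that $h[\Psi,\Ok]\ne0$ only for $\Psi$ of the two special forms---are precisely what the paper spends its effort on, so no idea is missing.
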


By combining the expressions (\ref{h1})-(\ref{h5}), the bounds 
(\ref{regtau1}), (\ref{regtau2}) for $\tau_{x,y}$,
the representation (\ref{Util}) for $\Ut_{y,v;w,z}$ in terms of
the basis states $\phis{x}{y}$, $\phit{x}{y}$,
and the bounds (\ref{reg1})-(\ref{reg4}), (\ref{reg5})-(\ref{reg8})
for these states, we can derive explicit bounds for the matrix elements
and their sums.
Again we leave all the derivations to Section~\ref{SecMatrixBound},
and summarize the results as the following lemma.
\begin{lemma}
\label{meb1Lemma}
Under the assumptions that $\la\ge\la_0$ and $\akappa\le\kappa_0$,
we have
\begin{equation}
{\rm Re}\sbk{h[\Ok\Ok]}
\ge
E_0+\frac{U}{\la^4}\rbk{1-C_1\akappa-\frac{C_2}{\la}}G(k),
\label{hOO>}
\end{equation}
with $G(k)$ defined in (\ref{Gk}),
\begin{equation}
\sum_{\Psi\in\Bk\bs\cbk{\Ok}}\abs{h[\Ok,\Psi]}
\le
\alpha(k)\rbk{B_1R\akappa t+\frac{C_3U}{\la^2}}\abs{k},
\label{hOPsi<}
\end{equation}
\begin{equation}
\abs{h[\Phik{u,r},\Ok]}
\le
\frac{1}{\alpha(k)}
\rbk{B_1R\akappa t +\frac{C_4U\akappa}{\la^2}+\frac{C_5U}{\la^3}}
\abs{k},
\label{hPur}
\end{equation}
and
\begin{equation}
\abs{h[\Phik{u,r,t,s},\Ok]}
\le
\frac{1}{\alpha(k)}\frac{C_6U}{\la^2}\abs{k}.
\label{hPurts}
\end{equation}
Here $C_i$ ($i=1,2,3,4,5,6$) are positive constants which depend only
on $d$, $\nu$, and $R$.
\end{lemma}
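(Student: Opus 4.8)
The plan is to substitute the explicit formulas (\ref{h1})--(\ref{h5}) of Lemma~\ref{matrepLemma} into the quantities to be bounded, insert the representation (\ref{Util}) of $\Ut_{y,v;w,z}$ in terms of the basis functions $\phis{x}{y}$ and $\phit{x}{y}$, and then estimate everything factor by factor with the help of the localization bounds (\ref{reg1})--(\ref{reg8}) and the $\tau$-bounds (\ref{regtau1}), (\ref{regtau2}). A uniform device is to bound each phase difference by $\abs{e^{-ik\cdot r}-e^{-ik\cdot s}}\le\abs{k}\,\abs{r-s}$ (and $\abs{e^{-ik\cdot r}-1}\le\abs{k}\,\abs{r}$), so that exactly one factor $\abs{k}$ (respectively one factor $\sin(k\cdot s/2)$) is peeled off each matrix element and the residual lattice sums can be taken uniformly in $k$. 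Throughout, I would use the decompositions $\phi^{(x)}=\psi^{(x)}+(\phi^{(x)}-\psi^{(x)})$ and $\phitil^{(x)}=\psi^{(x)}+(\phitil^{(x)}-\psi^{(x)})$ and expand the relevant products of four basis states accordingly.

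For (\ref{hOO>}): in $2\sum_{s\in\Lao}(\sin(k\cdot s/2))^2\,\Ut_{s,o;s,o}$ the term $s=o$ drops out (and $\Ut_{o,o;o,o}\simeq U$ with it), and the ``main part'' comes from the nearest-neighbour vectors $s=f+g$ with $f\in\calF_o$, $g\in\calF_f$. For such $s$ the leading contribution to $\Ut_{s,o;s,o}$ is its $\psi$-part, supported on the $\La'$-sites lying between $o$ and $s$, where each of the four factors equals $\mp1/\la$ by (\ref{psi1})--(\ref{psi2}); summing these reproduces exactly $(U/\la^4)G(k)$ with $G(k)$ as in (\ref{Gk}). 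Everything else — longer-range $s$, the imaginary parts, and the discrepancy between $\phi,\phitil$ and $\psi$ — is controlled by writing $\Ut_{s,o;s,o}-\Ut^{(\psi)}_{s,o;s,o}$ as a four-term telescoping sum in which one basis state is replaced at a time, and applying (\ref{reg1}), (\ref{reg3}), (\ref{reg5}), (\ref{reg7}): a $\phi-\psi$ replacement costs a relative factor $\akappa$ (since $\akappa/\la^2$ is weighed against a leading prefactor $\sim1/\la^3$ from the three retained factors), while the intrinsic tail of $\phitil$ — the $B_2/\la^2$ in (\ref{reg5})--(\ref{reg8}) — costs a relative factor $1/\la$. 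Combining this with an elementary bound $(\sin(k\cdot s/2))^2\le\mathrm{const}\cdot\abs{s}^2\,G(k)$ (valid because $G(k)\ge c(d,\nu)\sum_i(\sin(k_i/2))^2$ on the hypercubic lattice) keeps all error terms proportional to $G(k)$ and yields the prefactor $1-C_1\akappa-C_2/\la$.

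For (\ref{hOPsi<}), (\ref{hPur}) and (\ref{hPurts}): the only $\Psi\in\Bk$ with $h[\Ok,\Psi]\ne0$ are of the form $\Phik{u,r}$ or $\Phik{u,r,t,s}$, so (\ref{hOPsi<}) is a finite sum of bounds of the type (\ref{hPur}), (\ref{hPurts}) but with the positions of $\phi$ and $\phitil$ exchanged. In each of (\ref{h2})--(\ref{h5}) one peels off $\abs{k}$ as above; the residual sum is then reorganized by summing first over the internal index $x$ of (\ref{Util}) and, for fixed $x$, using estimates such as $\sum_u\abs{\phis{u}{x}}=O(1)$, $\sum_s\abs{r-s}\,\abs{\phit{s}{x}}\,\abs{\phis{s}{x}}=O((\akappa+1)/\la^2)$, and $\sum_r\abs{r}\,\abs{\tau_{o,r}}\le B_1R\akappa t$, all immediate from (\ref{reg1})--(\ref{reg8}) and (\ref{regtau1})--(\ref{regtau2}). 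The essential point is the asymmetry created by the very different localization of $\phi^{(x)}$ (sharp, $\phi-\psi=O(\akappa/\la^2)$) and of $\phitil^{(x)}$ (with an intrinsic $O(1/\la^2)$ tail): in $h[\Ok,\Phik{u,r}]$ the heavy $\phitil$-tails sit on the outgoing side, producing the two separate pieces $C_4U\akappa/\la^2$ and $C_5U/\la^3$ in (\ref{hPur}), whereas in $h[\Phik{u,r},\Ok]$ (which enters (\ref{hOPsi<}) together with the $\Phik{u,r,t,s}$-terms) they sit on the incoming side and only the milder $C_3U/\la^2$ survives; the $B_1R\akappa t$ contributions are nothing but the $\tau$-terms estimated through (\ref{regtau2}).

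The difficulty here is organizational rather than a single sharp inequality: one must (i) single out the leading term of each relevant component of $\Ut$ — there are only finitely many, essentially the ones tabulated in the one-dimensional discussion — (ii) check that every other configuration of the indices $u,r,s,t$ carries at least one extra factor $\akappa$ or $1/\la$, and (iii) propagate the $k$-dependence so that the final bounds come out proportional to $G(k)$ for the diagonal element and to $\abs{k}$ for the off-diagonal ones, which is precisely the form the choice of $\alpha(k)$ in the next section will exploit. Once the $\phi=\psi+(\phi-\psi)$, $\phitil=\psi+(\phitil-\psi)$ splittings are inserted and the sums are taken in the correct order, each resulting piece is dominated by a direct application of one of (\ref{reg1})--(\ref{reg8}), (\ref{regtau1}), (\ref{regtau2}).
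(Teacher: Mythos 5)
Your plan --- substituting the explicit formulas of Lemma~\ref{matrepLemma}, inserting the representation (\ref{Util}), splitting each basis function as $\psi$ plus the summable remainder $\eta$ or $\tilde\eta$, peeling off one factor $\abs{k}$ per phase difference, using $\abs{s}^{2}\abs{k}^{2}\le\mathrm{const}\cdot G(k)$ to keep the diagonal error proportional to $G(k)$, and invoking (\ref{reg1})--(\ref{reg8}), (\ref{regtau1})--(\ref{regtau2}) on the residual sums while exploiting the $\phi$/$\tilde\phi$ asymmetry --- is precisely what the paper does in Section~\ref{SecMatrixBound}. The only minor imprecision is in the advertised bookkeeping for (\ref{hOO>}): a single $\phi\to\phi-\psi$ replacement actually costs a relative factor $\akappa/\la$ (absorbed into $C_{2}/\la$), and the pure $C_{1}\akappa$ term comes from the two-replacement configuration via the $\abs{k}^{2}\abs{s}^{2}$ device you mention, but this does not affect the outcome.
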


We can perform similar analysis for the matrix elements which do 
not involve the state $\Ok$.
For $\Phi\in\Bk\bs\cbk{\Ok}$, we define
\begin{equation}
\Dt[\Phi]=
{\rm Re}\sbk{h[\Phi,\Phi]}-\sum_{\Psi\in\Bk\bs\cbk{\Phi,\Ok}}
\abs{h[\Phi,\Psi]}.
\label{Dtil}
\end{equation}
Then we prove the following in Section~\ref{secDtil}.
\begin{lemma}
\label{meb2Lemma}
Assume that 
$\la\ge\la_4$, $\akappa\le\kappa_0$,
and $K_3t\akappa\le U\le K_4\la^3t$,
where $\la_4$, $\kappa_0$, $K_3$, and $K_4$ are constants which depend only
on $d$, $\nu$, and $R$.
Then we have
\begin{equation}
\Dt[\Psik{u,A}]\ge E_0+\frac{\la^2t}{2},
\label{Dtil>}
\end{equation}
for any $u\in\calU$ and $A\subset\La$ such that
$\abs{A}=L^d-1$ and $A\cap\La'\ne\emptyset$,
\begin{equation}
\Dt[\Phik{u,r}]\ge E_0+\frac{\la^2t}{2},
\label{Dtilur>}
\end{equation}
for $u\ne o$, and
\begin{equation}
\Dt[\Phik{o,r}]\ge E_0+\frac{U}{2},
\label{Dtilor>}
\end{equation}
for $r\ne o$.
\end{lemma}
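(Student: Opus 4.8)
\noindent\textbf{Proof strategy for Lemma~\ref{meb2Lemma}.}
These three bounds are the full-model analogues of (\ref{DXi>}) and (\ref{DTheta>}) in the one-dimensional sketch. For each basis state $\Phi\in\Bk\bs\cbk{\Ok}$ I would isolate a single ``large'' diagonal contribution to $h[\Phi,\Phi]$ and then bound everything else --- both the off-diagonal corrections inside $h[\Phi,\Phi]$ and the entire sum $\sum_{\Psi\in\Bk\bs\cbk{\Phi,\Ok}}\abs{h[\Phi,\Psi]}$ --- by a constant (depending only on $d,\nu,R$) times $\akappa t+U/\la$, which is dominated by the large term under the hypotheses $\la\ge\la_4$, $K_3t\akappa\le U\le K_4\la^3t$. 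The large term comes in one of two flavours. By (\ref{HtilhopRep}) the modified hopping $\Hhopt$ acts on an electron occupying an upper-band localized state $\phi^{(x)}$ ($x\in\La'$) as multiplication by $\frac{3}{4}\la^2t$, and $\Hhopt$ carries no matrix elements between the lower and the upper bands; so each such electron is frozen, contributes a clean $\frac{3}{4}\la^2t$ to the diagonal, and never appears in an off-diagonal hop. The only diagonal piece of $\Hint$ in the $\ntil$-number basis is $\sum_x\Ut_{x,x;x,x}\ntil_{x,\up}\ntil_{x,\dn}$, and by (\ref{Util}) with Lemmas~\ref{basisLemma} and \ref{dualbasisLemma} one has $\Ut_{o,o;o,o}=U\rbk{1+O(\la^{-2})}$, so each doubly occupied site contributes $\simeq U$.

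First I would record, as in Section~\ref{Secmatrep} and in parallel with Lemma~\ref{matrepLemma}, explicit formulas for $h[\Phi,\Psi]$ with $\Phi,\Psi\in\Bk\bs\cbk{\Ok}$ in terms of $\tau_{x,y}$ and $\Ut_{y,v;w,z}$, using (\ref{HtilhopRep}) and (\ref{HintRep}). For the diagonal of $\Psik{u,A}$ the hopping part equals, via $E_0=L^d\tau_{o,o}$ from (\ref{E0}), $E_0+\rbk{\frac{3}{4}\la^2t-\tau_{o,o}}$ times the number of electrons sitting in $\La'$; with the elementary count ``(number of up-electrons in $\La'$) $=$ (number of holes in $\Lao$) $-\,1$'' this is $\ge E_0+\frac{3}{4}\la^2t-B_1\akappa t$ whenever $A\cap\La'\ne\emptyset$, and $=E_0$ when all electrons are lower-band; the interaction part is $\ge0$ always and $\ge U\rbk{1-c\la^{-2}}$ when a doubly occupied site is present. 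The same bookkeeping gives, for $\Phik{u,r}$ with $u\ne o$, diagonal $\ge E_0+\frac{3}{4}\la^2t-B_1\akappa t$ (the down-electron sits in $\La'$), and for $\Phik{o,r}$ with $r\ne o$, diagonal $\ge E_0+U\rbk{1-c\la^{-2}}$ (one doubly occupied site, no $\La'$-electron).

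For the off-diagonal sum the Pauli principle does the heavy lifting: a hop of an up-electron into an occupied site vanishes, so the only surviving hopping moves of $\Htil$ are the down-electron hopping (total weight $\le B_1\akappa t$ by (\ref{regtau1})) and up-electrons hopping into holes (total weight at most the number of holes times $B_1\akappa t$); but the number of holes is, by the count above, one more than the number of up-electrons in $\La'$, each of which has already been charged $\frac{3}{4}\la^2t$ on the diagonal, so that contribution is absorbed into the diagonal surplus. The off-diagonal moves of $\Hint$ are controlled by feeding the summability bounds (\ref{reg1})--(\ref{reg8}) into (\ref{Util}): translation invariance together with the sharp localization of $\phi^{(x)},\phitil^{(x)}$ anchors each such matrix element to a bounded neighbourhood, and --- as carried out in Section~\ref{SecMatrixBound} --- every genuinely off-diagonal component of $\Ut$ (as opposed to the diagonal $\Ut_{o,o;o,o}$) carries at least one inverse power of $\la$, so the total off-diagonal interaction weight is at most $c\,U/\la$, uniformly in $L$. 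Combining: for $\Psik{u,A}$ with $A\cap\La'\ne\emptyset$ and for $\Phik{u,r}$ with $u\ne o$ one gets $\Dt[\Phi]\ge E_0+\frac{3}{4}\la^2t-c\rbk{\akappa t+U/\la}\ge E_0+\frac{1}{2}\la^2t$, using $U\le K_4\la^3t$ to render the $U/\la$-term a small fraction of $\la^2t$ and then $\la\ge\la_4$, $U\ge K_3t\akappa$; for $\Phik{o,r}$ with $r\ne o$, where the reference scale is $U$ rather than $\la^2t$, one gets $\Dt[\Phik{o,r}]\ge E_0+U-c\rbk{\akappa t+U/\la}\ge E_0+\frac{1}{2}U$, using $\la\ge\la_4$ for the $U/\la$-term and $U\ge K_3t\akappa$ for the $\akappa t$-term.

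The hard part will be the uniform-in-$L$ control of $\sum_{\Psi}\abs{h[\Phi,\Psi]}$: a naive count of the states reachable from $\Phi$ under one term of $\Htil$ is $O(L^d)$, and one must invoke the Pauli principle to discard almost all of them (hops inside the filled sea vanish) and the sharp localization of the $\phi^{(x)}$ together with translation invariance to tie each surviving matrix element to a bounded region and sum its small tails. Within this, the $\Phik{o,r}$ case is the delicate one, because the large scale there is only $U$: one genuinely needs $U$ to be at least a constant times $t\akappa$, and one must verify that every off-diagonal component of $\Ut_{y,v;w,z}$ is $O(U/\la)$ or smaller, so that the off-diagonal interaction weight is of strictly lower order than $\Ut_{o,o;o,o}\simeq U$.
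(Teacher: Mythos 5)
Your proposal is correct and follows essentially the same route as the paper: extract the dominant diagonal contribution ($\tfrac{3}{4}\la^2t$ per upper-band electron, $\simeq U$ per doubly occupied site), use the hole count $\abs{\Lao\setminus A}=\abs{A\cap\La'}+1$ to absorb the up-electron hopping weight into the diagonal surplus, control the interaction off-diagonals by the summability of $\phi^{(x)}$ and $\phitil^{(x)}$ to get an $O(U/\la)$ error, and then close the three cases under the stated hypotheses. The only notable difference is that the paper sidesteps the bookkeeping of identifying which $\Psik{v,A_{z\to y}}$ coincide with $\Psik{u,A}$ by introducing ``pseudo matrix elements'' $\tilde h_{\rm int}$ (which overcount off-diagonals and undercount the diagonal, at the harmless price of a factor~$2$), whereas you propose to compute true matrix elements directly --- equally valid, just heavier bookkeeping --- and you should replace ``the interaction part is $\ge 0$'' by ``$\ge -cU/\la$'' since the diagonal of $\Hint$ is $\sum_{y\in A}\Ut_{u,y;u,y}$, not a single term.
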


\subsection{Proof of Theorem~\protect\ref{Ek>Th}}
We will now prove Theorem~\ref{Ek>Th} for the lower bound 
of the spin-wave excitation energy,
which is one of most important results.
In the proof, we make use of Lemmas~\ref{meb1Lemma}
and \ref{meb2Lemma}.
We will later confirm that the conditions for these Lemmas are
satisfied.

From Lemma~\ref{DPhiLemma},
we find that the desired lower bound (\ref{ESW>})
follows if we show
\begin{equation}
D[\Phi]\ge E_0+F_2\frac{U}{\la^4}G(k),
\label{DPhi>}
\end{equation}
for any $\Phi\in\Bk$.
(Recall that $E_0=\Emin(\Smax)$.)

We shall first verify (\ref{DPhi>}) for $\Phi=\Psik{u,A}\in\Bk$
such that 
\begin{equation}
h[\Psik{u,A},\Ok]=0.
\label{hPO=0}
\end{equation}
Then comparing the definitions (\ref{DPhi}) and (\ref{Dtil}),
we find $D[\Psik{u,A}]=\Dt[\Psik{u,A}]$ for such $\Psik{u,A}$.
We also claim that the condition (\ref{hPO=0}) inevitably implies
$A\cap\La'\ne\emptyset$.
To see this, we note that the converse $A\cap\La'=\emptyset$
means $A=\Lao\bs\cbk{r}$ for some $r\in\Lao$, and hence
$\Psik{u,A}$ is equal to $\pm\Phik{u,r}$.
(See (\ref{Phi=Psi}).)
Therefore we can use the lower bound (\ref{Dtil>}) to find
\begin{equation}
D[\Psik{u,A}]=\Dt[\Psik{u,A}]\ge E_0+\frac{\la^2t}{2}
\ge E_0+F_2\frac{U}{\la^4}G(k),
\label{DPuA>}
\end{equation}
where the final bound is derived by
noting that $G(k)\le2^{2\nu+1}\dnu$ and $F_2\le1$,
and assuming that
\begin{equation}
\frac{\la^6t}{U}\ge2^{\nu+2}\dnu.
\label{cond1}
\end{equation}
Therefore the desired inequality (\ref{DPhi>}) is verified for
$\Phi=\Psik{u,A}$ such that (\ref{hPO=0}) holds.

Next we examine the inequality  (\ref{DPhi>}) for the states
which do not satisfy the condition (\ref{hPO=0}).
They are the states $\Ok$, $\Phik{u,r}$, and $\Phik{u,r,t,s}$
defined in (\ref{Omega}) (see also (\ref{Ok=Phi})), 
(\ref{Phiur}), and (\ref{Phiurts}),
respectively.

As for the state $\Phik{u,r}$, we use the definitions (\ref{DPhi}),
(\ref{Dtil}), and the bounds (\ref{Dtilur>}), (\ref{hPur}) to get
\begin{eqnarray}
D[\Phik{u,r}]
&=&
\Dt[\Phik{u,r}]-\abs{h[\Phik{u,r},\Ok]}
\ret
&\ge&
E_0+\frac{U}{2}-\frac{1}{\alpha(k)}
\rbk{B_1R\akappa t+\frac{C_4U\akappa}{\la^2}+\frac{C_5U}{\la^3}}
\abs{k}.
\label{DPur>}
\end{eqnarray}
Let us choose $\alpha(k)$ as
\begin{equation}
\alpha(k)=\frac{4\abs{k}}{U}
\rbk{B_1R\akappa t+\frac{C_4U\akappa}{\la^2}+\frac{C_5U}{\la^3}}.
\label{alpha}
\end{equation}
Then (\ref{DPur>}) becomes
\begin{equation}
D[\Phik{u,r}]\ge E_0+\frac{U}{4}\ge E_0+F_2\frac{U}{\la^4}G(k).
\label{DPur>2}
\end{equation}
To get the final bound, we have made a further assumption that
\begin{equation}
\la^4\ge2^{2\nu+3}\dnu.
\label{cond2}
\end{equation}
We have shown the desired bound (\ref{DPhi>}) for $\Phi=\Phik{u,r}$.

The state $\Phik{u,r,t,s}$ (where we require $r\in\La'$) satisfies the condition
for the bound (\ref{Dtil>}).
By combining  (\ref{Dtil>}) with the bound (\ref{hPurts}), 
and using (\ref{alpha}), we have
\begin{eqnarray}
D[\Phik{u,r,t,s}]&=&
\Dt[\Phik{u,r,t,s}]-\abs{h[\Phik{u,r,t,s},\Ok]}
\ret
&\ge&
E_0+\frac{\la^2t}{2}-\frac{1}{\alpha(k)}\frac{C_6U}{\la^2}\abs{k}
\ret
&=&
E_0+\frac{\la^2t}{2}
-\frac{C_6U}
{4\la^2\rbk{B_1R\akappa t+\frac{C_4U\akappa}{\la^2}
+\frac{C_5U}{\la^3}}}
\ret
&\ge&
E_0+\frac{\la^2t}{2}-\frac{C_6\la}{4C_5}U
\ret
&\ge&
E_0+\frac{\la^2t}{4}
\ret
&\ge&
E_0+F_2\frac{U}{\la^4}G(k),
\label{DPurts>}
\end{eqnarray}
where, to get the final bound, we required
\begin{equation}
0\le U\le K_2\la t
\label{U<lambdat}
\end{equation}
with $K_2=C_5/C_6$, and
\begin{equation}
\frac{\la^6t}{U}\ge2^{2\nu+3}\dnu.
\label{cond3}
\end{equation}
We have shown the bound (\ref{DPhi>}) for
$\Phi=\Phik{u,r,t,s}$.

Finally we examine the state $\Ok$,
which is our trial state for the elementary spin-wave excitation.
By using the bounds (\ref{hOO>}) and (\ref{hOPsi<}),
and the choice (\ref{alpha}) of $\alpha(k)$,
we get
\begin{eqnarray}
D[\Ok]
&\ge&
E_0+\frac{U}{\la^4}\rbk{1-C_1\akappa-\frac{C_2}{\la}}G(k)
-\alpha(k)\rbk{B_1R\akappa t + \frac{C_3U}{\la^2}}\abs{k}
\ret
&=&
E_0+\frac{U}{\la^4}\rbk{1-C_1\akappa-\frac{C_2}{\la}}G(k)
\ret
&&
-4\frac{U}{\la^4}\rbk{C_3+B_1R\frac{\la^2t\akappa}{U}}
\rbk{C_4\akappa+\frac{C_5}{\la}+B_1R\frac{\la^2t\akappa}{U}}\abs{k}^2
\ret
&\ge&
E_0+
\frac{U}{\la^4}\rbk{1-A_1\akappa-\frac{A_2}{\la}-A_3\frac{\la^2t\akappa}{U}}
G(k)
\ret
&=&
E_0+F_2\frac{U}{\la^4}G(k)
\end{eqnarray}
with suitable positive constants $A_1$, $A_2$, and $A_3$.
Here we used the bound
\begin{equation}
\abs{k}^2\le\pi^2\sum_{i=1}^d\rbk{\sin\frac{k_i}{2}}^2
\le\frac{\pi^2}{4}G(k),
\label{k2<Gk}
\end{equation}
which follows from $\abs{k_i}\le\pi$,
and further assumed that
\begin{equation}
A_3\frac{\la^2t\akappa}{U}\le1.
\label{cond4}
\end{equation}
We have thus confirmed the desired bound (\ref{DPhi>}) for 
all $\Phi\in\Bk$.
This means that the desired lower bound (\ref{ESW>}) for the
spin-wave excitation energy has been proved.

It remains to examine the conditions for the model parameters
assumed in the proof.
The  assumptions made during the proof are
(\ref{cond1}), (\ref{cond2}), (\ref{U<lambdat}), (\ref{cond3}), and
(\ref{cond4}).
Among them  (\ref{U<lambdat}) and (\ref{cond4}) are explicitly
assumed in the statement of the theorem.

Since we shall choose $\la_3$ so that $\la_3\ge\la_4\ge\la_0$,
the conditions about $\la$ and $\kappa$ stated in
Lemmas~\ref{meb1Lemma} and \ref{meb2Lemma}
are satisfied.

Let us set
\begin{equation}
\la_3=\max\cbk{
\la_0,\la_4,\rbk{K_2\,2^{\nu+3}\dnu}^{1/5},\rbk{2^{2\nu+3}\dnu}^{1/4},
(K_2/K_4)^{1/2},(K_3/A_3)^{1/2}
}.
\label{la3}
\end{equation}
From the assumption $\la\ge\la_3$ (with the above $\la_3$)
and the assumed (\ref{U<lambdat}) and (\ref{cond4}),
we can verify that the conditions
(\ref{cond1}), (\ref{cond2}), (\ref{cond3}), and
$K_3\akappa t\le U\le K_4\la^3 t$ 
(which is required in Lemma~\ref{meb2Lemma})
are satisfied.
Finally the conditions $\la\ge\la_0$ and $\la\ge\la_4$ required in
Lemma~\ref{meb1Lemma} and Lemma~\ref{meb2Lemma}, respectively, are
satisfied since $\la\ge\la_4\ge\la_0$. 
This completes the proof of the theorem.
\subsection{Proof of Theorem~\protect\ref{stabilityTh}}
\label{secproofstab}
We now prove our main theorem which states the local stability
of the ferromagnetic ground states.

Theorem~\ref{stabilityTh} follows from the following statement
which has more  general (but more complicated)
conditions.

\begin{lemma}
\label{stLemma}
The local stability 
inequality (\ref{stabineq}) is valid if either i) or ii) below
is satisfied.
\par\noindent
i) $\la\ge\la_3$,
$\akappa\le\kappa_0$,
$A_1\abs{\kappa}+A_2\la^{-1}+A_3\la^2t\akappa U^{-1}<1$,
and 
$0<U\le K_2\la t$,
\par\noindent
ii)  $\la\ge\la_3$,
$\akappa\le\kappa_0$
$A_1\abs{\kappa}+A_2\la^{-1}+A_3\la\akappa (K_2)^{-1}<1$,
and 
$U\ge K_2\la t$.
\par\noindent
The constants 
$A_1$, $A_2$, and $A_3$ are the same as those appeared 
in Theorem~\ref{Ek>Th}.
\end{lemma}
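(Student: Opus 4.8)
The plan is to deduce the local stability inequality $\Emin(\Smax-1)>\Emin(\Smax)$ from the spin-wave lower bound (Theorem~\ref{Ek>Th}), after reducing the sector $\Stot=\Smax-1$ to the momentum spaces $\calH_k$, and then to obtain case ii) from case i) by a monotonicity argument. Since $H$ commutes with $({\bf S}_{\rm tot})^2$, $\Sztot$ and every translation $T_x$, an energy minimizer with $\Stot=\Smax-1$ may be chosen with $\Sztot=\Smax-1$ and definite momentum $k$, hence in some $\calH_k$; thus $\Emin(\Smax-1)$ is the minimum over $k\in\calK$ of the lowest energy of $H$ on the $\Stot=\Smax-1$ part of $\calH_k$. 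For $k\ne o$, Theorem~\ref{Ek>Th} already bounds \emph{all} of $\calH_k$: $\ESW\ge E_0+F_2\frac{U}{\la^4}G(k)$, and under case~i) one has $F_2>0$ — this is precisely the assumed inequality $A_1\akappa+A_2\la^{-1}+A_3\la^2t\akappa U^{-1}<1$, and its other requirements $\la\ge\la_3$, $\akappa\le\kappa_0$, $A_3\la^2t\akappa\le U\le K_2\la t$ are all contained in case~i) (note $A_3\la^2t\akappa U^{-1}<1$ gives $U>A_3\la^2t\akappa$) — while $G(k)>0$ for $k\ne o$ because the vectors $\{f+g:f\in\calF_o,\,g\in\calF_f\}$ generate $\Zd$. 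Hence every state of $\calH_k$ with $k\ne o$ has energy $\ge\ESW>E_0=\Emin(\Smax)$.

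The sector $k=o$ is the delicate one, since $G(o)=0$ makes the crude bound vacuous. Here $\Omega(0)$ is proportional to the spin-lowered ferromagnetic state $S^-_{\rm tot}\UP=\sum_{x\in\Lao}a^\dagger_{x,\dn}b_{x,\up}\UP$ (cf.~(\ref{Omega})), which is an exact eigenstate of $H$ with eigenvalue $E_0$; since it involves only lowest-band orbitals, $(\Hhop-\Hhopt)$ annihilates it, so $\Htil\,\Omega(0)=E_0\,\Omega(0)$ as well. In the matrix $(h[\Psi,\Phi])_{\Psi,\Phi\in\calB_o}$ (where $\calB_o$ is the basis of~(\ref{Bk}) at $k=o$) the $\Omega(0)$-column is therefore $E_0$ times the unit vector, and by Lemma~\ref{matrepLemma} the $\Omega(0)$-row vanishes off the diagonal at $k=o$ because the entries~(\ref{h2}),~(\ref{h3}) carry factors $e^{-ik\cdot r}-e^{-ik\cdot s}$. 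Thus $\Omega(0)$ decouples and $\Htil$ on $\calH_o$ splits as ${\bf C}\,\Omega(0)\oplus{\rm span}(\calB_o\setminus\{\Omega(0)\})$; applying the elementary estimate behind Lemma~\ref{DPhiLemma} to the second (invariant) block, on which $D[\Phi]=\Dt[\Phi]$ since $h[\Phi,\Omega(0)]=0$, Lemma~\ref{meb2Lemma} gives that its $\Htil$-spectrum lies above $E_0+\min\{\la^2t/2,\,U/2\}$. Comparing this block decomposition with the orthogonal decomposition of $\calH_o$ into $({\bf S}_{\rm tot})^2$-eigenspaces — using that the minimum of $\Htil$ on the $\Stot=\Smax$ part is $E_0$, attained only by $\Omega(0)$, so both decompositions remove one copy of $E_0$ from ${\rm spec}(\Htil|_{\calH_o})$ — shows the $\Stot=\Smax-1$ part of $\calH_o$ has $\Htil$-spectrum above $E_0+\min\{\la^2t/2,U/2\}$; since $H\ge\Htil$ by~(\ref{H>Htil}),~(\ref{Htil}) and both commute with $({\bf S}_{\rm tot})^2$ and $\Sztot$, the same holds for $H$. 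Together with the $k\ne o$ case this proves~(\ref{stabineq}) under i).

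For case~ii) I would argue by monotonicity: for fixed $\Phi$ in the ($U$-independent) sector $\Stot=\Smax-1$, $(\Phi,H\Phi)=(\Phi,\Hhop\Phi)+U(\Phi,\sum_x n_{x,\up}n_{x,\dn}\Phi)$ is nondecreasing in $U$, so $\Emin(\Smax-1)$ is nondecreasing in $U$, whereas $\Emin(\Smax)=L^d\tau_{o,o}$ is $U$-independent. At $U=K_2\la t$ the inequality required by case~i) reads $A_1\akappa+A_2\la^{-1}+A_3\la\akappa/K_2<1$, which is exactly the hypothesis of case~ii); hence~(\ref{stabineq}) holds at $U=K_2\la t$, and therefore for all $U\ge K_2\la t$. (Alternatively, for $k\ne o$ one may invoke Corollary~\ref{Ek>Coro} directly, whose prefactor $F_3$ is positive under~ii).) Theorem~\ref{stabilityTh} then follows by choosing $\la_2\ge\la_3$, $\kappa_1\le\kappa_0$ and $p_1,K_1$ so that the hypotheses $\la\ge\la_2$, $\akappa\le\kappa_1$, $\la\akappa\le p_1$, $U\ge K_1\la^2t\akappa$ imply i) when $U\le K_2\la t$ (using $A_3\la^2t\akappa/U\le A_3/K_1$, $A_1\akappa\le A_1\kappa_1$, $A_2/\la\le A_2/\la_2$) and ii) when $U\ge K_2\la t$ (using $A_3\la\akappa/K_2\le A_3p_1/K_2$), keeping $\la$ large enough that the side conditions $K_3t\akappa\le U\le K_4\la^3t$ of Lemma~\ref{meb2Lemma} also hold throughout $U\le K_2\la t$.

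The step I expect to be the main obstacle is this $k=o$ analysis: one must recognize $\Omega(0)$ as an exact eigenstate that genuinely decouples from the rest of $\calB_o$, and convert the $\Dt$-estimates of Lemma~\ref{meb2Lemma} on the reduced basis into a strict spectral lower bound on the $\Stot=\Smax-1$ subspace of $\calH_o$ — conceptually routine but requiring care because $\calB_o$ is not orthonormal, so ${\rm span}(\calB_o\setminus\{\Omega(0)\})$ is not literally the orthogonal complement of $\Omega(0)$. Everything else is a direct appeal to Theorem~\ref{Ek>Th} (or Corollary~\ref{Ek>Coro}) and Lemma~\ref{meb2Lemma}, together with routine tracking of the constants $\la_2,\kappa_1,p_1,K_1$.
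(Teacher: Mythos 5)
Your proof is correct and follows essentially the same route as the paper's: decompose into momentum sectors, apply Theorem~\ref{Ek>Th} for $k\ne o$ (whose hypotheses are exactly what case~i) supplies, with $F_2>0$ reading off the assumed inequality and $G(k)>0$ for $k\ne o$ because the set $\{f+g\}$ contains all $\pm e_i$), handle $k=o$ by decoupling $\Omega(o)$ and bounding the remaining block with the $\Dt$ estimates of Lemma~\ref{meb2Lemma}, and reduce case~ii) to case~i) at $U=K_2\la t$ via $U$-monotonicity of $\Emin(\Smax-1)$. You spell out the $k=o$ block decomposition and the multiplicity comparison more explicitly than the paper's terse ``by repeating the argument in the proof of Lemma~\ref{DPhiLemma}'', but the underlying mechanism is the same, and your sharper constant $\min\{\la^2t/2,U/2\}$ (using $D=\Dt$ at $k=o$) is a harmless improvement over the paper's quoted $\min\{\la^2t/2,U/4,\la^2t/4\}$.
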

The Lemma actually is the  most natural way of expressing our stability
theorem. 
The conditions (\ref{st1}) and (\ref{st2}) in Theorem~\ref{stabilityTh}
were introduced to give an easily accessible sufficient condition
for the conditions i) or ii) in Lemma~\ref{stLemma}.
\bigno
\begin{proof}{Proof of Theorem~\ref{stabilityTh},
given Lemma~\ref{stLemma}}
We set 
$\la_2=\max\cbk{\la_3,A_2/4}$, 
\newline
$\kappa_1=\min\cbk{\kappa_0,(4A_1)^{-1}}$, $p_1=K_2(4A_3)^{-1}$,
and $K_1=4A_3$.
Suppose that the conditions in Theorem~\ref{stabilityTh}
are satisfied.

We first assume $0\le U\le K_2\la t$.
Then we have
$A_1\akappa\le1/4$, $A_2/\la\le1/4$, and
$A_3\la^2t\akappa/U\le1/4$.
It is obvious that all the conditions in i) are satisfied.

Next we assume $U\ge K_2\la t$.
Again we have 
$A_1\akappa\le1/4$, $A_2/\la\le1/4$, and
$A_3\la \akappa/K_2\le1/4$.
The conditions in ii) are satisfied.
\end{proof}

In what follows we prove Lemma~\ref{stLemma}.

For each state $\Phi$ which is an eigenstate of 
$({\bf S}_{\rm tot})^2$ with $\Stot=\Smax-1$,
we can take its $SU(2)$ rotation $\widetilde{\Phi}$
which satisfies 
$\Sztot\widetilde{\Phi}=(\Smax-1)\widetilde{\Phi}$.
Since $\Phi$ and $\widetilde{\Phi}$ have the same energy,
it suffices to concentrate on the space
\begin{equation}
\calH_{\Smax-1}=\set{\Phi}{\Sztot\Phi=(\Smax-1)\Phi},
\label{spacedecomp}
\end{equation}
and prove the stability theorem.
By using $\calH_k$ defined in (\ref{Hilbk}),
the above space is decomposed as
\begin{equation}
\calH_{\Smax-1}=\bigoplus_{k\in\calK}\calH_k.
\end{equation}

We first assume that the condition i) in Lemma~\ref{stLemma} is satisfied.
Then the assumptions of Theorem~\ref{Ek>Th} are
automatically satisfied, and we also have $F_2>0$.
Thus for any $k\in\calK$ such that
$k\ne o=(0,\ldots,0)$ the lowest energy $\ESW$ 
in the sector $\calH_k$ satisfies
\begin{equation}
\ESW>E_0=\Emin(\Smax).
\end{equation}
Recalling the decomposition (\ref{spacedecomp}),
one finds that this proves the desired bound
\newline
$\Emin(\Smax-1)>\Emin(\Smax)$
except in the sector $\calH_o$.

To deal with the sector $\calH_o$ is not hard.
We note that the state $\Omega(o)$ is written as
\begin{equation}
\Omega(o)=\frac{1}{\alpha(k)}S^-_{\rm tot}\UP,
\label{O=S-P}
\end{equation}
where $S^-_{\rm tot}=S^{(1)}_{\rm tot}-i S^{(1)}_{\rm tot}$ 
is the spin lowering operator.
This means that $\Omega(o)$ is nothing but one of
the ``ferromagnetic ground states'', and has the total
spin $\Stot=\Smax$.
Let $\Emin(\Smax-1,o)$ be the lowest energy in the sector
$\calH_o$ with $\Stot=\Smax-1$.
Then, by repeating the argument in the proof of 
Lemma~\ref{DPhiLemma}, we find that
\begin{equation}
\Emin(\Smax-1,o)\ge\min_{\Phi\in\calB_o\bs\cbk{\Omega(o)}}
D[\Phi].
\end{equation}
The right-hand side can be bounded from below 
by using the inequalities (\ref{DPuA>}), (\ref{DPur>2}),
and (\ref{DPurts>}).
We get  
\begin{equation}
\Emin(\Smax-1,o)
\ge E_0+\min\cbk{\frac{\la^2t}{2},\frac{U}{4},\frac{\la^2t}{4}}
>E_0,
\end{equation}
which completes the proof of the desired local stability inequality
(\ref{stabineq}).

The only remaining task is to prove the inequality (\ref{stabineq})
when the condition ii) in Lemma~\ref{stLemma} is satisfied\footnote{
The following argument has been brought to the author by
Andreas Mielke.
}.
Note that $U$ is not bounded from above in this case.

The key ingredient in the extension is to realize that
$\Emin(\Smax)$ does not depend on $U$, while
$\Emin(\Smax-1)$ is increasing in $U$.
The latter fact follows by noting that $\Hint$ is increasing in
$U$ (as an operator), both $\Hhop$ and $\Hint$ commute with
the total spin operator, and $\Emin(\Smax-1)$ is the lowest
energy in the  sector with the fixed $\Stot$.

Suppose that the condition ii) in the Remark after 
Theorem~\ref{stabilityTh} is satisfied.
Then by setting $U=K_2\la t$, the condition i) in Lemma~\ref{stLemma}
is satisfied, and we have $\Emin(\Smax-1)>\Emin(\Smax)$.
Because of the increasing property of $\Emin(\Smax-1)$,
this inequality remains valid if we increase $U$ with other
parameters kept fixed.
This proves the local stability inequality (\ref{stabineq}).

\Section{Representation of the Matrix Elements}
\label{Secmatrep}
Here we will prove Lemma~\ref{matrepLemma}
about the representation of the matrix elements involving the states
$\Ok$, $\Phik{u,r}$, and $\Phik{u,r,t,s}$.

\subsection{Treatment of the Hopping Hamiltonian}
\label{secMRhop}
By operating $\Hhopt$ in the form (\ref{HtilhopRep}) to the
state $\Phik{o,r}$ (see (\ref{Phiur})),
and using the anticommutation relations
(\ref{ac3}), (\ref{ac4}), we get
\begin{eqnarray}
\Hhopt\Phik{o,r}
&=&
\rbk{
\sumtwo{y,v\in\Lao}{\sigma=\up,\dn}\tau_{y,v}\,
\ad_{y,\sigma}\ad_{v,\sigma}
}
\sum_{p\in\Lao}\eikp\ad_{p,\dn}b_{p+r,\up}\UP
\ret
&=&
-\sum_{p,v\in\Lao}\tau_{p+r,v}\,\eikp
\ad_{p,\dn}b_{v,\up}\UP
\ret
&&
+\sum_{p,y\in\Lao}\tau_{y,y}\,\eikp
\ad_{p,\dn}b_{p+r,\up}\UP
\ret
&&
+\sum_{p,y\in\Lao}\tau_{y,p}\,\eikp
\ad_{y,\dn}b_{p+r,\up}\UP.
\end{eqnarray}
We shall make the change of variables
$p=x$, $v=x+s$ (with $x,s\in\Lao$) in the first term,
and the change of variables 
$y=x$, $p=x+s-r$ (with $x,s\in\Lao$) in the second term.
By also using (\ref{E0}), we have
\begin{eqnarray}
\Hhopt\Phik{o,r}
&=&
E_0\Phik{o,r}
-\sum_{x,s\in\Lao}\tau_{r,s}\eikx\ad_{x,\dn}b_{x+s,\up}\UP
\ret
&&
+\sum_{x,s\in\Lao}\tau_{o,s-r}\eik{(s-r)}\eikx
\ad_{x,\dn}b_{x+s,\up}\UP
\ret
&=&
E_0\Phik{o,r}
+\sum_{s\in\Lao}\tau_{r,s}\rbk{\eik{(s-r)}-1}
\Phik{o,s},
\label{HopPhior}
\end{eqnarray}
where we made use of the translation invariance of $\tau_{x,y}$.
Following the definition (\ref{matrixelement})
of matrix elements, we define the matrix elements $\hhop[\Psi,\Phi]$
by the unique expansion
\begin{equation}
\Hhopt\Phi=\sum_{\Psi\in\Bk}\hhop[\Psi,\Phi]\Psi.
\label{matrixelement2}
\end{equation}
By comparing (\ref{HopPhior}) with this definition, 
we find
\begin{equation}
\hhop[\Phik{o,s},\Phik{o,r}]=
\delta_{r,s}\,E_0+\tau_{r,s}\rbk{\eik{(s-r)}-1}.
\label{hhop0}
\end{equation}
By recalling $\Ok=\alpha(k)^{-1}\,\Phik{o,o}$, (\ref{hhop0})
yields
\begin{equation}
\hhop[\Ok,\Ok]=E_0,
\label{hhop1}
\end{equation}
\begin{equation}
\hhop[\Ok,\Phik{o,r}]=\alpha(k)\rbk{\emik{r}-1}\tau_{r,o},
\label{hhop2}
\end{equation}
and
\begin{equation}
\hhop[\Phik{o,r},\Ok]=\frac{1}{\alpha(k)}\rbk{\eik{r}-1}\tau_{o,r}.
\label{hhop3}
\end{equation}

\subsection{Treatment of the Interaction Hamiltonian}
\label{secMRint}
Before calculating the matrix elements of the interaction
Hamiltonian, we recall the representation (\ref{HintRep}),
and  decompose it as 
$\Hint=\Hinto+\Hintt$ with
\begin{equation}
\Hinto=
\sum_{y\in\Lao}\sum_{v,w,z\in\La}
\Ut_{y,v;w,z}\,\ad_{y,\up}\ad_{v,\dn}b_{w,\dn}b_{z,\up},
\label{Hint1}
\end{equation}
and
\begin{equation}
\Hintt=
\sum_{y\in\La'}\sum_{v,w,z\in\La}
\Ut_{y,v;w,z}\,\ad_{y,\up}\ad_{v,\dn}b_{w,\dn}b_{z,\up}.
\label{Hint2}
\end{equation}
Note that $\Hinto$ and $\Hintt$ are not hermitian.

We apply $\Hinto$ onto $\Phik{u,r}$, and simplify the expression by
using the anticommutation relations (\ref{ac3}), (\ref{ac4}) to get
\begin{eqnarray}
\Hinto\Phik{u,r}&=&
\sumtwo{y\in\Lao}{v,w,z\in\La}
\Ut_{y,v;w,z}\,\ad_{y,\up}\ad_{v,\dn}b_{w,\dn}b_{z,\up}
\sum_{p\in\Lao}\eikp\ad_{p+u,\dn}b_{p+r,\up}\UP
\ret
&=&
\sumtwo{v\in\La}{y,p\in\Lao}\Ut_{y,v;u+p,y}\,\eikp
\ad_{v,\dn}b_{p+r,\up}\UP
\ret&&
-\sumtwo{v\in\La}{z,p\in\Lao}\Ut_{p+r,v;p+u,z}\,\eikp
\ad_{v,\dn}b_{z,\up}\UP.
\label{Hint1Pur}
\end{eqnarray}
We note that $v\in\La$ can be uniquely decomposed as
$v=x+u'$ with $x\in\Lao$ and $u'\in\calU$.
we further make the change of variables $p=x+w-r$,
$y=x+s-r$ (with $w,s\in\Lao$) in the first term (in the right-hand
side of (\ref{Hint1Pur})),
and the change of variables
$z=x+w$, $p=x-s$ (with $w,s\in\Lao$) in the second term.
Then we get
\begin{eqnarray}
\Hinto\Phik{u,r}&=&
\sumtwo{u'\in\calU}{x,s,w\in\Lao}
\Ut_{x+s-r,x+u';x+u+w-r,x+s-r}\,\eik{(x+w-r)}
\ad_{x+u',\dn}b_{x+w,\up}\UP
\ret
&&
-\sumtwo{u'\in\calU}{x,s,w\in\Lao}
\Ut_{x+r-s,x+u';x+u-s,x+w}\,\eik{(x-s)}
\ad_{x+u',\dn}b_{x+w,\up}\UP
\ret
&=&
\sumtwo{u'\in\calU}{s,w\in\Lao}
\rbk{\Ut_{s,u'+r;u+w,s}\,\eik{(w-r)}-\Ut_{r,u'+s;u,w+s}\,\emik{s}}
\Phik{u',w},
\label{Hint1Pur2}
\end{eqnarray}
where we used the translation invariance of $\Ut_{y,v;w,z}$.

We again define the matrix elements $\hint[\Psi,\Phi]$
by the unique expansion
\begin{equation}
\Hint\Phi=\sum_{\Psi\in\Bk}\hint[\Psi,\Phi]\Psi.
\label{matrixelement3}
\end{equation}
Then  we can read off from (\ref{Hint1Pur2}) that
\begin{equation}
\hint[\Phik{u',w},\Phik{u,r}]
=\sum_{s\in\Lao}
\rbk{\Ut_{s,u'+r;u+w,s}\,\eik{(w-r)}-\Ut_{r,u'+s;u,w+s}\,\emik{s}}.
\label{hint1}
\end{equation}

By setting $u'=w=o$ in (\ref{hint1}), we get
\begin{equation}
\hint[\Ok,\Phik{u,r}]
=\alpha(k)\sum_{s\in\Lao}(\emik{r}-\emik{s})\Ut_{s,r;u,s}.
\label{hint2}
\end{equation}
Next, we set $u=r=o$ in (\ref{hint1}) to get
\begin{eqnarray}
\hint[\Phik{u',w},\Ok]
&=&\frac{1}{\alpha(k)}\sum_{s\in\Lao}
\rbk{\Ut_{s,u';w,s}\eik{w}-\Ut_{o,u'+s;o,w+s}\emik{s}}
\ret
&=&\frac{1}{\alpha(k)}\sum_{s\in\Lao}(\eik{w}-\eik{s})
\Ut_{u',s;s,w},
\label{hint3}
\end{eqnarray}
where in the second term, we used the translation invariance
and the symmetry as
\newline
$\Ut_{o,u'+s;o,w+s}=\Ut_{-s,u';-s,w}=\Ut_{u',-s;-s,w}$,
and then replaced $s\rightarrow-s$.
Finally we set $u=u'=r=w=o$ in (\ref{hint1}) to get
\begin{eqnarray}
\hint[\Ok,\Ok]
&=&\sum_{s\in\Lao}\rbk{1-\emik{s}}\Ut_{s,o;s,o}
\ret
&=&\frac{1}{2}\sum_{s\in\Lao}\rbk{2-\emik{s}-\eik{s}}\Ut_{s,o;s,o}
\ret
&=&2\sum_{s\in\Lao}\rbk{\sin\frac{k\cdot s}{2}}^2\Ut_{s,o;s,o},
\label{hint4}
\end{eqnarray}
where we used 
$\Ut_{s,o;s,o}=\Ut_{o,-s;o,-s}=\Ut_{-s,o;-s,o}$ which follows from
the translation invariance and the symmetry of $\Ut_{y,v;w,z}$.
Note that we do not assume any reflection invariance.

We are now ready to prove some of the expressions in 
Lemma~\ref{matrepLemma}.
The expression (\ref{h1}) follows by summing (\ref{hhop1})
and (\ref{hint4}), the expression (\ref{h2}) follows by
summing (\ref{hhop2}) and (\ref{hint2}), and the
expression (\ref{h4}) follows by summing (\ref{hhop3}) and (\ref{hint3}).

We next calculate the action of (\ref{Hint2}) as
\begin{eqnarray}
\Hintt\Phik{o,o}&=&
\sumtwo{y\in\La'}{v,w,z\in\La}
\Ut_{y,v;w,z}\,\ad_{y,\up}\ad_{v,\dn}b_{w,\dn}b_{z,\up}
\sum_{p\in\Lao}
\eikp\ad_{p,\dn}b_{p,\up}\UP
\ret
&=&
-\sumthree{y\in\La'}{v\in\La}{p,z\in\Lao}
\Ut_{u,v;p,z}\,\eikp
\ad_{v,\dn}\ad_{y,\up}b_{z,\up}b_{p,\up}\UP.
\label{Hint2oo}
\end{eqnarray}
In the final expression, we note that the summand is vanishing for $p=z$,
and  decompose the sum over $p,z$ as
\begin{equation}
\sumtwo{p,z\in\Lao}{p\ne z}(\cdots)
=\sumtwo{p,z\in\Lao}{p-z\in\Laop}(\cdots)
+\sumtwo{p,z\in\Lao}{z-p\in\Laop}(\cdots)
\label{pzdecomp}
\end{equation}
where $\Laop$ is defined in (\ref{La+o}).
We then switch 
the variables $z$ and $p$ in the second sum
to get
\begin{equation}
\Hintt\Phik{o,o}=
\sumtwo{p,z\in\Lao}{p-z\in\Laop}\rbk{\eikp-\eik{z}}
\sumtwo{v\in\La}{y\in\La'}
\Ut_{y,v;p,z}\,\ad_{v,\dn}\ad_{y,\up}b_{z,\up}b_{p,\up}\UP.
\label{Hint2oo2}
\end{equation}
We write $v=x+u$ with $x\in\Lao$ and $u\in\calU$,
and make the change of variables
$y=x+r$, $p=x+s$, and $z=x+t$ 
(with $r\in\La'$, $s,t\in\Lao$ such that $s-t\in\Laop$)
to get
\begin{eqnarray}
\Hintt\Phik{o,o}&=&
\sumfour{u\in\calU}{r\in\La'}{s,t\in\Lao}{s-t\in\Laop}
\rbk{\eik{s}-\eik{t}}\Ut_{r,u;s,t}
\sum_{x\in\Lao}\eikx
\ad_{x+u,\dn}\ad_{x+r,\up}b_{x+t,\up}b_{x+s,\up}\UP
\ret
&=&
\sumfour{u\in\calU}{r\in\La'}{s,t\in\Lao}{s-t\in\Laop}
\rbk{\eik{s}-\eik{t}}\Ut_{r,u;s,t}\,
\Phik{u,r,t,s}.
\end{eqnarray}
This leads us to
\begin{equation}
\hint[\Phik{u,r,t,s},\Ok]=\frac{1}{\alpha(k)}
\rbk{\eik{s}-\eik{t}}\Ut_{r,u;s,t},
\end{equation}
which gives the desired expression (\ref{h5}) since there 
are no corresponding
contributions from $\Hhopt$ or $\Hinto$.

To prove the only remaining expression (\ref{h3}), we calculate
\begin{eqnarray}
\Hinto\Phik{u,r,t,s}&=&
\sumtwo{v\in\La}{x\in\Lao}
\Ut_{x+t,v;x+u,x+r}\,\eikx\ad_{v,\dn}b_{x+s,\up}\UP
\ret
&&
-\sumtwo{v\in\La}{x\in\Lao}
\Ut_{x+s,v;x+u,x+r}\,\eikx\ad_{v,\dn}b_{x+t,\up}\UP
\ret
&&+(\mbox{other terms}),
\end{eqnarray}
where (other terms) do not contain any contributions to $\Phik{o,o}$.
Since we are interested in calculating the matrix elements
$\hint[\Phik{o,o},\Phik{u,r,t,s}]$,
we shall pick up only those terms which have some 
contributions to $\Phik{o,o}$.
This allows us to sum only over $v\in\Lao$ instead of $v\in\La$.
We can also consider only $x$ such that $x+s=v$ in the first term, and
$x+t=v$ in the second term.
Then we get
\begin{eqnarray}
\Hinto\Phik{u,r,t,s}&=&
\sum_{v\in\Lao}
\Ut_{v+t-s,v;v+u-s,v+r-s}\,\eik{(v-s)}\ad_{v,\dn}b_{v,\up}\UP
\ret
&&
-\sum_{v\in\Lao}
\Ut_{v+s-t,v;v+u-t,v+r-t}\,\eik{(v-t)}\ad_{v,\dn}b_{v,\up}\UP
\ret
&&+(\mbox{other terms})
\ret
&=&
\Ut_{s,t;u,r}\rbk{\emik{s}-\emik{t}}\Phik{o,o}
\ret
&&+(\mbox{other terms}),
\end{eqnarray}
which implies
\begin{equation}
\hint[\Ok,\Phik{u,r,t,s}]=\alpha(k)\rbk{\emik{s}-\emik{t}}\Ut_{s,t;u,r}.
\end{equation}
Since there are no corresponding contributions from $\Hhopt$ or $\Hinto$,
this gives the desired expression (\ref{h3}).
This completes the proof of Lemma~\ref{matrepLemma}.

\Section{Bounds on the Matrix Elements}
\label{SecMatrixBound}
Here we prove Lemmas~\ref{meb1Lemma} and \ref{meb2Lemma}
which state various bounds for the matrix elements and their sums.
In the proof we shall make use of the properties of the 
localized bases summarized in
Lemmas~\ref{basisLemma}, \ref{tauLemma}, and \ref{dualbasisLemma},
which will be proved in Section~\ref{SecBasis}.
In order to make use of these Lemmas, we have to assume
that $\la\ge\la_0$ and $\akappa\la^{-2}\le r_0$.
The bound for $\la$ is assumed in the statement of Lemma~\ref{meb1Lemma}.
In Lemma~\ref{meb2Lemma}, we assumed the stronger condition
$\la\ge\la_4$.
(We will choose $\la_4$ so that $\la_4\ge\la_0$.)
The bound $\akappa\la^{-2}\le r_0$ follows from the assumption $\akappa\le\kappa_0$
in Lemmas~\ref{meb1Lemma} and \ref{meb2Lemma},
since we shall now set $\kappa_0=(\la_0)^{2}r_0$.

\subsection{Bound for $h[\Ok,\Ok]$}
We first prove the lower bound (\ref{hOO>}) for 
${\rm Re}\sbk{h[\Ok,\Ok]}$.
In fact we prove the stronger estimate
\begin{equation}
\abs{h[\Ok,\Ok]-\rbk{E_0+\frac{U}{\la^4}G(k)}}
\le\frac{U}{\la^4}\rbk{C_1\akappa+\frac{C_2}{\la}}G(k),
\label{hOObound}
\end{equation}
which implies the desired  (\ref{hOO>}).

With the goal (\ref{hOObound}) in mind,
we will bound the quantity
\begin{eqnarray}
\tilde{G}(k)&=&
\frac{\la^4}{U}\rbk{h[\Ok,\Ok]-E_0}
\ret
&=&
2\frac{\la^4}{U}\sum_{s\in\Lao}\rbk{\sin\frac{k\cdot s}{2}}^2
\Ut_{s,o;s,o}
\ret
&=&
2\la^4\sumtwo{s\in\Lao}{x\in\La}
\rbk{\sin\frac{k\cdot s}{2}}^2
\phit{o}{x}\rbks{\phis{o}{x}}\phit{s}{x}\rbks{\phis{s}{x}},
\label{hOO1}
\end{eqnarray}
where we used the expression (\ref{h1}) for the matrix element,  
and the representation (\ref{Util}) for the 
effective interaction.
Let us introduce
\begin{equation}
\etas{x}{y}=\rbks{\phis{x}{y}}-\psis{x}{y},\quad
\etat{x}{y}=\phit{x}{y}-\psis{x}{y},
\end{equation}
where $\psis{x}{y}$ is the localized basis state (\ref{psi1}), (\ref{psi2})
of the flat-band model.
Then (\ref{hOO1}) can be written as
\begin{eqnarray}
\tilde{G}(k)&=&
2\la^4\sumtwo{s\in\Lao}{x\in\La}
\rbk{\sin\frac{k\cdot s}{2}}^2
(\psis{o}{x}+\etat{o}{x})(\psis{o}{x}+\etas{o}{x})
(\psis{s}{x}+\etat{s}{x})(\psis{s}{x}+\etas{s}{x})
\ret
&=&
G_0(k)+G_1(k)+G_2(k)+G_3(k)+G_4(k),
\label{hOO2}
\end{eqnarray}
where $G_i(k)$ denotes the collection of terms which contain the $i$-th power
of $\psi$'s when we expand the left-hand side.
In the following, we shall control $G_i$ for each $i=0,1,2,3$, and $4$.

We first control $G_0(k)$.
It gives the most dominant contribution as
\begin{eqnarray}
G_0(k)&=&
2\la^4\sumtwo{s\in\Lao}{x\in\La}
(\psis{o}{x})^2(\psis{s}{x})^2
\rbk{\sin\frac{k\cdot s}{2}}^2
\ret
&=&
2\sum_{f\in\calF_o}\sum_{g\in\calF_f}
\rbk{\sin\frac{k\cdot(f+g)}{2}}^2
=G(k),
\label{G0}
\end{eqnarray}
where we used the expression (\ref{psi1}) of $\psis{y}{x}$.
See (\ref{Fo}) and (\ref{Ff}) for the definitions of $\calF_o$ and $\calF_f$.

We bound the absolute value of $G_1(k)$.
One of the four terms in $G_1(k)$ is bounded as
\begin{eqnarray}
\abs{
2\la^4\sumtwo{s\in\Lao}{x\in\La}
\rbk{\sin\frac{k\cdot s}{2}}^2
\etat{o}{x}\psis{o}{x}(\psis{s}{x})^2
}
&\le&
2\la^4\frac{B_1\akappa+B_2}{\la^2}\frac{1}{\la^3}
\sum_{f\in\calF_o}\sum_{g\in\calF_f}
\rbk{\sin\frac{k\cdot(f+g)}{2}}^2
\ret
&=&
\frac{B_1\akappa+B_2}{\la}G(k),
\end{eqnarray}
where we used (\ref{reg5}) to get the bound 
$\abs{\etat{o}{x}}\le(B_1\akappa+B_2)/\la^2$.
The other three terms in $G_1(k)$ can be bounded similarly, and we get
\begin{equation}
\abs{G_1(k)}\le\frac{4B_1\akappa+2B_2}{\la}G(k).
\label{G1}
\end{equation}

We bound the absolute value of $G_2(k)$.
One of the six terms in $G_2(k)$ is bounded as
\begin{eqnarray}
&&
\abs{
2\la^4\sumtwo{s\in\Lao}{x\in\La}
\rbk{\sin\frac{k\cdot s}{2}}^2
(\psis{o}{x})^2\etas{s}{x}\etat{s}{x}
}
\ret&&
\le\la^4\sumtwo{s\in\Lao}{x\in\La}(\psis{o}{x})^2
\frac{\abs{k}^2}{2}\abs{s}\abs{\etas{s}{x}}\abs{s}\abs{\etat{s}{x}}
\ret&&\le
\la^4\frac{\abs{k}^2}{2}\rbk{
\sum_{s\in\Lao}\abs{s}\abs{\etas{s}{o}}\abs{s}\abs{\etat{s}{o}}
+\frac{1}{\la^2}\sum_{f\in\calF_o}\sum_{s\in\Lao}
\abs{s}\abs{\etas{s}{f}}\abs{s}\abs{\etat{s}{f}}
}
\ret&&\le
\la^4\frac{\abs{k}^2}{2}
\rbk{\sum_{s\in\Lao}\abs{s}\abs{\etas{s}{o}}}
\rbk{\sum_{s\in\Lao}\abs{s}\abs{\etat{s}{o}}}
\ret&&\quad
+
2\abs{k}^2\la^2\sum_{f\in\calF_o}
\rbk{\sum_{s\in\Lao}\abs{s-f}\abs{\etas{s}{f}}}
\rbk{\sum_{s\in\Lao}\abs{s-f}\abs{\etat{s}{f}}}
\ret&&\le
\abs{k}^2\la^4\rbk{\frac{1}{2}+\frac{2\abs{\calF_o}}{\la^2}}
\frac{B_1R\akappa}{\la^2}\frac{B_1R\akappa+B_2}{\la^2},
\label{G21}
\end{eqnarray}
where we used $\abs{\sin(k\cdot s/2)}\le\abs{k}\abs{s}/2$,
$\abs{s}\le2\abs{s-f}$, and the bounds (\ref{reg4}) and (\ref{reg8})
on the summability of the basis states.
Another term 
$\abs{2\la^4\sum_{s\in\Lao,x\in\La}
\rbk{\sin\frac{k\cdot s}{2}}^2
(\psis{s}{x})^2\etas{o}{x}\etat{o}{x}}$
can be bounded by the same quantity as in (\ref{G21}).

The remaining four terms in $G_2(k)$ have the 
common structure
\begin{equation}
\abs{
2\la^4\sumtwo{s\in\Lao}{x\in\La}
\rbk{\sin\frac{k\cdot s}{2}}^2
\psis{o}{x}\psis{s}{x}\etab{o}{x}\etab{s}{x}
}
=
2\la^2
\sumtwo{f\in\calF_o}{g\in\calF_f}
\abs{\etab{o}{f}}\abs{\etab{f+g}{f}}\rbk{\sin\frac{k\cdot(f+g)}{2}}^2,
\label{G22}
\end{equation}
where $\bar{\eta}$ denotes either
$\eta$ or $\tilde{\eta}$.
(The four terms are obtained by assigning $\eta$ or $\tilde{\eta}$
with each $\bar{\eta}$.)
We can bound $\abs{\etab{o}{f}}$ and $\abs{\etab{f+g}{f}}$
using (\ref{reg1}) or (\ref{reg5}) depending on whether 
$\bar{\eta}=\eta$ or $\tilde{\eta}$.
By summing the resulting bounds and (\ref{G21}), we get
\begin{eqnarray}
\abs{G_2(k)}
&\le&
\cbk{
\rbk{1+\frac{2\abs{\calF_o}}{\la^2}}B_1R\akappa
\rbk{B_1R\akappa+B_2}
}\abs{k}^2
+\la^2\rbk{\frac{4B_1\akappa+2B_2}{\la^2}}^2G(k)
\ret
&\le&
\rbk{C_1\akappa+\frac{B_3}{\la^2}}G(k),
\label{G23}
\end{eqnarray}
with constants $C_1$ and $B_3$ depending only on $d$, $\nu$, and $R$.
Here we used the assumed bounds $\la\ge\la_0$ and $\akappa\le\kappa_0$,
as well as the bound (\ref{k2<Gk}) to bound $\abs{k}^2$ by $G(k)$.

The quantities 
$G_3(k)$ and $G_4(k)$ which contain higher powers of $\eta$ or $\tilde{\eta}$
can be bounded in a similar (in fact easier) manner, and we get
\begin{equation}
\abs{G_3(k)}\le\frac{B_4\akappa}{\la^2}G(k),
\label{G3}
\end{equation}
and
\begin{equation}
\abs{G_4(k)}\le\frac{B_5}{\la^4}G(k)
\label{G4}
\end{equation}
with constants $B_4$ and $B_5$ which depend only on $d$, $\nu$, and $R$.

By summing up (\ref{G0}), (\ref{G1}), (\ref{G23}), (\ref{G3}), and (\ref{G4}),
and comparing the result with (\ref{hOO1}) and (\ref{hOO2}),
we finally get
\begin{eqnarray}
\abs{\tilde{G}(k)-G(k)}
&\le&
\rbk{
1-\frac{4B_1\akappa+2B_2}{\la}-C_1\akappa-\frac{B_3}{\la^2}
-\frac{B_4\akappa}{\la^2}-\frac{B_5}{\la^4}
}
G(k)
\ret
&\le&
\rbk{1-C_1\akappa-\frac{C_2}{\la}}G(k)
\end{eqnarray}
with a constant $C_2$ which depend only on $d$, $\nu$, and $R$.
This is nothing but the desired (\ref{hOObound}).

\subsection{Bound for $\sum\abs{h[\Ok,\Psi]}$}
We shall prove the bound (\ref{hOPsi<}) for the sum of the 
off-diagonal matrix elements $h[\Ok,\Psi]$
stated in Lemma~\ref{meb1Lemma}.
We first note that, since $\Psi$ with $h[\Ok,\Psi]\ne0$ is either
of the form $\Phik{u,r}$ of (\ref{Phiur}) or $\Phik{u,r,t,s}$ of
(\ref{Phiurts}), we can write the desired quantity as
\begin{equation}
\sum_{\Psi\in\Bk\bs\cbk{\Ok}}\abs{h[\Ok,\Psi]}
=
\sumthree{u\in\calU}{r\in\Lao}{(u,r)\ne(o,o)}
\abs{h[\Ok,\Phik{u,r}]}
+\sumfour{u\in\calU}{r\in\La'}{s,t\in\Lao}{(s-t\in\Laop)}
\abs{h[\Ok,\Phik{u,r,t,s}}.
\label{hOPsi1}
\end{equation}

To bound the first term in the right-hand side of (\ref{hOPsi1}),
we use the expression (\ref{h2}) for the matrix element to get 
\begin{eqnarray}
&&\sumthree{u\in\calU}{r\in\Lao}{(u,r)\ne(o,o)}
\abs{h[\Ok,\Phik{u,r}]}
\ret&&\le
\sum_{r\in\Lao}\alpha(k)\abs{\rbk{\emik{r}-1}\tau_{r,o}}
+\sumthree{u\in\calU}{r,s\in\Lao}{(u,r)\ne(o,o)}
\alpha(k)\abs{\rbk{\emik{r}-\emik{s}}\Ut_{s,r;u,s}}
\ret&&\le
\alpha(k)\abs{k}\sum_{r\in\Lao}\abs{r}\abs{\tau_{r,o}}
+\alpha(k)\abs{k}\sumthree{u\in\calU}{r,s\in\Lao}{(u,r)\ne(o,o)}
\abs{r-s}\abs{\Ut_{s,r;u,s}}.
\label{hOPsi2}
\end{eqnarray}
The first term in the right-hand side is readily bounded by
$\alpha(k)B_1Rt\akappa\abs{k}$ from the summability (\ref{regtau2})
of $\tau_{x,y}$.
To bound the second term, we use the representation (\ref{Util})
for $\Ut_{s,r;u,s}$ and the bound $\abs{r-s}\le\abs{r-x}+\abs{s-x}$
to get
\begin{eqnarray}
&&
\sumthree{u\in\calU}{r,s\in\Lao}{(u,r)\ne(o,o)}
\abs{r-s}\abs{\Ut_{s,r;u,s}}
\ret&&\le
U\sumthree{r,s\in\Lao}{u\in\calU}{x\in\La}
\rbk{\abs{r-x}+\abs{s-x}}
\abs{\phit{s}{x}\phit{r}{x}}\abs{\phis{u}{x}\phis{s}{x}}
\ret&&\le
U\sumtwo{x\in\La}{u\in\calU}
\cbk{
\abs{\phis{u}{x}}
\rbk{\sum_{s\in\Lao}\abs{\phit{s}{x}}}
\rbk{\sum_{r\in\Lao}\abs{r-x}\abs{\phit{r}{x}}}
\rbk{\sum_{s\in\Lao}\abs{\phis{s}{x}}}
}
\ret&&
+U\sumtwo{x\in\La}{u\in\calU}
\cbk{
\abs{\phis{u}{x}}
\rbk{\sum_{s\in\Lao}\abs{s-x}\abs{\phit{s}{x}}}
\rbk{\sum_{r\in\Lao}\abs{\phit{r}{x}}}
\rbk{\sum_{s\in\Lao}\abs{\phis{s}{x}}}
}
\ret&&
=2U\sumtwo{x\in\La}{u\in\calU}
\cbk{
\abs{\phis{u}{x}}
\rbk{\sum_{s\in\Lao}\abs{s-x}\abs{\phit{s}{x}}}
\rbk{\sum_{s\in\Lao}\abs{\phit{s}{x}}}
\rbk{\sum_{s\in\Lao}\abs{\phis{s}{x}}}
}
\ret&&
=2U\sumtwo{x\in\La'}{u\in\calU}\cbk{\cdots}
+2U\sumtwo{x\in\Lao}{u\in\calU}\cbk{\cdots}
\ret&&
\le
2U\sumtwo{x\in\La'}{u\in\calU}
\abs{\phis{u}{x}}
\rbk{2^\nu\frac{\sqrt{\nu}}{2\la}+\frac{B_1R\akappa+B_2}{\la^2}}
\rbk{\frac{2^\nu}{\la}+\frac{B_1\akappa+B_2}{\la^2}}
\rbk{\frac{2^\nu}{\la}+\frac{B_1\akappa}{\la^2}}
\ret&&
+2U\sumtwo{x\in\Lao}{u\in\calU}
\abs{\phis{u}{x}}
\rbk{\frac{B_1R\akappa+B_2}{\la^2}}
\rbk{1+\frac{B_1\akappa+B_2}{\la^2}}
\rbk{1+\frac{B_1\akappa}{\la^2}}
\ret&&
\le B_6\frac{U}{\la^2},
\label{hOPsi3}
\end{eqnarray}
where the constant $B_6$ depends only on $d$, $\nu$, and $R$.
We have used the expressions (\ref{psi1}), (\ref{psi2}) for $\psis{u}{x}$,
and the bounds (\ref{reg3}), (\ref{reg7}), (\ref{reg8}), and (\ref{reg1})
for the sum of the basis states.

Next we bound the second term in the right-hand side of (\ref{hOPsi1}).
we again use the expression (\ref{h3}) and the representation (\ref{Util})
to get
\begin{eqnarray}
\sumfour{u\in\calU}{r\in\La'}{s,t\in\Lao}{(s-t\in\Laop)}
\abs{h[\Ok,\Phik{u,r,t,s}]}
&\le&
\alpha(k)\sumthree{u\in\calU}{r\in\La'}{s,t\in\Lao}
\abs{\rbk{\emik{s}-\emik{t}}\Ut_{s,t;u,r}}
\ret&\le&
\alpha(k)\abs{k}U
\sumfour{x\in\La}{u\in\calU}{r\in\La'}{s,t\in\Lao}
\abs{s-t}
\abs{\phit{s}{x}\phit{t}{x}}\abs{\phis{u}{x}\phis{r}{x}}
\label{hOPsi3d}
\end{eqnarray}
To bound the sum, we again use 
$\abs{s-t}\le\abs{s-x}+\abs{t-x}$,
and the symmetry between $s$ and $t$ as we did in 
(\ref{hOPsi3}) to get
\begin{eqnarray}
&&
\sumfour{x\in\La}{u\in\calU}{r\in\La'}{s,t\in\Lao}
\abs{s-t}
\abs{\phit{s}{x}\phit{t}{x}}\abs{\phis{u}{x}\phis{r}{x}}
\ret&&\le
2\sumtwo{x\in\La}{u\in\calU}
\cbk{
\abs{\phis{u}{x}}
\rbk{\sum_{s\in\Lao}\abs{s-x}\abs{\phit{s}{x}}}
\rbk{\sum_{t\in\Lao}\abs{\phit{t}{x}}}
\rbk{\sum_{r\in\La'}\abs{\phis{r}{x}}}
}
\ret&&
=2\sumtwo{x\in\La'}{u\in\calU}\cbk{\cdots}
+2\sumtwo{x\in\Lao}{u\in\calU}\cbk{\cdots}
\ret&&
\le
2\sumtwo{x\in\La'}{u\in\calU}
\abs{\phis{u}{x}}
\rbk{2^\nu\frac{\sqrt{\nu}}{2\la}+\frac{B_1R\akappa+B_2}{\la^2}}
\rbk{\frac{2^\nu}{\la}+\frac{B_1\akappa+B_2}{\la^2}}
\rbk{1+\frac{B_1\akappa}{\la^2}}
\ret&&
\quad
+2\sumtwo{x\in\Lao}{u\in\calU}
\abs{\phis{u}{x}}
\rbk{\frac{B_1R\akappa+B_2}{\la^2}}
\rbk{1+\frac{B_1R\akappa+B_2}{\la^2}}
\rbk{\frac{\abs{\calF_o}}{\la}+\frac{B_1\akappa}{\la^2}}
\ret&&
\le B_7\frac{1}{\la^2}.
\label{hOPsi4}
\end{eqnarray}

By combining (\ref{hOPsi1})-(\ref{hOPsi4}), 
we finally get the desired bound (\ref{hOPsi<})
with $C_3=B_6+B_7$.
\subsection{Bounds for the Other Matrix Elements}
Here we prove the bounds (\ref{hPur}) and (\ref{hPurts})
stated in Lemma~\ref{meb1Lemma}.

Instead of proving (\ref{hPur}) for fixed $u\in\calU$
and $r\in\La_o$ with $(u,r)\ne(o,o)$,
we prove the bound for their sum
\begin{equation}
\sumthree{u\in\calU}{r\in\Lao}{(u,r)\ne(o,o)}
\abs{h[\Phik{u,r},\Ok]}
\le
\frac{1}{\alpha(k)}
\rbk{B_1R\akappa t +\frac{C_4U\akappa}{\la^2}+\frac{C_5U}{\la^3}}
\abs{k},
\label{hPurbound}
\end{equation}
which clearly implies the desired (\ref{hPur}).
By using the expression (\ref{h4}) for the matrix element, we have
\begin{eqnarray}
&&
\sumthree{u\in\calU}{r\in\Lao}{(u,r)\ne(o,o)}
\abs{h[\Phik{u,r},\Ok]}
\ret&&
\le
\frac{1}{\alpha(k)}
\sum_{r\in\Lao}\abs{\rbk{\eik{r}-1}\tau_{o,r}}
+
\frac{1}{\alpha(k)}
\sumtwo{u\in\calU}{r,s\in\Lao}
\abs{\rbk{\eik{r}-\eik{s}}\Ut_{u,s;s,r}}
\ret&&
\le
\frac{1}{\alpha(k)}
B_1R\akappa t\abs{k}
+
\frac{\abs{k}}{\alpha(k)}
\sumtwo{u\in\calU}{r,s\in\Lao}
\abs{r-s}\abs{\Ut_{u,s;s,r}},
\label{hPO1}
\end{eqnarray}
where we used the summability (\ref{regtau2}) of $\abs{\tau_{o,r}}$.
The second sum can be treated in exactly the same manner as we did
for the similar sum in (\ref{hOPsi3}).
As a result, we get
\begin{eqnarray}
&&
\sumthree{u\in\calU}{r,s\in\Lao}{x\in\La}
\abs{r-s}\abs{\Ut_{u,s;s,r}}
\ret&&
\le
U\sumthree{u\in\calU}{r,s\in\Lao}{x\in\La}
\abs{r-s}
\abs{\phit{u}{x}\phit{s}{x}}\abs{\phis{s}{x}\phis{r}{x}}
\ret&&\le
2U\sumtwo{x\in\La}{u\in\calU}
\cbk{
\abs{\phit{u}{x}}
\rbk{\sum_{s\in\Lao}\abs{s-x}\abs{\phis{s}{x}}}
\rbk{\sum_{s\in\Lao}\abs{\phis{s}{x}}}
\rbk{\sum_{s\in\Lao}\abs{\phit{s}{x}}}
}.
\label{hPO2}
\end{eqnarray}
This is the same as the fifth line in (\ref{hOPsi3}),
except that $\phi$ and $\phitil$ are switched.
Because of the drastic
difference in the localization properties of
the states $\phi$ and $\phitil$, this results in the remarkable
difference between 
$\sum\abs{h[\Ok,\Phik{u,r}]}$ and $\sum\abs{h[\Phik{u,r},\Ok]}$.
Again by decomposing the sum over $x$ as
$\sum_{x\in\La}\cbk{\cdots}=\sum_{x\in\La'}\cbk{\cdots}
+\sum_{x\in\Lao}\cbk{\cdots}$,
and using the expression (\ref{psi1}) for $\psi^{(u)}$ and the bounds
(\ref{reg3}), (\ref{reg4}), and (\ref{reg7})
for the sum of the basis states,
we can further bound (\ref{hPO2}) as
\begin{eqnarray}
&&
\sumthree{u\in\calU}{r,s\in\Lao}{x\in\La}
\abs{r-s}\abs{\Ut_{u,s;s,r}}
\ret&&
\le
2U\sumtwo{x\in\La'}{u\in\calU}
\abs{\phis{u}{x}}
\rbk{2^\nu\frac{\sqrt{\nu}}{2\la}+\frac{B_1\akappa}{\la^2}}
\rbk{\frac{2^\nu}{\la}+\frac{B_1\akappa}{\la^2}}
\rbk{\frac{2^\nu}{\la}+\frac{B_1\akappa+B_2}{\la^2}}
\ret&&
\quad
+2U\sumtwo{x\in\Lao}{u\in\calU}
\abs{\phis{u}{x}}
\rbk{\frac{B_1R\akappa}{\la^2}}
\rbk{1+\frac{B_1\akappa}{\la^2}}
\rbk{1+\frac{B_1\akappa+B_2}{\la^2}}
\ret&&
\le 
\frac{C_4U\akappa}{\la^2}+\frac{C_5U}{\la^3},
\label{hPO3}
\end{eqnarray}
where $C_4$ and $C_5$ are constants.
The desired (\ref{hPurbound}) follows from 
(\ref{hPO1}) and (\ref{hPO3}).

Next we show the bound (\ref{hPurts}) for $h[\Phik{u,r,t,s},\Ok]$.
It is done in the similar manner as we bounded 
$h[\Ok,\Phik{u,r,t,s}]$ in (\ref{hOPsi3d}) and (\ref{hOPsi4}).
From (\ref{h3}) and (\ref{Util}), we have
\begin{eqnarray}
&&
\abs{h[\Phik{u,r,t,s},\Ok]}
\ret&&
\le
\frac{1}{\alpha(k)}\abs{\eik{s}-\eik{t}}\abs{\Ut_{s,t;u,r}}
\ret&&
\le
\frac{\abs{k}U}{\alpha(k)}
\sum_{x\in\La}
\abs{s-t}
\abs{\phit{s}{x}\phit{t}{x}}\abs{\phis{u}{x}\phis{r}{x}}
\ret&&
\le
\frac{2\abs{k}U}{\alpha(k)}
\sum_{x\in\La}
\abs{\phis{u}{x}}\abs{\phis{r}{x}}
\rbk{\sum_{s\in\Lao}\abs{s-x}\abs{\phit{s}{x}}}
\rbk{\sum_{t\in\Lao}\abs{\phit{t}{x}}}
\ret&&
\le
\frac{2\abs{k}U}{\alpha(k)}
\sum_{x\in\La'}
\abs{\phis{u}{x}}\abs{\phis{r}{x}}
\rbk{2^\nu\frac{\sqrt{\nu}}{2\la}+\frac{B_1\akappa+B_2}{\la^2}}
\rbk{\frac{2^\nu}{\la}+\frac{B_1\akappa+B_2}{\la^2}}
\ret&&
\quad+
\frac{2\abs{k}U}{\alpha(k)}
\sum_{x\in\Lao}
\abs{\phis{u}{x}}\abs{\phis{r}{x}}
\rbk{\frac{B_1\akappa+B_2}{\la^2}}
\rbk{1+\frac{B_1\akappa+B_2}{\la^2}}
\ret&&
\le
\frac{1}{\alpha(k)}\frac{C_6U}{\la^2}\abs{k},
\end{eqnarray}
which is the desired (\ref{hPurts}).

This completes the proof of Lemma~\ref{meb1Lemma}.
\subsection{Proof of Lemma~\protect\ref{meb2Lemma}}
\label{secDtil}
We shall prove Lemma~\ref{meb2Lemma} which controls the
sum $\Dt[\Psik{u,A}]$ of the matrix elements.
We recall that the assumptions for this lemma is different from those
for Lemma~\ref{meb1Lemma}.

By using the representation (\ref{HintRep}) of the interaction Hamiltonian
and the definition (\ref{PsiA}) of the basis state $\Psik{u,A}$,
we find
\begin{eqnarray}
&&
\Hint\Psik{u,A}
\ret&&
=
\sum_{x\in\Lao}\eikx\,
T_x\sbk{
\sum_{y,v,z\in\La}\Ut_{y,v;u,z}\,
\ad_{y,\up}\ad_{v,\dn}b_{u,\dn}b_{z,\up}\ad_{u,\dn}
\rbk{\prod_{t\in A}\ad_{t,\up}}\vac
}
\ret&&
=
\sum_{x\in\Lao}\eikx\,
T_x\sbk{
\sum_{y,v,z\in\La}\Ut_{y,v;u,z}\,
{\rm sgn}[y,z;A]\,
\ad_{v,\dn}\rbk{\prod_{t\in A_{z\rightarrow y}}\ad_{t,\up}}\vac
}
\ret&&
=
\sum_{y,v,z\in\La}{\rm sgn}[y,z;A]\,\Ut_{y,v;u,z}
\Psik{v,A_{z\rightarrow y}},
\label{Hintmat1}
\end{eqnarray}
where we have used the translation invariance of $\Ut_{y,v;u,z}$.
The set $A_{z\rightarrow y}$ is obtained by replacing the site
$z$ in $A$ with $y$, and ${\rm sgn}[y,z;A]=\pm1$ comes from the
reordering of the fermion operators.
The matrix element $\hint[\Psik{u',A'},\Psik{u,A}]$
can be (in principle) obtained
 from (\ref{Hintmat1}) if we take into account the 
identification (\ref{PsiID}) between the basis states
and rewrite $\Psik{v,A_{z\rightarrow y}}$
in terms of some $\Psik{u',A'}\in\Bk$.
But here we take a slightly different strategy.

By $\htint[\cdots,\cdots]$ let us denote the pseudo matrix elements
which are directly read off from (\ref{Hintmat1}) without taking
into account the identification (\ref{PsiID}).
We immediately find from (\ref{Hintmat1}) that 
\begin{equation}
\htint[\Psik{v,A_{z\rightarrow y}},\Psik{u,A}]
=
{\rm sgn}[y,z;A]\,\Ut_{y,v;u,z},
\label{htint1}
\end{equation}
and, by a suitable replacement of symbols, that
\begin{equation}
\htint[\Psik{u,A},\Psik{v,A_{z\rightarrow y}}]
=
{\rm sgn}[y,z;A]\,\Ut_{u,z;y,v}.
\label{htint2}
\end{equation}

Since some of the diagonal elements in the true matrix
elements $\hint[\cdots,\cdots]$ are treated as off-diagonal
elements in the pseudo matrix elements $\htint[\cdots,\cdots]$,
we observe that
\begin{equation}
\sum_{\Phi\in\Bk\bs\cbk{\Psik{u,A},\Ok}}
\abs{\hint[\Psik{u,A},\Phi]}
\le
\sum_{\Phi\in\Bk\bs\cbk{\Psik{u,A},\Ok}}
\abs{\htint[\Psik{u,A},\Phi]},
\label{hintoff}
\end{equation}
and
\begin{eqnarray}
&&{\rm Re}\sbk{\hint[\Psik{u,A},\Psik{u,A}]}
\ret
&&
\ge
{\rm Re}\sbk{\htint[\Psik{u,A},\Psik{u,A}]}
-\sum_{\Phi\in\Bk\bs\cbk{\Psik{u,A},\Ok}}
\abs{\htint[\Psik{u,A},\Phi]},
\label{hintdiag}
\end{eqnarray}
for any $u\in\calU$, $A\subset\La$ with $\abs{A}=L^d-1$
such that $(u,A)\ne(o,\Laob)$, where $\Laob=\Lao\bs\cbk{o}$.
From (\ref{hintoff}) and (\ref{hintdiag}), we can bound
the contribution to $\Dt[\Psik{u,A}]$ (\ref{Dtil})
from the interaction
Hamiltonian as
\begin{eqnarray}
&&\Dt_{\rm int}[\Psik{u,A}]
\ret
&&
=
{\rm Re}\sbk{\hint[\Psik{u,A},\Psik{u,A}]}
-
\sum_{\Phi\in\Bk\bs\cbk{\Psik{u,A},\Ok}}
\abs{\hint[\Psik{u,A},\Phi]}
\ret
&&
\ge
{\rm Re}\cbk{\htint[\Psik{u,A},\Psik{u,A}]}
-2\sum_{\Phi\in\Bk\bs\cbk{\Psik{u,A},\Ok}}
\abs{\htint[\Psik{u,A},\Phi]},
\label{Dtilint}
\end{eqnarray}
for any $(u,A)\ne(o,\Laob)$.

By using (\ref{htint2}), the sum in the right-hand side of
(\ref{Dtilint}) can be evaluated as
\begin{eqnarray}
\sum_{\Phi\in\Bk\bs\cbk{\Psik{u,A},\Ok}}
\abs{\htint[\Psik{u,A},\Phi]}
&\le&
\sumtwo{y,v,z\in\La}{(v,y)\ne(u,z)}\abs{\Ut_{u,z;y,v}}
\ret
&\le&
 U\sumtwo{x,y,v,z\in\La}{(v,y)\ne(u,z)}
\abs{\phit{u}{x}\phit{z}{x}}\abs{\phis{y}{x}\phis{v}{x}},
\label{hintoff2}
\end{eqnarray}
where we used the representation (\ref{Util}) for the
effective interaction $\Ut_{u,z;y,v}$.
We further use the bounds
(\ref{psi1}), (\ref{psi2}), (\ref{reg3}), and (\ref{reg7})
for the sum of the 
localized basis states to bound (\ref{hintoff2}) as
\begin{eqnarray}
&&
\sum_{\Phi\in\Bk\bs\cbk{\Psik{u,A},\Ok}}
\abs{\htint[\Psik{u,A},\Phi]}
\ret&&
\le
U\sum_{x\in\La}\cbk{
\abs{\phit{u}{x}}
\rbk{\sum_{z\in\La}\abs{\phit{z}{x}}}
\rbk{\sum_{y\in\La}\abs{\phis{y}{x}}}
\rbk{\sum_{v\in\La}\abs{\phis{v}{x}}}
}
-U\rbk{\phit{u}{u}}^2\rbk{\phis{u}{u}}^2
\ret&&
\le
U\rbk{\max_{u'\in\calU}\sum_{x\in\La}\abs{\phit{u'}{x}}}^2
\rbk{\max_{u'\in\calU}\sum_{x\in\La}\abs{\phis{u'}{x}}}^2
-U\rbk{\phit{u}{u}}^2\rbk{\phis{u}{u}}^2
\ret&&
\le
U\Biggl\{
\rbk{1+\frac{\abs{\calF_o}}{\la}+\frac{B_1\akappa+B_2}{\la^2}}^2
\rbk{1+\frac{\abs{\calF_o}}{\la}+\frac{B_1\akappa}{\la^2}}
\ret&&
\hspace{1.2cm}-
\rbk{1-\frac{B_1\akappa+B_2}{\la^2}}^2
\rbk{1-\frac{B_1\akappa}{\la^2}}^2
\Biggr\}
\ret&&
\le
B_8\frac{U}{\la},
\label{hintoff3}
\end{eqnarray}
where the constant $B_8$ depends only of $d$, $\nu$, and $R$.
We have made use of the bounds
$\la\ge\la_4$ and $\akappa\le\kappa_0$.

By using (\ref{htint1}) and the representation (\ref{Util}) for
$\Ut_{u,y;u,y}$,
the diagonal element of $\htint[\cdots,\cdots]$ is written as
\begin{equation}
\htint[\Psik{u,A},\Psik{u,A}]
=
\sum_{y\in A}\Ut_{u,y;u,y}
=
U\sumtwo{x\in\La}{y\in A}
\phit{u}{x}\phit{y}{x}\rbks{\phis{u}{x}\phis{y}{x}}.
\end{equation}
Again by using the properties (\ref{reg3}) and (\ref{reg7})
of the basis states, we have
\begin{eqnarray}
&&
{\rm Re}\sbk{\htint[\Psik{u,A},\Psik{u,A}]}
\ret&&
\ge
\chi[u\in A]U\,{\rm Re}\sbk{\rbk{\phit{u}{u}}^2\cbk{\rbks{\phis{u}{u}}}^2}
-
U\sumthree{x\in\La}{y\in A}{(x,y)\ne(u,u)}
\abs{\phit{u}{x}\phit{y}{x}}\abs{\phis{u}{x}\phis{y}{x}}
\ret&&
\ge
\chi[u\in A]U\,{\rm Re}\sbk{\rbk{\phit{u}{u}}^2\cbk{\rbks{\phis{u}{u}}}^2}
\ret&&\quad
-
U\cbk{
\rbk{\sum_{x\in\La}\abs{\phit{u}{x}}}
\rbk{\max_{x'\in\La}\sum_{y\in\La}\abs{\phit{y}{x'}}}
\rbk{\sum_{x\in\La}\abs{\phis{u}{x}}}
\rbk{\max_{x'\in\La}\sum_{y\in\La}\abs{\phis{y}{x'}}}
-\abs{\phit{u}{u}}^2\abs{\phis{u}{u}}^2
}
\ret&&
\ge
\chi[u\in A]U-B_9\frac{U}{\la},
\label{hintdiag2}
\end{eqnarray}
where $\chi[\cdots]$ is the indicator function with
$\chi[\mbox{true event}]=1$ and $\chi[\mbox{false event}]=0$.

Substituting (\ref{hintoff3}) and (\ref{hintdiag2}) into
(\ref{Dtilint}), we get
\begin{equation}
\Dt_{\rm int}[\Psik{u,A}]
\ge
\chi[u\in A]U-B_{10}\frac{U}{\la}.
\label{hintsum}
\end{equation}

Next we examine the matrix elements of the modified hopping 
Hamiltonian $\Hhopt$.
By using the representation (\ref{HtilhopRep})
and the definition (\ref{PsiA}), we get
\begin{eqnarray}
&&
\Hhopt\Psik{u,A}
\ret&&
=\frac{3}{4}\la^2t\abs{(A\cup\cbk{u})\cap\La'}\Psik{u,A}
\ret&&
\quad
+\sum_{x\in\Lao}\eikx\, T_x\sbk{
\sumtwo{y,z\in\Lao}{\sigma=\up,\dn}
\tau_{y,z}\,\ad_{y,\sigma}b_{z,\sigma}\ad_{u,\dn}
\rbk{\prod_{t\in A}\ad_{t,\up}}\vac
}
\ret&&
=\rbk{\ep_0\abs{(A\cup\cbk{u})\cap\Lao}+
\frac{3}{4}\la^2t\abs{(A\cup\cbk{u})\cap\La'}}
\Psik{u,A}
\ret&&
\quad
+\chi[u\in\Lao]\sum_{y\in\Lao\bs\cbk{u}}\tau_{y,u}
\Psik{y,A}
+\sumtwo{z\in A\cap\Lao}{y\in\Lao\bs A}
\tau_{y,z}\,{\rm sgn}[y,z;A]
\Psik{u,A_{z\rightarrow y}},
\label{HhopuA}
\end{eqnarray}
where we wrote $\ep_0=\tau_{y,y}$ for $y\in\Lao$.
Note that $\chi[u\in\Lao]=\delta_{u,o}$ as long as $u\in\calU$.

From (\ref{HhopuA}), we can read off the matrix elements
of $\Hhopt$ as
\begin{equation}
\hhop[\Psik{u,A},\Psik{u,A}]
=
\ep_0\abs{(A\cup\cbk{u})\cap\Lao}+
\frac{3}{4}\la^2t\abs{(A\cup\cbk{u})\cap\La'},
\label{hhopdiag}
\end{equation}
\begin{equation}
\hhop[\Psik{u,A},\Psik{u,A_{z\rightarrow y}}]
=
{\rm sgn}[y,z;A]\,\chi[z,y\in\Lao]\,\tau_{z,y},
\label{hhopoff1}
\end{equation}
and
\begin{equation}
\hhop[\Psik{o,A},\Psik{y,A}]=\tau_{o,y},
\label{hhopoff2}
\end{equation}
where $\Psik{y,A}$ in (\ref{hhopoff2}) should be properly
interpreted as a state in $\Bk$ using the identification (\ref{PsiID}).
We did not define pseudo matrix elements here since 
$\hhop[\Psik{o,A},\Psik{y,A}]$ does not contain any 
diagonal elements.

Let us use (\ref{hhopoff1}) and (\ref{hhopoff2})
to evaluate the sum of the off-diagonal matrix elements as
\begin{eqnarray}
\sum_{\Phi\in\Bk\bs\cbk{\Psik{u,A},\Ok}}
\abs{\hhop[\Psik{u,A},\Phi]}
&\le&
\sumtwo{z\in A\cap\Lao}{y\in\Lao\bs A}\abs{\tau_{z,y}}
+\delta_{u,o}\sum_{y\in\Lao\bs\cbk{o}}\abs{\tau_{o,y}}
\ret
&\le&
\rbk{\abs{\Lao\bs A}+\delta_{u,o}}B_1t\akappa
\ret
&=&
\rbk{\abs{A\cap\La'}+1+\delta_{u,o}}B_1t\akappa,
\label{hhopoff3}
\end{eqnarray}
where we used the bound (\ref{regtau1}) for the sum of the effective
hopping $\tau_{z,y}$.
The identity
$\abs{\Lao\bs A}=\abs{A\cap\La'}+1$
follows from $\abs{A}+1=\abs{\Lao}=L^d$.

By combining (\ref{hhopdiag}) and (\ref{hhopoff3}),
we can evaluate the contribution of $\Dt[\Psik{u,A}]$ from
the hopping Hamiltonian as
\begin{eqnarray}
\Dt_{\rm hop}[\Psik{u,A}]
&=&
\hhop[\Psik{u,A},\Psik{u,A}]
-\sum_{\Phi\in\Bk\bs\cbk{\Psik{u,A},\Ok}}
\abs{\hhop[\Psik{u,A},\Phi]}
\ret
&\ge&
\ep_0\rbk{L^d-1-\abs{A\cap\La'}+\delta_{u,o}}
+\frac{3}{4}\la^2t\rbk{\abs{A\cap\La'}+1-\delta_{u,o}}
\ret
&&
-\rbk{\abs{A\cap\La'}+1+\delta_{u,o}}B_1t\akappa.
\label{hhopsum}
\end{eqnarray}

By summing up the contributions (\ref{hintsum}) 
and (\ref{hhopsum}) from $\Hint$ and $\Hhopt$, respectively,
we can finally bound the desired quantity
$\Dt[\Psik{u,A}]$ (\ref{Dtil}) as
\begin{eqnarray}
&&
\Dt[\Psik{u,A}]
=
\Dt_{\rm int}[\Psik{u,A}]+\Dt_{\rm hop}[\Psik{u,A}]
\ret
&&\ge
E_0+\chi[u\in A]U-B_{10}\frac{U}{\la}
\ret
&&
+\rbk{\frac{3}{4}\la^2t-\ep_0-B_1t\akappa}\rbk{\abs{A\cap\La'}+1}
+\rbk{-\frac{3}{4}\la^2t+\ep_0-B_1t\akappa}\delta_{u,o},
\label{Dtil2}
\end{eqnarray}
where we noted that $E_0=\sum_{x\in\Lao}\tau_{x,x}=L^d\ep_0$.
See (\ref{E0}).

The desired bounds (\ref{Dtil>}), (\ref{Dtilur>}), and (\ref{Dtilor>})
are derived by investigating the bound (\ref{Dtil2}) in each situation.
We first consider the case $A\cap\La'\ne\emptyset$.
Noting that
$\abs{A\cap\La'}\ge1$, $\delta_{u,o}\le1$,
$\chi[u\in A]\ge0$, and $\ep_0\le B_1t\akappa$,
we find from the basic bound (\ref{Dtil2}) that
\begin{eqnarray}
\Dt[\Psik{u,A}]
&\ge&
E_0-B_{10}\frac{U}{\la}+\frac{3}{4}\la^2t-4B_1t\akappa
\ret
&\ge&
E_0+\frac{1}{2}\la^2t,
\end{eqnarray}
which is the desired bound (\ref{Dtil>}).
To get the final inequality, we have here assumed that
\begin{equation}
B_{10}\frac{U}{\la}\le\frac{1}{8}\la^2t,\quad
4B_1t\akappa\le\frac{1}{8}\la^2t.
\label{ass1}
\end{equation}

We then turn to the case  $A\cap\La'=\emptyset$.
Then the state $\Psik{u,A}$ is nothing but the state 
$\Phik{u,r}$ defined in (\ref{Phiur}).
When $u\ne o$, the basic bound (\ref{Dtil2}) with
$\abs{A\cap\La'}=0$ and $\delta_{u,o}=0$ yields
\begin{eqnarray}
\Dt[\Phik{u,r}]
&\ge&
E_0-B_{10}\frac{U}{\la}+\frac{3}{4}\la^2t-2B_1t\akappa
\ret
&\ge&
E_0+\frac{1}{2}\la^2t,
\end{eqnarray}
which is the desired bound (\ref{Dtilur>}).
We again used (\ref{ass1}).

Finally when $u=o$ and $r\ne o$, we find that
$\chi[u\in A]=1$ since the state $\phi^{(o)}$ is doubly occupied.
Thus the basic bound (\ref{Dtil2}) yields
\begin{eqnarray}
\Dt[\Phik{o,r}]
&\ge&
E_0+U-B_{10}\frac{U}{\la}-2B_1t\akappa
\ret
&\ge&
E_0+\frac{U}{2},
\end{eqnarray}
which is the desired (\ref{Dtilor>}).
To get the final inequality, we have assumed
\begin{equation}
2B_1t\akappa\le\frac{U}{4},\quad
B_{10}\frac{U}{\la}\le\frac{U}{2}.
\label{ass2}
\end{equation}

It only remains to examine the conditions for the parameters.
We shall set $K_3=8B_1$, $K_4=(8B_{10})^{-1}$, and
\begin{equation}
\la_4=\max\cbk{\la_0,2B_{10},\sqrt{32B_1\kappa_0}},
\label{la4}
\end{equation}
and make the requirements as in the statement of Lemma~\ref{meb2Lemma}.
Then the conditions (\ref{ass1}) and (\ref{ass2}) are easily checked to be 
satisfied.
Lemma~\ref{meb2Lemma} has been proved.

\Section{Upper Bound for the Spin-Wave Energy}
\label{SecSW<}
We will here prove Theorem~\ref{Ek<Th} which states
the upper bound (\ref{ESW<}) for the energy $\ESW$ for the
elementary spin-wave excitation with the wave number
vector $k\in\calK$.
In contrast to the corresponding lower bound, the upper bound
can be proved by employing the standard variational argument.
The new idea here is to use the state $\Ok$ (\ref{Omega})
as a trial state.
In the proof, we shall make use of Lemmas~\ref{basisLemma}
and \ref{dualbasisLemma} about the localized basis states,
and some estimates about the matrix elements proved in 
Section~\ref{SecMatrixBound} during the proof of
Lemma~\ref{meb1Lemma}.
The assumption made in the statement of Theorem~\ref{Ek<Th}
guarantees that we can make use of these results.
(See the beginning of Section~\ref{SecMatrixBound}.)

Since we have $\Ok\in\calH_k$, the lowest energy $\ESW$
in the space $\calH_k$ satisfies the variational inequality
\begin{equation}
\ESW\le\frac{\rbk{\Ok,H\,\Ok}}{\rbk{\Ok,\Ok}},
\label{Evar}
\end{equation}
where $(.,.)$ denotes the inner product.
Recalling the definitions of $\Htil$ (see (\ref{Hhoptil}) and (\ref{Htil})),
and the matrix elements (\ref{matrixelement}), we can write
\begin{eqnarray}
H\Ok
&=&
\Htil\Ok
\ret
&=&h[\Ok,\Ok]\Ok
+\sumthree{u\in\calU}{r\in\Lao}{(u,r)\ne(o,o)}h[\Phik{u,r},\Ok]\Phik{u,r}
\ret
&&
+\sumfour{u\in\calU}{r\in\La'}{s,t\in\Lao}{(s-t\in\Laop)}
h[\Phik{u,r,t,s},\Ok]\Phik{u,r,t,s}.
\end{eqnarray}
By noting that $(\Phik{u,r},\Ok)=0$ if $r\ne o$, and
$(\Phik{u,r,t,s},\Ok)=0$, we find
\begin{equation}
(\Ok,H\,\Ok)
=h[\Ok,\Ok](\Ok,\Ok)+
\sumtwo{r\in\Lao}{r\ne o}h[\Phik{o,r},\Ok](\Ok,\Phik{o,r}).
\label{O,HO}
\end{equation}

Recalling the definition (\ref{Phiur}) of $\Phik{o,r}$,
and noting that $\Ok=\alpha(k)^{-1}\Phik{o,o}$, we have
\begin{eqnarray}
(\Phik{o,o},\Phik{o,r})
&=&
\rbk{
\sum_{x\in\Lao}\eikx\ad_{x,\dn}b_{x,\up}\UP,
\sum_{y\in\Lao}\eik{y}\ad_{y,\dn}b_{y+r,\up}\UP
}
\ret
&=&
L^d\sum_{x\in\Lao}\emik{x}
\rbk{\UP,\bd_{x,\up}a_{x,\dn}\ad_{o,\dn}b_{r,\up}\UP},
\label{OP}
\end{eqnarray}
where we made use of the translation invariance to replace
$y$ by $o$.
Note that we have encountered the operators $a$ and $\bd$ for
the first time in the present paper.
Going back to the definitions (\ref{aDef}), (\ref{bDef}), we get
the anticommutation relations
\begin{equation}
\cbk{\ad_{x,\sigma},a_{y,\tau}}=\Gm{x,y}\,\delta_{\sigma,\tau},
\label{aa}
\end{equation}
and
\begin{equation}
\cbk{\bd_{x,\sigma},b_{y,\tau}}=(G^{-1})_{x,y}\,\delta_{\sigma,\tau},
\label{bb}
\end{equation}
where the Gramm matrix $G$ is given by
\begin{equation}
\Gm{x,y}=\sum_{z\in\La}\rbks{\phis{x}{z}}\phis{y}{z},
\label{Gxy}
\end{equation}
and its inverse is
\begin{equation}
\Gi{x,y}=\sum_{z\in\La}\rbks{\phit{x}{z}}\phit{y}{z}.
\label{Gixy}
\end{equation}
That (\ref{Gixy}) correctly defines inverse of $G$ can be easily verified
by using the duality relations (\ref{duality1}) and (\ref{duality2}).
The complicated anticommutation relations (\ref{aa}) and (\ref{bb})
are major drawback of the use of the non-orthogonal basis.

By using (\ref{aa}) and (\ref{bb}), we can further evaluate (\ref{OP}) as
\begin{equation}
(\Phik{o,o},\Phik{o,r})=
L^d\rbk{\sum_{x\in\Lao}\emik{x}\Gm{o,x}\Gi{x,r}}
(\UP,\UP).
\label{OP2}
\end{equation}
As for the expectation value in the right-hand side of (\ref{Evar}), we use
(\ref{O,HO}) and (\ref{OP2}) to get
\begin{eqnarray}
&&
\frac{\rbk{\Ok,H\,\Ok}}{\rbk{\Ok,\Ok}}
\ret&&
=h[\Ok,\Ok]
+\frac{
\sum_{r\in\Lao (r\ne o)}
\alpha(k)^{-1}h[\Phik{o,r},\Ok](\Phik{o,o},\Phik{o,r})
}{
\alpha(k)^{-2}(\Phik{o,o},\Phik{o,o})
}
\ret&&
=h[\Ok,\Ok]
\ret&&
\quad+\frac{
\sum_{r\in\Lao (r\ne o)}
\alpha(k)h[\Phik{o,r},\Ok]
\sum_{x\in\Lao}\emik{x}\Gm{o,x}\Gi{x,r}
}{
\sum_{x\in\Lao}\emik{x}\Gm{o,x}\Gi{x,o}
}.
\label{Evar2}
\end{eqnarray}
Since the first term in the right-hand side is already 
controlled by the bound
(\ref{hOObound}), we only need to bound the second term.

We start from the denominator of the final term in (\ref{Evar2}).
By noting that 
\newline
$\sum_{x\in\Lao}\Gm{o,x}\Gi{x,o}=1$, we have
\begin{eqnarray}
&&
\abs{1-\sum_{x\in\Lao}\emik{x}\Gm{o,x}\Gi{x,o}}
\ret&&
=
\abs{\sum_{x\in\Lao}(1-\emik{x})\Gm{o,x}\Gi{x,o}}
\ret&&
\le
\abs{k}\sum_{x\in\Lao}\abs{x}\abs{\Gm{o,x}\Gi{x,o}}
\ret&&
\le
\abs{k}\sumtwo{x\in\Lao}{y,z\in\La}\rbk{\abs{x-y}+\abs{y}}
\abs{\phis{o}{y}}\abs{\phis{x}{y}}\abs{\phit{x}{z}}\abs{\phit{o}{z}}
\ret&&
\le
\abs{k}2
\rbk{\frac{B_1R\akappa}{\la^2}}
\rbk{1+\frac{\abs{\calF_o}}{\la}+\frac{B_1\akappa}{\la^2}}
\rbk{1+\frac{\abs{\calF_o}}{\la}+\frac{B_1\akappa+B_2}{\la^2}}^2
\ret&&
\le
B_{11}\frac{\akappa}{\la^2},
\label{Evar3}
\end{eqnarray}
where we used (\ref{psi1}), (\ref{psi2}), (\ref{reg2}), (\ref{reg1}),
and (\ref{reg5}).
We also noted that $\la\ge\la_0$, $\akappa\le\kappa_0$,
and $\abs{k}\le\sqrt{d}\pi$.
Thus we get
\begin{equation}
\abs{\rbk{\sum_{x\in\Lao}\emik{x}\Gm{o,x}\Gi{x,o}}^{-1}}
\le
1+B_{12}\frac{\akappa}{\la^2}.
\label{Evar3b}
\end{equation}

We now control the numerator of the final term in (\ref{Evar2}).
By noting that
\newline
$\sum_{x\in\Lao}\Gm{o,x}\Gi{x,r}=0$ for $r\ne o$, we get
\begin{eqnarray}
&&
\abs{
\sum_{r\in\Lao\bs\cbk{o}}
\alpha(k)h[\Phik{o,r},\Ok]\sum_{x\in\Lao}\emik{x}\Gm{o,x}\Gi{x,r}
}
\ret&&
=\abs{
\sum_{r\in\Lao\bs\cbk{o}}
\alpha(k)h[\Phik{o,r},\Ok]\sum_{x\in\Lao}(\emik{x}-1)\Gm{o,x}\Gi{x,r}
}
\ret&&
\le
\alpha(k)
\rbk{\sum_{r\in\Lao\bs\cbk{o}}\abs{h[\Phik{o,r},\Ok]}}
\abs{k}
\rbk{\sum_{x,r\in\Lao}\abs{\Gm{o,x}\Gi{x,r}}}
\ret&&
\le
\rbk{B_1Rt\akappa+\frac{C_4U\akappa}{\la^2}+\frac{C_5U}{\la^3}}
\rbk{\sum_{x,r\in\Lao}\abs{x}\abs{\Gm{o,x}\Gi{x,r}}}
\abs{k}^2,
\label{Evar4}
\end{eqnarray}
where we used (\ref{hPurbound}) to control the sum of the matrix
elements $h[\Phik{o,r},\Ok]$.
The remaining factor can be bounded as
\begin{eqnarray}
\sum_{x,r\in\Lao}\abs{x}\abs{\Gm{o,x}\Gi{x,r}}
&\le&
\sumtwo{x,r\in\Lao}{y,z\in\La}\rbk{\abs{x-y}+\abs{y}}
\abs{\phis{o}{y}}\abs{\phis{x}{y}}\abs{\phit{x}{z}}\abs{\phit{r}{z}}
\ret
&\le&
B_{13}\frac{\akappa}{\la^2}.
\label{Evar5}
\end{eqnarray}

By collecting (\ref{Evar}), (\ref{Evar2}), (\ref{Evar3b}), (\ref{Evar4}),
and (\ref{Evar5}), and by using the bound (\ref{hOObound}) for
the matrix element $h[\Ok,\Ok]$,
and the bound (\ref{k2<Gk}) for $G(k)$, we finally get
\begin{eqnarray}
\ESW
&\le&
E_0+\frac{U}{\la^4}G(k)
+\frac{U}{\la^4}\rbk{C_1\akappa+\frac{C_2}{\la}}G(k)
\ret
&&
+B_{14}\rbk{B_1Rt\akappa+\frac{C_4U\akappa}{\la^2}+\frac{C_5U}{\la^3}}
\frac{\akappa}{\la^2}G(k)
\ret
&\le&
E_0
+\frac{U}{\la^4}
\rbk{1+\frac{A_4}{\la}+A_5\la\akappa
+\frac{A_6\la^2t\akappa^2}{U}}
G(k),
\end{eqnarray}
which is the desired (\ref{ESW<}).

\Section{Construction of the Localized Bases}
\label{SecBasis}
In the present section, we shall explicitly construct the localized
bases $\cbk{\phi^{(x)}}_{x\in\La}$,  $\cbk{\phitil^{(x)}}_{x\in\La}$,
and the dispersion relation $\ep_1(k)$,
and prove the summability stated in 
Lemmas~\ref{basisLemma}, \ref{dualbasisLemma}, and \ref{tauLemma}.

The main problem treated here is a
perturbation theory in the finite-dimensional eigenvalue problem
(\ref{SchVec1}), 
where the unperturbed problem has an energy gap.
It is well-established that such a finite-dimensional perturbation theory 
can be controlled in a perfectly rigorous manner \cite{Kato82,ReedSimon78}.

However there are some subtle points specific to the present problem.
Here we are treating the set of
eigenvalue problems indexed by the parameter $k\in\calK$.
Moreover it is essential for us to explicitly construct (unnormalized)
eigenvectors which are especially chosen to have ``nice'' $k$-dependence.
We found that, for this purpose, it is better to directly deal with the
Rayleigh-Schr\"{o}dinger perturbation theory in an  
explicit manner, rather than to make use of the general theory 
\cite{Kato82,ReedSimon78}.
Unfortunately such an analysis of perturbation theory requires us rather 
involved technical estimates which are summarized in this
lengthy section.
\subsection{States in the $k$-Space Representation}
\label{seckspace}
The basic starting point in the construction of the bases is the
Schr\"{o}dinger equation written in the form of (\ref{SchVec1}), which is
\begin{equation}
\ep\vv=\rbk{\la^2t\, \Mt + \kappa t \,\Qt}\vv,
\label{SchVec12}
\end{equation}
where $\vv=(\vk{u})_{u\in\calU}$ is a $b$-dimensional vector.
The $b\times b$ matrices $\Mt=(\Mk{u,u'})_{u,u'\in\calU}$ and 
$\Qt=(\Qk{u,u'})_{u,u'\in\calU}$ are defined in (\ref{Mkdef})
and (\ref{Qdef}), respectively.
For a fixed $k\in\calK$ (see (\ref{Kdef}) for the definition of the 
space $\calK$), (\ref{SchVec12}) is an eigenvalue equation of a $b\times b$
matrix.  
Here $b=\abs{\calU}=\dnu+1$ is the number of bands.
From a solution $\vv$ of (\ref{SchVec12}) for some $k$, we can construct the 
corresponding Bloch state in the real space by
\begin{equation}
\phi_x=\eikx\vk{\mu(x)},
\label{Blochstate}
\end{equation}
where $\mu(x)$ denotes the unique element in $\calU$ such that 
$x\in\La_{\mu(x)}$.
The Bloch state $\phi=(\phi_x)_{x\in\La}$ 
becomes an eigenstate of the original Schr\"{o}dinger equation
(\ref{Sch1}) with the energy eigenvalue $\epsilon$.

One of our major tasks in the following subsections is to construct,
for each $k\in\calK$,
a vector $\vov=(\vok{u})_{u\in\calU}$ which satisfies
\begin{equation}
\ep_1(k)\vov=\rbk{\la^2t\, \Mt + \kappa t \,\Qt}\vov,
\label{SchVec2}
\end{equation}
where $\ep_1(k)$ is the lowest eigenvalue for each $k$.
In other words, $\ep_1(k)$ is the dispersion relation of the lowest band.
Thus the Bloch state $\phi=(\phi_x)_{x\in\La}$ 
constructed from $\vov$ according to (\ref{Blochstate})
is an element of the Hilbert space $\Hsingo$ (see (\ref{Hsingdecomp})) 
for the lowest band.
In our construction, we do not normalize the vector $\vov$.
We rather try to get a $\vov$ which has a ``nice'' $k$-dependence so that we
finally get sharply localized basis states.

For the moment, we assume that the desired $\vov$ is defined,
and introduce other related vectors.
For each $e\in\calU'(=\calU\bs\cbk{o})$, we define a vector
$\vvv{e}=(\vks{e}{u})_{u\in\calU}$, so that
the Bloch state (\ref{Blochstate}) constructed from $\vvv{e}$ 
belongs to the Hilbert space $\Hsingp$ (see (\ref{Hsingdecomp}))
for the higher bands.
For this to be the case, it suffices to have 
orthogonality\footnote{
$(.,.)$ denotes the standard inner product in the $b$-dimensional
linear space.
For ${\bf v}=(v_u)_{u\in\calU}$ and ${\bf w}=(w_u)_{u\in\calU}$,
we define $({\bf v},{\bf w})=\sum_{u\in\calU}\rbks{v_u}w_u$.
}
$(\vvv{e},\vov)=0$ for each $k\in\calK$.
The vectors $\vvv{e}$ are defined in terms of $\vov$ as
\begin{equation}
\vks{e}{u}=
\cases{
-\vok{e}&if $u=o$;\cr
\rbks{\vok{o}}&if $u=e$;\cr
0&otherwise.\cr
}
\label{veuk}
\end{equation}
The required orthogonality is readily verified from the definition.
It is also found that, for each $k$, the vectors $\vvv{e}$ with $e\in\calU'$
are linearly independent with each other.
Therefore the collection $\cbk{\vvv{u}}_{u\in\calU}$ for a fixed $k$ forms
a  basis of ${\bf C}^b$.

We also introduce the dual of the basis $\cbk{\vvv{u}}_{u\in\calU}$.
For each $k\in\calK$, we define the Gramm matrix $\Gt$ by
\begin{equation}
\rbk{\Gt}_{u,u'}=(\vvv{u},\vvv{u'}),
\label{GrammDef}
\end{equation}
for $u,u'\in\calU$.
Since the vectors $\vvv{u}$ with $u\in\calU$ are linearly independent,
the corresponding Gramm matrix is invertible.
We define the dual vectors by
\begin{equation}
\vt{u}=\sum_{u'\in\calU}\rbk{\Gt^{-1}}_{u',u}\vvv{u'},
\label{dualvDef}
\end{equation}
for each $u\in\calU$.
We again write the components of the dual vectors as
$\vt{u}=(\vkt{u}{w})_{w\in\calU}$.
By definition, we have
\begin{equation}
\rbk{\vt{u},\vvv{u'}}=\delta_{u,u'},
\label{duality3}
\end{equation}
and
\begin{equation}
\sum_{w\in\calU}\rbks{\vkt{w}{u}}\vks{w}{u'}=\delta_{u,u'},
\label{duality4}
\end{equation}
for any $u,u'\in\calU$ and for any $k\in\calK$.
\subsection{Construction of the Localized Basis States}
\label{secrealspace}
Since we have introduced the vectors (states) in the $k$-space
representation, let us describe how we construct the 
desired localized basis states.
For $x\in\La$, we denote by $\mu(x)$ the unique site in the unit cell
$\calU$ such that $x\in\La_{\mu(x)}$.

For $x,y\in\La$, we define
\begin{equation}
\phis{y}{x}=(2\pi)^{-d}\int dk\,\eik{(x-y)}\,
\vks{\mu(y)}{\mu(x)},
\label{phiDef}
\end{equation}
and
\begin{equation}
\phit{y}{x}=(2\pi)^{-d}\int dk\,\eik{(x-y)}\,
\vkt{\mu(y)}{\mu(x)},
\label{phitilDef}
\end{equation}
where $\int dk(\cdots)$ is a shorthand for the sum
$(2\pi/L)^d\sum_{k\in\calK}(\cdots)$.

Let us prove the duality relation (\ref{duality1}).
By using the definitions (\ref{phiDef}) and (\ref{phitilDef}),
and (uniquely) decomposing $y\in\La$ as $y=z+u$ with $z\in\Lao$
and $u\in\calU$, we get
\begin{eqnarray}
&&
\sum_{y\in\La}\rbks{\phit{x}{y}}\phis{x'}{y}
\ret&&
=\sum_{y\in\La}(2\pi)^{-2d}\int dk\,dk'\,
e^{-ik\cdot(y-x)+ik'\cdot(y-x')}
\rbks{\vkt{\mu(x)}{\mu(y)}}v^{(\mu(x'))}_{\mu(y)}(k')
\ret&&
=
\sumtwo{u\in\calU}{z\in\Lao}
(2\pi)^{-2d}\int dk\,dk'\,
e^{-i(k-k')\cdot z-i(k-k')\cdot u+ik\cdot x-ik'\cdot x'}
\rbks{\vkt{\mu(x)}{u}}v^{(\mu(x'))}_{u}(k')
\ret&&
=(2\pi)^{-d}\int dk\,\eik{(x-x')}
\rbk{\vt{\mu(x)},\vvv{\mu(x')}}
\ret&&
=(2\pi)^{-d}\int dk\,\eik{(x-x')}\chi[x-x'\in\Lao]
\ret&&
=\delta_{x,x'},
\end{eqnarray}
where we used the duality relation (\ref{duality3})
for ${\bf v}$ and $\tilde{\bf v}$.
We have also noted that 
$\delta_{\mu(x),\mu(x')}=\chi[x-x'\in\Lao]$
with the indicator function $\chi[\mbox{true}]=1$,
$\chi[\mbox{false}]=0$.
The other duality relation (\ref{duality2}) follows from the general argument about
the uniqueness of inverse matrix, or can be
shown in the similar
manner by using the corresponding relation (\ref{duality4}).

In Lemma~\ref{basisLemma}, we claimed that the sets
$\cbk{\phi^{(x)}}_{x\in\Lao}$ and $\cbk{\phi^{(x)}}_{x\in\La'}$
form  bases of the Hilbert space $\Hsingo$ and $\Hsingp$, respectively.
Note that, in (\ref{phiDef}), $\phis{y}{x}$ is constructed as a superposition
of various Bloch states $\eikx\vks{\mu(y)}{\mu(x)}$ of the form 
(\ref{Blochstate}).
This means that $\phi^{(x)}\in\Hsingo$ if $x\in\Lao$ and 
$\phi^{(x)}\in\Hsingp$ if $x\in\La'$.
To prove the completeness of each basis, it therefore suffices to show that the union
$\cbk{\phi^{(x)}}_{x\in\La}$ is a  basis of the whole Hilbert space
$\Hsing$.
But the desired completeness follows readily from the duality relation (\ref{duality2}).
The same argument shows the corresponding claim about the completeness of the
dual bases stated in Lemma~\ref{dualbasisLemma}.

Finally we investigate the action of the modified hopping matrix
$\Ttil=(\ttil_{x,y})_{x,y\in\La}$ (see (\ref{Ttil})) on the basis states.
Noting the the Bloch state is given by (\ref{Blochstate}),
the Schr\"{o}dinger equation (\ref{Sch1}) and (\ref{Ttil}) imply 
\begin{equation}
\sum_{x'\in\La}\ttil_{x,x'}\rbk{\eik{x'}\vks{u}{\mu(x')}}
=\cases{
\ep_1(k)\rbk{\eik{x}\vks{o}{\mu(x)}}&if $u=o$;\cr
\frac{3}{4}\la^2t\rbk{\eik{x}\vks{u}{\mu(x)}}&if $u\in\calU'$.\cr
}
\label{Sch5}
\end{equation}
From (\ref{phiDef}) and (\ref{Sch5}), we get for $x\in\La$ that
\begin{eqnarray}
\sum_{z'\in\La}\ttil_{z,z'}\phis{x}{z'}
&=&
(2\pi)^{-d}\int dk\sum_{z'\in\La}\ttil_{z,z'}\,
\eik{(z'-x)}\,\vok{\mu(z')}
\ret
&=&
(2\pi)^{-d}\int dk\,\ep_1(k)\,\eik{(z-x)}\,\vok{\mu(z)}
\ret
&=&
\sum_{y\in\Lao}\tau_{y,x}\,\phis{y}{z},
\end{eqnarray}
which is nothing but (\ref{Ttilphi2}) with $\tau_{y,z}$ defined as
in (\ref{tauDef}).
The relation (\ref{Ttilphi1}) follows easily from (\ref{Sch5}).
\subsection{Basic Setup of Perturbation Theory}
\label{secsetup}
In the following construction of various vectors, we treat $k\in\calK$
as a fixed parameter.
The $k$-dependence of the vectors will play nontrivial roles only 
in the final Section~\ref{secreg}.

Let us first set $\kappa=0$ (corresponding to the flat-band model)
in the Schr\"{o}dinger equation (\ref{SchVec12}).
The eigenvector $\wv=(\wk{u})_{u\in\calU}$ with the lowest
eigenvalue $\ep=0$ is given by
\begin{equation}
\wk{u}=\cases{
1&if $u=o$;\cr
-{C_u(k)}/{\la}&if $u\in\calU'$,\cr
}
\label{wDef}
\end{equation} 
where $C_u(k)$ is defined in(\ref{Ckdef}).
We will construct our $\vov$ by the standard Rayleigh-Schr\"{o}dinger
perturbation theory so that it coincides with $\wv$ if $\kappa=0$.

For a fixed $k\in\calK$, we denote by 
$\Pt=(\Pk{u,u'})_{u,u'\in\calU}$ the orthogonal projection 
(in the linear space ${\bf C}^b$)
onto the vector $\wv$.
From (\ref{wDef}), we explicitly have
\begin{equation}
\Pk{u,u'}=\rbk{1+\frac{A(k)}{\la^2}}^{-1}\times
\cases{
1&if $u=u'=o$;\cr
-{C_u(k)}/{\la}&if $u\in\calU'$, $u'=o$;\cr
-{C_{u'}(k)}/{\la}&if $u=o$, $u'\in\calU'$;\cr
{C_u(k)C_{u'}(k)}/{\la^2}&if $u,u'\in\calU'$,\cr
}
\label{PDef}
\end{equation}
where $A(k)$ is defined in (\ref{Akdef}).

By comparing (\ref{SchVec12}) and (\ref{flatdispersion}), we find
that the matrix $\Mt$ (with a fixed $k$) has simple eigenvalues $0$,
$1+A(k)/\la^2$, and $(b-2)$-fold degenerate eigenvalue $1$.
Since $\wv$ is the eigenvector corresponding to the eigenvalue $0$,
the matrix $\Mt+\Pt$ has eigenvalues not less than $1$, and hence is
invertible.
We define
\begin{equation}
\Wt=(\Mt+\Pt)^{-1}.
\label{WDef}
\end{equation}
From (\ref{Mkdef}) and (\ref{PDef}), we find\footnote{
For arbitrary vectors ${\bf v}=(v_u)_{u\in\calU}$ and
${\bf w}=(w_u)_{u\in\calU}$, we define their Kronecker product as
${\bf v}\otimes{\bf w}=(v_uw_{u'})_{u,u'\in\calU}$ which can be 
regarded as a $b\times b$ matrix.
}
\begin{equation}
\Mt+\Pt=\It+\rbk{1+\frac{A(k)}{\la^2}}^{-1}
\av\otimes\av
\label{M+Prep}
\end{equation}
where $\It$ is the identity matrix, and the vector $\av=(\ak{u})_{u\in\calU}$ 
is defined as
\begin{equation}
\ak{u}=\cases{
{A(k)}/{\la^2}&if $u=o$;\cr
{C_u(k)}/{\la}&if $u\in\calU'$.\cr
}
\label{avDef}
\end{equation}
By using the representation (\ref{M+Prep}), and the general formula
\begin{equation}
\rbk{\It+\alpha\,{\bf v}\otimes{\bf v}}^{-1}
=
\It-\frac{\alpha}{\alpha({\bf v},{\bf v})+1}\,{\bf v}\otimes{\bf v},
\label{inversematrix}
\end{equation}
we find from (\ref{WDef}) that
\begin{equation}
\Wt=\It-\rbk{1+\frac{A(k)}{\la^2}}^{-2}\av\otimes\av,
\label{Wrep}
\end{equation}
where we noted $(\av,\av)=\cbk{A(k)/\la^2}+\cbk{A(k)/\la^2}^2$.

Following the philosophy of the Rayleigh-Schr\"{o}dinger perturbation theory,
we are going to express the eigenvector of (\ref{SchVec12}) (for
a fixed $k$) with the lowest
eigenvalue $\ep_1(k)$ as a power series in $\kappa$ as
\begin{equation}
\vov=\sum_{n=0}^\infty\kappa^n\vo{n},
\label{vexp}
\end{equation}
where $\vo{n}$ is a vector independent of $\kappa$.
We require
\begin{equation}
\vo{0}=\wv,
\label{v0=w}
\end{equation}
and
\begin{equation}
\rbk{\wv,\vo{n}}=0,
\label{wv=0}
\end{equation}
for any $n\ge1$.
We also express the eigenvalue as
\begin{equation}
\ep_1(k)=t\sum_{n=1}^\infty\kappa^n e_n(k),
\label{eexp}
\end{equation}
where the $0$-th order is vanishing since we have $\ep_1(k)=0$
when $\kappa=0$ (which corresponds to the flat-band model).

By substituting the expression (\ref{vexp}) into the Schr\"{o}dinger equation
(\ref{SchVec12}), and collecting the terms with the $n$-th power of $\kappa$,
we get 
\begin{equation}
\sumtwo{j,\ell\ge0}{(j+\ell=n-1)}
e_{j+1}(k)\,\vo{\ell}
=
\la^2\,\Mt\,\vo{n}+\Qt\,\vo{n-1},
\label{pertbasic}
\end{equation}
for any $n\ge1$.
In the present and the next subsection, summations like the above 
are always taken over integers (unless otherwise mentioned). 
The relation (\ref{pertbasic}) is the basis of our perturbation theory.
By taking the inner product with $\vo{0}=\wv$ in (\ref{pertbasic}),
we get
\begin{equation}
e_n(k)=\frac{(\wv,\Qt\vo{n-1})}{(\wv,\wv)}.
\label{alpharep}
\end{equation}
For $n\ge1$, we have $\Pt\vo{n}=0$ because of (\ref{wv=0}).
Thus, by using (\ref{WDef}), we can write
\begin{equation}
\Wt\Mt\vo{n}=\cbk{\Mt+\Pt}^{-1}\cbk{\Mt+\Pt}\vo{n}=\vo{n},
\label{WMv}
\end{equation}
for $n\ge1$.
Applying $\Wt$ from the left of (\ref{pertbasic}) and using
(\ref{WMv}), we get the recursion relation
\begin{equation}
\vo{n}=-\frac{1}{\la^2}\Wt\Qt\vo{n-1}
+\frac{1}{\la^2}\sumtwo{j,\ell\ge0}{(j+\ell=n-1)}
\frac{(\wv,\Qt\vo{j})}{(\wv,\wv)}\Wt\vo{\ell},
\label{basicrec}
\end{equation}
where we have substituted (\ref{alpharep}) for $e_j(k)$.
Since the right-hand side of (\ref{basicrec}) only contains
$\vo{m}$ with $m<n$,
we can in principle determine $\vo{n}$ with any $n$ by using 
(\ref{basicrec}) recursively.

Let us rewrite the recursion relation (\ref{basicrec}) in a more explicit form.
By substituting $(\wv,\wv)=1+(A(k)/\la^2)$ (which follows from (\ref{wDef})
and (\ref{Akdef})), and the explicit form (\ref{Wrep}) of $\Wt$,
we find that (\ref{basicrec}) becomes
\begin{equation}
\vo{n}={\bf V}^{(1)}_n+{\bf V}^{(2)}_n+{\bf V}^{(3)}_n+{\bf V}^{(4)}_n,
\label{rec2}
\end{equation}
with 
\begin{equation}
{\bf V}^{(1)}_n=-\frac{1}{\la^2}\Qt\,\vo{n-1},
\label{Vn1}
\end{equation}
\begin{equation}
{\bf V}^{(2)}_n=\frac{1}{\la^2}
\sum_{h=0}^\infty(h+1)\rbk{-\frac{A(k)}{\la^2}}^h
(\av,\Qt\vo{n-1})\,\av,
\label{Vn2}
\end{equation}
\begin{equation}
{\bf V}^{(3)}_n=\frac{1}{\la^2}
\sum_{h=0}^\infty\rbk{-\frac{A(k)}{\la^2}}^h
\sumtwo{j,\ell\ge0}{(j+\ell=n-1)}
(\wv,\Qt\vo{j})\,\vo{\ell},
\label{Vn3}
\end{equation}
and
\begin{equation}
{\bf V}^{(4)}_n=-\frac{1}{\la^2}
\sum_{h=0}^\infty\frac{(h+1)(h+2)}{2}
\rbk{-\frac{A(k)}{\la^2}}^h
\sumtwo{j,\ell\ge0}{(j+\ell=n-1)}
(\wv,\Qt\vo{j})(\av,\vo{\ell})\,\av.
\label{Vn4}
\end{equation}
\subsection{Recursive Bounds for the Perturbation Coefficients}
\label{secrecbound}
Let us construct the vector $\vov$ as in the expression (\ref{vexp})
by using the recursion relations (\ref{basicrec}), (\ref{rec2}),
along with the initial condition (\ref{v0=w}).
The construction proceeds in an inductive manner.
We first {\em assume} that the $u$-component (where $u\in\calU$)
of the vector $\vo{n}$ can be written as
\begin{eqnarray}
\rbk{\vo{n}}_u &=&
\rbk{\frac{1}{\la^2}}^n\sum_{m=0}^\infty\rbk{\frac{1}{\la}}^m
\sumthree{(s_i,t_i)\in\calU\times\calU}{{\rm with\ } i=1,\ldots,n}
{{\rm s.t.\ } (s_i,t_i)\le(s_{i+1},t_{i+1})}
\sumthree{u_j\in\calU'}{{\rm with\ } j=1,\ldots,m}
{{\rm s.t.\ } u_j\le u_{j+1}}\times
\ret
&&
\times
\alpha_1(u;\cbk{(s_i,t_i)},\cbk{u_j})
\rbk{\prod_{i=1}^n\Qk{s_i,t_i}}
\rbk{\prod_{j=1}^m\Ck{u_j}},
\label{vnbasic}
\end{eqnarray}
with $k$-independent coefficients $\alpha_1(u;\cbk{(s_i,t_i)},\cbk{u_j})$.
In (\ref{vnbasic}), the summation over $\cbk{(s_i,t_i)}_{i=1,\ldots,n}$
and $\cbk{u_j}_{j=1,\ldots,m}$ are restricted to the combinations
which satisfy $(s_i,t_i)\le(s_{i+1},t_{i+1})$ and
$u_j\le u_{j+1}$, respectively.
Here we have introduced an arbitrary complete ordering in the sets
$\calU\times\calU$ and $\calU'$.

Let us define
\begin{equation}
\alt_1(n,m)=\supthree{u}{\cbk{(s_i,t_i)}_{i=1,\ldots,n}}
{\cbk{u_j}_{j=1,\ldots,m}}
\abs{\alpha_1(u;\cbk{(s_i,t_i)},\cbk{u_j})},
\label{alpha1}
\end{equation}
where the sup is taken over all the possible combinations that appear
in (\ref{vnbasic}) with the given $m$.
The quantity $\alt_1(n,m)$ plays the essential role in our inductive proof.

From (\ref{wDef}), it is obvious that $\vo{0}=\wv$ can be written in the 
form (\ref{vnbasic}).
We also find that the recursion relation (\ref{basicrec}) ``preserves'' the
form (\ref{vnbasic}) since the recursion essentially consists of 
multiplications by $\Ck{u}$ (or $A(k)=\sum_{u\in\calU'}\cbk{\Ck{u}}^2$)
and the matrix elements of $\Qt$.
See (\ref{rec2}) and (\ref{Vn1})-(\ref{Vn4}).
This observation determines $\al_1(\cdots)$ uniquely, and formally ``proves''
the validity of the representation (\ref{vnbasic}) if one neglects the problem
of convergence.

Let us turn to the harder problem of controlling $\alt_1(n,m)$
inductively and proving convergence of the sum in (\ref{vnbasic}).
Our strategy is to substitute the expression (\ref{vnbasic})
for $\vo{1},\ldots,\vo{n-1}$ into the right-hand side of the recursion relation
(\ref{basicrec}), reorganize the resulting expressions for $\vo{n}$ so the it
becomes the form of (\ref{vnbasic}), and finally express the coefficients
$\al_1$ for $\vo{n}$ in terms of $\al_1$ for  $\vo{1},\ldots,\vo{n-1}$.
The final expression leads us to an upper bound for $\alt_1(n,m)$ in terms
of $\alt(n',m')$ with $n'<n$ and $m'\le m$.
See (\ref{alpha1j}), (\ref{alpha11}), (\ref{alpha122}), (\ref{alpha13}),
and (\ref{alpha14}).

The above procedure is easy to describe, but is too complicated to be 
executed explicitly.
We shall take a slightly less complicated way, where we skip the intermediate
calculations and directly get the final upper bounds for $\alt_1(n,m)$.
To avoid too much complication, we write the desired upper bound as
\begin{equation}
\alt_1(n,m)\le\sum_{j=1}^4\alt_1^{(j)}(n,m),
\label{alpha1j}
\end{equation}
where $\alt_1^{(j)}(n,m)$ are suitable upper bounds for the contributions to
$\al_1(n,m)$ from ${\bf V}_n^{(j)}$ in the recursion formula (\ref{rec2}).

To  bound the contribution from ${\bf V}_n^{(1)}$ (\ref{Vn1})
and get an upper bound $\altn{1}{1}$, 
we assume that $\vo{n}$
is written as (\ref{vnbasic}), and then ask which $\Qk{s_i,t_i}$
in (\ref{vnbasic})
comes form the $\Qt$ which explicitly appears in the right-hand side of 
(\ref{Vn1}).
Since there are at most $b^2$ different $\Qk{s,t}$'s, we can set
\begin{equation}
\altn{1}{1}(n,m)=b^2\,\altn{1}{1}(n-1,m).
\label{alpha11}
\end{equation}
 
To bound the contribution from ${\bf V}_n^{(2)}$ (\ref{Vn2}),
we note that one of the components of $\av$ (\ref{avDef}) is 
$A(k)/\la^2=\sum_{u\in\calU'}\cbk{\Ck{u}/\la}^2$, and
$(b-1)$-components of $\av$ are of the form $\Ck{u}/\la$.
By considering all the possible combinations of these components, we can set
\begin{eqnarray}
\altn{2}{1}(n,m)&=&
b^2\sum_{h=0}^\infty
(h+1)(b-1)^h
\Bigl\{
(b-1)^{2h+4}(b-1)^2\,\alt_1(n-1,m-2h-4)
\ret
&&
\hspace{2cm}
+(b-1)^{2h+3}(b-1)^2\,\alt_1(n-1,m-2h-3)
\ret&&
\hspace{2cm}
+(b-1)^{2h+2}(b-1)\,\alt_1(n-1,m-2h-2)
\Bigr\}.
\label{alpha12}
\end{eqnarray}
The prefactor $b^2$ appears for the same reason as in (\ref{alpha11}).
The factors $(b-1)^{2h+4}$, $(b-1)^{2h+3}$, and $(b-1)^{2h+2}$
are the upper bounds for the number of ways to identify $\Ck{u}$'s in
(\ref{vnbasic}) as coming from $A(k)$ or $\av$ in the right-hand side
of (\ref{Vn2}).
Since $A(k)$ contains products of two $\Ck{u}$'s, we have the common
factor $(b-1)^{2h}$.
For convenience, we reorganize (\ref{alpha12}) as
\begin{equation}
\altn{2}{1}(n,m)=\sum_{h,p\ge0}
b^2(b+1)^{h+m-p}(h+1)\,\alt_1(n-1,p)
\sum_{\mu=2,3,4}\xin{2}{\mu}\delta_{2h+p,m-\mu},
\label{alpha122}
\end{equation}
with $\xin{2}{2}=b-1$, and $\xin{2}{3}=\xin{2}{4}=(b-1)^2$.

The next term ${\bf V}_n^{(3)}$ (\ref{Vn3}) contains two $\vo{n'}$ vectors.
This means that we need to identify $\Ck{u}$'s in $\vo{n}$ (in the form
(\ref{vnbasic})) as either 1)~coming from $A(k)$ or $\wv$ explicitly
contained in (\ref{Vn3}), 2)~coming from $\vo{j}$, or
3)~coming from $\vo{\ell}$.
Identifications of $\Ck{u}$'s into the classes 2) and 3) requires a new
combinatoric estimate.
We need to count the number of ways to decompose $(p+q)$ objects
into $p$ objects and $q$ objects.
There are $(b-1)$ different kinds of objects, and we do not
distinguish between the objects of the same kind.
(Of course, the objects are $\Ck{u}$'s.)
A crude upper bound for the desired combinatoric number is obtained by considering
what are the possible contents of $p$ objects.
This observation shows that the desired number is bounded from above by
\begin{equation}
{p+(b-1)-1\choose (b-1)-1}
\le
\frac{\cbk{p+(b-1)-1}^{(b-1)-1}}{\cbk{(b-1)-1}!}.
\end{equation}
Since there is a similar estimate with $p$ replaced by $q$, the desired combinatoric
number is bounded from above by the quantity $F(b-1;p,q)$, where
\begin{equation}
F(g;p,q)=\min\cbk{
\frac{(p+g-1)^{g-1}}{(g-1)!},\frac{(q+g-1)^{g-1}}{(g-1)!}
}
\label{FDef}
\end{equation}
An analogous combinatoric problem arises when we identify $\Qk{s,t}$'s in
$\vo{n}$ as coming from either $\Qt$, $\vo{j}$, or $\vo{\ell}$ in the right-hand
side of (\ref{Vn3}).
Consequently we have the following upper bound for the contribution from
${\bf V}_n^{(3)}$;
\begin{eqnarray}
\altn{3}{1}(n,m)&=&\sum_{h,p,q\ge0}\sumtwo{j,\ell\ge0}{(j+\ell=n-1)}
b^2(b-1)^{h+m-(p+q)}F(b^2;j,\ell)F(b-1;p,q)\times
\ret
&&
\times
\alt_1(j,p)\,\alt_1(\ell,q)
\sum_{\mu=0,1}\xin{3}{\mu}\delta_{2h+p+q,m-\mu},
\label{alpha13}
\end{eqnarray}
with $\xin{3}{0}=1$ and $\xin{3}{1}=b-1$.

Finally the contribution from ${\bf V}_n^{(4)}$ (\ref{Vn4}) can be bounded in a similar
manner as
\begin{eqnarray}
&&\altn{4}{1}(n,m)=
\sum_{h,p,q\ge0}\sumtwo{j,\ell\ge0}{(j+\ell=n-1)}
b^2\frac{(h+1)(h+2)}{2}(b-1)^{h+m-(p+q)}\times
\ret&&
\quad\times
F(b^2;j,\ell)F(b-1;p,q)\,\alt_1(j,p)\,\alt_1(\ell,q)
\sum_{\mu=2,3,4,5}\xin{4}{\mu}\delta_{2h+p+q,m-\mu},
\label{alpha14}
\end{eqnarray}
with $\xin{4}{2}=(b-1)$, $\xin{4}{3}=b(b-1)$, $\xin{4}{4}=b(b-1)^2$,
and $\xin{4}{5}=(b-1)^3$.
\subsection{Upper Bounds for the Coefficients}
\label{secinduction}
We shall carry out an inductive proof of upper bounds for $\alt_1(n,m)$.
Our inductive assumption is that there are constants $\beta$ and $\gamma$
(determined later) which depend only on the band number $b$, 
and we have\footnote{
The factors $(n'+1)^{b^2+1}$ and $(m'+1)^b$ are indispensable for carrying
out our inductive proof.
We do not mean, however, 
that (\ref{theass}) represents the correct asymptotic
behavior of $\alt_1(n',m')$.
}
\begin{equation}
\alt_1(n',m')\le\frac{\beta^{n'}}{(n'+1)^{b^2+1}}\frac{\gamma^{m'}}{(m'+1)^b},
\label{theass}
\end{equation}
for any nonnegative $n'$ and $m'$, such that $n'<n$ and $m'\le m$.
Our goal is to prove  the same bound for $n'=n$ and $m'=m$.
Since $\vo{0}=\wv$, we have $\alt_1(0,0)=1$
(by comparing (\ref{wDef}), (\ref{vnbasic}), and (\ref{alpha1})),
which clearly satisfies the assumption (\ref{theass}) provided that 
$\beta,\gamma\ge1$.

In what follows, we shall bound each of $\altn{j}{1}(n,m)$
in (\ref{alpha1j}) by using the assumption (\ref{theass}).
We start from $\altn{1}{1}(n,m)$.
Since the right-hand side of (\ref{alpha11}) contains only $\alt_1(n',m')$
with $n'<n$ and $m'\le m$, we can use the assumption (\ref{theass}) to get
\begin{eqnarray}
\altn{1}{1}(n,m)&\le&b^2\frac{\beta^{n-1}}{n^{b^2+1}}
\frac{\gamma^m}{(m+1)^b}
\ret
&=&
\frac{\beta^n}{(n+1)^{b^2+1}}\frac{\gamma^m}{(m+1)^b}
\times
\cbk{\frac{b^2}{\beta}\rbk{\frac{n+1}{n}}^{b^2+1}}.
\label{alpha11b}
\end{eqnarray}
Note that we have factored out the desired quantity in front.

Next we investigate $\altn{2}{1}(n,m)$ by substituting the assumption
(\ref{theass}) into (\ref{alpha122}).
Again we factor out the desired quantity to get
\begin{eqnarray}
&&
\altn{2}{1}(n,m)
\ret&&
\le
\sum_{h,p\ge0}b^2(b+1)^{h+m-p}(h+1)
\frac{\beta^{n-1}}{n^{b^2+1}}\frac{\gamma^p}{(p+1)^b}
\sum_{\mu=2,3,4}\xin{2}{\mu}\delta_{2h+p,m-\mu}
\ret&&
=\frac{\beta^n}{(n+1)^{b^2+1}}\frac{\gamma^m}{(m+1)^b}
\times
\cbk{\frac{b^2}{\beta}\rbk{\frac{n+1}{n}}^{b^2+1}}\times
\ret&&
\times
\cbk{
\sum_{h,p\ge0}(h+1)\rbk{\frac{m+1}{p+1}}^b(b-1)^h
\rbk{\frac{b-1}{\gamma}}^{m-p}
\sum_{\mu=2,3,4}\xin{2}{\mu}\delta_{2h+p,m-\mu}
}.
\label{alpha123}
\end{eqnarray}
We bound the sum over $h$ and $p$ as
\begin{eqnarray}
&&
\sum_{h,p\ge0}(h+1)\rbk{\frac{m+1}{p+1}}^b(b-1)^h
\rbk{\frac{b-1}{\gamma}}^{m-p}
\sum_{\mu=2,3,4}\xin{2}{\mu}\delta_{2h+p,m-\mu}
\ret&&
=\sum_{\mu=2,3,4}\xin{2}{\mu}\rbk{\frac{b-1}{\gamma}}^\mu
\quad
\sum_{h=0}^{\sbk{(m-\mu)/2}}
\rbk{\frac{m+1}{m-2h-\alpha}}^b(h+1)\cbk{\frac{\rbk{b-1}^3}{\gamma^2}}^h
\ret&&
\le
\sum_{\mu=2,3,4}\xin{2}{\mu}\rbk{\frac{b-1}{\gamma}}^\mu
\left[
\sum_{h=0}^{\sbk{(m-2\mu-1)/4}}2^b(h+1)
\cbk{\frac{\rbk{b-1}^3}{\gamma^2}}^h
\right.
\ret&&
\quad\left.
+\sum_{h=\sbk{(m-2\mu-1)/4}+1}^\infty
(m+1)^b(h+1)\cbk{\frac{\rbk{b-1}^3}{\gamma^2}}^h
\right]
\ret&&
\le
\sum_{\mu=2,3,4}\xin{2}{\mu}
\rbk{\frac{b-1}{\gamma}}^\mu
\rbk{1-\frac{(b-1)^3}{\gamma}}^{-2}
\times
\ret&&
\times
\Biggl\{
2^b+ 2^b\rbk{\frac{m-2\mu-1}{4}+1}
\rbk{\frac{(b-1)^3}{\gamma^2}}^{(m-2\mu+7)/4}
\ret&&\quad
+(m+1)^b\rbk{\frac{m-2\mu-1}{4}+2}
\rbk{\frac{(b-1)^3}{\gamma^2}}^{(m-2h+3)/4}
\Biggr\}
\ret&&
\le
2^{b+1}\sum_{\mu=2,3,4}\xin{2}{\mu}\rbk{\frac{b-1}{\gamma}}^\mu,
\label{alpha124}
\end{eqnarray}
where $[\cdots]$ is the Gauss symbol.
The final inequality in (\ref{alpha124}) is valid for sufficiently large $\gamma$.
By substituting (\ref{alpha124}) into (\ref{alpha123}), we get
\begin{eqnarray}
\altn{2}{1}(n,m)
&\le&
\frac{\beta^n}{(n+1)^{b^2+1}}\frac{\gamma^m}{(m+1)^b}
\times
\cbk{\frac{b^2}{\beta}\rbk{\frac{n+1}{n}}^{b^2+1}}
\times
\ret&&
\times
2^{b+1}\cbk{
\frac{(b-1)^3}{\gamma^2}+\frac{(b-1)^5}{\gamma^3}+\frac{(b-1)^6}{\gamma^4}
}.
\label{alpha12b}
\end{eqnarray}

We postpone the estimate of $\altn{3}{1}(n,m)$, and treat $\altn{4}{1}(n,m)$.
Again by substituting the inductive assumption (\ref{theass}) into
(\ref{alpha14}), we get
\begin{eqnarray}
&&
\altn{4}{1}(n,m)
\ret&&
\le
\sum_{h,p,q\ge0}\sumtwo{j,\ell\ge0}{(j+\ell=n-1)}
\Biggl\{
b^2\frac{(h+1)(h+2)}{2}(b-1)^{h+m-(p+q)}
F(b^2;j,\ell)F(b-1;p,q)\times
\ret&&
\quad
\times
\frac{\beta^j}{(j+1)^{b^2+1}}\frac{\gamma^p}{(p+1)^b}
\frac{\beta^\ell}{(\ell+1)^{b^2+1}}\frac{\gamma^q}{(q+1)^b}
\sum_{\mu=2,3,4,5}\xin{4}{\mu}\delta_{2h+p+q,m-\mu}
\Biggr\}
\ret&&
=\frac{\beta^n}{(n+1)^{b^2+1}}\frac{\gamma^m}{(m+1)^b}
\times\frac{b^2}{\beta}\,S_1S_2,
\label{alpha142}
\end{eqnarray}
with 
\begin{equation}
S_1=\sumtwo{j,\ell\ge0}{(j+\ell=n-1)}
\rbk{\frac{n+1}{(j+1)(\ell+1)}}^{b^2+1}F(b^2;j,\ell),
\label{S1}
\end{equation}
and
\begin{eqnarray}
S_2&=&\sum_{h,p,q\ge0}
\Biggl\{
\frac{(h+1)(h+2)}{2}(b-1)^j\rbk{\frac{b-1}{\gamma}}^{m-(p+q)}
\times
\ret&&
\times
\rbk{\frac{m+1}{(p+1)(q+1)}}^bF(b-1;p,q)
\sum_{\mu=2,3,4,5}\xin{4}{\mu}\delta_{2h+p+q,m-\mu}
\Biggr\}.
\label{S2}
\end{eqnarray}

We first bound $S_1$.
By using the symmetry between $j$ and $\ell$ in (\ref{S1})
and in $F(b^2;j,\ell)$, we have
\begin{equation}
S_1\le2\sum_{j=0}^{[(n-1)/2]}
\rbk{\frac{n+1}{(j+1)(n-j)}}^{b^2+1}
\frac{(j+b^2-1)^{b^2-1}}{(b^2-1)!}.
\label{S1b1}
\end{equation} 
By noting that the bounds $n-j\ge(n+1)/2$ and 
$(j+b^2-1)\le(b^2-1)(j+1)$ hold within the range of the summation,
we can further bound $S_1$ as
\begin{eqnarray}
S_1&\le&2\cbk{\frac{n+1}{(n+1)/2}}^{b^2+1}\quad
\sum_{j=0}^{[(n-1)/2]}
\frac{(b^2-1)^{b^2-1}(j+1)^{b^2-1}}{(j+1)^{b^2+1}(b^2-1)!}
\ret
&\le&
2\,2^{b^2+1}\frac{(b^2-1)^{b^2-1}}{(b^2-1)!}\sum_{j=0}^\infty(j+1)^{-2}
\ret
&=&J(b^2),
\label{S1b2}
\end{eqnarray}
where we introduced
\begin{equation}
J(g)=\frac{\pi^2}{3}2^{g+1}\frac{(g-1)^{g-1}}{(g-1)!}.
\label{GDef}
\end{equation}
The quantity $S_2$ (\ref{S2}) can be bounded by combining the techniques used
in the bounds (\ref{alpha123}), (\ref{alpha124}), and in the bounds (\ref{S1b1}),
(\ref{S1b2}).  The resulting bound is
\begin{equation}
S_2\le J(b-1)\,2^{b+1}\cbk{
\frac{(b-1)^3}{\gamma^2}+\frac{b(b-1)^4}{\gamma^3}
+\frac{b(b-1)^6}{\gamma^4}+\frac{(b-1)^8}{\gamma^5}
}.
\label{S2b}
\end{equation}
By substituting (\ref{S1b2}) and (\ref{S2b}) into (\ref{alpha142}),
we finally get
\begin{eqnarray}
\altn{4}{1}(n,m)&\le&
\frac{\beta^n}{(n+1)^{b^2+1}}\frac{\gamma^m}{(m+1)^b}
\times\frac{b^2}{\beta}\,J(b^2)J(b-1)
\times\ret
&&
\times
2^{b+1}\cbk{
\frac{(b-1)^3}{\gamma^2}+\frac{b(b-1)^4}{\gamma^3}
+\frac{b(b-1)^6}{\gamma^4}+\frac{(b-1)^8}{\gamma^5}
}.
\label{alpha14b}
\end{eqnarray}

The quantity $\altn{3}{1}(n,m)$ (\ref{alpha13}) can be bounded in the
same manner as $\altn{4}{1}$.
The resulting bound is
\begin{equation}
\altn{3}{1}(n,m)\le
\frac{\beta^n}{(n+1)^{b^2+1}}\frac{\gamma^m}{(m+1)^b}
\times\frac{b^2}{\beta}\,J(b^2)J(b-1)\,
2^{b+1}\rbk{1+\frac{(b-1)^2}{\gamma}}.
\label{alpha13b}
\end{equation}

Finally, by recalling (\ref{alpha1j}),
we sum up the bounds
(\ref{alpha11b}), (\ref{alpha12b}), (\ref{alpha14b}), 
and (\ref{alpha13b}) to bound  $\alt_1(n,m)$ as
\begin{eqnarray}
&&
\alt_1(n,m)
\le
\frac{\beta^n}{(n+1)^{b^2+1}}\frac{\gamma^m}{(m+1)^b}
\times
\ret&&
\times\frac{b^2}{\beta}
\left[
2^{b^2+1}+
2^{b^2+b+2}\cbk{
\frac{(b-1)^3}{\gamma^2}+\frac{(b-1)^5}{\gamma^3}+\frac{(b-1)^6}{\gamma^4}
}+\right.
\ret&&
\left.
+2^{b+1}J(b^2)J(b-1)
\cbk{
1+\frac{(b-1)^2}{\gamma}+
\frac{(b-1)^3}{\gamma^2}+\frac{b(b-1)^4}{\gamma^3}
+\frac{b(b-1)^6}{\gamma^4}+\frac{(b-1)^8}{\gamma^5}
}
\right]
\ret
&&\le
\frac{\beta^n}{(n+1)^{b^2+1}}\frac{\gamma^m}{(m+1)^b},
\label{alpha1b}
\end{eqnarray}
where the final bound holds for sufficiently large $\beta$
and $\gamma$.
Note that how large these constants should be depend only on
the band number $b$.
Since the bound (\ref{alpha1b}) has precisely the same form as the
inductive assumption (\ref{theass}), we have proved that
$\alt_1(n',m')$ satisfies the bound (\ref{theass}) for any $n',m'\ge0$.
\subsection{Construction of the Vector $\vov$}
We are now ready to construct the ground state vector $\vov$, which played
essential role in our construction in Sections~\ref{seckspace}
and \ref{secrealspace}.
By substituting the series (\ref{vnbasic}) into (\ref{vexp}), we
get the following power series expression for $\vov=(\vok{u})_{u\in\calU}$.
\begin{eqnarray}
\vok{u}&=&
\wk{u}+
\sum_{n=1}^\infty\rbk{\frac{\kappa}{\la^2}}^n
\sum_{m=0}^\infty\rbk{\frac{1}{\la}}^m
\sumthree{(s_i,t_i)\in\calU\times\calU}{{\rm with\ } i=1,\ldots,n}
{{\rm s.t.\ } (s_i,t_i)\le(s_{i+1},t_{i+1})}
\sumthree{u_j\in\calU'}{{\rm with\ } j=1,\ldots,m}
{{\rm s.t.\ } u_j\le u_{j+1}}
\times
\ret
&&\times
\al_1(u;\cbk{(s_i,t_i)},\cbk{u_j})
\rbk{\prod_{i=1}^n\Qk{s_i,t_i}}
\rbk{\prod_{j=1}^m\Ck{u_j}}.
\label{vexp2}
\end{eqnarray}
See the discussion following (\ref{vnbasic}) for the range of the summations.
To investigate the convergence of (\ref{vexp2}), we note that
(\ref{Ckdef}) and (\ref{Ff}) imply
\begin{equation}
\abs{\Ck{u}}\le\abs{\calF_u}=2^\nu,
\label{Ckb}
\end{equation}
and 
\begin{equation}
\abs{\Qk{s,t}}\le1.
\label{Qkb}
\end{equation}
By using the above two bounds, the definition (\ref{alpha1}) of $\alt_1(n,m)$,
and the basic bound (\ref{theass}),
and by noting that the numbers of possible combinations of 
$\cbk{(s_i,t_i)}_{i=1,\ldots,n}$ and $\cbk{u_j}_{j=1,\ldots,m}$
are bounded from above by $b^{2n}$ and $(b-1)^m$, respectively,
we find that the absolute value of the summand in (\ref{vexp2})
for each pair of $n$ and $m$ is bounded from above by
\begin{equation}
\rbk{\frac{\akappa}{\la^2}}^n\rbk{\frac{1}{\la}}^m
b^{2n}(b-1)^m
\frac{\beta^n}{(n+1)^{b^2+1}}\frac{\gamma^m}{(m+1)^b}2^{\nu m}.
\label{absv<}
\end{equation}
The quantity (\ref{absv<}) is summable in $n$ and $m$ provided that 
$(\akappa/\la^2)b^2\beta<1$ and 
\newline
$\la^{-1}(b-1)\gamma2^\nu<1$.
If this is the case, the infinite sum in (\ref{vexp2}) is absolutely convergent.
This completes our construction of the ground state vector $\vov$.

Let us summarize the present result as the following lemma.
\begin{lemma}
\label{v0Lemma}
There exist positive constants $\beta$ and $\gamma$ which depend only on
the band number $b$.
When the parameters $\la$ and $\kappa$ satisfy
\begin{equation}
\frac{\akappa}{\la^2}<\frac{1}{b^2\beta},
\label{k/lam<1}
\end{equation}
and
\begin{equation}
\la>2^\nu(b-1)\gamma,
\label{lam>1}
\end{equation}
the ground state vector $\vov$ (characterized by (\ref{SchVec2}))
is expressed by the absolutely convergent sum (\ref{vexp2}).
The coefficients $\al_1(u,\{s_i,t_i\},\{u_j\})$ in (\ref{vexp2})
are independent of $k$, and satisfy the bound
\begin{equation}
\supthree{u}{\{(s_i,t_i)\}_{i=1,\ldots,n}}{\{u_j\}_{j=1,\ldots,m}}
\abs{\al_1(u,\{s_i,t_i\},\{u_j\})}
\le
\frac{\beta^n}{(n+1)^{b^2+1}}\frac{\gamma^m}{(m+1)^b},
\label{a1bound}
\end{equation}
for any $n$, $m$.
\end{lemma}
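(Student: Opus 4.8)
The plan is to construct $\vov$ explicitly as the Rayleigh--Schr\"odinger perturbation series (\ref{vexp}), (\ref{eexp}) around the flat-band point $\kappa=0$, and then to control that series term by term. At $\kappa=0$ the lowest eigenvector is the explicit vector $\wv$ of (\ref{wDef}), and by (\ref{flatdispersion}) (equivalently by the eigenvalue structure of $\Mt$) there is a spectral gap of order $\la^2$ separating it from the remaining eigenvalues; this gap is exactly what makes the reduced resolvent $\Wt=(\Mt+\Pt)^{-1}$ of (\ref{WDef}) bounded and yields the recursion (\ref{basicrec}) for the coefficients $\vo{n}$. The first step is to verify that each $\vo{n}$ obtained by iterating (\ref{basicrec}) has the canonical form (\ref{vnbasic}): every operation in the recursion is a multiplication by some matrix element $\Qk{s,t}$ of $\Qt$, by a factor $\Ck{u}$, or by $A(k)=\sum_{u\in\calU'}(\Ck{u})^2$ (entering through the geometric expansion (\ref{Wrep}) of $\Wt$ and through powers of $A(k)/\la^2$), combined with $k$-independent scalar coefficients. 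Hence the $\al_1(u;\{(s_i,t_i)\},\{u_j\})$ are manifestly independent of $k$, and the entire analytic content of the lemma is reduced to bounding their magnitude $\alt_1(n,m)$.

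The core of the argument is the inductive bound (\ref{theass}) on $\alt_1(n,m)$. The scheme is: substitute the ansatz (\ref{vnbasic}) for $\vo{0},\ldots,\vo{n-1}$ into the four pieces ${\bf V}^{(j)}_n$ of the recursion (\ref{rec2}), reorganize each into the form (\ref{vnbasic}), and read off recursive majorants $\altn{j}{1}(n,m)$ as in (\ref{alpha11}), (\ref{alpha122}), (\ref{alpha13}), (\ref{alpha14}), so that $\alt_1(n,m)\le\sum_{j=1}^4\altn{j}{1}(n,m)$ (\ref{alpha1j}). The combinatorial inputs are (i) the crude count $b^2$ for which entry of $\Qt$ is produced at a given step, (ii) the splitting bound $F(g;p,q)$ of (\ref{FDef}), needed because ${\bf V}^{(3)}_n$ and ${\bf V}^{(4)}_n$ are quadratic in the $\vo{\ell}$'s and one must count the ways to distribute a multiset of $p+q$ factors of a single kind (the $\Ck{u}$'s, and likewise the $\Qk{s,t}$'s), drawn from $g$ kinds, into a $p$-part and a $q$-part, and (iii) the geometric sums over the expansion index $h$. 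One then feeds the inductive hypothesis $\alt_1(n',m')\le\beta^{n'}(n'+1)^{-(b^2+1)}\gamma^{m'}(m'+1)^{-b}$ (valid for $n'<n$, $m'\le m$) into each $\altn{j}{1}(n,m)$, factors out the target quantity $\beta^n(n+1)^{-(b^2+1)}\gamma^m(m+1)^{-b}$, and checks that the leftover factor is $\le1$ once $\beta$ and $\gamma$ are chosen large enough in terms of $b$ alone. The polynomial weights $(n'+1)^{-(b^2+1)}$ and $(m'+1)^{-b}$ are precisely what forces the residual $j$-sums and $h$-sums to converge to finite constants (packaged in the quantities $J(g)$ of (\ref{GDef}) and in $\sum_{j\ge0}(j+1)^{-2}=\pi^2/6$); the base case $\alt_1(0,0)=1$ holds for any $\beta,\gamma\ge1$, and (\ref{alpha1b}) closes the induction.

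Finally, absolute convergence of the double series (\ref{vexp2}) follows by a routine geometric estimate: using $|\Ck{u}|\le2^\nu$ (\ref{Ckb}), $|\Qk{s,t}|\le1$ (\ref{Qkb}), the bounds $b^{2n}$ and $(b-1)^m$ for the numbers of admissible index families $\{(s_i,t_i)\}$ and $\{u_j\}$, and the coefficient bound (\ref{theass}), the $(n,m)$ summand of (\ref{vexp2}) is majorized by (\ref{absv<}), which is summable whenever $(\akappa/\la^2)b^2\beta<1$ and $\la^{-1}2^\nu(b-1)\gamma<1$ --- exactly the hypotheses (\ref{k/lam<1}), (\ref{lam>1}). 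This produces $\vov$ as a well-defined vector satisfying (\ref{SchVec2}) with $k$-independent coefficients obeying (\ref{a1bound}), which is the assertion of the lemma. I expect the main obstacle to be bookkeeping rather than conceptual: one must organize the four recursive majorants so that all the combinatorial factors ($b^2$, $F(g;p,q)$, $J(g)$, the $\xin{j}{\mu}$'s) and the geometric $h$-sums are controlled simultaneously, and then verify by hand that the concrete constants $\beta$ and $\gamma$ genuinely make the residual factor in (\ref{alpha1b}) drop below $1$.
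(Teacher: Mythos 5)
Your proposal reproduces the paper's own argument step for step: the Rayleigh--Schr\"odinger expansion around $\kappa=0$ with the reduced resolvent $\Wt=(\Mt+\Pt)^{-1}$, the induction on the $k$-independent coefficients $\alt_1(n,m)$ via the four recursive majorants $\altn{j}{1}(n,m)$ with the combinatorial factors $F(g;p,q)$, $J(g)$, and the geometric $h$-sums, and the closing convergence estimate using $|\Ck{u}|\le 2^\nu$, $|\Qk{s,t}|\le 1$, and the multiplicities $b^{2n}$, $(b-1)^m$. This is correct and essentially identical to the paper's proof in Sections~\ref{secsetup}--\ref{secinduction}.
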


\subsection{Dispersion Relation}
Let us investigate the dispersion relation $\ep_1(k)$ for the lowest band,
which appears, e.g., in (\ref{SchVec2}).
By substituting the expression (\ref{alpharep}) into the formal expansion
(\ref{eexp}) for $\ep_1(k)$, we find
\begin{equation}
\ep_1(k)=t\sum_{n=1}^\infty
\kappa^n\,
\frac{(\wv,\Qt\vo{n-1})}{(\wv,\wv)}.
\label{e=v}
\end{equation}
Since $\vo{n-1}$ is expressed as the convergent expansion (\ref{vnbasic}),
it is apparent from (\ref{e=v}) that there is a similar power series expansion
for $\ep_1(k)$.

In fact, by substituting the expansion (\ref{vnbasic}) into (\ref{e=v}),
and performing some estimates similar to those in Section~\ref{secrecbound},
we get the expansion
\begin{eqnarray}
\ep_1(k)
&=&
\la^2t
\sum_{n=1}^\infty\rbk{\frac{\kappa}{\la^2}}^n
\sum_{m=0}^\infty\rbk{\frac{1}{\la}}^m
\sumthree{(s_i,t_i)\in\calU\times\calU}{{\rm with\ } i=1,\ldots,n}
{{\rm s.t.\ } (s_i,t_i)\le(s_{i+1},t_{i+1})}
\sumthree{u_j\in\calU'}{{\rm with\ } j=1,\ldots,m}
{{\rm s.t.\ } u_j\le u_{j+1}}
\times\ret&&\times
\al_2(\cbk{(s_i,t_i)},\cbk{u_j})
\rbk{\prod_{i=1}^n\Qk{s_i,t_i}}
\rbk{\prod_{j=1}^m\Ck{u_j}},
\label{ekexp}
\end{eqnarray}
where the coefficient $\al_2(\cbk{(s_i,t_i)},\cbk{u_j})$ is independent of
$k$.
The range of the sums over $\{s_i,t_i\}$ and $\{u_j\}$ are the same as those
in (\ref{vnbasic}), (\ref{vexp2}).
For any $n\ge2$ and $m\ge0$, the coefficient $\al_2(\cbk{(s_i,t_i)},\cbk{u_j})$
in (\ref{ekexp}) satisfies the bound
\begin{eqnarray}
&&
\suptwo{\{(s_i,t_i)\}_{i=1,\ldots,n}}{\{u_j\}_{j=1,\ldots,m}}
\abs{\al_2(\{s_i,t_i\},\{u_j\})}
\ret
&&
\le b^2\sum_{h,p\ge0}(b-1)^h(b-1)^{m-p}
\rbk{\sum_{\mu=0,1}\xi_\mu\,\delta_{2h+p,m-\mu}}
\alt_1(n-1,p),
\label{alpha2<1}
\end{eqnarray}
with $\xi_0=1$, $\xi_1=b-1$.
Substituting the bound (\ref{theass}) for $\alt_1$ into the right-hand side
of (\ref{alpha2<1}) and performing estimates similar to those in
Section~\ref{secinduction}, we find for $n\ge1$ and $m\ge0$ that\footnote{
The estimate for $n=1$ follows from explicit calculation.
}
\begin{equation}
\suptwo{\{(s_i,t_i)\}_{i=1,\ldots,n}}{\{u_j\}_{j=1,\ldots,m}}
\abs{\al_2(\{s_i,t_i\},\{u_j\})}
\le
\frac{\beta^n}{(n+1)^{b^2+1}}\frac{\gamma^m}{(m+1)^b},
\label{alpha2<}
\end{equation}
again for sufficiently large $\beta$ and $\gamma$.

The bound (\ref{alpha2<}), along with (\ref{Ckb}) and (\ref{Qkb}), proves the 
convergence of the sum (\ref{ekexp}) for $\la$ and $\kappa$ satisfying
the conditions (\ref{k/lam<1}) and (\ref{lam>1}).
\subsection{Dual Vectors}
We shall develop power series expansions for the dual vectors $\vt{u}$
(with $u\in\calU$) defined in (\ref{dualvDef}).
By recalling the definition (\ref{veuk}) of the vector $\vvv{e}$ with $e\in\calU'$,
the components of the Gramm matrix $G(k)$ (\ref{GrammDef}) can be expressed as
$(G(k))_{o,o}=\abs{\vov}^2$,
$(G(k))_{o,e}=(G(k))_{e,o}=0$,
$(G(k))_{e,e'}=\vok{e}\rbks{\vok{e'}}$, and
$(G(k))_{e,e}=\abs{\vok{o}}^2+\abs{\vok{e}}^2$,
where $e,e'\in\calU'$ and $e\ne e'$.
Thus the $b\times b$ matrix $G(k)$ can be compactly written in the form
\begin{equation}
G(k) =
\rbk{\matrix{
|\vov|^2&0&\ldots&0\cr
0&&&\cr
\vdots&&\Ht&\cr
0&&&\cr
}},
\label{Grep}
\end{equation}
where the $(b-1)\times(b-1)$ matrix $\Ht$ is given by\footnote{
$\Ht$ and the identity matrix $\It$ in (\ref{HkDef}) and (\ref{Hinv}) are 
the only $(b-1)\times(b-1)$ matrices that appear in the present paper.
Similarly $\gv$ is the only $(b-1)$-dimensional vector.
}
\begin{equation}
\Ht=|\,\vok{o}|^2\,\It+\gv\otimes\gv^*,
\label{HkDef}
\end{equation}
with the $(b-1)$-dimensional vectors 
$\gv=(\vok{e})_{e\in\calU'}$
and $\gv^*=\rbk{\rbks{\vok{e}}}_{e\in\calU'}$.

It is evident from (\ref{Grep}) that the inverse of the Gramm matrix is written as
\begin{equation}
G(k)^{-1}= \rbk{\matrix{
|\vov|^{-2}&0&\ldots&0\cr
0&&&\cr
\vdots&&\Ht^{-1}&\cr
0&&&\cr
}},
\label{Ginvrep}
\end{equation}
As for the inverse of $\Ht$, we use the general formula (\ref{inversematrix})
to get
\begin{equation}
\Ht^{-1}=\frac{1}{\abs{\vok{o}}^2}
\rbk{\It-\frac{1}{\abs{\vov}^2}\,\gv\otimes\gv^*}.
\label{Hinv}
\end{equation}

By substituting (\ref{Ginvrep}) and (\ref{Hinv}) to the definition (\ref{dualvDef})
of the dual vectors, we get
\begin{equation}
\vt{o}=\frac{1}{\abs{\vov}^2}\vov,
\label{vdo}
\end{equation}
and
\begin{equation}
\vt{e}=\frac{1}{\abs{\vok{o}}^2}\,\vvv{e}
-\frac{\rbks{\vok{e}}}{\abs{\vov}^2}
\sum_{e'\in\calU'}\vok{e'}\,\vvv{e'},
\label{vde}
\end{equation}
where $e\in\calU'$.
We again denote the components of the dual vectors as
$\vt{u}=(\vkt{u}{u'})_{u'\in\calU}$.
By using (\ref{veuk}), these two equations lead us to the following expressions
for the components of the dual vectors in terms of the components of the 
ground state vector $\vov$;
\begin{equation}
\vkt{o}{u}=\abs{\vov}^{-2}\vok{u},
\label{vdou}
\end{equation}
for $u\in\calU$,
\begin{equation}
\vkt{e}{o}=-\abs{\vov}^{-2}\rbks{\vok{e}},
\label{vdeo}
\end{equation}
\begin{equation}
\vkt{e}{e'}=-\abs{\vov}^{-2}\rbks{\vok{o}}\rbks{\vok{e}}
\vok{e'},
\label{vdeep}
\end{equation}
and
\begin{equation}
\vkt{e}{e}=\frac{1}{\vok{o}}
-\frac{\rbks{\vok{o}}\abs{\vok{e}}^2}{\abs{\vov}^2},
\label{vdee}
\end{equation}
where $e,e'\in\calU'$ and $e\ne e'$.

Recalling that each $\vok{u}$ admits the power series expansion (\ref{vexp2}),
it is clear from the expressions (\ref{vdou}), (\ref{vdeo}), (\ref{vdeep}), and
(\ref{vdee}) that there are similar expansions for the dual vectors.
In order to control these expansions, we substitute (\ref{vexp2}) into
(\ref{vdou}), (\ref{vdeo}), (\ref{vdeep}),  (\ref{vdee}), and reorganize the
resulting expressions into transparent series expansions.
Then by using the bounds (\ref{a1bound}) for the coefficients in (\ref{vexp2}),
we can control the coefficients of the new expansions for the dual vectors.
Unfortunately this straightforward procedure turns out to be rather tedious 
to carry out in practice.
We shall omit the details here since the required estimates are quite similar
to those in Sections~\ref{secrecbound} and \ref{secinduction}.

The resulting series expansions for the dual vectors can be written as
\begin{eqnarray}
\vkt{u}{u'}
&=&
\wkt{u}{u'}+
\sumtwo{g,\ell\ge0}{(g+\ell\ge1)}\rbk{\frac{\kappa}{\la^2}}^{g+\ell}
\sum_{m\ge0}\rbk{\frac{1}{\la}}^m
\times
\ret&&
\times
\sumthree{(q_h,r_h)\in\calU\times\calU}
{{\rm with\ } h=1,\ldots,g}
{{\rm s.t.\ } (q_h,r_h)\le(q_{h+1},r_{h+1})}
\quad
\sumthree{(s_i,t_i)\in\calU\times\calU}
{{\rm with\ } i=1,\ldots,\ell}
{{\rm s.t.\ } (s_i,t_i)\le(s_{i+1},t_{i+1})}
\quad
\sumthree{u_j\in\calU'}{{\rm with\ } j=1,\ldots,m}
{{\rm s.t.\ } u_j\le u_{j+1}}
\times
\ret&&
\times
\al_3(u,u';\cbk{(q_h,r_h)},\cbk{(s_i,t_i)},\cbk{u_j})
\times
\ret&&
\times
\cbk{\prod_{h=1}^g\Qk{q_h,r_h}}
\cbk{\prod_{i=1}^\ell\rbks{\Qk{s_i,t_i}}}
\cbk{\prod_{j=1}^m\Ck{u_j}}.
\label{vdexp}
\end{eqnarray}
We have introduced the dual vectors $\wt{u}=(\wkt{u}{u'})_{u'\in\calU}$
with $u\in\calU$ for the model with $\kappa=0$ (i.e., the flat-band model).
By using (\ref{vdou}), (\ref{vdeo}), (\ref{vdeep}), and (\ref{vdee}) with $\vov$
replaced by $\wv$, and the definition (\ref{wDef}) of $\wv$, we find
\begin{equation}
\wkt{u}{u'}=
\left\{
\begin{array}{ll}
\rbk{1+{A(k)}/{\la^2}}^{-1}\times
\left\{
\begin{array}{l}
1\\
-{\Ck{u'}}/{\la}\\
{\Ck{u}}/{\la}\\
-{\Ck{u}\Ck{u'}}/{\la^2}
\end{array}
\right.
&
\begin{array}{l}
\mbox{if $u=u'=o$;}\\
\mbox{if $u=o$, $u'\in\calU'$;}\\
\mbox{if $u\in\calU'$, $u'=o$;}\\
\mbox{if $u,u'\in\calU'$, $u\ne u'$;}
\end{array}
\\
1-\rbk{1+{A(k)}/{\la^2}}^{-1}{(\Ck{u})^2}/{\la^2}
&
\begin{array}{l}
\mbox{if $u=u'\in\calU'$,}
\end{array}
\end{array}
\right.
\label{wtil}
\end{equation}
where we have used (\ref{Akdef}).

The coefficients $\al_3$ in the expansion (\ref{vdexp}) can be shown to satisfy for
each $n\ge1$, $m\ge0$ the bound
\begin{equation}
\suptwo{g,\ell\ge0}{(g+\ell=n)}
\supfour{u,u'}{\{q_h,r_h\}_{h=1,\ldots,g}}
{\{s_i,t_i\}_{i=1,\ldots,\ell}}{\{u_j\}_{j=1,\ldots,m}}
\abs{\al_3(u,u';\cbk{(q_h,r_h)},\cbk{(s_i,t_i)},\cbk{u_j})}
\le
C_b\,\tilde{\beta}^n\,\tilde{\gamma}^m,
\label{a3bound}
\end{equation}
where
\begin{equation}
\tilde{\beta}=8b^4\beta,\quad\tilde{\gamma}=8(b-1)\gamma,
\label{btgt}
\end{equation}
and $C_b$ is a constant which depends only on $b$.

By using the bounds (\ref{a3bound}), (\ref{Ckb}), and (\ref{Qkb}),
we can show that the power series (\ref{vdexp}) for the dual vectors
converge provided that
\begin{equation}
\frac{\akappa}{\la^2}\le r_0=\frac{\theta}{b^2\tilde{\beta}},
\label{k/lam<2}
\end{equation}
and
\begin{equation}
\la\ge\la_0=\frac{2^\nu(b-1)\tilde{\gamma}}{\theta},
\label{lam>2}
\end{equation}
with a constant $0<\theta<1$, which we shall now fix.
Note that the conditions (\ref{k/lam<1}) and (\ref{lam>1})
(required for the convergence of the series for $\vov$ and $\ep_1(k)$)
are automatically satisfied if we assume the above (\ref{k/lam<2})
and (\ref{lam>2}).
This completes our construction of the basis states.
\subsection{Summability of the Basis States}
\label{secreg}
It only remains to prove the summability of the basis states
$\phi^{(x)}$, $\phitil^{(x)}$, and the effective hopping $\tau_{y,x}$
stated in Lemmas~\ref{basisLemma}, \ref{dualbasisLemma}, and
\ref{tauLemma}.
It turns out that these bounds are natural consequences of the series expansions
(\ref{vexp2}), (\ref{ekexp}), and (\ref{vdexp}).

Let us look at the proofs of the bounds (\ref{reg1}) and (\ref{reg2}) in detail.
We first recall that, for $x\in\Lao$, the strictly localized basis states
$\psi^{(x)}$ (defined by (\ref{psi1}) and (\ref{psi2})) is written in terms of
$\wv$ (\ref{wDef}) as
\begin{equation}
\psis{x}{y}=(2\pi)^{-d}\int dk\,\emik{(x-y)}w_{\mu(y)}(k).
\label{psi=w}
\end{equation}
See Section~\ref{secrealspace} for the notations.
Note that (\ref{psi=w}) is a special case of (\ref{phiDef}).
From (\ref{phiDef}), (\ref{psi=w}), and the expansion (\ref{vexp2}),
we find for $x\in\Lao$ that
\begin{eqnarray}
\phis{x}{y}-\psis{x}{y}
&=&
(2\pi)^{-d}\int dk\,\emik{(x-y)}\rbk{\vok{\mu(y)}-w_{\mu(y)}(k)}
\ret
&=&
\sum_{n=1}^\infty\rbk{\frac{\kappa}{\la^2}}^n
\sum_{m=0}^\infty\rbk{\frac{1}{\la}}^m
\sumthree{(s_i,t_i)\in\calU\times\calU}{{\rm with\ } i=1,\ldots,n}
{{\rm s.t.\ } (s_i,t_i)\le(s_{i+1},t_{i+1})}
\sumthree{u_j\in\calU'}{{\rm with\ } j=1,\ldots,m}
{{\rm s.t.\ } u_j\le u_{j+1}}\times
\ret&&\times
\al_1(\mu(y);\cbk{(s_i,t_i)},\cbk{u_j})\,
I_{x,y}(\cbk{(s_i,t_i)},\cbk{u_j}),
\label{phi-psi}
\end{eqnarray}
with
\begin{equation}
I_{x,y}(\cbk{(s_i,t_i)},\cbk{u_j})
=(2\pi)^{-d}\int dk\,\emik{(x-y)}
\rbk{\prod_{i=1}^n\Qk{s_i,t_i}}
\rbk{\prod_{j=1}^m\Ck{u_j}}.
\label{Ixy}
\end{equation}
Recalling the definitions (\ref{Qdef}) and (\ref{Ckdef}), we find that
\begin{equation}
\sum_{y\in\La}\abs{I_{x,y}(\cbk{(s_i,t_i)},\cbk{u_j})}
\le
\cbk{\max_{z\in\La}\rbk{\frac{1}{t}\sum_{w\in\La}|t'_{z,w}|}}^n
\rbk{2^\nu}^m
\le 2^{\nu m},
\label{sumI<}
\end{equation}
where we used (\ref{tp<t}) and $\abs{\calF_f}=2^\nu$.
Similarly we have
\begin{eqnarray}
&&
\sum_{y\in\La}\abs{x-y}\abs{I_{x,y}(\cbk{(s_i,t_i)},\cbk{u_j})}
\ret&&
\le n
\cbk{\max_{z\in\La}\rbk{\frac{1}{t}\sum_{w\in\La}|t'_{z,w}|}}^{n-1}
\cbk{\max_{z\in\La}\rbk{\frac{1}{t}\sum_{w\in\La}|w-z|\,|t'_{z,w}|}}
\rbk{2^\nu}^m
\ret&&
\quad
+m\cbk{\max_{z\in\La}\rbk{\frac{1}{t}\sum_{w\in\La}|t'_{z,w}|}}^n
\rbk{2^\nu}^{m-1}\rbk{\sum_{g\in\calF_f}|g|}
\ret&&
\le\rbk{nR+m\frac{\sqrt{\nu}}{2}}2^{\nu m}
\ret&&
\le(n+m)R\,2^{\nu m},
\label{sumxI<}
\end{eqnarray}
where we used (\ref{txp<tR}) and noted that $|g|=\sqrt{\nu}/2$ for $g\in\calF_f$.
In the final step, we used the assumption $\sqrt{\nu}/2\le R$ introduced right
after (\ref{txp<tR}).

We substitute the bound (\ref{sumI<}) for $I_{x,y}$, and the bound (\ref{a1bound}) for 
$\al_1$ to (\ref{phi-psi}) to get
\begin{eqnarray}
&&
\sum_{y\in\La}\abs{\phis{x}{y}-\psis{x}{y}}
\ret&&
\le
\sum_{n=1}^\infty\sum_{m=0}^\infty
\rbk{\frac{\akappa}{\la^2}}^n\rbk{\frac{1}{\la}}^m
b^{2n}(b-1)^m
\frac{\beta^n}{(n+1)^{b^2+1}}\frac{\gamma^m}{(m+1)^b}
2^{\nu m}
\ret&&
\le
\cbk{\sum_{n=1}^\infty\rbk{\frac{\akappa}{\la^2}b^2\beta}^n}
\cbk{\sum_{m=0}^\infty\rbk{\frac{1}{\la}2^\nu(b-1)\gamma}^m}
\ret&&
\le B_1\frac{\akappa}{\la^2}
\label{phi-psi2}
\end{eqnarray}
for $\kappa$ and $\la$ satisfying (\ref{k/lam<2}) and (\ref{lam>2})
(or (\ref{k/lam<1}) and (\ref{lam>1})).
The constant $B_1$ will be fixed later.

Similarly we use (\ref{sumxI<}) to get
\begin{eqnarray}
&&
\sum_{y\in\La}\abs{x-y}\abs{\phis{x}{y}-\psis{x}{y}}
\ret&&
\le
\sum_{n=1}^\infty\sum_{m=0}^\infty
\rbk{\frac{\akappa}{\la^2}}^n\rbk{\frac{1}{\la}}^m
b^{2n}(b-1)^m
\frac{\beta^n}{(n+1)^{b^2+1}}\frac{\gamma^m}{(m+1)^b}
(n+m)R\,2^{\nu m}
\ret&&
\le B_1R\frac{\akappa}{\la^2},
\end{eqnarray}
which is the desired bound (\ref{reg2}).
The bounds (\ref{reg1}), (\ref{reg2}) for $x\in\La'$ as well as the remaining
bounds (\ref{reg3}), (\ref{reg4}) follow in the same manner.

The bounds (\ref{regtau1}) and (\ref{regtau2}) for the effective hopping $\tau_{x,y}$
stated in Lemma~\ref{tauLemma} are proved in exactly the same manner by using the
definition (\ref{tauDef}), the expansion (\ref{ekexp}) for $\ep_1(k)$, and the
bounds (\ref{alpha2<}) for the coefficients.

The bounds (\ref{reg5}),  (\ref{reg6}),  (\ref{reg7}), and  (\ref{reg8}) for the dual
basis states can also be shown in the same spirit.
A major difference is that $\phitil^{(x)}$ does not coincide with the 
strictly localized state $\psi^{(x)}$ when $\kappa=0$.
To control this situation, we note
\begin{equation}
\abs{\phit{x}{y}-\psis{x}{y}}\le
\abs{\phit{x}{y}-\psit{x}{y}}+\abs{\psit{x}{y}-\psis{x}{y}},
\label{pt-ps}
\end{equation}
where $\psit{x}{y}$ (which is the dual basis states for $\kappa=0$) is defined as
\begin{equation}
\psit{x}{y}=(2\pi)^{-d}\int dk\,\emik{(x-y)}\,\wkt{\mu(x)}{\mu(y)},
\label{pstDef}
\end{equation}
with $\wt{u}$ defined in (\ref{wtil}).
By using the series expansion (\ref{vdexp}), we can control the term
$\abs{\phit{x}{y}-\psit{x}{y}}$ in exactly the same way as we controlled
$\abs{\phis{x}{y}-\psis{x}{y}}$ in the above.
Consequently, we get
\begin{equation}
\sum_{x\in\La}\abs{\phit{x}{y}-\psit{x}{y}}
\le B_1\frac{\akappa}{\la^2}, 
\quad
\sum_{y\in\La}\abs{\phit{x}{y}-\psit{x}{y}}
\le B_1\frac{\akappa}{\la^2},
\label{reg9}
\end{equation}
and 
\begin{equation}
\sum_{x\in\La}\abs{x-y}\abs{\phit{x}{y}-\psit{x}{y}}
\le B_1R\frac{\akappa}{\la^2},
\quad 
\sum_{y\in\La}\abs{x-y}\abs{\phit{x}{y}-\psit{x}{y}}
\le B_1R\frac{\akappa}{\la^2}.
\label{reg10}
\end{equation}
At this stage, we fix the constant $B_1$ so that the bounds
(\ref{reg1})-(\ref{reg4}), (\ref{reg9}), and (\ref{reg10}) are
simultaneously satisfied\footnote{
Of course it is possible to state the bounds (\ref{reg1})-(\ref{reg4}) with
smaller $B_1$ than in (\ref{reg9}) or (\ref{reg10}).
We have unified the coefficients as much as possible to make the 
formulas less complicated.
}.
Note that $B_1$ depends only on the band number $b$.

To control the second term in (\ref{pt-ps}), we first note
\begin{equation}
\psit{x}{y}-\psis{x}{y}=
(2\pi)^{-d}\int dk\,\emik{(x-y)}\,\zetas{\mu(x)}{\mu(y)},
\label{pst-ps}
\end{equation}
with
\begin{equation}
\zetas{u}{u'}=
\left\{
\begin{array}{ll}
{
-\rbk{{A(k)}/{\la^2}}\rbk{1+{A(k)}/{\la^2}}^{-1}\times
\left\{
\begin{array}{l}
1\\-{\Ck{u'}}/{\la}\\{\Ck{u}}/{\la}
\end{array}
\right.
}
&
{
\begin{array}{l}
\mbox{if $u=u'=o$;}\\
\mbox{if $u=o$, $u'\in\calU'$;}\\
\mbox{if $u\in\calU'$, $u'=o$;}
\end{array}
}
\\
-\rbk{{\Ck{u}\Ck{u'}}/{\la^2}}\rbk{1+{A(k)}/{\la^2}}^{-1}
&
\begin{array}{l}
\mbox{if $u,u'\in\calU'$.}
\end{array}
\end{array}
\right.
\label{zetaDef}
\end{equation}
The expressions (\ref{pst-ps}), (\ref{zetaDef}) are straightforward consequences
of (\ref{pstDef}), (\ref{wtil}), (\ref{wDef}), and (\ref{veuk}).
By expanding $\cbk{1+(A(k)/\la^2)}^{-1}$ in (\ref{zetaDef}), 
$\psit{x}{y}-\psis{x}{y}$ can be expressed as a power series of $\la^{-2}$.
By analyzing the series, it is easily shown that, for $\la\ge\la_0$, the
summations
$\sum_{y\in\La}\abs{\psit{x}{y}-\psis{x}{y}}$,
$\sum_{y\in\La}\abs{x-y}\abs{\psit{x}{y}-\psis{x}{y}}$,
$\sum_{x\in\La}\abs{\psit{x}{y}-\psis{x}{y}}$,
and
$\sum_{x\in\La}\abs{x-y}\abs{\psit{x}{y}-\psis{x}{y}}$
are all bounded from above by $B_2/\la^2$, where $B_2$ is a constant
which depend only on the band number $b$.
By combining these bounds with (\ref{pt-ps}), (\ref{reg9}), and (\ref{reg10}), 
we get the desired bounds
(\ref{reg5}), (\ref{reg6}), (\ref{reg7}), and (\ref{reg8}).
\par\bigskip
{\small
I wish to thank Tohru Koma and Andreas Mielke 
for stimulating discussions and important suggestions which made
the present work possible.
I also thank
Hideo Aoki, 
Yasuhiro Hatsugai,
Arisato Kawabata,
Kenn Kubo,
Koichi Kusakabe, 
Elliott Lieb,
Hiroshi Mano,
and
Dieter Vollhardt
for useful discussions on various related topics.
Finally I thank 
Tom Kennedy
and
Masanori Yamanaka
for useful comments on the paper.}


\bibliography{myWorks,CM}

\begin{thebibliography}{10}

\bibitem{Anderson63}
P.~W. Anderson.
\newblock In G.~T. Rado and H.~Schul, editors, {\em Magnetims {I}}, page~25.
  Academic Press, 1963.

\bibitem{Anderson68}
P.~W. Anderson.
\newblock {\em Phy. Rev. Lett.}, 21:13, 1968.

\bibitem{AshcroftMermin76}
N.~W. Ashcroft and N.~D. Mermin.
\newblock {\em Solid State Physics}.
\newblock Saunders College, 1976.

\bibitem{Benfatto94}
G.~Benffatto, G.~Gallavotti, A.~Procacci, and B.~Scoppola.
\newblock {\em Commun. Math Phys.}, 160:93, 1994.

\bibitem{Bloch29}
F.~Bloch.
\newblock {\em Z. Phys.}, 57:545, 1929.

\bibitem{BrandtGiesekus92}
U.~Brandt and A.~Giesekus.
\newblock {\em Phy. Rev. Lett.}, 68:2648, 1992.

\bibitem{Boer95}
J.~de~Boer and A.~Schadschneider.
\newblock {\em Phy. Rev. Lett.}, 75:4298, 1995.

\bibitem{DoucotWen89}
B.~Dou\c{c}ot and X.~G. Wen.
\newblock {\em Phy. Rev.}, B40:2719, 1989.

\bibitem{DysonLiebSimon78}
F.~Dyson, E.~H. Lieb, and B.~Simon.
\newblock {\em J. Stat. Phys.}, 18:335, 1978.

\bibitem{Feldman95}
J.~Feldman, J.~Magnen, V.~Rivasseau, and E.~Trubowitz.
\newblock In {\em the LXII les Houches Summer School ``Fluctuating Geometries
  in Statistical Mechanics and in Field Theory", August 2 -- September 9
  1994.},
\newblock cond-mat/9503047.

\bibitem{Ghosh71}
D.~Ghosh.
\newblock {\em Phy. Rev. Lett.}, 27:1584, 1971.

\bibitem{Gutzwiller63}
M.~C. Gutzwiller.
\newblock {\em Phy. Rev. Lett.}, 10:159, 1963.

\bibitem{HanischMullerHartmann93}
T.~Hanisch and E.~M\"{u}ller-Hartmann.
\newblock {\em Ann. Physik}, 2:381, 1993.

\bibitem{Heisenberg28}
W.~J. Heisenberg.
\newblock {\em Z. Phys.}, 49:619, 1928.

\bibitem{Herring66a}
C.~Herring.
\newblock In G.~T. Rado and H.~Suhl, editors, {\em Magnetism {IIB}}. Academic
  Press, 1966.

\bibitem{Herring66b}
C.~Herring.
\newblock In G.~T. Rado and H.~Suhl, editors, {\em Magnetism {IV}}. Academic
  Press, 1966.

\bibitem{Hubbard63}
J.~Hubbard.
\newblock {\em Proc. Roy. Soc. (London)}, A276:238, 1963.

\bibitem{Kanamori63}
J.~Kanamori.
\newblock {\em Prog. Theor. Phys.}, 30:275, 1963.

\bibitem{Kato82}
T.~Kato.
\newblock {\em A Short Introduction to Perturbation Theory for Linear
  Operators}.
\newblock Springer-Verlag, 1982.

\bibitem{Kohn73}
W.~Kohn.
\newblock {\em Phy. Rev.}, B7:4388, 1973.

\bibitem{92d}
T.~Koma and H.~Tasaki.
\newblock {\em Phy. Rev. Lett.}, 68:3248, 1992.

\bibitem{Kusakabe93}
K.~Kusakabe.
\newblock {\em Ferromagnetism in Strongly Correlated Electron Systems}.
\newblock PhD thesis, University of Tokyo, 1993.

\bibitem{KusakabeAoki94b}
K.~Kusakabe and H.~Aoki.
\newblock {\em Phy. Rev. Lett.}, 72:144, 1994.

\bibitem{KusakabeAoki94a}
K.~Kusakabe and H.~Aoki.
\newblock {\em Physica}, B194--B196:215, 1994.

\bibitem{LiangPang94}
S.~Liang and H.~Pang.
\newblock {\em preprint}, 1994  (cond-mat/9404003).

\bibitem{Lieb71}
E.~H. Lieb.
\newblock In {\em Phase Transitions}, page~45. Wiley, Interscience, 1971.

\bibitem{Lieb89}
E.~H. Lieb.
\newblock {\em Phy. Rev. Lett.}, 62:1201, 1989.

\bibitem{LiebMattis62}
E.~H. Lieb and D.~Mattis.
\newblock {\em Phy. Rev.}, 125:164, 1962.

\bibitem{Mattis81}
D.~C. Mattis.
\newblock {\em The Theory of Magnetism {I}}.
\newblock Springer-Verlag, 1981.

\bibitem{Mielke91a}
A.~Mielke.
\newblock {\em J. Phys.}, A24:L73, 1991.

\bibitem{Mielke91b}
A.~Mielke.
\newblock {\em J. Phys.}, A24:3311, 1991.

\bibitem{Mielke92}
A.~Mielke.
\newblock {\em J. Phys.}, A25:4335, 1992.

\bibitem{Mielke93}
A.~Mielke.
\newblock {\em Phys. Lett.}, A174:443, 1993.

\bibitem{93d}
A.~Mielke and H.~Tasaki.
\newblock {\em Commun. Math Phys.}, 158:341, 1993.

\bibitem{MullerHartmann95}
E.~M\"{u}ller-Hartmann.
\newblock {\em J. Low Temp. Phys.}, 99:349, 1995.

\bibitem{Nagaoka66}
Y.~Nagaoka.
\newblock {\em Phy. Rev.}, 147:392, 1966.

\bibitem{Putikka93}
W.~O. Putikka, M.~U. Luchini, and M.~Ogata.
\newblock {\em Phy. Rev. Lett.}, 69:2288, 1992.

\bibitem{ReedSimon78}
M.~Reed and B.~Simon.
\newblock {\em Methods of Modern Mathematical Physics {V}ol. {IV}}.
\newblock Academic Press, 1978.

\bibitem{Shastry90}
B.~S. Shastry, H.~R. Krishnamurthy, and P.~W. Anderson.
\newblock {\em Phy. Rev.}, B41:2375, 1990.

\bibitem{ShenQiu94}
S.~Q. Shen, Z.~M. Qiu, and G.~S.  Tian .
\newblock {\em Phy. Rev. Lett.}, 72:1280, 1994.

\bibitem{Slater53}
J.~C. Slater, H.~Statz, and G.~F. Koster.
\newblock {\em Phy. Rev.}, 91:1323, 1953.

\bibitem{StrackVollhardt94a}
R.~Strack and D.~Vollhardt.
\newblock {\em Phy. Rev. Lett.}, 72:3425, 1994.

\bibitem{StrackVollhardt94b}
R.~Strack and D.~Vollhardt.
\newblock {\em J. Low Temp. Phys.}, 99:385, 1995.

\bibitem{Suto91a}
A.~S\"{u}t{\H{o}}.
\newblock {\em Phy. Rev.}, B43:8779, 1991.

\bibitem{Suto91b}
A.~S\"{u}t{\H{o}}.
\newblock {\em Commun. Math Phys.}, 140:43, 1991.

\bibitem{89c}
H.~Tasaki.
\newblock {\em Phy. Rev.}, B40:9192, 1989.

\bibitem{92e}
H.~Tasaki.
\newblock {\em Phy. Rev. Lett.}, 69:1608, 1992.

\bibitem{93b}
H.~Tasaki.
\newblock {\em Phy. Rev. Lett.}, 70:3303, 1993.

\bibitem{94c}
H.~Tasaki.
\newblock {\em Phy. Rev. Lett.}, 73:1158, 1994.

\bibitem{94a}
H.~Tasaki.
\newblock {\em Phy. Rev.}, B49:7763, 1994.

\bibitem{95c}
H.~Tasaki.
\newblock {\em Phy. Rev. Lett.}, 75:4678, 1995.

\bibitem{95un}
H.~Tasaki.
\newblock {\em in preparation}.

\bibitem{Thouless65}
D.~J. Thouless.
\newblock {\em Proc. Phys. Soc. London}, 86:893, 1965.

\bibitem{Toth91}
B.~T\'{o}th.
\newblock {\em Lett. Math. Phys.}, 22:321, 1991.

\end{thebibliography}
\bibliographystyle{myabbrv}
\end{document}